\newcommand{\eval}[2][\right]{\relax
  \ifx#1\right\relax \left.\fi#2#1\rvert}
\newcommand{\pd}{{\partial}}
\newcommand{\frl}{\mathfrak{F}}
\newcommand{\frid}{\mathfrak{I}}
\newcommand{\fla}{\mathbf{A}}
\newcommand{\flb}{\mathbf{B}}
\newcommand{\flz}{\mathbf{Z}}
\newcommand{\wga}{\mathrm{A}}
\newcommand{\wgb}{\mathrm{B}}
\newcommand{\mR}{\mathfrak{R}}
\newcommand{\agn}{\mathfrak{A}}
\newcommand{\ml}{q}
\newcommand{\al}{{\alpha}}
\newcommand{\be}{{\beta}}
\newcommand{\la}{{\lambda}}
\newcommand{\za}{{\mathbf{V}}}
\newcommand{\ua}{U_a}
\newcommand{\beq}{\begin{equation}}
\newcommand{\zf}{\mathbf{F}}
\newcommand{\xtf}{\mathbb{L}}
\newcommand{\wea}{\mathfrak{W}_a}
\newcommand{\swe}{\mathfrak{R}}
\newcommand{\qalg}{\mathcal{Q}}
\newcommand{\bG}{\mathbf{G}}
\newcommand{\ipol}{\mathcal{I}}
\newcommand{\zp}{\mathbb{Z}_{\ge 0}}
\newcommand{\zsp}{\mathbb{Z}_{>0}}
\newcommand{\er}{\eqref}
\newcommand{\cl}{\colon}
\newcommand{\ee}{\end{equation}}
\newcommand{\bmu}{\begin{multline}}
\newcommand{\emul}{\end{multline}}
\newcommand{\CE}{\mathcal{E}}
\newcommand{\ce}{\mathcal{E}}
\newcommand{\CA}{\mathcal{A}}
\newcommand{\CB}{\mathcal{B}}
\newcommand{\anA}{\mathsf{A}}
\newcommand{\anB}{\mathsf{B}}
\newcommand{\mat}{{\mathcal{M}}}
\newcommand{\zd}{\mathrm{D}}
\newcommand{\ad}{{\rm ad\,}}
\renewcommand{\sl}{\mathfrak{sl}}
\newcommand{\gl}{\mathfrak{gl}}
\newcommand{\mg}{\mathfrak{g}}
\newcommand{\bl}{\mathfrak{L}}
\newcommand{\ga}{\mathbb{A}}
\newcommand{\gb}{\mathbb{B}}
\newcommand{\hga}{\hat{\mathbb{A}}}
\newcommand{\hgb}{\hat{\mathbb{B}}}
\newcommand{\hal}{{\hat{\alpha}}}
\newcommand{\hbeta}{{\hat{\beta}}}
\newcommand{\mh}{\mathfrak{H}}
\newcommand{\so}{\mathfrak{so}}
\newcommand{\mic}{\mathfrak{I}}
\newcommand{\lb}{\label}
\newcommand{\vf}{\varphi}
\newcommand{\Com}{\mathbb{C}}
\newcommand{\fik}{\mathbb{K}}
\newcommand{\un}{\mathrm{U}}
\DeclareMathOperator{\fd}{\mathbb{F}}
\DeclareMathOperator{\fds}{\mathbb{F}}
\newcommand{\hmm}{\sigma}
\newcommand{\hrf}{\mu}
\newcommand{\ai}{p}
\newcommand{\bi}{q}
\newcommand{\oc}{p}
\newcommand{\ocn}{p}
\newcommand{\dc}{q}
\newcommand{\nv}{m}
\newcommand{\eo}{d}
\newcommand{\sm}{N}
\newcommand{\ost}{\mathbb{U}}
\newtheorem{theorem}{Theorem}
\newtheorem{lemma}{Lemma}
\theoremstyle{definition}
\newtheorem{example}{Example}
\newtheorem{remark}{Remark}
\begin{document}


\subjclass[2010]{37K30, 37K35}


\title[On Lie algebras responsible for zero-curvature representations]{On Lie algebras responsible for zero-curvature representations 
of multicomponent $(1+1)$-dimensional evolution PDEs}
\date{}

\author{Sergei Igonin\qquad\qquad Gianni Manno} 
\address{Department of Mathematical Sciences, 
Politecnico di Torino, \\ 
Corso Duca degli Abruzzi 24, 10129 Torino, Italy}

\author{}
\email{s-igonin@yandex.ru, d027258@polito.it}

\begin{abstract}

Zero-curvature representations (ZCRs) are one of the main 
tools in the theory of integrable $(1+1)$-dimensional PDEs.
According to the preprint arXiv:1212.2199, 
for any given $(1+1)$-dimensional evolution PDE one can define 
a sequence of Lie algebras $\fd^\oc$, $\oc=0,1,2,3,\dots$, 
such that representations of these algebras classify all ZCRs of the PDE 
up to local gauge equivalence. 
ZCRs depending on derivatives of arbitrary finite order are allowed.
Furthermore, these algebras provide necessary conditions for existence of 
B\"acklund transformations between two given PDEs.
The algebras $\fd^\oc$ are defined in arXiv:1212.2199 
in terms of generators and relations.

In the present paper, 
we describe some methods to study the structure of the algebras $\fd^\oc$ 
for multicomponent $(1+1)$-dimensional evolution PDEs.
Using these methods, we compute the explicit structure (up to non-essential 
nilpotent ideals) of the Lie algebras $\fd^\oc$ for 
the Landau-Lifshitz, nonlinear Schr\"odinger equations, 
and for the $n$-component
Landau-Lifshitz system of Golubchik and Sokolov for any $n>3$.

In particular, this means that for the $n$-component Landau-Lifshitz system  
we classify all ZCRs (depending on derivatives of arbitrary finite order), 
up to local gauge equivalence and up to killing nilpotent ideals 
in the corresponding Lie algebras. 
As a result of the classification, one obtains two main non-equivalent ZCRs:
a well-known ZCR with values in the infinite-dimensional Lie algebra 
of certain $(n+1)\times (n+1)$ matrix-valued functions on some algebraic curve and 
a very different ZCR with values in the Lie algebra $\so_{n-1}$.
We prove that any other ZCR of the $n$-component Landau-Lifshitz system is equivalent 
to a reduction of these two ZCRs.

The presented methods to classify ZCRs can be applied also 
to other $(1+1)$-dimensional evolution PDEs.
Furthermore, the obtained results can be used for proving non-existence of B\"acklund transformations between some PDEs, which will be described in forthcoming publications.

\end{abstract}


\maketitle

\tableofcontents

\section{Introduction}

\subsection{The main ideas}
\lb{secintr}



Let $\nv$ be a positive integer.
An $\nv$-component $(1+1)$-dimensional evolution PDE 
for functions $u^1(x,t),\dots,u^\nv(x,t)$ is a PDE of the form
\begin{gather}
\label{sys_intr}
\frac{\pd u^i}{\pd t}
=F^i(x,t,u^1,\dots,u^\nv,\,u^1_1,\dots,u^\nv_1,\dots,u^1_{\eo},\dots,u^\nv_{\eo}),\\
\notag
u^i=u^i(x,t),\quad\qquad u^i_k=\frac{\pd^k u^i}{\pd x^k},\quad\qquad  
i=1,\dots,\nv,\quad\qquad k\in\zsp. 
\end{gather}
Here the number $\eo\ge 1$ is such that the functions $F^i$ may depend only 
on the variables $x$, $t$, $u^j$, $u^j_k$ for $k\le\eo$.

A large part of the 
theory of integrable systems is devoted to the study of such PDEs.
This class of PDEs includes many celebrated equations of mathematical physics 
(e.g., the KdV, Krichever-Novikov, 
Landau-Lifshitz, nonlinear Schr\"odinger equations). 
Many more PDEs can be written in the evolution form~\er{sys_intr} 
after a suitable change of variables.

To simplify notation, we set $u^j_0=u^j$ and $u^j_t=\pd u^j/\pd t$ for $j=1,\dots,\nv$.
Then the right-hand side of~\er{sys_intr} can be written as 
$F^i(x,t,u^j_0,u^j_1,\dots,u^j_{\eo})$, and the PDE~\er{sys_intr} reads 
\beq
\lb{uitfi}
u^i_t=F^i(x,t,u^j_0,u^j_1,\dots,u^j_{\eo}),\qquad\quad
i=1,\dots,\nv.
\ee
So~\er{uitfi} is a compact form of~\er{sys_intr}.

In this paper, integrability of PDEs is understood 
in the sense of soliton theory and the inverse scattering method. 
This is sometimes called $S$-integrability. 

It is well known that, in order to investigate possible 
integrability properties of~\er{sys_intr}, 
one needs to consider zero-curvature representations (ZCRs).

Let $\mg$ be a finite-dimensional Lie algebra.  
For a PDE of the form~\er{sys_intr}, 
a \emph{zero-curvature representation \textup{(}ZCR\textup{)} 
with values in~$\mg$} is given by $\mg$-valued functions 
\beq
\lb{mnoc}
A=A(x,t,u^j_0,u^j_1,\dots,u^j_\oc),\qquad\quad 
B=B(x,t,u^j_0,u^j_1,\dots,u^j_{\oc+\eo-1})
\ee
satisfying
\beq
\lb{mnzcr}
D_x(B)-D_t(A)+[A,B]=0.
\ee

The \emph{total derivative operators} $D_x$, $D_t$ in~\er{mnzcr} are 
\beq
\lb{evdxdt}
D_x=\frac{\pd}{\pd x}+\sum_{\substack{i=1,\dots,\nv,\\ k\ge 0}} 
u^i_{k+1}\frac{\pd}{\pd u^i_k},\qquad\qquad
D_t=\frac{\pd}{\pd t}+\sum_{\substack{i=1,\dots,\nv,\\ k\ge 0}} 
D_x^k\big(F^i(x,t,u^j_0,u^j_1,\dots,u^j_{\eo})\big)\frac{\pd}{\pd u^i_k}.
\ee

The number $\oc$ in~\er{mnoc} is such that  
the function $A$ may depend only on 
the variables $x$, $t$, $u^j_{k}$ for $k\le\oc$ and $j=1,\dots,\nv$.
Then equation~\er{mnzcr} implies that 
the function $B$ may depend only on $x$, $t$, $u^j_{k'}$ for $k'\le\oc+\eo-1$.

Such ZCRs are said to be \emph{of order~$\le\oc$}. 
In other words, a ZCR given by $A$, $B$ is of order~$\le\oc$ iff 
$\dfrac{\pd A}{\pd u^i_l}=0$ for all $l>\oc$. 

\begin{remark}
The right-hand side $F^i(x,t,u^j_0,u^j_1,\dots,u^j_{\eo})$ 
of~\er{sys_intr} appears in condition~\er{mnzcr}, 
because $F^i$ appears 
in the formula for the operator $D_t$ in~\er{evdxdt}.

Note that~\er{mnzcr} can be written as $[D_x+A,\,D_t+B]=0$.
Condition~\er{mnzcr} is equivalent to the fact that 
the auxiliary linear system 
\beq
\lb{auxls}
\pd_x(W)=-AW,\qquad\quad
\pd_t(W)=-BW
\ee
is compatible modulo~\er{sys_intr}.
Here $W=W(x,t)$ is an invertible $\sm\times\sm$ matrix-function.
\end{remark}

\begin{remark}
When we consider a function $Q=Q(x,t,u^j_0,u^j_1,\dots,u^j_l)$ 
for some $l\in\zp$, we always assume that this function is analytic 
on an open subset of the manifold with the coordinates 
$x,t,u^j_0,u^j_1,\dots,u^j_l$ for $j=1,\dots,\nv$.
For example, $Q$ may be a meromorphic function,  
because a meromorphic function is analytic on some open subset of the manifold.
In particular, this applies to the functions~\er{mnoc}.
\end{remark}

We study the following problem. 
How to describe all ZCRs~\er{mnoc},~\er{mnzcr} for a given PDE~\er{sys_intr}? 

In the case when $\oc=0$ and the functions $F^i$, $A$, $B$ do not depend on $x$, $t$, 
a partial answer to this question is provided by the Wahlquist-Estabrook prolongation
method (WE method for short). 
Namely, for a given PDE of the form 
$u^i_t=F^i(u^j_0,u^j_1,\dots,u^j_{\eo})$, $i=1,\dots,\nv$, 
the WE method constructs a Lie algebra so that ZCRs of the form
\beq
\lb{wecov}
A=A(u^j_0),\qquad B=B(u^j_0,u^j_1,\dots,u^j_{\eo-1}),\qquad
D_x(B)-D_t(A)+[A,B]=0
\ee
correspond to representations of this 
algebra (see, e.g.,~\cite{dodd,Prol,mll-2012}). 
It is called the \emph{Wahlquist-Estabrook prolongation algebra}.
Note that in~\er{wecov} the function $A(u^j_0)$ depends only 
on $u^j_0$, $j=1,\dots,\nv$.

To study the general case of ZCRs~\er{mnoc},~\er{mnzcr} 
with arbitrary $\oc$ for any equation~\er{sys_intr}, 
we need to consider gauge transformations. 

Without loss of generality, one can assume that $\mg$ is a Lie subalgebra 
of $\gl_\sm$ for some $\sm\in\zsp$, where $\gl_\sm$ is the algebra of 
$\sm\times\sm$ matrices with entries from $\mathbb{R}$ or $\mathbb{C}$. 
So our considerations are applicable to both cases $\gl_\sm=\gl_\sm(\mathbb{R})$ 
and $\gl_\sm=\gl_\sm(\mathbb{C})$.
And we denote by $\mathrm{GL}_\sm$ the group of invertible $\sm\times\sm$ matrices.

Let $\fik$ be either $\Com$ or $\mathbb{R}$.
Then $\gl_\sm=\gl_\sm(\fik)$ and $\mathrm{GL}_\sm=\mathrm{GL}_\sm(\fik)$.
In this paper, all algebras are supposed to be over the field~$\fik$.

Let $\mathcal{G}\subset\mathrm{GL}_\sm$ be the connected matrix Lie group 
corresponding to the Lie algebra $\mg\subset\gl_\sm$.
(That is, $\mathcal{G}$ is the connected 
immersed Lie subgroup of $\mathrm{GL}_\sm$ 
corresponding to the Lie subalgebra $\mg\subset\gl_\sm$.)
A \emph{gauge transformation} is given by an invertible matrix-function 
$G=G(x,t,u^j_0,u^j_1,\dots,u^j_l)$ with values in~$\mathcal{G}$.

For any ZCR~\er{mnoc},~\er{mnzcr} and any gauge 
transformation $G=G(x,t,u^j_0,u^j_1,\dots,u^j_l)$, the functions 
\beq
\lb{mnprint}
\tilde{A}=GAG^{-1}-D_x(G)\cdot G^{-1},\qquad\qquad
\tilde{B}=GBG^{-1}-D_t(G)\cdot G^{-1}
\ee
satisfy $D_x(\tilde{B})-D_t(\tilde{A})+[\tilde{A},\tilde{B}]=0$ 
and, therefore, form a ZCR. (This is explained in Remark~\ref{remzcrgt} below.)

Moreover, since $A$, $B$ take values in~$\mg$ and $G$ 
takes values in~$\mathcal{G}$, 
the functions $\tilde{A}$, $\tilde{B}$ take values in~$\mg$.
(This is well known, but for completeness we prove this in Lemma~\ref{lemgt} 
in Section~\ref{csev}.)

The ZCR~\er{mnprint} is said to be \emph{gauge equivalent} 
to the ZCR~\er{mnoc},~\er{mnzcr}. 
For a given PDE~\er{sys_intr}, formulas~\er{mnprint} determine 
an action of the group of gauge transformations on the set of ZCRs of this PDE.

The WE method does not use gauge transformations in a systematic way. 
In the classification of ZCRs~\er{wecov} this is acceptable, 
because the class of ZCRs~\er{wecov} is relatively small.  

The class of ZCRs~\er{mnoc},~\er{mnzcr} is much larger than that of~\er{wecov}.
As is shown in the present paper, 
gauge transformations play a very important role in the classification 
of ZCRs~\er{mnoc},~\er{mnzcr}. 
Because of this, the classical WE method does not produce satisfactory results 
for~\er{mnoc},~\er{mnzcr}, especially in the case~$\oc>0$.  


To overcome this problem,
we use the approach which we developed in~\cite{prol12,scal13}.
Namely, using some ideas from~\cite{prol12,scal13}, 
we find a normal form for ZCRs~\er{mnoc},~\er{mnzcr}
with respect to the action of the group of gauge transformations.
Using the normal form of ZCRs, for any given equation~\er{sys_intr},
we define a Lie algebra $\fds^{\oc}$ for each $\oc\in\zp$ so  
that the following property holds. 

For every finite-dimensional Lie algebra $\mg$, 
any $\mg$-valued ZCR~\er{mnoc},~\er{mnzcr} of order~$\le\oc$  
is locally gauge equivalent to the ZCR arising from a homomorphism 
$\fds^{\oc}\to\mg$. 

More precisely, as is discussed below, 
we define a Lie algebra $\fds^{\oc}$ for each $\oc\in\zp$ 
and each point~$a$ of the infinite prolongation~$\CE$ of the PDE~\er{sys_intr}. 
So the full notation for the algebra is $\fds^{\oc}(\CE,a)$.  

Recall that the \emph{infinite prolongation} $\CE$ of~\er{sys_intr} 
is the infinite-dimensional manifold with the coordinates 
\beq
\lb{ixtuik}
x,\qquad t,\qquad u^i_k,\qquad i=1,\dots,\nv,\qquad k\in\zp.
\ee 
(A more precise definition of the manifold $\CE$ is given in Section~\ref{subsip}.)
The precise definition of $\fds^{\oc}(\CE,a)$ 
for any evolution PDE~\er{sys_intr} is presented in Section~\ref{csev}. 
In this definition, the algebra $\fds^{\oc}(\CE,a)$ is given in terms of generators and relations.

For every finite-dimensional Lie algebra $\mg$, 
homomorphisms $\fds^{\oc}(\CE,a)\to\mg$ classify (up to gauge equivalence) 
all $\mg$-valued ZCRs~\er{mnoc},~\er{mnzcr} of order~$\le\oc$, 
where functions $A$, $B$ are defined on a neighborhood of 
the point $a\in\CE$. See Section~\ref{csev} for details.

The algebras $\fds^{\oc}(\CE,a)$ are responsible also for parameter-dependent ZCRs, 
see Remark~\ref{rpzcr} below.

For scalar evolution equations with $\nv=1$ this approach 
was developed in~\cite{scal13}.
In the present paper we study the algebras $\fds^{\oc}(\CE,a)$ 
for multicomponent evolution PDEs~\er{sys_intr} with arbitrary $\nv$.

Note that the same Lie algebras $\fds^{\oc}(\CE,a)$ 
were used in~\cite{prol12} for a different purpose, see Remark~\ref{rnzcr} below.

\begin{remark}
\lb{rpzcr}
In the theory of integrable $(1+1)$-dimensional PDEs, 
one is especially interested in ZCRs depending on a parameter $\la$.
So consider a $\mg$-valued ZCR of the form
\begin{gather}
\lb{abla}
A=A(\la,x,t,u^j_0,u^j_1,\dots,u^j_\oc),\qquad\quad 
B=B(\la,x,t,u^j_0,u^j_1,\dots,u^j_{\oc+\eo-1}),\\
\notag 
D_x(B)-D_t(A)+[A,B]=0,
\end{gather}
where $\mg$-valued functions $A$, $B$ depend on $x$, $t$, $u^i_k$ 
and a parameter $\la$.

Let $\tilde{\mg}$ be the infinite-dimensional Lie algebra 
of functions $h(\la)$ with values in $\mg$. 
(Depending on the problem under study, one can consider analytic or meromorphic functions $h(\la)$.
Or one can assume that $\la$ runs through an open subset of some algebraic curve 
and consider $\mg$-valued functions $h(\la)$ on this algebraic curve.)

Then~\er{abla} can be regarded as a ZCR with values in $\tilde{\mg}$.  
After a suitable (parameter-dependent) gauge transformation, 
each ZCR of the form~\er{abla} corresponds to a homomorphism  
$\fds^{\oc}(\CE,a)\to\tilde{\mg}$.
So the Lie algebras $\fds^{\oc}(\CE,a)$ are responsible 
also for parameter-dependent ZCRs.
\end{remark}

Applications of $\fds^{\oc}(\CE,a)$ to the theory of 
B\"acklund transformations are described in~\cite{prol12}.
In the scalar case $\nv=1$, 
applications of $\fds^{\oc}(\CE,a)$ to obtaining 
necessary conditions for integrability of scalar evolution equations 
are discussed in~\cite{scal13}. 

According to Section~\ref{csev}, 
the algebras $\fds^{\oc}(\CE,a)$ for $\oc\in\zp$ are arranged in a sequence of surjective homomorphisms 
\beq
\lb{intfdoc1}
\dots\to\fds^{\oc}(\CE,a)\to
\fds^{\oc-1}(\CE,a)\to\dots\to\fds^1(\CE,a)\to\fds^0(\CE,a).
\ee

Recall that $\fik$ is either $\Com$ or $\mathbb{R}$.
We suppose that the variables $x$, $t$, $u^i_k$ take values in $\fik$. 
A point $a\in\CE$ is determined by the values of the coordinates 
$x$, $t$, $u^i_k$ at $a$. Let
\begin{equation}
\lb{iaxtaik}
a=(x=x_a,\,t=t_a,\,u^i_k=a^i_k)\,\in\,\CE,\qquad\quad 
x_a,t_a,a^i_k\in\fik,\qquad 
i=1,\dots,\nv,\qquad 
k\in\zp,
\end{equation}
be a point of $\CE$.
In other words, the constants $x_a$, $t_a$, $a^i_k$ are the coordinates 
of the point $a\in\CE$ in the coordinate system $x$, $t$, $u^i_k$.
\begin{example}

To clarify the definition of $\fds^{\oc}(\CE,a)$, 
let us consider the case $\nv=\oc=1$. 
To this end, we fix an evolution PDE~\er{sys_intr} with $\nv=1$
and study ZCRs of order~$\le 1$ of this PDE. 

Since we assume $\nv=1$, any ZCR of order~$\le 1$ is written as follows
\beq
\lb{zcru1}
A=A(x,t,u^1_0,u^1_1),\qquad B=B(x,t,u^1_0,u^1_1,\dots,u^1_{\eo}),\qquad
D_x(B)-D_t(A)+[A,B]=0.
\ee

According to Theorem~\ref{thzcro1} in Section~\ref{csev}, any such ZCR~\er{zcru1}
on a neighborhood of $a\in\CE$ is gauge equivalent to a ZCR of the form 
\begin{gather}
\lb{nfzcr}
\tilde{A}=\tilde{A}(x,t,u^1_0,u^1_1),\qquad \tilde{B}=
\tilde{B}(x,t,u^1_0,u^1_1,\dots,u^1_{\eo}),\\
\lb{nfzcreq}
D_x(\tilde{B})-D_t(\tilde{A})+[\tilde{A},\tilde{B}]=0,\\
\lb{nfab}
\frac{\pd\tilde{A}}{\pd u^1_1}(x,t,u^1_0,a^1_1)=0,\qquad 
\tilde{A}(x,t,a^1_0,a^1_1)=0,\qquad\tilde{B}(x_a,t,a^1_0,a^1_1,\dots,a^1_{\eo})=0,
\end{gather}
where $x_a$, $a^1_k$ are the constants determined 
by the point $a\in\CE$ given by~\er{iaxtaik}.

In other words, properties~\er{nfab} 
determine a normal form for ZCRs~\er{zcru1} 
with respect to the action of the group of gauge transformations 
on a neighborhood of $a\in\CE$.

A similar normal form for ZCRs~\er{mnoc},~\er{mnzcr} 
with arbitrary $\nv$, $\oc$ is described 
in Theorem~\ref{evcov} in Section~\ref{csev}.

Since the functions $\tilde{A}$, $\tilde{B}$ from~\er{nfzcr},~\er{nfab} 
are analytic on a neighborhood of $a\in\CE$, these functions
are represented as absolutely convergent power series
\begin{gather}
\label{aser1}
\tilde{A}=\sum_{l_1,l_2,i_0,i_1\ge 0} 
(x-x_a)^{l_1} (t-t_a)^{l_2}(u^1_0-a^1_0)^{i_0}(u^1_1-a^1_1)^{i_1}\cdot
\tilde{A}^{l_1,l_2}_{i_0,i_1},\\
\lb{bser1}
\tilde{B}=\sum_{l_1,l_2,j_0,\dots,j_{\eo}\ge 0} 
(x-x_a)^{l_1} (t-t_a)^{l_2}(u^1_0-a^1_0)^{j_0}\dots(u^1_{\eo}-a^1_{\eo})^{j_{\eo}}\cdot
\tilde{B}^{l_1,l_2}_{j_0\dots j_{\eo}}.
\end{gather}
Here $\tilde{A}^{l_1,l_2}_{i_0,i_1}$ and $\tilde{B}^{l_1,l_2}_{j_0\dots j_{\eo}}$ 
are elements of a Lie algebra, which we do not specify yet. 

Using formulas~\er{aser1},~\er{bser1}, we see that properties~\er{nfab} are equivalent to 
\beq
\lb{ab000int}
\tilde{A}^{l_1,l_2}_{i_0,1}=
\tilde{A}^{l_1,l_2}_{0,0}=
\tilde{B}^{0,l_2}_{0\dots 0}=0
\qquad\qquad\forall\,l_1,l_2,i_0\in\zp.
\ee
To define $\fds^1(\CE,a)$, we regard $\tilde{A}^{l_1,l_2}_{i_0,i_1}$, 
$\tilde{B}^{l_1,l_2}_{j_0\dots j_{\eo}}$ from~\er{aser1},~\er{bser1} 
as abstract symbols.  
By definition, the algebra $\fds^1(\CE,a)$ is generated by the symbols 
$\tilde{A}^{l_1,l_2}_{i_0,i_1}$, $\tilde{B}^{l_1,l_2}_{j_0\dots j_{\eo}}$
for $l_1,l_2,i_0,i_1,j_0,\dots,j_{\eo}\in\zp$.
Relations for these generators are provided by equations~\er{nfzcreq},~\er{ab000int}. 
A more detailed description of this construction 
is given in Section~\ref{csev} for arbitrary $\nv$, $\oc$.
The scalar case $\nv=1$ is studied also in~\cite{scal13}.
\end{example}

\begin{remark}
\lb{rnzcr}

Let $\dc\in\zsp$. Consider a system of the form
\begin{gather}
\lb{wlx}
w^l_x=
\al^l(w^1,\dots,w^\dc,x,t,u^j_0,u^j_1,\dots,u^j_\oc),\\
\lb{wlt}
w^l_t=
\beta^l(w^1,\dots,w^\dc,x,t,u^j_0,u^j_1,\dots,u^j_{\ocn+\eo-1}),\\
\notag
w^l=w^l(x,t),\qquad\qquad l=1,\dots,\dc.
\end{gather}

We assume that system \er{wlx}, \er{wlt} is compatible modulo~\eqref{sys_intr}, 
which means the following. 
Differentiating equation \er{wlx} with respect to $t$ 
and equation \er{wlt} with respect to $x$, one obtains some expressions for 
$w^l_{xt}$ and $w^l_{tx}$.
The expressions for $w^l_{xt}$ and $w^l_{tx}$ must coincide, modulo~\eqref{sys_intr}.

For example, the linear system~\er{auxls} corresponding to the ZCR \er{mnoc}, \er{mnzcr} 
is compatible modulo~\eqref{sys_intr}. 
In general, in \er{wlx}, \er{wlt} the functions $\al^l$, $\beta^l$ may depend on 
$w^1,\dots,w^\dc$ nonlinearly.
It is well known that such compatible systems play an important role in the theory 
of B\"acklund transformations (see, e.g., \cite{backl82}).

Recall that $\CE$ is the manifold with the coordinates \er{ixtuik}.
We assume that the functions $\al^l$, $\beta^l$ from \er{wlx}, \er{wlt}
are defined on the manifold $\mathbf{W}\times\mathbf{U}$, 
where $\mathbf{W}$ is a manifold with the coordinates $w^1,\dots,w^\dc$ 
and $\mathbf{U}$ is an open subset of $\CE$.

Let $\bl$ be the Lie algebra of vector fields on $\mathbf{W}$. 
Since $\mathbf{W}$ is a manifold with the coordinates $w^1,\dots,w^\dc$, 
the Lie algebra $\bl$ consists of vector fields of the form 
$\sum_{l=1}^\dc f^l(w^1,\dots,w^\dc)\dfrac{\pd}{\pd w^l}$.

Then 
\begin{gather}
\lb{vfaw}
\mathcal{A}=\sum_{l=1}^\dc\al^l(w^1,\dots,w^\dc,x,t,u^j_0,u^j_1,\dots,u^j_\oc)
\frac{\partial}{\pd w^l},\\
\lb{vfbw}
\mathcal{B}=\sum_{l=1}^\dc 
\beta^l(w^1,\dots,w^\dc,x,t,u^j_0,u^j_1,\dots,u^j_{\ocn+\eo-1})\frac{\partial}{\pd w^l}
\end{gather}
can be regarded as functions on $\mathbf{U}\subset\CE$ with values in $\bl$.

It is well known (and is explained in~\cite{prol12}) that 
system \er{wlx}, \er{wlt} is compatible modulo~\eqref{sys_intr} iff 
$$
D_x(\mathcal{B})-D_t(\mathcal{A})+[\mathcal{A},\mathcal{B}]=0.
$$
Therefore, \er{vfaw}, \er{vfbw} can be viewed as a ZCR with values in $\bl$.

The preprint~\cite{prol12} shows that, up to gauge equivalence, 
compatible systems of the form \er{wlx}, \er{wlt}
can be described in terms of homomorphisms $\fds^{\oc}(\CE,a)\to\bl$.
(The notion of gauge equivalence for such systems is discussed in~\cite{prol12}.)
\end{remark}

The main goal of this paper is to demonstrate techniques for computation 
of the algebras $\fds^{\oc}(\CE,a)$ for multicomponent evolution PDEs.
Since the algebras $\fds^{\oc}(\CE,a)$ are responsible for all ZCRs, 
computation of $\fds^{\oc}(\CE,a)$ leads to classification of ZCRs up to gauge equivalence.
The results of the paper are described in the next subsection.

\subsection{The main results}
\label{subsmr}

As has been discussed above, for each $\oc\in\zp$ the algebra~$\fds^{\oc}(\CE,a)$ is defined 
by a certain set of generators and relations arising from 
a normal form of ZCRs. 
In Theorem~\ref{lemgenfdq} in Section~\ref{secrgen} 
we describe a smaller subset of generators for~$\fds^{\oc}(\CE,a)$. 
This result is very helpful in the computation of $\fds^{\oc}(\CE,a)$
for concrete PDEs, which is demonstrated in Sections~\ref{secfd1},~\ref{secfdk}.

In Theorem~\ref{thmhfd0} in Section~\ref{fd0we} we describe a relation between
the algebra $\fds^0(\CE,a)$ and the Wahlquist-Estabrook prolongation algebra.
This is helpful in the computation of $\fds^0(\CE,a)$ for PDEs whose 
Wahlquist-Estabrook prolongation algebra is known.

The main example of a PDE considered in this paper 
is the multicomponent generalization of the Landau-Lifshitz equation 
from~\cite{mll,skr-jmp}.  
To describe this PDE, we need some notation.

For any $k\in\zsp$ and any $k$-dimensional vectors 
$V={(v^1,\dots,v^k)}$ and $W={(w^1,\dots,w^k)}$, we set 
$\langle V,W\rangle=\sum_{i=1}^kv^iw^i$.

Recall that $\fik$ is either $\Com$ or $\mathbb{R}$.
Fix an integer $n\ge 2$.
Let $r_1,\dots,r_n\in\fik$ be such that $r_i\neq r_j$ for all $i\neq j$. 
Denote by $R=\mathrm{diag}\,(r_1,\dots,r_n)$ the diagonal $n\times n$ 
matrix with entries $r_i$. Consider the PDE
\begin{equation}
\label{main}
S_t=\Big(S_{xx}+\frac32\langle S_x,S_x\rangle S\Big)_x+\frac32\langle S,RS\rangle S_x,
\qquad\quad \langle S,S\rangle=1,\qquad\quad
R=\mathrm{diag}\,(r_1,\dots,r_n),
\end{equation} 
where $S=\big(s^1(x,t),\dots,s^n(x,t)\big)$ 
is an $n$-dimensional vector-function, and $s^i(x,t)$ take values in $\fik$.

System~\eqref{main} was introduced in~\cite{mll}. 
According to~\cite{mll}, 
for $n=3$ it coincides with the higher symmetry (the commuting flow) 
of third order for the Landau-Lifshitz equation. 
Thus~\eqref{main} can be regarded as an $n$-component generalization 
of the Landau-Lifshitz equation. 
For this reason, we call~\er{main} the $n$-component Landau-Lifshitz system.

The paper~\cite{mll} considers also the following algebraic curve 
\begin{equation}
\label{curve}
\la_i^2-\la_j^2=r_j-r_i,\qquad\qquad i,j=1,\dots,n,
\end{equation}
in the space $\fik^n$ with coordinates $\la_1,\dots,\la_n$.
According to~\cite{mll}, this curve is of genus ${1+(n-3)2^{n-2}}$, 
and system~\eqref{main} possesses a ZCR parametrized by points of this curve.
(The ZCR is described in Remark~\ref{mllzcr} below.) 

System~\eqref{main} has an infinite number of symmetries, 
conservation laws~\cite{mll}, 
and an auto-B\"acklund transformation with a parameter~\cite{ll-backl}. 
Soliton-like solutions of~\eqref{main} can be found in~\cite{ll-backl}. 
In~\cite{skr-jmp} system~\eqref{main}
and its symmetries are constructed by means of the Kostant--Adler scheme.

\begin{remark}
\lb{mllzcr}
For $i,j=1,\dots,n+1$, let 
$E_{i,j}\in\mathfrak{gl}_{n+1}$ be the $(n+1)\times(n+1)$ matrix 
with $(i,j)$-th entry equal to 1 and all other entries equal to zero. 

From the results of~\cite{mll,skr-jmp} one can obtain the following 
$\mathfrak{gl}_{n+1}$-valued ZCR for the PDE~\eqref{main}
\begin{gather}
\label{M}
\CA=\sum_{i=1}^ns^i\la_i(E_{i,n+1}+E_{n+1,i}),\\
\label{N}
\CB=D_x^2(\CA)+[D_x(\CA),\CA]
+\Big(r_1+\la_1^2+\frac12\langle S,RS\rangle
+\frac32\langle D_x(S),D_x(S)\rangle\Big)\CA,\\
\notag
D_x(\CB)-D_t(\CA)+[\CA,\CB]=0.
\end{gather}
Here $\la_1,\dots,\la_n\in\fik$ are parameters satisfying~\eqref{curve}. 

We regard $\la_i(E_{i,n+1}+E_{n+1,i})$ as $\mathfrak{gl}_{n+1}$-valued functions
on the curve~\er{curve}. 
Let $\mathfrak{L}$ be the infinite-dimensional Lie algebra of all polynomial 
$\mathfrak{gl}_{n+1}$-valued functions $M(\la_1,\dots,\la_n)$ on the curve~\er{curve}. 

Let $L(n)\subset\mathfrak{L}$ be the Lie subalgebra
generated by the functions $\la_i(E_{i,n+1}+E_{n+1,i})$, $i=1,\dots,n$.
Using relations~\er{curve}, one can easily show that $L(n)$ 
consists of linear combinations of the functions
\begin{gather}
\lb{lae}
((\la_1)^2+r_1)^l\la_i(E_{i,n+1}+E_{n+1,i}),\qquad\quad
((\la_1)^2+r_1)^l\la_i\la_j(E_{i,j}-E_{j,i}),\\
\notag
i,j=1,\dots,n,\qquad i<j,\qquad 
l\in\zp.
\end{gather}
According to \cite{mll-2012}, the Lie algebra $L(n)$ is infinite-dimensional,
and the functions~\er{lae} form a basis for it. 
We describe $L(n)$ in more detail in Section~\ref{seclnfd0}.

Note that the algebra $L(n)$ is very similar to infinite-dimensional 
Lie algebras that were studied in~\cite{skr,skr-jmp}. 
According to~\cite{mll-2012}, the Lie algebra $L(n)$ appears in the description 
of the Wahlquist-Estabrook prolongation algebra for the PDE~\er{main}.
The paper~\cite{mll-2012} gives also a presentation for the algebra $L(n)$
in terms of a finite number of generators and relations. 

Note that \er{M}, \er{N} can be regarded as functions with values in $L(n)$.
So \er{M}, \er{N} can be viewed as a ZCR with values in $L(n)$.
\end{remark}

\begin{remark}
\lb{son1}
Let $\mathfrak{so}_{n,1}\subset\mathfrak{gl}_{n+1}$ 
be the Lie algebra of the matrix Lie group 
$\mathrm{O}(n,1)\subset\mathrm{GL}_{n+1}$, 
which consists of invertible linear transformations 
that preserve the standard bilinear form of signature $(n,1)$.

The algebra $\mathfrak{so}_{n,1}$ has the following basis
$$
E_{i,j}-E_{j,i},\qquad i<j\le n,\qquad\quad 
E_{l,n+1}+E_{n+1,l},\qquad l=1,\dots,n.
$$
Note that the functions \er{M}, \er{N}, \er{lae} take values in 
$\mathfrak{so}_{n,1}$.
\end{remark}

\begin{remark}
\lb{izcrso}
In order to describe the algebras $\fd^\oc(\CE,a)$ for system~\er{main},
we need to resolve the constraint $\langle S,S\rangle=1$
for the vector-function $S=\big(s^1(x,t),\dots,s^n(x,t)\big)$.
Following~\cite{mll}, we do this as follows
\begin{equation}
\label{insp}
s^j=\frac{2u^j}{1+\langle u,u\rangle},\qquad\qquad
j=1,\dots,n-1,\qquad\qquad
s^n=\frac{1-\langle u,u\rangle}{1+\langle u,u\rangle},
\end{equation}
where $u=\big(u^1(x,t),\dots,u^{n-1}(x,t)\big)$ 
is an $(n-1)$-dimensional vector-function.

Then system~\er{main} can be written as a PDE of the form 
$u^i_t=u^i_{3}+G^i(u^j,u^j_1,u^j_{2})$, $i=1,\dots,n-1$.
The explicit formula for this PDE is \er{pt} in Section~\ref{seclnfd0}.
Substituting~\er{insp} in \er{M}, \er{N}, 
we see that \er{M}, \er{N} is a ZCR of order~$\le 0$ for this PDE.
(That is, the function~\er{M} does not depend on $u^j_l$ for $l>0$.)

In Theorem~\ref{zcrson} in Section~\ref{secszcr} 
we construct for this PDE an $\mathfrak{so}_{n-1}$-valued ZCR of 
the form 
\beq
\lb{fzcrso}
A=A(u^j,u^j_1),\qquad B=B(u^j,u^j_1,u^j_2,u^j_3),\qquad
D_x(B)-D_t(A)+[A,B]=0,
\ee
where $\mathfrak{so}_{n-1}$ is the Lie algebra of skew-symmetric $(n-1)\times(n-1)$ matrices with entries from $\fik$. 
According to our notation, $u^j=u^j_0$, hence the ZCR~\er{fzcrso} is of order~$\le 1$.
Note that the ZCR~\er{fzcrso} does not depend on any parameters.

So we have two very different ZCRs for the same PDE~\er{main},
which can be transformed to the PDE~\er{pt} by the transformation~\er{insp}. 
Namely, we have the $\mathfrak{gl}_{n+1}$-valued ZCR \er{M}, \er{N} 
and the $\mathfrak{so}_{n-1}$-valued ZCR~\er{fzcrso} described in Theorem~\ref{zcrson}.

One can embed the Lie algebras $\gl_{n+1}$ and $\so_{n-1}$ 
into the Lie algebra $\gl_\sm$ for some $\sm\ge n+1$, and then 
one can regard these ZCRs as $\gl_\sm$-valued ZCRs.
One can ask whether these ZCRs can become gauge equivalent 
after suitable embeddings $\gl_{n+1}\hookrightarrow\gl_\sm$ and  
$\so_{n-1}\hookrightarrow\gl_\sm$.
In Remark~\ref{rzcrnge} in Section~\ref{sidyi} 
we show that these ZCRs cannot become gauge equivalent.
\end{remark}

Using the theory described in Sections \ref{csev}, \ref{fd0we}, 
we compute the algebras $\fds^{\oc}(\CE,a)$ for the PDE~\er{main} 
in Sections~\ref{seclnfd0},~\ref{secfd1},~\ref{secfdk}.

The PDE~\er{main} is imposed on a vector-function 
$S=\big(s^1(x,t),\dots,s^n(x,t)\big)$ satisfying $\langle S,S\rangle=1$.
We compute $\fds^{\oc}(\CE,a)$ for this PDE in the case $n\ge 4$.
(The cases $n=2,3$ are less interesting and will described elsewhere.) 

In Section~\ref{seclnfd0} we compute the algebra $\fds^0(\CE,a)$ for this PDE, using 
its Wahlquist-Estabrook prolongation algebra described in~\cite{mll-2012}.

In Section~\ref{secfd1} we compute $\fds^1(\CE,a)$, and in Section~\ref{secfdk}  
the algebras $\fds^k(\CE,a)$ for all $k\ge 2$ are computed for this PDE.
(We describe the structure of these Lie algebras up to some non-essential 
nilpotent ideals.)

The results are summarized in the following theorem, which is proved 
in Section~\ref{subsfd}.

\begin{theorem}[Section~\ref{subsfd}]
\lb{ithfamll}
Let $n\ge 4$. The Lie algebras $\fds^{\oc}(\CE,a)$ for the PDE~\eqref{main}  
have the following structure.

The algebra $\fd^0(\CE,a)$ is isomorphic to 
the algebra $L(n)$ defined in Remark~\ref{mllzcr}.

There is an abelian ideal $\mathfrak{S}$ of $\fd^{1}(\CE,a)$ 
such that $\fd^{1}(\CE,a)/\mathfrak{S}\cong L(n)\oplus\mathfrak{so}_{n-1}$, 
where $\mathfrak{so}_{n-1}$ is the Lie algebra 
of skew-symmetric $(n-1)\times(n-1)$ matrices. 
The surjective homomorphism $\fd^{1}(\CE,a)\to\fd^0(\CE,a)$ from~\er{intfdoc1} coincides 
with the composition of the homomorphisms
\beq
\notag
\fd^{1}(\CE,a)\to\fd^{1}(\CE,a)/\mathfrak{S}\cong L(n)\oplus\mathfrak{so}_{n-1}
\to L(n)\cong \fd^0(\CE,a). 
\ee

Let $\tau_k\cl\fd^{k}(\CE,a)\to\fd^{k-1}(\CE,a)$ 
be the surjective homomorphism from~\er{intfdoc1}.
Then for any $k\ge 2$ we have
\beq
\notag
[v_1,[v_2,v_3]]=0\qquad\quad
\forall\,v_1,v_2,v_3\in\ker\tau_k.
\ee
In particular, the kernel of $\tau_k$ is nilpotent.

For each $k\ge 1$, 
let $\vf_k\colon\fd^k(\CE,a)\,\to\,L(n)\oplus\mathfrak{so}_{n-1}$ 
be the composition of the surjective homomorphisms 
\beq
\notag
\fd^{k}(\CE,a)\to\fd^{1}(\CE,a)\to\fd^{1}(\CE,a)/\mathfrak{S}\cong L(n)\oplus\mathfrak{so}_{n-1},
\ee
where $\fd^k(\CE,a)\to\fd^1(\CE,a)$ arises from~\er{intfdoc1}.
Then 
\beq
\notag
[h_1,[h_2,\dots,[h_{2k-2},[h_{2k-1},h_{2k}]]\dots]]=0\qquad\qquad
\forall\,h_1,h_2,\dots,h_{2k}\in\ker\vf_k.
\ee
In particular, the kernel of $\vf_k$ is nilpotent.
\end{theorem}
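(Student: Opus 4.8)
The plan is to establish the four assertions by assembling the explicit presentations of $\fd^0(\CE,a)$, $\fd^1(\CE,a)$ and $\fd^k(\CE,a)$ obtained in Sections~\ref{seclnfd0},~\ref{secfd1},~\ref{secfdk}, and then to control the kernels of the tower~\er{intfdoc1} by induction on $k$. In each computation I would work with the reduced generating set of Theorem~\ref{lemgenfdq}, which turns the relations coming from the normal form of ZCRs into a manageable system.

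For the first assertion I would apply Theorem~\ref{thmhfd0}, relating $\fd^0(\CE,a)$ to the Wahlquist--Estabrook prolongation algebra of~\er{main}; since the latter is identified with $L(n)$ in~\cite{mll-2012}, this gives $\fd^0(\CE,a)\cong L(n)$. For the second assertion I would use the two ZCRs already available: the $\gl_{n+1}$-valued ZCR of Remark~\ref{mllzcr} induces $\fd^1(\CE,a)\to L(n)$, and the $\so_{n-1}$-valued ZCR of Theorem~\ref{zcrson}, which is genuinely of order $\le 1$, induces $\fd^1(\CE,a)\to\so_{n-1}$; together they give a homomorphism $\fd^1(\CE,a)\to L(n)\oplus\so_{n-1}$. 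The work of Section~\ref{secfd1} is to show, from the order-$1$ generators and relations, that this map is surjective and that its kernel $\mathfrak{S}$ is an abelian ideal. The asserted factorization of the tower map $\fd^1(\CE,a)\to\fd^0(\CE,a)$ is then automatic: its $L(n)$-component is exactly the homomorphism induced by the $\gl_{n+1}$-valued ZCR, which is of order $\le 0$ and hence factors through $\fd^0(\CE,a)$.

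For the third assertion I would isolate, again by Theorem~\ref{lemgenfdq}, the generators of $\fd^k(\CE,a)$ that are new at order $k$, i.e.\ those spanning $\ker\tau_k$, and read off from the defining relations that every triple bracket of such generators vanishes; this yields $[v_1,[v_2,v_3]]=0$ for all $v_1,v_2,v_3\in\ker\tau_k$ and is the computation of Section~\ref{secfdk}.

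The last assertion I would prove by induction on $k$, the base case $k=1$ being the abelianness of $\mathfrak{S}=\ker\vf_1$. For the step I would use $\vf_k=\vf_{k-1}\circ\tau_k$: applying $\tau_k$ to a $2k$-fold iterated bracket of elements of $\ker\vf_k$ and invoking the case $k-1$ shows that its inner $(2k-2)$-fold sub-bracket lies in $\ker\tau_k$. The main obstacle is that this does not close the induction by itself, because nilpotency classes do not simply add across an extension --- even an abelian-by-abelian Lie algebra can be nilpotent of class $3$ --- so the class-$\le 2$ bound of the third assertion controls triple brackets inside $\ker\tau_k$ but not brackets whose two outer factors lie in the larger algebra $\ker\vf_k$. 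To overcome this I would strengthen the third assertion to the vanishing of $[\ker\vf_k,[\ker\vf_k,\ker\tau_k]]$, which I expect to extract from the analysis of Section~\ref{secfdk} by means of a grading by total polynomial degree in the shifted jet coordinates $u^i_j-a^i_j$: the bracket adds degrees, the normal-form relations place the generators of $\mathfrak{S}$ and of each successive layer $\ker\tau_j$ in a bounded band of positive degrees, and the outcome is that $\ker\vf_k$ is concentrated in degrees at most $2k-1$. A $2k$-fold iterated bracket then has degree at least $2k$ and must vanish, which is precisely the claimed identity; assembling this with the three previous assertions completes the proof in Section~\ref{subsfd}.
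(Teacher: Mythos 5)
Your handling of the first three assertions follows the paper's own route: $\fd^0(\CE,a)\cong L(n)$ via Theorem~\ref{thmhfd0} and the Wahlquist--Estabrook algebra of~\cite{mll-2012} (Theorem~\ref{thfd0}); the ideal $\mathfrak{S}$ and the quotient $L(n)\oplus\so_{n-1}$ from the two ZCRs and the order-$1$ analysis (Theorems~\ref{idealyij},~\ref{tijnge4}); and the class-two bound on $\ker\tau_k$ from the order-$k$ analysis. You also correctly diagnose the one genuine subtlety: the naive induction for the last assertion does not close, and the statement that closes it is precisely $[\ker\vf_k,[\ker\vf_k,\ker\tau_k]]=0$. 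This is the shape of the paper's argument, which proves the even stronger inclusions $[\ker\tau_k,\ker\hmm_k]\subset\mathfrak{S}_k$ and $[\mathfrak{S}_k,\ker\hmm_k]=0$, where $\hmm_k\cl\fd^k(\CE,a)\to\so_{n-1}$ is induced by the explicit $\so_{n-1}$-valued ZCR of Theorem~\ref{zcrson} and $\ker\vf_k\subset\ker\hmm_k$ (Theorems~\ref{idealyijk} and~\ref{idealyik}); with these in hand your induction runs exactly as you set it up.

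The gap is in how you propose to prove that strengthening. A grading by total polynomial degree in the shifted jet coordinates does not exist on $\fd^k(\CE,a)$: the defining relation $D_x(\gb)-D_t(\ga)+[\ga,\gb]=0$ is not homogeneous for any assignment of degrees to the generators $\ga^{l_1,l_2}_\al$, $\gb^{l_1,l_2}_\beta$, because the terms $D_x^j(F^i)\,\pd\ga/\pd u^i_j$ in $D_t$ carry the full Taylor expansion of the right-hand side of~\er{pt} and hence couple, within a single relation, generators whose degrees differ by arbitrary amounts (while $\pd/\pd x$ and $\pd/\pd t$ shift the $x,t$-degrees by one). Nor is the claimed ``band'' bound plausible: the new generators at order $k$ are the Taylor coefficients of the power series $Y^k_{ij}$, $Y^k_i$ in all of the variables $x,t,u^i_m$, so their degrees are unbounded above, and the elements that actually control $\mathfrak{S}$ are the zero-degree Taylor coefficients $z^{\ai\bi}$ of $Y_{ij}$ (Lemma~\ref{yijzab}), so there is no degree-counting reason for a $2k$-fold bracket to vanish. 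What replaces the grading in the paper is a centrality analysis: the explicit equations \er{system.defining.Yij}, \er{tkiak1}, \er{eq.comp.Akkia} and their compatibility conditions, fed through Lemmas~\ref{zi},~\ref{zpsi} and the homomorphism $\hmm_k$, show that the coefficients of $Y^k_{ij}$ commute with all of $\ker\hmm_k$ and that $[\ker\tau_k,\ker\hmm_k]$ lands in $\mathfrak{S}_k$. Without some argument of this kind your final step does not go through.
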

\begin{remark}
Nilpotent ideals of the Lie algebras $\fds^{\oc}(\CE,a)$ are not important 
for the main applications to the theory of 
B\"acklund transformations and integrability.

Theorem~\ref{ithfamll} says that, for the PDE~\eqref{main} in the case $n\ge 4$, 
for each $k\ge 1$ there is a surjective homomorphism 
$\vf_k\colon\fd^k(\CE,a)\,\to\,L(n)\oplus\mathfrak{so}_{n-1}$
such that the kernel of $\vf_k$ is nilpotent.
Since nilpotent ideals of $\fd^k(\CE,a)$ are not important, 
one can say that ``the main part'' of $\fd^k(\CE,a)$ is $L(n)\oplus\mathfrak{so}_{n-1}$.

And one has also the isomorphism $\fd^0(\CE,a)\cong L(n)$
for this PDE, according to Theorem~\ref{ithfamll}.
\end{remark}

Consider a ZCR \er{mnoc}, a gauge transformation $G=G(x,t,u^j_0,u^j_1,\dots,u^j_l)$, 
and the corresponding gauge equivalent ZCR \er{mnprint}.
Then, essentially, \er{mnoc} and \er{mnprint} carry the same information, 
because one can switch from \er{mnoc} to \er{mnprint} and vice versa, 
using $G$ and $G^{-1}$.

Furthermore, the following fact about ZCRs with values in a Lie algebra $\mg$ is well known. 
If one is interested in applications to the theory of integrable systems, 
then one can ignore nilpotent ideals of the Lie algebra $\mg$.

Therefore, it makes sense to classify ZCRs
up to gauge equivalence and up to killing nilpotent ideals 
in the corresponding Lie algebras.
As we show below, the algebras $\fds^{\oc}(\CE,a)$ are helpful in this respect.

As has been said above, for every finite-dimensional Lie algebra $\mg$, 
homomorphisms $\fds^{\oc}(\CE,a)\to\mg$ classify (up to gauge equivalence) 
all $\mg$-valued ZCRs~\er{mnoc},~\er{mnzcr} of order~$\le\oc$, 
where functions $A$, $B$ are defined on a neighborhood of the point $a\in\CE$.

According to Theorem~\ref{ithfamll}, 
for the PDE~\er{main} in the case $n\ge 4$, 
we have $\fd^0(\CE,a)\cong L(n)$, and for each $k\ge 1$ there is a surjective 
homomorphism $\vf_k\colon\fd^k(\CE,a)\,\to\,L(n)\oplus\mathfrak{so}_{n-1}$
such that the kernel of $\vf_k$ is nilpotent.

This allows us to classify all ZCRs (up to gauge equivalence and 
up to killing nilpotent ideals) for the PDE~\er{main} in the case $n\ge 4$ as follows.
In Section~\ref{secczcr} we prove that, after suitable gauge transformations 
and after killing some nilpotent ideals, any ZCR becomes isomorphic 
to a reduction of the direct sum of 
the $L(n)$-valued ZCR described in Remark~\ref{mllzcr} 
and the $\mathfrak{so}_{n-1}$-valued ZCR described in Remark~\ref{izcrso}.
(The notions of direct sums and reductions of ZCRs 
are explained in Section~\ref{sbdszcr}.)

In other words, as a result of the classification of 
all ZCRs (depending on derivatives of arbitrary finite order)
for the PDE~\er{main} in the case $n\ge 4$,
we obtain two main non-equivalent ZCRs: the $L(n)$-valued ZCR 
and the $\mathfrak{so}_{n-1}$-valued ZCR described above.
In Section~\ref{secczcr} we prove that 
any other ZCR for the considered PDE is essentially equivalent 
(up to killing nilpotent ideals) to a reduction of the direct sum of these two main ZCRs. 

In our opinion, it is interesting to see that, 
for a given multicomponent evolution PDE, 
one can classify all ZCRs (depending on derivatives of arbitrary order), 
and, as a result of the classification,
one obtains several non-equivalent ZCRs depending on different derivatives 
and taking values in different Lie algebras.
As has been said above, we classify ZCRs 
up to gauge equivalence and up to killing nilpotent ideals 
in the corresponding Lie algebras.

In the present paper we do this for the multicomponent Landau-Lifshitz system, 
but the described computational techniques can be applied to many more evolution PDEs.
(Although for some PDEs the computations may be very difficult.)

For some scalar evolution PDEs of orders $3$, $5$, $7$ 
(including the KdV and Krichever-Novikov equations), 
a similar approach to classification of ZCRs was described in~\cite{cfa,scal13}.
Note that the multicomponent case considered in the present paper is much more 
sophisticated than the scalar case considered in~\cite{cfa,scal13}.


Using the methods developed in Sections~\ref{csev}--\ref{secfdk}, 
in Section~\ref{sllnls} we describe the structure of the Lie algebras 
$\fds^\oc(\CE,a)$, $\oc\in\zp$, for the 
classical Landau-Lifshitz and nonlinear Schr\"odinger equations 
in the case $\fik=\Com$.

As has been said above, 
in Theorem~\ref{thmhfd0} in Section~\ref{fd0we} we describe a relation between
the algebra $\fds^0(\CE,a)$ and the Wahlquist-Estabrook prolongation algebra.
To compute $\fds^0(\CE,a)$ for the classical Landau-Lifshitz 
and nonlinear Schr\"odinger equations in Section~\ref{sllnls},
we use this relation and the corresponding 
Wahlquist-Estabrook prolongation algebras computed in~\cite{ll,we-nls}.

For the classical Landau-Lifshitz equation, we show that $\fds^0(\CE,a)$ 
is isomorphic to an infinite-dimensional Lie algebra of certain 
$\mathfrak{so}_3(\Com)$-valued functions on an elliptic curve.
According to~\cite{ll}, this algebra arises from the well-known elliptic 
$\mathfrak{so}_3(\Com)$-valued ZCR of the Landau-Lifshitz 
equation~\cite{sklyanin,ft,ll}.
For this equation, we show also that the Lie algebras 
$\fd^q(\CE,a)$ with $q\in\zsp$ are obtained from the Lie algebra $\fd^0(\CE,a)$
by applying several times the operation of central extension.

For the nonlinear Schr\"odinger equation, we show that $\fd^0(\CE,a)$
is isomorphic to the direct sum of the infinite-dimensional Lie algebra
$$
\sl_2(\Com[\la])=\sl_2(\Com)\otimes_\Com\Com[\la]
$$
and a one-dimensional abelian Lie algebra. 
Here $\Com[\la]$ is the algebra of polynomials in the variable $\la$.
The Lie algebra $\sl_2(\Com[\la])$ arises from the well-known 
parameter-dependent 
$\sl_2(\Com)$-valued ZCR of the nonlinear Schr\"odinger equation, 
and $\la$ corresponds to the parameter in the ZCR.
For this equation, we show also that the Lie algebras 
$\fd^q(\CE,a)$ with $q\in\zsp$ are obtained from the Lie algebra $\fd^0(\CE,a)$
by applying several times the operation of central extension.

Using the algebras $\fds^{\oc}(\CE,a)$ for $(1+1)$-dimensional evolution PDEs, 
the preprint \cite{prol12} describes a necessary condition for existence 
of B\"acklund transformations between two given PDEs.
This necessary condition is given in \cite{prol12}
in terms of the algebras $\fds^{\oc}(\CE,a)$.

Using this necessary condition from \cite{prol12} 
and knowing the structure of the algebras $\fds^{\oc}(\CE,a)$, 
one can sometimes prove non-existence of 
B\"acklund transformations between two given PDEs.
Examples of such results for scalar evolution equations are given in~\cite{prol12} 
and references therein.
The methods to compute $\fds^{\oc}(\CE,a)$ described in the present paper 
allow one to obtain similar results for some multicomponent PDEs, 
which will be discussed in forthcoming publications.

\begin{remark}
As has been said above, the ZCR~\er{M},~\er{N} for the PDE~\er{main}
is parametrized by points of the curve~\er{curve}.
Some other integrable PDEs with ZCRs parametrized by the 
curve~\eqref{curve} were introduced in~\cite{mll,naukma,skr}. 
It was noticed in~\cite{skr} that the formulas
$\la=\la_i^2+r_i,\,\ y=\prod_{i=1}^n\la_i$ 
provide a map from the curve~\eqref{curve}
to the hyperelliptic curve $y^2=\prod_{i=1}^n(\la-r_i)$.
According to~\cite{mll}, for $n>3$ 
the curve~\eqref{curve} itself is not hyperelliptic.
\end{remark}

\begin{remark}
\lb{remzcrgt}
It is well known that equation~\er{mnzcr} implies 
$D_x(\tilde{B})-D_t(\tilde{A})+[\tilde{A},\tilde{B}]=0$ 
for $\tilde{A}$, $\tilde{B}$ given by~\er{mnprint}.
Indeed, one has $D_x+\tilde{A}=G(D_x+A)G^{-1}$ and $D_x+\tilde{B}=G(D_t+B)G^{-1}$. 
Therefore, 
\begin{multline*}
D_x(\tilde{B})-D_t(\tilde{A})+[\tilde{A},\tilde{B}]=
[D_x+\tilde{A},D_t+\tilde{B}]=[G(D_x+A)G^{-1},G(D_t+B)G^{-1}]=\\
=G[D_x+A,D_t+B]G^{-1}=G(D_x(B)-D_t(A)+[A,B])G^{-1}.
\end{multline*}
Hence the equation $D_x(B)-D_t(A)+[A,B]=0$ implies 
$D_x(\tilde{B})-D_t(\tilde{A})+[\tilde{A},\tilde{B}]=0$.
\end{remark}

\begin{remark}
Some other approaches to the study of 
the action of gauge transformations on ZCRs can be found 
in~\cite{marvan93,marvan97,marvan2010,sakov95,sakov2004,sebest2008} 
and references therein.
For a given ZCR with values in a matrix Lie algebra $\mg$, 
the papers~\cite{marvan93,marvan97,sakov95} define 
certain $\mg$-valued functions that transform by conjugation 
when the ZCR transforms by gauge. 
Applications of these functions to construction and classification of 
some types of ZCRs are described  
in~\cite{marvan93,marvan97,marvan2010,sakov95,sakov2004,sebest2008}.

To our knowledge, 
the theory of~\cite{marvan93,marvan97,marvan2010,sakov95,sakov2004,sebest2008} 
does not produce any infinite-dimensional Lie algebras responsible for ZCRs. 
So this theory does not contain the algebras $\fds^\oc(\CE,a)$.
\end{remark}

\subsection{Abbreviations, conventions, and notation}
\lb{subs-conv}

The following abbreviations, conventions, and notation are used in the paper.

ZCR = zero-curvature representation, WE = Wahlquist-Estabrook.

The symbols $\zsp$ and $\zp$ denote the sets of positive and nonnegative 
integers respectively.

$\fik$ is either $\Com$ or $\mathbb{R}$.
All vector spaces and algebras are supposed to be over the field~$\fik$.
We denote by $\gl_\sm$ the algebra of 
$\sm\times\sm$ matrices with entries from $\fik$
and by $\mathrm{GL}_\sm$ the group of invertible $\sm\times\sm$ matrices.

Also, we use the following notation for partial derivatives of 
functions $u^i=u^i(x,t)$, $i=1,\dots,\nv$,
$$
u^i_0=u^i,\qquad
u^i_k=\frac{\pd^k u^i}{\pd x^k},\qquad\quad k\in\zp.
$$

\section{Zero-curvature representations, 
gauge transformations, and the algebras $\fds^\oc(\CE,a)$}
\lb{csev}

In this section we study the algebras $\fds^\oc(\CE,a)$ introduced in~\cite{prol12}.
For completeness, we give detailed definitions of $\CE$, $a\in\CE$, and 
$\fds^\oc(\CE,a)$.

\subsection{The infinite prolongation of an evolution PDE}
\lb{subsip}

As has been said in Section~\ref{secintr}, 
we suppose that $x$, $t$, $u^i_k$ take values in $\fik$, 
where $\fik$ is either $\Com$ or $\mathbb{R}$.
Let $\fik^\infty$ be the infinite-dimensional space  
with the coordinates 
\beq
\lb{xtuik}
x,\qquad t,\qquad u^i_k,\qquad i=1,\dots,\nv,\qquad k\in\zp.
\ee 
The topology on $\fik^\infty$ is defined as follows. 

For each $l\in\zp$, consider the space $\fik^{\nv(l+1)+2}$ 
with the coordinates $x$, $t$, $u^i_k$ for $k=0,1,\dots,l$ and $i=1,\dots,\nv$. 
One has the natural projection $\pi_l\cl\fik^\infty\to\fik^{\nv(l+1)+2}$ 
that ``forgets'' the coordinates $u^i_{k'}$ for $k'>l$. 

Since $\fik^{\nv(l+1)+2}$ is a finite-dimensional vector space,
we have the standard topology on~$\fik^{\nv(l+1)+2}$. 
For any $l\in\zp$ and any open subset $V\subset\fik^{\nv(l+1)+2}$, 
the subset~$\pi_l^{-1}(V)\subset\fik^\infty$ is, by definition, open in $\fik^\infty$. Such subsets form a base of the topology on~$\fik^\infty$. 
In other words, we consider the smallest topology on~$\fik^\infty$ such that 
the maps $\pi_l\cl\fik^\infty\to\fik^{\nv(l+1)+2}$, $l\in\zp$, are continuous. 

According to our notation, the PDE~\er{sys_intr} can be written as~\er{uitfi}.
Let $\fik^{\nv(\eo+1)+2}$ be the space 
with the coordinates $x$, $t$, $u^i_k$ for $k=0,1,\dots,\eo$ and $i=1,\dots,\nv$.
Let $\ost\subset\fik^{\nv(\eo+1)+2}$ be an open subset such that the functions 
$F^i(x,t,u^j_0,u^j_1,\dots,u^j_{\eo})$ from~\er{uitfi} are defined on~$\ost$. 

The \emph{infinite prolongation} $\CE$ of the evolution PDE~\er{sys_intr} 
can be defined as follows 
$$
\CE=\pi_{\eo}^{-1}(\ost)\subset\fik^\infty.
$$
So $\CE$ is an open subset of the space $\fik^\infty$  
with the coordinates~\er{xtuik}. 
The topology on~$\CE$ is induced by the embedding $\CE\subset\fik^\infty$. 

\begin{example}
For any constants $e_1,e_2,e_3\in\fik$, 
consider the Krichever-Novikov equation~\cite{krich80,svin-sok83} 
\begin{equation}
\label{knedef}
u_t=u_{xxx}-\frac32\frac{(u_{xx})^2}{u_x}+
\frac{(u-e_1)(u-e_2)(u-e_3)}{u_x},\quad\qquad u=u(x,t).
\end{equation}
Since this is a scalar equation of order $3$, we have here 
$\nv=1$ and $\eo=3$.

In our notation, we set $u^1=u(x,t)$ and rewrite equation~\er{knedef} as follows 
\begin{gather}
\label{u1kn}
u^1_t=F^1(x,t,u^1_0,u^1_1,u^1_2,u^1_3),\\
\lb{f1kn}
F^1(x,t,u^1_0,u^1_1,u^1_2,u^1_3)=
u^1_3-\frac32\frac{(u^1_2)^2}{u^1_1}+
\frac{(u^1_0-e_1)(u^1_0-e_2)(u^1_0-e_3)}{u^1_1},
\end{gather}
where $u^1_0=u^1$ and $u^1_k=\dfrac{\pd^k u^1}{\pd x^k}$ for $k\in\zp$.

Let $\fik^6$ be the space with the coordinates 
$x$, $t$, $u^1_0$, $u^1_1$, $u^1_2$, $u^1_3$.
According to~\er{f1kn}, 
the function $F^1$ is defined on the open subset $\ost\subset\fik^6$ determined 
by the condition $u^1_1\neq 0$. 

In this example, $\fik^\infty$ is the space with the coordinates
$x$, $t$, $u^1_k$ for $k\in\zp$. 
We have the map $\pi_3\cl\fik^\infty\to\fik^6$
that ``forgets'' the coordinates $u^1_{k'}$ for $k'>3$. 
The infinite prolongation $\CE$ of equation~\er{u1kn} is the following open subset 
of $\fik^\infty$
$$
\CE=\pi_3^{-1}(\ost)=
\big\{(x,t,u^1_0,u^1_1,u^1_2,\dots)\in\fik^\infty\,\big|\,
u^1_1\neq 0\big\}.
$$
\end{example}

\subsection{A normal form for ZCRs with respect to the action of gauge transformations}

Consider again an evolution PDE~\er{sys_intr} with arbitrary $\nv,\eo\in\zsp$.
As has been said above, the infinite prolongation $\CE$ 
of~\er{sys_intr} is an open subset of the space $\fik^\infty$  
with the coordinates~\er{xtuik}. 

A point $a\in\CE$ is determined by the values of the coordinates 
$x$, $t$, $u^i_k$ at $a$. Let
\begin{equation}
\lb{axtaik}
a=(x=x_a,\,t=t_a,\,u^i_k=a^i_k)\,\in\,\CE,\qquad\quad 
x_a,t_a,a^i_k\in\fik,\qquad 
i=1,\dots,\nv,\qquad 
k\in\zp,
\end{equation}
be a point of $\CE$.
In other words, the constants $x_a$, $t_a$, $a^i_k$ are the coordinates 
of the point $a\in\CE$ in the coordinate system $x$, $t$, $u^i_k$.


Recall that, for every $\sm\in\zsp$,
we denote by $\gl_\sm$ the algebra of 
$\sm\times\sm$ matrices with entries from $\fik$ 
and by $\mathrm{GL}_\sm$ the group of invertible $\sm\times\sm$ matrices.
Let $\mathrm{Id}\in\mathrm{GL}_\sm$ be the identity matrix.

By the standard Lie group -- Lie algebra correspondence, 
for every Lie subalgebra $\mg\subset\gl_\sm$ 
there is a unique connected immersed Lie subgroup 
$\mathcal{G}\subset\mathrm{GL}_\sm$ whose Lie algebra is $\mg$.
We call $\mathcal{G}$ the connected matrix Lie group 
corresponding to the matrix Lie algebra $\mg\subset\gl_\sm$.

For any $l\in\zp$, a matrix-function $G=G(x,t,u^j_0,u^j_1,\dots,u^j_l)$ 
with values in~$\mathcal{G}$ is called a gauge transformation.
Equivalently, one can say that a gauge transformation is given 
by a $\mathcal{G}$-valued function $G=G(x,t,u^j_0,u^j_1,\dots,u^j_l)$.

The following lemma is known, but for completeness we present a proof of it.
\begin{lemma}
\lb{lemgt}
Let $\sm\in\zsp$ and $\oc\in\zp$. 
Let $\mg\subset\gl_\sm$ be a matrix Lie algebra and  
$\mathcal{G}\subset\mathrm{GL}_\sm$ be 
the connected matrix Lie group corresponding to $\mg\subset\gl_\sm$. 

Let 
\beq
\lb{lemab}
A=A(x,t,u^j_0,u^j_1,\dots,u^j_\oc),\qquad 
B=B(x,t,u^j_0,u^j_1,\dots,u^j_{\oc+\eo-1}),\qquad
D_x(B)-D_t(A)+[A,B]=0
\ee
be a ZCR of order $\le\oc$ such that 
the functions $A$, $B$ take values in $\mg$.
Here $D_x$ and $D_t$ are given by~\er{evdxdt}.

Then for any $\mathcal{G}$-valued function 
\beq
\lb{ggocs1}
G=G(x,t,u^j_0,u^j_1,\dots,u^j_{\oc-1})
\ee
depending on $x$, $t$, $u^j_0,\dots,u^j_{\oc-1}$, $j=1,\dots,\nv$, the functions 
\beq
\lb{lemtab}
\tilde{A}=GAG^{-1}-D_x(G)\cdot G^{-1},\qquad\qquad
\tilde{B}=GBG^{-1}-D_t(G)\cdot G^{-1}
\ee
form a $\mg$-valued ZCR of order $\le\oc$. 
That is, 
\beq
\lb{lemtzcr}
\tilde{A}=\tilde{A}(x,t,u^j_0,u^j_1,\dots,u^j_\oc),\quad 
\tilde{B}=\tilde{B}(x,t,u^j_0,u^j_1,\dots,u^j_{\oc+\eo-1}),\quad
D_x(\tilde{B})-D_t(\tilde{A})+[\tilde{A},\tilde{B}]=0,
\ee
and $\tilde{A}$, $\tilde{B}$ take values in $\mg$.

Formulas~\er{lemtab} determine an action of the group 
of $\mathcal{G}$-valued gauge transformations~\er{ggocs1}
on the set of $\mg$-valued ZCRs of order~$\le\oc$.
\end{lemma}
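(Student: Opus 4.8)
The statement bundles four assertions: that $\tilde A,\tilde B$ satisfy the zero-curvature equation, that they have the claimed orders, that they take values in $\mg$, and that \er{lemtab} defines a group action. The zero-curvature equation is already disposed of, since it is exactly the computation in Remark~\ref{remzcrgt}: that computation uses only $D_x+\tilde A=G(D_x+A)G^{-1}$ and $D_t+\tilde B=G(D_t+B)G^{-1}$ and holds regardless of where the functions take their values. So the plan is to treat the remaining three assertions, of which $\mg$-valuedness is the only substantial one.

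For the order count I would simply track dependencies. Since $G$ and $G^{-1}$ depend only on $x,t,u^j_0,\dots,u^j_{\oc-1}$ while $A$ depends only up to $u^j_\oc$, the product $GAG^{-1}$ depends only up to $u^j_\oc$; and because $D_x$ raises the order by one, $D_x(G)$ and hence $D_x(G)G^{-1}$ depend only up to $u^j_\oc$ as well. Thus $\tilde A$ is of order $\le\oc$. For $\tilde B$ one notes that $B$ depends up to $u^j_{\oc+\eo-1}$, while in $D_t(G)=\pd_t G+\sum_{i,k} D_x^k(F^i)\,\pd G/\pd u^i_k$ only $k\le\oc-1$ contribute (as $\pd G/\pd u^i_k=0$ beyond that), and each $D_x^k(F^i)$ depends up to order $\eo+k\le\oc+\eo-1$; hence $\tilde B$ depends only up to $u^j_{\oc+\eo-1}$.

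The core of the lemma is that $\tilde A,\tilde B\in\mg$, and I would split $\tilde A=GAG^{-1}-D_x(G)G^{-1}$ into its two terms. For the first term, conjugation $c_G\colon h\mapsto GhG^{-1}$ is a Lie group automorphism of the subgroup $\mathcal{G}$, so its differential at the identity carries $\mg=T_{\Id}\mathcal{G}$ into itself; for matrices this differential is $X\mapsto GXG^{-1}=\mathrm{Ad}_G(X)$, whence $GAG^{-1}\in\mg$. For the second term I would expand $D_x=\pd_x+\sum_{i,k}u^i_{k+1}\,\pd/\pd u^i_k$, so that $D_x(G)G^{-1}=(\pd_x G)G^{-1}+\sum_{i,k}u^i_{k+1}\,(\pd G/\pd u^i_k)G^{-1}$ is a $\fik$-linear combination, with scalar coefficients, of expressions of the form $(\pd G/\pd\ast)G^{-1}$. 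Each such expression is the velocity of the coordinate curve obtained by varying a single coordinate, right-translated to the identity (via the differential of right multiplication by $G^{-1}$); since that curve lies in $\mathcal{G}$, its velocity lies in $T_G\mathcal{G}$ and its right-translate lies in $T_{\Id}\mathcal{G}=\mg$. As $\mg$ is a vector space, the combination lies in $\mg$, so $D_x(G)G^{-1}\in\mg$ and $\tilde A\in\mg$. The same argument with $D_t$ in place of $D_x$ gives $\tilde B\in\mg$.

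Finally, for the group-action claim I would verify the identity and composition laws directly. The choice $G=\Id$ gives $D_x(\Id)=0$, so $\tilde A=A$ and $\tilde B=B$. For composition, writing the transformation as $G\cdot(A,B)$, I would apply $G_2$ to $G_1\cdot(A,B)$ and use the Leibniz rule $D_x(G_2G_1)=D_x(G_2)\,G_1+G_2\,D_x(G_1)$ together with $(G_2G_1)^{-1}=G_1^{-1}G_2^{-1}$; a short manipulation then yields $G_2\cdot\big(G_1\cdot(A,B)\big)=(G_2G_1)\cdot(A,B)$, identifying \er{lemtab} as a left action (the set of order-$\le\oc$ $\mg$-valued ZCRs being preserved by the previous two paragraphs, and the $\mathcal{G}$-valued functions of the form \er{ggocs1} forming a group under pointwise multiplication). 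The one place demanding care — the main obstacle — is precisely the $\mg$-valuedness of the logarithmic-derivative terms $D_x(G)G^{-1}$ and $D_t(G)G^{-1}$: there one must use that $G$ is genuinely analytic as a map into the immersed subgroup $\mathcal{G}$, so that its partial derivatives are honest tangent vectors to $\mathcal{G}$ rather than merely to $\mathrm{GL}_\sm$.
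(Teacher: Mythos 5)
Your proposal is correct and follows essentially the same route as the paper's own (much terser) proof: $\mg$-valuedness via $\mathrm{Ad}_G$ and the logarithmic derivatives $(\pd G/\pd\ast)G^{-1}$ lying in $\mg$, the order count by tracking which $u^j_k$ each term can depend on, and the zero-curvature equation and group action via the operator identities $D_x+\tilde A=G(D_x+A)G^{-1}$, $D_t+\tilde B=G(D_t+B)G^{-1}$. Your closing remark about needing $G$ to be analytic as a map into the immersed subgroup $\mathcal{G}$ is a careful and correct elaboration of a point the paper leaves implicit.
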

\begin{proof}
Since $A$, $B$ take values in $\mg$ and $G$ takes values in 
the connected matrix Lie group $\mathcal{G}$ 
corresponding to the Lie algebra $\mg\subset\gl_\sm$,
the functions 
\beq
\lb{gagb}
GAG^{-1},\quad GBG^{-1},\quad\frac{\pd}{\pd x}(G)\cdot G^{-1},\quad
\frac{\pd}{\pd t}(G)\cdot G^{-1},\quad\frac{\pd}{\pd u^i_k}(G)\cdot G^{-1}\qquad
\forall\,i,k
\ee
take values in $\mg$. 
Hence the functions $\tilde{A}$, $\tilde{B}$ given by~\er{lemtab} 
take values in $\mg$ as well.

Using formulas~\er{evdxdt}, \er{lemab}, \er{lemtab} 
and the fact that $G$ may depend only on $x$, $t$, $u^j_0,\dots,u^j_{\oc-1}$, 
we easily get~\er{lemtzcr}. 

One has $D_x+\tilde{A}=G(D_x+A)G^{-1}$ and $D_x+\tilde{B}=G(D_t+B)G^{-1}$, 
which implies that formulas~\er{lemtab} determine an action of the group 
of $\mathcal{G}$-valued gauge transformations~\er{ggocs1}
on the set of $\mg$-valued ZCRs of order~$\le\oc$.
\end{proof}

\begin{remark}
\lb{anmer}
For any $l\in\zp$ and $a\in\CE$, 
when we consider a function $Q=Q(x,t,u^j_0,u^j_1,\dots,u^j_l)$ 
defined on a neighborhood of $a\in\CE$, 
we always assume that the function is analytic on this neighborhood.
For example, $Q$ may be a meromorphic function defined on an open subset of~$\CE$
such that $Q$ is analytic on a neighborhood of $a\in\CE$.

In particular, this applies to the functions 
$A$, $B$, $G$, $\tilde{A}$, $\tilde{B}$ 
considered in Theorems~\ref{thzcro1},~\ref{evcov} below.
\end{remark}

Theorems~\ref{thzcro1} and~\ref{evcov} below describe 
a normal form for ZCRs with respect to the action of the group of gauge transformations.
To clarify the construction, we first consider the case of ZCRs of order~$\le 1$ 
in Theorem~\ref{thzcro1}.
The general case of ZCRs of order~$\le\oc$ for any $\oc\in\zp$ is described in 
Theorem~\ref{evcov}.

\begin{theorem}
\lb{thzcro1}
Let $\nv,\sm\in\zsp$. 
Let $\mg\subset\gl_\sm$ be a matrix Lie algebra. 
Denote by $\mathcal{G}\subset\mathrm{GL}_\sm$ the connected matrix Lie group 
corresponding to $\mg\subset\gl_\sm$. 

Let $\CE$ be the infinite prolongation 
of an $\nv$-component evolution PDE~\er{sys_intr}.
Consider a point $a\in\CE$ given by~\er{axtaik}. 
According to~\er{axtaik}, the point $a$ is determined by constants $x_a$, $t_a$, $a^i_k$.

Let 
\beq
\lb{thmnoc}
A=A(x,t,u^j_0,u^j_1),\qquad B=B(x,t,u^j_0,u^j_1,\dots,u^j_{\eo}),\qquad
D_x(B)-D_t(A)+[A,B]=0
\ee
be a ZCR of order $\le 1$ such that 
the functions $A$, $B$ are defined on a neighborhood of $a\in\CE$ and take values in $\mg$.

Then, on a neighborhood of $a\in\CE$, there is a $\mathcal{G}$-valued function 
\beq
\lb{ggxtuj0}
G=G(x,t,u^j_0)
\ee
depending on $x$, $t$, $u^j_0$, $j=1,\dots,\nv$, such that the functions 
\beq
\lb{mnprth}
\tilde{A}=GAG^{-1}-D_x(G)\cdot G^{-1},\qquad\qquad
\tilde{B}=GBG^{-1}-D_t(G)\cdot G^{-1}
\ee
satisfy
\begin{gather}
\label{d=0o1}
\forall\,i_0=1,\dots,\nv,\qquad\quad
\frac{\pd \tilde{A}}{\pd u^{i_0}_1}
\,\,\bigg|_{u^j_1=a^j_1\ \forall\,j,\ u^i_0=a^i_0\ \forall\,i>i_0}=0,\\
\lb{aukako1}
\tilde{A}\,\Big|_{u^j_0=a^j_0,\ u^j_1=a^j_1\ \forall\,j}=0,\\
\lb{bxx0o1}
\tilde{B}\,\Big|_{x=x_a,\ u^j_k=a^j_k\ \forall\,j,\ \forall\,k\ge 0}=0,
\end{gather}
and
\beq
\lb{gxt0ua}
G\,\Big|_{x=x_a,\ t=t_a,\ u^j_0=a^j_0\ \forall\,j}=\mathrm{Id}.
\ee

Note that, according to Lemma~\ref{lemgt}, the functions~\er{mnprth}
form a $\mg$-valued ZCR of order $\le 1$. That is, 
\beq
\lb{tthmnoc}
\tilde{A}=\tilde{A}(x,t,u^j_0,u^j_1),\qquad 
\tilde{B}=\tilde{B}(x,t,u^j_0,u^j_1,\dots,u^j_{\eo}),\qquad
D_x(\tilde{B})-D_t(\tilde{A})+[\tilde{A},\tilde{B}]=0,
\ee
and $\tilde{A}$, $\tilde{B}$ take values in $\mg$.
\end{theorem}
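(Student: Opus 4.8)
The plan is to produce the required gauge transformation $G=G(x,t,u^j_0)$ by integrating a system of linear ordinary differential equations, one coordinate direction at a time, arranged so that each of the normalizations~\er{d=0o1}--\er{bxx0o1} is enforced on its own slice and so that the freedoms used for the different normalizations never collide. Since $G$ depends only on $x,t,u^j_0$, one has $D_x(G)=\pd G/\pd x+\sum_j u^j_1\,\pd G/\pd u^j_0$ and $D_t(G)=\pd G/\pd t+\sum_j F^j\,\pd G/\pd u^j_0$ (the terms $\pd G/\pd u^j_k$ with $k\ge 1$ vanish), and a short computation gives
\[
\frac{\pd\tilde A}{\pd u^{i_0}_1}=G\,\frac{\pd A}{\pd u^{i_0}_1}\,G^{-1}-\frac{\pd G}{\pd u^{i_0}_0}\,G^{-1}.
\]
Hence~\er{d=0o1} for a fixed $i_0$ is, after multiplying by $G$ on the right, the matrix equation $\pd G/\pd u^{i_0}_0=G\cdot M_{i_0}$, where $M_{i_0}:=\pd A/\pd u^{i_0}_1$ is evaluated on the slice $u^j_1=a^j_1$, $u^i_0=a^i_0\ (i>i_0)$ and is therefore a known $\mg$-valued function of $x,t,u^1_0,\dots,u^{i_0}_0$. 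The triangular structure of~\er{d=0o1} is exactly what permits integrating these one variable at a time: solve the $u^1_0$-equation, then the $u^2_0$-equation using the already-built $u^1_0$-dependence as initial data, and so on up to $u^\nv_0$. Writing out~\er{aukako1} and~\er{bxx0o1} likewise turns them into linear equations $\pd G/\pd x=G\cdot X$ and $\pd G/\pd t=G\cdot Y$ along the base fibre, with $X,Y\in\mg$ once the known values of $\pd G/\pd u^j_0$ are substituted.

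Concretely I would build $G$ by a staircase integration starting from the base point with initial value $G=\mathrm{Id}$, which is~\er{gxt0ua}: first integrate the $t$-equation coming from~\er{bxx0o1} along the line $x=x_a$, $u^j_0=a^j_0$; then the $x$-equation coming from~\er{aukako1} along $u^j_0=a^j_0$; then the $\nv$ equations $\pd G/\pd u^{i_0}_0=G\,M_{i_0}$ coming from~\er{d=0o1}. At each stage the right-hand side, once restricted to the relevant slice, closes up as a function of the unknown being integrated --- in particular $\pd G/\pd u^j_0|_{u^j_0=a^j_0}=G\cdot M_j|_{u^j_0=a^j_0}$, so no circular dependence arises --- and each step takes its initial data from the previous one. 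As $A$ and $B$ are analytic near $a$, all these solutions exist on a neighbourhood of $a$. Moreover every equation has the form $\pd G/\pd\xi=G\cdot(\text{element of }\mg)$ and the initial value $\mathrm{Id}$ lies in $\mathcal G$, so the solution stays in $\mathcal G$; this is what makes $G$ a genuine $\mathcal G$-valued gauge transformation.

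It then remains to verify that the $G$ so produced satisfies all four conditions, which is a direct check: substituting $\pd G/\pd u^{i_0}_0=G\,M_{i_0}$ into the displayed formula makes the two terms cancel on the prescribed slice, giving~\er{d=0o1}; the $x$- and $t$-equations are literally the vanishing statements~\er{aukako1} and~\er{bxx0o1}; and~\er{gxt0ua} is the chosen initial value. That $(\tilde A,\tilde B)$ is again a $\mg$-valued ZCR of order $\le 1$ is Lemma~\ref{lemgt}, so the zero-curvature relation itself plays no role in the construction. I do not expect a serious obstacle here: the only real care is bookkeeping --- fixing the integration order $t\to x\to u^1_0\to\cdots\to u^\nv_0$ so that no right-hand side depends on a quantity not yet built, and using the triangular form of~\er{d=0o1} to handle the $u^j_0$-directions successively. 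Because every equation is linear, solvability is automatic and there is no Frobenius integrability condition to check, precisely because each normalization is imposed only on its own slice and not identically.
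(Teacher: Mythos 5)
Your proposal is correct and follows essentially the same strategy as the paper's proof: the identity $\pd\tilde A/\pd u^{i_0}_1=G\,(\pd A/\pd u^{i_0}_1)\,G^{-1}-(\pd G/\pd u^{i_0}_0)\,G^{-1}$ is exactly the paper's computation~\er{multgag}, and both arguments reduce the normalizations to linear matrix ODEs of the form $\pd G/\pd\xi=G\cdot(\text{element of }\mg)$ solved one variable at a time, with the triangular structure of~\er{d=0o1} removing any integrability condition and the $\mg$-valued right-hand sides keeping $G$ in $\mathcal G$. The only difference is organizational: the paper realizes $G$ as a composition $\hat G\cdot\tilde G\cdot G_1\cdots G_\nv$ of gauge transformations each solving its ODE on a full slice of parameters (so the $x$- and $t$-equations come out clean), whereas your single staircase integration must carry the correction terms $\sum_j a^j_1\,\pd G/\pd u^j_0$ in the $x$- and $t$-equations, which you correctly handle by substituting $\pd G/\pd u^j_0=G\,M_j$ on the base slice.
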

\begin{remark}
The notation $\tilde{B}\,\Big|_{x=x_a,\ u^j_k=a^j_k\ \forall\,j,\ \forall\,k\ge 0}$ 
in~\er{bxx0o1} 
means that we substitute $x=x_a$ and $u^j_k=a^j_k$ for all $j=1,\dots,\nv$ and 
all $k\ge 0$ in the function 
$\tilde{B}=\tilde{B}(x,t,u^j_0,u^j_1,\dots,u^j_{\eo})$. That is, 
$$
\tilde{B}\,\Big|_{x=x_a,\ u^j_k=a^j_k\ \forall\,j,\ \forall\,k\ge 0}=
\tilde{B}(x_a,t,a^j_0,a^j_1,\dots,a^j_{\eo}).
$$
The notation in \er{d=0o1}, \er{aukako1}, \er{gxt0ua} can be understood in a similar way.
\end{remark}
\begin{proof}
To clarify the main idea,
let us consider first the case $\nv=2$. Then formulas \er{thmnoc}, \er{tthmnoc} become
\begin{gather}
A=A(x,t,u^1_0,u^2_0,u^1_1,u^2_1),\quad\qquad 
B=B(x,t,u^1_0,u^2_0,u^1_1,u^2_1,\dots,u^1_{\eo},u^2_{\eo}),\\
\tilde{A}=\tilde{A}(x,t,u^1_0,u^2_0,u^1_1,u^2_1),\quad\qquad 
\tilde{B}=\tilde{B}(x,t,u^1_0,u^2_0,u^1_1,u^2_1,\dots,u^1_{\eo},u^2_{\eo}).
\end{gather}
For $\nv=2$ in~\er{d=0o1} we have $i_0=1,2$, so condition~\er{d=0o1} is equivalent 
to the following two equations
\begin{gather}
\lb{tau11}
\frac{\pd\tilde{A}}{\pd u^1_1}(x,t,u^1_0,a^2_0,a^1_1,a^2_1)=0,\\
\lb{tau21}
\frac{\pd\tilde{A}}{\pd u^2_1}(x,t,u^1_0,u^2_0,a^1_1,a^2_1)=0.
\end{gather}
Conditions~\er{aukako1},~\er{bxx0o1} in the case $\nv=2$ can be written as
\begin{gather}
\lb{axta}
\tilde{A}(x,t,a^1_0,a^2_0,a^1_1,a^2_1)=0,\\
\lb{bx0ta}
\tilde{B}(x_a,t,a^1_0,a^2_0,a^1_1,a^2_1,\dots,a^1_{\eo},a^2_{\eo})=0.
\end{gather}

According to Lemma~\ref{lemgt} in the case $\oc=1$, 
formulas~\er{mnprth} determine an action of the group 
of $\mathcal{G}$-valued gauge transformations~\er{ggxtuj0}
on the set of $\mg$-valued ZCRs of order~$\le 1$.

To prove the statement of the theorem in the case $\nv=2$, 
we need to find a $\mathcal{G}$-valued gauge transformation $G=G(x,t,u^1_0,u^2_0)$ 
such that the transformed ZCR~\er{mnprth} 
satisfies \er{tau11}, \er{tau21}, \er{axta}, \er{bx0ta}, 
and $G(x_a,t_a,a^1_0,a^2_0)=\mathrm{Id}$.

We are going to construct the required gauge transformation in several steps.
First, we will construct a transformation to achieve property~\er{tau21},
then another transformation to get properties~\er{tau11},~\er{tau21},
then another transformation to get properties \er{tau11}, \er{tau21}, \er{axta},
and finally another transformation to obtain 
all properties \er{tau11}, \er{tau21}, \er{axta}, \er{bx0ta}.

Consider the ordinary differential equation (ODE)
\beq
\lb{pdgquq}
\frac{\pd G_2}{\pd u^2_0}=G_2\cdot
\bigg(\frac{\pd A}{\pd u^2_1}(x,t,u^1_0,u^2_0,a^1_1,a^2_1)\bigg)
\ee
with respect to the variable $u^2_0$ and an unknown function   
$G_2=G_2(x,t,u^1_0,u^2_0)$. The variables $x,\,t,\,u^1_0$ 
are regarded as parameters in this ODE. 

Let $G_2(x,t,u^1_0,u^2_0)$ be a local 
solution of the ODE~\er{pdgquq} with the initial condition
$G_2(x,t,u^1_0,a^2_0)=\mathrm{Id}$. 
Since ${\pd A}/{\pd u^2_1}$ takes values in $\mg$, 
the function $G_2$ takes values in~$\mathcal{G}$.

Set 
\beq
\lb{a11}
\hat{A}=G_2AG_2^{-1}-D_x(G_2)\cdot G_2^{-1},\qquad\qquad
\hat{B}=G_2BG_2^{-1}-D_t(G_2)\cdot G_2^{-1}.
\ee
Since $G_2$ takes values in $\mathcal{G}$, 
the functions $\hat{A}$, $\hat{B}$ take values in $\mg$.
Using~\er{a11} and~\er{pdgquq}, we get  
\begin{multline}
\lb{multgag}
\frac{\pd\hat{A}}{\pd u^2_1}(x,t,u^1_0,u^2_0,a^1_1,a^2_1)=
G_2\bigg(
\frac{\pd A}{\pd u^2_1}(x,t,u^1_0,u^2_0,a^1_1,a^2_1)\bigg)G_2^{-1}
-\bigg(
\frac{\pd}{\pd u^2_1}\big(D_x(G_2)\big)\bigg)G_2^{-1}=\\
=G_2\bigg(
\frac{\pd A}{\pd u^2_1}(x,t,u^1_0,u^2_0,a^1_1,a^2_1)\bigg)G_2^{-1}
-\frac{\pd G_2}{\pd u^2_0}G_2^{-1}=\\
=G_2\bigg(\frac{\pd A}{\pd u^2_1}(x,t,u^1_0,u^2_0,a^1_1,a^2_1)\bigg)G_2^{-1}
-G_2\bigg(\frac{\pd A}{\pd u^2_1}(x,t,u^1_0,u^2_0,a^1_1,a^2_1)\bigg)G_2^{-1}=0.
\end{multline}

Now consider the ODE 
\beq
\lb{pdgquq0}
\frac{\pd G_1}{\pd u^1_0}=G_1\cdot
\bigg(
\frac{\pd\hat{A}}{\pd u^1_1}(x,t,u^1_0,a^2_0,a^1_1,a^2_1)\bigg)
\ee
with respect to the variable $u^1_{0}$ and an unknown function   
$G_1=G_1(x,t,u^1_0)$, where $x,\,t$ are regarded as parameters. 

Let $G_1(x,t,u^1_0)$ be a local 
solution of the ODE~\er{pdgquq0} with the initial condition
$G_1(x,t,a^1_0)=\mathrm{Id}$. 
Since ${\pd\hat{A}}/{\pd u^1_1}$ takes values in $\mg$, 
the function $G_1$ takes values in~$\mathcal{G}$.

Set 
\beq
\lb{a22}
\bar{A}=G_1\hat{A}G_1^{-1}-D_x(G_1)\cdot G_1^{-1},\qquad\qquad
\bar{B}=G_1\hat{B}G_1^{-1}-D_t(G_1)\cdot G_1^{-1}.
\ee
Then~\er{multgag},~\er{pdgquq0},~\er{a22} imply that $\bar{A}$ satisfies 
properties \er{tau11}, \er{tau21}, if we replace $\tilde{A}$ by $\bar{A}$ 
in \er{tau11}, \er{tau21}.
Furthermore, since $G_1$ takes values in~$\mathcal{G}$, 
the functions $\bar{A}$, $\bar{B}$ take values in~$\mg$.

Let $\tilde G=\tilde G(x,t)$ be a local solution of the ODE 
\beq
\notag
\frac{\pd \tilde G}{\pd x}=\tilde G\cdot
\bar{A}(x,t,a^1_0,a^2_0,a^1_1,a^2_1)
\ee
with the initial condition $\tilde G(x_a,t)=\mathrm{Id}$, 
where $t$ is viewed as a parameter. 
Set 
\beq
\lb{a33}
\check{A}=\tilde{G}\bar{A}\tilde{G}^{-1}-D_x(\tilde{G})\cdot \tilde{G}^{-1},
\qquad\qquad
\check{B}=\tilde{G}\bar{B}\tilde{G}^{-1}-D_t(\tilde{G})\cdot \tilde{G}^{-1}.
\ee
Then $\check{A}$ satisfies 
properties \er{tau11}, \er{tau21}, \er{axta}, 
if we replace $\tilde{A}$ by $\check{A}$ in \er{tau11}, \er{tau21}, \er{axta}.

Finally, let $\hat G=\hat G(t)$ be a local solution of the ODE 
\beq
\lb{pdhatgt}
\frac{\pd \hat G}{\pd t}=\hat G\cdot
\check{B}(x_a,t,a^1_0,a^2_0,a^1_1,a^2_1,\dots,a^1_{\eo},a^2_{\eo})
\ee
with the initial condition $\hat G(t_a)=\mathrm{Id}$.
Set 
\beq
\lb{a44}
\tilde{A}=\hat{G}\check{A}\hat{G}^{-1}-D_x(\hat{G})\cdot \hat{G}^{-1},
\qquad\qquad
\tilde{B}=\hat{G}\check{B}\hat{G}^{-1}-D_t(\hat{G})\cdot \hat{G}^{-1}.
\ee
Then $\tilde{A}$, $\tilde{B}$ obey
\er{tau11}, \er{tau21}, \er{axta}, \er{bx0ta}. 

Let $G=\hat G\cdot\tilde G\cdot G_1\cdot G_2$. Then 
equations~\er{a11}, \er{a22}, \er{a33}, \er{a44} imply 
\beq
\notag
\tilde{A}=GAG^{-1}-D_x(G)\cdot G^{-1},\qquad\quad
\tilde{B}=GBG^{-1}-D_t(G)\cdot G^{-1}.
\ee
Furthermore, 
since $G_2(x,t,u^1_0,a^2_0)=G_1(x,t,a^1_0)=\tilde G(x_a,t)=\hat G(t_a)=\mathrm{Id}$, 
we have $G(x_a,t_a,a^1_0,a^2_0)=\mathrm{Id}$.
Thus $G=\hat G\cdot\tilde G\cdot G_1\cdot G_2$ satisfies 
all the required properties in the case $\nv=2$. 

This construction can be easily generalized to the case of arbitrary~$\nv$. 
One can define $G$ as the product 
$G=\hat G\cdot\tilde G\cdot G_1\cdot G_2\dots G_{\nv}$, 
where the $\mathcal{G}$-valued functions
\begin{gather*}
G_q=G_q(x,t,u^1_0,\dots,u^q_0),\qquad q=1,\dots,\nv,\qquad\quad 
\tilde G=\tilde G(x,t),\qquad\quad\hat G=\hat G(t)
\end{gather*}
are defined as solutions of certain ODEs similar to the ODEs considered above. 
\end{proof}

The set $\{1,\dots,\nv\}\times\zp$ consists of pairs
$(i,k)$, where $i\in\{1,\dots,\nv\}$ and $k\in\zp$.
Consider the following ordering $\preceq$ of the set 
$\{1,\dots,\nv\}\times\zp$ 
\begin{gather}
\notag
i,i'\in\{1,\dots,\nv\},\qquad\quad k,k'\in\zp,\qquad\quad k\neq k',\\
\lb{ev_ord}
(i,k)\prec(i',k')\ \text{ iff }\ k<k',\qquad\qquad 
(i,k)\prec(i',k)\ \text{ iff }\ i<i'.
\end{gather}
That is, $(1,0)\prec(2,0)\prec\dots\prec(\nv,0)\prec(1,1)\prec(2,1)\prec\dots$.

As usual, the notation $(i_1,k_1)\succeq(i_2,k_2)$ means 
that either $(i_1,k_1)\succ(i_2,k_2)$ or $(i_1,k_1)=(i_2,k_2)$. 

\begin{remark}
Let $F=F(x,t,u^i_k)$ be a function of the variables $x$, $t$, $u^i_k$. 
Let $i'\in\{1,\dots,\nv\}$ and $k'\in\zp$. 
Then the notation 
$F\,\Big|_{u^i_k=a^i_k\ \forall\,(i,k)\succ(i',k')}$
says that 
we substitute $u^i_k=a^i_k$ for all $(i,k)\succ(i',k')$ in the function $F$. 

Similarly, the notation 
$F\,\Big|_{x=x_a,\ u^i_k=a^i_k\ \forall\,(i,k)\succeq(i',k')}$
means that we substitute $x=x_a$ and $u^i_k=a^i_k$ for all $(i,k)\succeq(i',k')$ in $F$. 
\end{remark}

\begin{theorem}
\lb{evcov}
Let $\nv,\sm\in\zsp$ and $\oc\in\zp$. 
Let $\mg\subset\gl_\sm$ be a matrix Lie algebra. 
Denote by $\mathcal{G}\subset\mathrm{GL}_\sm$ the connected matrix Lie group 
corresponding to $\mg\subset\gl_\sm$. 

Let $\CE$ be the infinite prolongation of 
an $\nv$-component evolution PDE~\er{sys_intr}.
Consider a point $a\in\CE$ given by~\er{axtaik}. 
According to~\er{axtaik}, the point $a$ is determined by constants $x_a$, $t_a$, $a^i_k$.

Let 
\beq
\lb{thzcroc}
A=A(x,t,u^j_0,u^j_1,\dots,u^j_\oc),\qquad 
B=B(x,t,u^j_0,u^j_1,\dots,u^j_{\oc+\eo-1}),\qquad
D_x(B)-D_t(A)+[A,B]=0
\ee
be a ZCR of order $\le\oc$ such that 
the functions $A$, $B$ are defined on a neighborhood of $a\in\CE$ and take values in $\mg$.

Then, on a neighborhood of $a\in\CE$, 
there is a $\mathcal{G}$-valued function $G=G(x,t,u^j_0,u^j_1,\dots,u^j_{\oc-1})$ 
depending on $x$, $t$, $u^j_0,\dots,u^j_{\oc-1}$, $j=1,\dots,\nv$, 
such that the functions 
\beq
\notag
\tilde{A}=GAG^{-1}-D_x(G)\cdot G^{-1},\qquad\qquad
\tilde{B}=GBG^{-1}-D_t(G)\cdot G^{-1}
\ee
satisfy
\begin{gather}
\label{gd=0}
\forall\,i_0=1,\dots,\nv,\qquad\forall\,k_0\ge 1,\qquad\quad
\frac{\pd\tilde{A}}{\pd u^{i_0}_{k_0}}
\,\,\bigg|_{u^i_k=a^i_k\ \forall\,(i,k)\succ(i_0,k_0-1)}=0,\\
\lb{gaukak}
 \tilde{A}\,\Big|_{u^i_k=a^i_k\ \forall\,(i,k)}=0,\\
\lb{gbxx0}
\tilde{B}\,\Big|_{x=x_a,\ u^i_k=a^i_k\ \forall\,(i,k)}=0,
\end{gather}
and
\beq
\lb{pgxt0ua}
G\,\Big|_{x=x_a,\ t=t_a,\ u^i_k=a^i_k\ \forall\,(i,k)}=\mathrm{Id}.
\ee

Note that, according to Lemma~\ref{lemgt}, the functions~\er{mnprth}
form a $\mg$-valued ZCR of order~$\le\oc$.
That is, $\tilde{A}$, $\tilde{B}$ take values in $\mg$ and satisfy~\er{lemtzcr}.
\end{theorem}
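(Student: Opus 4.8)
The plan is to generalize the four-step construction in the proof of Theorem~\ref{thzcro1} to a longer, carefully ordered sequence of elementary gauge transformations, one for each ``slot'' $(i_0,k_0)$ appearing in condition~\eqref{gd=0}, followed by two final transformations in the variables $x$ and $t$ that produce \eqref{gaukak} and \eqref{gbxx0}. Concretely, I would build $G$ as an ordered product
\[
G=\hat G\cdot\tilde G\cdot\prod_{(i_0,k_0)}G_{(i_0,k_0)},
\]
where the factors $G_{(i_0,k_0)}$ run over all pairs with $1\le i_0\le\nv$ and $1\le k_0\le\oc$, multiplied so that the factor for the $\preceq$-largest pair $(\nv,\oc)$ stands rightmost and is applied first, and where $\tilde G=\tilde G(x,t)$ and $\hat G=\hat G(t)$ are the analogues of the last two factors in Theorem~\ref{thzcro1}. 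Since $A$ has order $\le\oc$, the slots with $k_0>\oc$ need no treatment (there $\pd A/\pd u^{i_0}_{k_0}=0$ automatically), so finitely many factors suffice; and because each $G_{(i_0,k_0)}$ will depend only on the variables $u^i_k$ with $(i,k)\preceq(i_0,k_0-1)$, the product $G$ will depend only on $u^j_k$ with $k\le\oc-1$, as required.

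Each factor $G_{(i_0,k_0)}$ is defined, exactly as $G_2$, $G_1$ in Theorem~\ref{thzcro1}, as the local solution of the ODE in the single variable $u^{i_0}_{k_0-1}$
\[
\frac{\pd G_{(i_0,k_0)}}{\pd u^{i_0}_{k_0-1}}
=G_{(i_0,k_0)}\cdot\bigg(\frac{\pd A'}{\pd u^{i_0}_{k_0}}\,\bigg|_{u^i_k=a^i_k\ \forall\,(i,k)\succ(i_0,k_0-1)}\bigg),
\qquad
G_{(i_0,k_0)}\big|_{u^{i_0}_{k_0-1}=a^{i_0}_{k_0-1}}=\mathrm{Id},
\]
with all variables $\prec(i_0,k_0-1)$ treated as parameters; here $A'$ denotes the $A$-component of the ZCR obtained after all previously chosen factors have acted. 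Because $\pd A'/\pd u^{i_0}_{k_0}$ takes values in $\mg$, the solution takes values in $\mathcal{G}$. To see that this factor enforces \eqref{gd=0} for the pair $(i_0,k_0)$, I would use the identity $\pd\big(D_x(G)\big)/\pd u^{i_0}_{k_0}=D_x\big(\pd G/\pd u^{i_0}_{k_0}\big)+\pd G/\pd u^{i_0}_{k_0-1}$, which comes from the term $u^{i_0}_{k_0}\,\pd/\pd u^{i_0}_{k_0-1}$ in $D_x$ of~\eqref{evdxdt}. Since $G_{(i_0,k_0)}$ does not depend on $u^{i_0}_{k_0}$, this reduces $\pd\tilde A/\pd u^{i_0}_{k_0}$ to $G\big(\frac{\pd A'}{\pd u^{i_0}_{k_0}}-\frac{\pd A'}{\pd u^{i_0}_{k_0}}\big|_{u^i_k=a^i_k\ \forall\,(i,k)\succ(i_0,k_0-1)}\big)G^{-1}$, exactly as in the computation~\eqref{multgag}, and evaluating on the locus $u^i_k=a^i_k$ for $(i,k)\succ(i_0,k_0-1)$ gives~$0$.

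The crux of the argument---and the step I expect to be the main obstacle to write cleanly---is that applying the factor for a pair $(i_0,k_0)$ does not spoil condition~\eqref{gd=0} already established for every $\preceq$-larger pair $(i',k')\succ(i_0,k_0)$; this is exactly what justifies the downward induction on $\preceq$ and hence the whole ordered-product scheme. The ordering in~\eqref{ev_ord} is tailored precisely for this. Since $G_{(i_0,k_0)}$ depends only on variables $\preceq(i_0,k_0-1)$, it is independent of $u^{i'}_{k'}$, so the conjugation terms in $\pd\tilde A/\pd u^{i'}_{k'}$ drop out and the $D_x$-correction leaves only the term $\pd G_{(i_0,k_0)}/\pd u^{i'}_{k'-1}$. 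A short check on the ordering shows that $(i',k')\succ(i_0,k_0)$ forces $(i',k'-1)\succ(i_0,k_0-1)$, whence this last term vanishes as well and $\pd\tilde A/\pd u^{i'}_{k'}=G_{(i_0,k_0)}\,\frac{\pd A'}{\pd u^{i'}_{k'}}\,G_{(i_0,k_0)}^{-1}$; evaluating on the relevant locus and invoking the inductive hypothesis for $A'$ gives~$0$. I would isolate this order-theoretic implication as a small lemma so that the induction reads transparently.

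Finally, after all factors $G_{(i_0,k_0)}$ have acted, I would append $\tilde G=\tilde G(x,t)$ solving $\pd\tilde G/\pd x=\tilde G\cdot A'\big|_{u^i_k=a^i_k\ \forall\,(i,k)}$ with $\tilde G(x_a,t)=\mathrm{Id}$ to obtain~\eqref{gaukak}, and then $\hat G=\hat G(t)$ solving $\pd\hat G/\pd t=\hat G\cdot B'\big|_{x=x_a,\ u^i_k=a^i_k\ \forall\,(i,k)}$ with $\hat G(t_a)=\mathrm{Id}$ to obtain~\eqref{gbxx0}. Both depend on no $u^i_k$, so $D_x(\tilde G)$, $D_x(\hat G)$ and the relevant part of $D_t(\hat G)$ simplify; by the same vanishing argument neither factor disturbs~\eqref{gd=0}, and $\hat G$ leaves~\eqref{gaukak} intact because $D_x(\hat G)=0$ and $\tilde A=\hat G\check A\hat G^{-1}$ vanishes wherever $\check A$ does. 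Multiplying out and using that each factor equals $\mathrm{Id}$ on its initial locus yields~\eqref{pgxt0ua}, while Lemma~\ref{lemgt} guarantees that $\tilde A$, $\tilde B$ remain a $\mg$-valued ZCR of order~$\le\oc$ satisfying~\eqref{thzcroc}. The existence of the required local analytic solutions, with values in $\mathcal{G}$, is standard since all coefficients are analytic and $\mg$-valued, so no analytic difficulty arises beyond the bookkeeping imposed by the ordering.
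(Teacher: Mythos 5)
Your proposal is correct and follows exactly the route the paper intends: the paper's own proof of Theorem~\ref{evcov} is only a two-sentence sketch deferring to the four-step construction of Theorem~\ref{thzcro1}, and you have faithfully elaborated it into an ordered product of elementary gauge transformations indexed by the slots $(i_0,k_0)$, applied in decreasing $\preceq$-order. Your identification of the key verification --- that $(i',k')\succ(i_0,k_0)$ forces $(i',k'-1)\succ(i_0,k_0-1)$, so each factor $G_{(i_0,k_0)}$ cannot disturb the conditions already secured for larger slots --- is precisely the point the ordering~\eqref{ev_ord} is designed for, and your computation via $\pd_{u^{i_0}_{k_0}}\big(D_x(G)\big)=D_x\big(\pd_{u^{i_0}_{k_0}}(G)\big)+\pd_{u^{i_0}_{k_0-1}}(G)$ matches~\eqref{multgag}.
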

\begin{proof}
This theorem can be proved similarly to Theorem~\ref{thzcro1}.
One can define $G$ as the product of several gauge transformations, which 
are defined as solutions of certain ODEs similar to the ODEs considered 
in the proof of Theorem~\ref{thzcro1}. 
\end{proof}

Fix a point $a\in\CE$ given by~\er{axtaik}, 
which is determined by constants $x_a$, $t_a$, $a^i_k$.

A ZCR 
\beq
\lb{anzcr}
\anA=\anA(x,t,u^j_0,u^j_1,\dots,u^j_\oc),\qquad 
\anB=\anB(x,t,u^j_0,u^j_1,\dots,u^j_{\oc+\eo-1}),\qquad
D_x(\anB)-D_t(\anA)+[\anA,\anB]=0
\ee
is said to be \emph{$a$-normal} if $\anA$, $\anB$ satisfy the following equations
\begin{gather}
\label{agd=0}
\forall\,i_0=1,\dots,\nv,\qquad\forall\,k_0\ge 1,\qquad\quad
\frac{\pd \anA}{\pd u^{i_0}_{k_0}}
\,\,\bigg|_{u^i_k=a^i_k\ \forall\,(i,k)\succ(i_0,k_0-1)}=0,\\
\lb{agaukak}
 \anA\,\Big|_{u^i_k=a^i_k\ \forall\,(i,k)}=0,\\
\lb{agbxx0}
\anB\,\Big|_{x=x_a,\ u^i_k=a^i_k\ \forall\,(i,k)}=0.
\end{gather}

\begin{remark}
\lb{ranorm}
For example, the ZCR $\tilde{A}$, $\tilde{B}$ described in Theorem~\ref{evcov} is 
$a$-normal, because $\tilde{A}$, $\tilde{B}$ obey \er{gd=0}, 
\er{gaukak}, \er{gbxx0}.
Theorem~\ref{evcov} implies that any ZCR on a neighborhood of $a\in\CE$
is gauge equivalent to an $a$-normal ZCR.
\end{remark}

\begin{theorem}
\lb{thaca}
Let $\sm\in\zsp$.
Let 
\beq
\lb{aban}
\anA=\anA(x,t,u^j_0,u^j_1,\dots,u^j_\oc),\quad 
\anB=\anB(x,t,u^j_0,u^j_1,\dots,u^j_{\oc+\eo-1}),\quad
D_x(\anB)-D_t(\anA)+[\anA,\anB]=0
\ee
be an $a$-normal ZCR with values in $\gl_\sm$.
\textup{(}So $\anA$, $\anB$ take values in $\gl_\sm$ and 
satisfy \er{agd=0}, \er{agaukak}, \er{agbxx0}.\textup{)}

Consider another $a$-normal ZCR with values in $\gl_\sm$
\begin{gather}
\lb{cabzcr}
\CA=\CA(x,t,u^j_0,u^j_1,\dots,u^j_q),\quad 
\CB=\CB(x,t,u^j_0,u^j_1,\dots,u^j_{q+\eo-1}),\quad
D_x(\CB)-D_t(\CA)+[\CA,\CB]=0,\\
\label{cd=0}
\forall\,i_0=1,\dots,\nv,\qquad\forall\,k_0\ge 1,\qquad\quad
\frac{\pd \CA}{\pd u^{i_0}_{k_0}}
\,\,\bigg|_{u^i_k=a^i_k\ \forall\,(i,k)\succ(i_0,k_0-1)}=0,\\
\lb{caukak}
\CA\,\Big|_{u^i_k=a^i_k\ \forall\,(i,k)}=0,\\
\lb{cbxx0}
\CB\,\Big|_{x=x_a,\ u^i_k=a^i_k\ \forall\,(i,k)}=0.
\end{gather}

Suppose that there is a function $\bG=\bG(x,t,u^j_0,u^j_1,\dots,u^j_l)$ 
with values in $\mathrm{GL}_\sm$ such that 
\begin{gather}
\lb{cabg}
\CA=\bG \anA\bG^{-1}-D_x(\bG)\cdot\bG^{-1},\\
\lb{cbbg}
\CB=\bG \anB\bG^{-1}-D_t(\bG)\cdot\bG^{-1}.
\end{gather}
In other words, we suppose that the $a$-normal ZCR $\anA,\,\anB$ is 
gauge equivalent to the $a$-normal ZCR $\CA,\,\CB$ with respect to 
a gauge transformation $\bG=\bG(x,t,u^j_0,u^j_1,\dots,u^j_l)$.

Then the function $\bG$ is actually a constant element of $\mathrm{GL}_\sm$
\textup{(}that is, $\bG$ does not depend on $x$, $t$, $u^j_k$\textup{)}, and we have
\beq
\lb{cabgcb}
\CA=\bG \anA\bG^{-1},\qquad\qquad
\CB=\bG \anB\bG^{-1}.
\ee
\end{theorem}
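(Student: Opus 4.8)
The plan is to prove that $\bG$ is a constant matrix by showing that every partial derivative of $\bG$ (with respect to $x$, $t$, and each $u^i_k$) vanishes; once this is done, \er{cabgcb} is immediate, since $D_x(\bG)=D_t(\bG)=0$ turns \er{cabg}, \er{cbbg} into \er{cabgcb}. The only input will be the two gauge relations \er{cabg}, \er{cbbg}, rewritten as
\[
D_x(\bG)=\bG\anA-\CA\bG,\qquad\qquad D_t(\bG)=\bG\anB-\CB\bG,
\]
together with the $a$-normality of both ZCRs and the explicit form of $D_x$, $D_t$ in \er{evdxdt}.

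First I would eliminate the dependence of $\bG$ on the jet variables $u^i_k$ by a descending induction on the ordering $\preceq$ of $\{1,\dots,\nv\}\times\zp$. Since $\bG=\bG(x,t,u^j_0,\dots,u^j_l)$ involves only finitely many $u^i_k$, there is a largest pair for which $\bG$ may depend on the corresponding variable, and $\bG$ is independent of every variable above it; this starts the induction. The induction hypothesis at a pair $(i_0,k_0)$ is that $\pd\bG/\pd u^i_k=0$ for all $(i,k)\succ(i_0,k_0)$. Under this hypothesis $\bG$ is independent of $u^{i_0}_{k_0+1}$, because $(i_0,k_0+1)\succ(i_0,k_0)$, so applying $\pd/\pd u^{i_0}_{k_0+1}$ to the first identity above and using \er{evdxdt} isolates a single surviving term on the left and yields
\[
\frac{\pd\bG}{\pd u^{i_0}_{k_0}}
=\bG\,\frac{\pd\anA}{\pd u^{i_0}_{k_0+1}}
-\frac{\pd\CA}{\pd u^{i_0}_{k_0+1}}\,\bG.
\]

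The key step is then to evaluate this identity at $u^i_k=a^i_k$ for all $(i,k)\succ(i_0,k_0)$. By the induction hypothesis the left-hand side depends only on variables $\preceq(i_0,k_0)$, so it is unchanged by the substitution, whereas the two derivatives on the right vanish there precisely by the normal-form conditions \er{agd=0} for $\anA$ and \er{cd=0} for $\CA$ (applied with $k_0$ replaced by $k_0+1$). Hence $\pd\bG/\pd u^{i_0}_{k_0}=0$, which completes the induction step; descending through all pairs gives $\bG=\bG(x,t)$. It then remains to remove the $x$- and $t$-dependence: with $\bG=\bG(x,t)$ one has $D_x(\bG)=\pd\bG/\pd x$, so evaluating the first identity at $u^i_k=a^i_k$ for all $(i,k)$ and invoking \er{agaukak}, \er{caukak} (which annihilate $\anA$ and $\CA$ there) gives $\pd\bG/\pd x=0$, i.e.\ $\bG=\bG(t)$; then $D_t(\bG)=\pd\bG/\pd t$, and evaluating the second identity at $x=x_a$, $u^i_k=a^i_k$ for all $(i,k)$ and invoking \er{agbxx0}, \er{cbxx0} yields $\pd\bG/\pd t=0$, which holds identically since $\bG$ no longer depends on $x$. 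Thus $\bG$ is constant and \er{cabgcb} follows.

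The main obstacle is organizing the jet-variable reduction correctly: one must align the ordering $\preceq$, the index shift $k_0\mapsto k_0+1$ forced by the form of $D_x$, and the exact substitution loci appearing in the normal-form conditions, so that each evaluation kills the right-hand side without disturbing the left-hand side. Once this bookkeeping is in place, the remaining work consists only of routine differentiations of the two gauge identities.
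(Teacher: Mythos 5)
Your proposal is correct and follows essentially the same route as the paper: a descending induction on the ordering $\preceq$ using \eqref{agd=0}, \eqref{cd=0} to kill the jet-variable dependence of $\bG$, followed by the two evaluations at $u^i_k=a^i_k$ (and then $x=x_a$) using \eqref{agaukak}, \eqref{caukak}, \eqref{agbxx0}, \eqref{cbxx0} to remove the $x$- and $t$-dependence. In fact you supply the details of the induction step (the index shift $k_0\mapsto k_0+1$ and the choice of substitution locus) that the paper leaves to the reader, and your bookkeeping there is accurate.
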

\begin{proof}
Using \er{agd=0}, \er{cd=0}, \er{cabg}, one can prove 
\beq
\lb{pdbgu}
\frac{\pd\bG}{\pd u^j_k}=0\qquad\quad\forall\,j,k
\ee
by descending induction on $(j,k)$ with respect to the ordering $\prec$.
Equation~\er{pdbgu} means that the function $\bG$ may depend only on $x$, $t$.

Now, taking into account~\er{pdbgu} and~\er{evdxdt}, 
we can rewrite \er{cabg}, \er{cbbg} as
\begin{gather}
\lb{adxbg}
\CA=\bG \anA\bG^{-1}-\frac{\pd\bG}{\pd x}\cdot\bG^{-1},\\
\lb{bdtbg}
\CB=\bG \anB\bG^{-1}-\frac{\pd\bG}{\pd t}\cdot\bG^{-1}.
\end{gather}
Substituting $u^i_k=a^i_k$ for all $i$, $k$ in~\er{adxbg} 
and using \er{agaukak}, \er{caukak}, we get
${\pd\bG}/{\pd x}=0$. Hence $\bG$ may depend only on $t$.
Substituting $x=x_a$ and $u^i_k=a^i_k$ for all $i$, $k$ in~\er{bdtbg} 
and using \er{agbxx0}, \er{cbxx0}, we get
${\pd\bG}/{\pd t}=0$.

Thus $\bG$ does not depend on $x$, $t$, $u^j_k$, 
so $\bG$ is a constant element of $\mathrm{GL}_\sm$.
Then $D_x(\bG)=D_t(\bG)=0$, and relations \er{cabg}, \er{cbbg} imply~\er{cabgcb}.
\end{proof}
\begin{remark}
\lb{remaca}
In the situation described in Theorem~\ref{thaca},
since $\bG$ is a constant element of $\mathrm{GL}_\sm$,
the equation $\CA=\bG \anA\bG^{-1}$ implies that the functions 
$\anA$ and $\CA$ depend on the same variables $x$, $t$, $u^j_k$.
\end{remark}

\subsection{The algebras $\fd^\oc(\CE,a)$}
\lb{deffdoc}

Recall that $\CE$ is the infinite prolongation of 
an $\nv$-component evolution PDE~\er{sys_intr}.
The number $\nv\in\zsp$ is fixed throughout this section.
Consider a point $a\in\CE$ given by~\er{axtaik}. 
According to~\er{axtaik}, the point $a$ 
is determined by constants $x_a$, $t_a$, $a^i_k$.

For each $\ml\in\zp$, let $\mat_{\ml}$ be the set of matrices of 
size~$\nv\times (\ml+1)$ with nonnegative integer entries. 
For a matrix $\gamma\in\mat_\ml$, 
its entries are denoted by $\gamma_{i,k}\in\zp$, 
where $i=1,\dots,\nv$ and $k=0,\dots,\ml$. 
Let $\ua^\gamma$ be the following product 
\begin{equation}
\label{ugamma}
\ua^\gamma=\prod_{\substack{i=1,\dots,\nv,\\ k=0,\dots,\ml}}
\big(u^i_k-a^i_k\big)^{\gamma_{i,k}}.
\end{equation}

\begin{remark}
\lb{remm}
For each $\ml\in\zsp$, $i_0\in\{1,\dots,\nv\}$, and $k_0\in\{1,\dots,\ml\}$, 
denote by $M_{i_0,k_0}^\ml\subset\mat_\ml$
the subset of matrices $\al$ satisfying the following conditions
\begin{gather}
\lb{ali0k0}
\al_{i_0,k_0}=1,\quad\forall\,k>k_0\quad\forall\,i\quad\al_{i,k}=0,
\quad\forall\,i_1\neq i_0\quad\al_{i_1,k_0}=0,\quad
\forall\,i_2>i_0\quad\al_{i_2,k_0-1}=0.
\end{gather}
In other words, for each $k>k_0$ the $k$-th column 
of any matrix $\al\in M_{i_0,k_0}^\ml$ is zero,
the $k_0$-th column contains only one nonzero entry $\al_{i_0,k_0}=1$, 
and in the $(k_0-1)$-th column
one has $\al_{i_2,k_0-1}=0$ for all $i_2>i_0$.

Set also $M_{i_0,k_0}^0=\varnothing$ for all $i_0$, $k_0$.
So the set $M_{i_0,k_0}^0$ is empty.
\end{remark}

Let $\sm\in\zsp$ and $\oc\in\zp$. 
Consider again a matrix Lie algebra $\mg\subset\gl_\sm$. 
According to Theorem~\ref{evcov}, any $\mg$-valued ZCR~\er{thzcroc} 
of order~$\le\oc$ defined on a neighborhood of $a\in\CE$ 
is gauge equivalent to a $\mg$-valued ZCR
\begin{gather}
\lb{tatbf}
\tilde{A}=\tilde{A}(x,t,u^j_0,u^j_1,\dots,u^j_\oc),\qquad 
\tilde{B}=\tilde{B}(x,t,u^j_0,u^j_1,\dots,u^j_{\oc+\eo-1}),\\
\lb{tatbzcr}
D_x(\tilde{B})-D_t(\tilde{A})+[\tilde{A},\tilde{B}]=0
\end{gather}
satisfying \er{gaukak}, \er{gbxx0}, \er{pgxt0ua}.

According to Remark~\ref{anmer}, the $\mg$-valued functions $\tilde{A}$, $\tilde{B}$
are analytic on a neighborhood of $a\in\CE$.
Hence, in some neighborhood of $a\in\CE$, the functions $\tilde{A}$, $\tilde{B}$
can be represented as absolutely convergent power series
\begin{gather}
\label{aser}
\tilde{A}=\sum_{\al\in \mat_\oc,\ l_1,l_2\in\zp}
(x-x_a)^{l_1}(t-t_a)^{l_2}\cdot \ua^\al\cdot \tilde{A}^{l_1,l_2}_\al,\\
\lb{bser}
\tilde{B}=\sum_{\beta\in \mat_{\oc+\eo-1},\ l_1,l_2\in\zp}
(x-x_a)^{l_1}(t-t_a)^{l_2}\cdot \ua^\beta\cdot\tilde{B}^{l_1,l_2}_\beta,\\
\notag
\tilde{A}^{l_1,l_2}_\al,\tilde{B}^{l_1,l_2}_\beta\in\mg.
\end{gather}

\begin{remark}
\lb{abcoef0}
Using formulas \er{aser}, \er{bser}, 
we see that properties \er{gaukak}, \er{gbxx0}, \er{pgxt0ua} are equivalent to
\beq
\lb{ab000}
\tilde{A}^{l_1,l_2}_0=\tilde{B}^{0,l_2}_0=0,\quad
\tilde{A}^{l_1,l_2}_{\hat{\al}}=0,\quad\hat{\al}\in M^\oc_{i_0,k_0},
\quad i_0=1,\dots,\nv,\quad k_0=1,\dots,\oc,
\quad l_1,l_2\in\zp,
\ee
where $M^\oc_{i_0,k_0}\subset\mat_\oc$ 
is the set of matrices defined in Remark~\ref{remm}.
\end{remark}

\begin{remark}  
\label{inform}
The main idea of the definition of the Lie algebra $\fds^\oc(\CE,a)$  
can be informally outlined as follows. 
According to Theorem~\ref{evcov} and Remark~\ref{abcoef0}, 
any ZCR~\er{thzcroc} of order~$\le\oc$ is gauge equivalent 
to a ZCR given by functions $\tilde{A}$, $\tilde{B}$ 
that are of the form~\er{aser},~\er{bser} 
and satisfy \er{tatbzcr}, \er{ab000}.

To define $\fd^\oc(\CE,a)$, 
we regard $A^{l_1,l_2}_\al$, $B^{l_1,l_2}_\beta$ from~\er{aser},~\er{bser} 
as abstract symbols. 
By definition, the Lie algebra $\fd^\oc(\CE,a)$ 
is generated by the symbols $A^{l_1,l_2}_\al$, $B^{l_1,l_2}_\beta$ 
for $\al\in\mat_\oc$, 
$\be\in\mat_{\oc+\eo-1}$, 
$l_1,l_2\in\zp$.
Relations for these generators are provided by equations \er{tatbzcr}, \er{ab000}.
The details of this construction are presented below.
\end{remark}
Let $\frl$ be the free Lie algebra generated 
by the symbols $\fla^{l_1,l_2}_\al$, $\flb^{l_1,l_2}_\beta$ for 
${\al\in\mat_\oc}$, ${\be\in\mat_{\oc+\eo-1}}$, $l_1,l_2\in\zp$.
In particular, we have
$$
\fla^{l_1,l_2}_\al\in\frl,\quad\ 
\flb^{l_1,l_2}_\beta\in\frl,\quad\ 
\big[\fla^{l_1,l_2}_\al,\flb^{l_1,l_2}_\beta\big]\in\frl\qquad 
\forall\,\al\in\mat_\oc,\qquad \forall\,\be\in\mat_{\oc+\eo-1},\qquad 
\forall\,l_1,l_2\in\zp.
$$
Consider the following formal power series with coefficients in~$\frl$
\begin{gather*}
\notag
\fla=\sum_{\al\in \mat_\oc,\ l_1,l_2\in\zp}(x-x_a)^{l_1}(t-t_a)^{l_2}\cdot \ua^\al
\cdot \fla^{l_1,l_2}_\al,\\
\flb=\sum_{\beta\in \mat_{\oc+\eo-1},\ l_1,l_2\in\zp}
(x-x_a)^{l_1}(t-t_a)^{l_2}\cdot \ua^\beta\cdot\flb^{l_1,l_2}_\beta.
\end{gather*}

Set 
\begin{gather}
\lb{dxflb}
D_x(\flb)=\sum_{\beta\in \mat_{\oc+\eo-1},\ l_1,l_2\in\zp}
D_x\big((x-x_a)^{l_1}(t-t_a)^{l_2}\ua^\beta\big)\cdot\flb^{l_1,l_2}_\beta,\\
\lb{dtfla}
D_t(\fla)=\sum_{\al\in \mat_\oc,\ l_1,l_2\in\zp}D_t\big((x-x_a)^{l_1}(t-t_a)^{l_2}\ua^\al\big)
\cdot \fla^{l_1,l_2}_\al,\\
\label{lieab}
[\fla,\flb]=\sum_{\substack{\al\in\mat_\oc,\ \beta\in \mat_{\oc+\eo-1},\\
l_1,l_2,l'_1,l'_2\in\zp}} 
(x-x_a)^{l_1+l'_1}(t-t_a)^{l_2+l'_2}\cdot 
\ua^\al\cdot \ua^{\beta}\cdot\big[\fla^{l_1,l_2}_\al,\flb^{l'_1,l'_2}_\beta\big].
\end{gather}
For any $\al\in\mat_\oc$, $\beta\in \mat_{\oc+\eo-1}$, $l_1,l_2\in\zp$, 
the expressions $D_x\big((x-x_a)^{l_1}(t-t_a)^{l_2}\ua^\beta\big)$ 
and $D_t\big((x-x_a)^{l_1}(t-t_a)^{l_2}\ua^\al\big)$ 
are functions of the variables $x$, $t$, $u^i_k$. 
Taking the corresponding Taylor series at the point~\eqref{axtaik}, 
we regard these expressions as power series. 

Then~\er{dxflb},~\er{dtfla},~\er{lieab} are 
formal power series with coefficients in~$\frl$, and we have 
\begin{equation*}
D_x(\flb)-D_t(\fla)+[\fla,\flb]=
\sum_{\gamma\in \mat_{\oc+\eo},\ l_1,l_2\in\zp}
(x-x_a)^{l_1}(t-t_a)^{l_2}\cdot \ua^\gamma\cdot\flz^{l_1,l_2}_\gamma
\end{equation*}
for some elements $\flz^{l_1,l_2}_\gamma\in\frl$. 

Let $\frid\subset\frl$ be the ideal generated by the elements
\begin{gather*}
\flz^{l_1,l_2}_\gamma,\qquad\fla^{l_1,l_2}_0,\qquad 
\flb^{0,l_2}_0,\qquad\gamma\in\mat_{\oc+\eo},\qquad l_1,l_2\in\zp,\\
\fla^{l_1,l_2}_{\hat{\al}},\qquad{\hat{\al}}\in M^\oc_{i_0,k_0},\qquad i_0=1,\dots,\nv,\qquad k_0=1,\dots,\oc,
\qquad l_1,l_2\in\zp.
\end{gather*}
Set $\fd^\oc(\CE,a)=\frl/\frid$. 
Consider the natural homomorphism  
$\rho\cl\frl\to\frl/\frid=\fd^\oc(\CE,a)$ and set 
$$
\ga^{l_1,l_2}_\al=\rho\big(\fla^{l_1,l_2}_\al\big),\qquad\qquad 
\gb^{l_1,l_2}_\beta=\rho\big(\flb^{l_1,l_2}_\beta\big).
$$
The definition of~$\frid$ implies that the power series 
\begin{gather}
\label{gasumxt}
\ga=\sum_{\al\in \mat_\oc,\ l_1,l_2\in\zp}(x-x_a)^{l_1}(t-t_a)^{l_2}\cdot \ua^\al
\cdot\ga^{l_1,l_2}_\al,\\
\label{gbsumxt}
\gb=\sum_{\beta\in \mat_{\oc+\eo-1},\ l_1,l_2\in\zp}
(x-x_a)^{l_1}(t-t_a)^{l_2}\cdot \ua^\beta\cdot\gb^{l_1,l_2}_\beta
\end{gather}
satisfy 
\beq
\lb{xgbtga}
D_x(\gb)-D_t(\ga)+[\ga,\gb]=0.
\ee

\begin{remark}
\lb{rem_fdpgen}
The Lie algebra $\fd^\oc(\CE,a)$ can be described in terms 
of generators and relations as follows. 

Equation~\er{xgbtga} is equivalent to some Lie algebraic relations 
for $\ga^{l_1,l_2}_\al$, $\gb^{l_1,l_2}_\beta$.

The algebra $\fd^\oc(\CE,a)$ is given by the generators 
$\ga^{l_1,l_2}_\al$, $\gb^{l_1,l_2}_\beta$,  
the relations arising from~\er{xgbtga}, and the following relations 
\beq
\lb{gagb00}
\ga^{l_1,l_2}_0=\gb^{0,l_2}_0=0,\quad
\ga^{l_1,l_2}_{\hat{\al}}=0,\quad\hat{\al}\in M^\oc_{i_0,k_0},\quad 
i_0=1,\dots,\nv,\quad k_0=1,\dots,\oc,
\quad l_1,l_2\in\zp.
\ee

Note that condition~\er{gagb00} is equivalent to the following equations
\begin{gather}
\lb{pdgau}
\forall\,i_0=1,\dots,\nv,\qquad\forall\,k_0\ge 1,\qquad\quad
\frac{\pd\ga}{\pd u^{i_0}_{k_0}}
\,\,\bigg|_{u^i_k=a^i_k\ \forall\,(i,k)\succ(i_0,k_0-1)}=0,\\
\lb{gaua0}
\ga\,\Big|_{u^i_k=a^i_k\ \forall\,(i,k)}=0,\\
\lb{gbxua0}
\gb\,\Big|_{x=x_a,\ u^i_k=a^i_k\ \forall\,(i,k)}=0.
\end{gather}
\end{remark}


\begin{remark}
\lb{rfzcr}
Let $\bl$ be a Lie algebra. 
If $\anA$, $\anB$ are functions with values in $\bl$ and satisfy~\er{anzcr} 
then $\anA$, $\anB$ form a ZCR of order~$\le\oc$ with values in $\bl$.

Instead of functions with values in $\bl$, 
one can consider formal power series with coefficients in $\bl$.
Then one gets the notion of \emph{formal ZCRs with coefficients in $\bl$}.

More precisely, a \emph{formal ZCR of order~$\le\oc$ with coefficients in $\bl$}
is given by power series 
\begin{gather}
\label{anasum}
\anA=\sum_{\al\in \mat_\oc,\ l_1,l_2\in\zp}(x-x_a)^{l_1}(t-t_a)^{l_2}\cdot\ua^\al
\cdot\anA^{l_1,l_2}_\al,\\
\label{anbsum}
\anB=\sum_{\beta\in \mat_{\oc+\eo-1},\ l_1,l_2\in\zp}
(x-x_a)^{l_1}(t-t_a)^{l_2}\cdot\ua^\beta\cdot\anB^{l_1,l_2}_\beta
\end{gather}
such that $\anA^{l_1,l_2}_\al,\anB^{l_1,l_2}_\beta\in\bl$ and 
$D_x(\anB)-D_t(\anA)+[\anA,\anB]=0$.

If the power series~\er{anasum},~\er{anbsum} 
satisfy \er{agd=0}, \er{agaukak}, \er{agbxx0} 
then this formal ZCR is said to be \emph{$a$-normal}.

For example, since \er{gasumxt}, \er{gbsumxt} 
obey \er{xgbtga}, \er{pdgau}, \er{gaua0}, \er{gbxua0} and 
$\ga^{l_1,l_2}_\al,\gb^{l_1,l_2}_\beta\in\fd^\oc(\CE,a)$, 
the power series \er{gasumxt}, \er{gbsumxt} constitute an $a$-normal 
formal ZCR of order~$\le\oc$ with coefficients in $\fd^\oc(\CE,a)$.
\end{remark}

\begin{remark}
\lb{rzfz}
Let $\bl$ be a Lie algebra. 
Consider an $\bl$-valued ZCR of order~$\le\oc$ given by 
$\bl$-valued functions $\anA$, $\anB$ satisfying~\er{anzcr}.
Then the Taylor series of the functions $\anA$, $\anB$ are of the form 
\er{anasum}, \er{anbsum} and constitute a formal ZCR with coefficients in $\bl$.

Thus any $\bl$-valued ZCR can be regarded as a formal ZCR with coefficients 
in $\bl$, if we replace the $\bl$-valued functions by the corresponding Taylor 
series with coefficients in $\bl$.
\end{remark}

The definition of the Lie algebra $\fds^\oc(\CE,a)$ implies the following result.
\begin{theorem}
\lb{thhfzcr}
Let \er{anasum}, \er{anbsum} be an $a$-normal
formal ZCR of order~$\le\oc$ with coefficients in a Lie algebra~$\bl$.
Then one has a homomorphism $\fds^\oc(\CE,a)\to\bl$ given by the formulas 
$\ga^{l_1,l_2}_\al\mapsto\anA^{l_1,l_2}_\al$, 
$\gb^{l_1,l_2}_\beta\mapsto\anB^{l_1,l_2}_\beta$ for all $\al$, $\beta$, $l_1$, $l_2$.
\end{theorem}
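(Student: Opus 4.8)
The plan is to construct the homomorphism in two stages: first produce a map out of the free Lie algebra $\frl$ via its universal property, and then show that this map descends to the quotient $\fds^\oc(\CE,a)=\frl/\frid$.

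Since $\frl$ is freely generated by the symbols $\fla^{l_1,l_2}_\al$ and $\flb^{l_1,l_2}_\beta$, the universal property of free Lie algebras furnishes a unique Lie algebra homomorphism $\psi\cl\frl\to\bl$ with $\psi(\fla^{l_1,l_2}_\al)=\anA^{l_1,l_2}_\al$ and $\psi(\flb^{l_1,l_2}_\beta)=\anB^{l_1,l_2}_\beta$. Writing $\rho\cl\frl\to\frl/\frid=\fds^\oc(\CE,a)$ for the quotient map, it suffices to prove $\psi(\frid)=0$: then $\psi$ factors as $\bar\psi\circ\rho$ for a homomorphism $\bar\psi\cl\fds^\oc(\CE,a)\to\bl$, and by construction $\bar\psi(\ga^{l_1,l_2}_\al)=\anA^{l_1,l_2}_\al$ and $\bar\psi(\gb^{l_1,l_2}_\beta)=\anB^{l_1,l_2}_\beta$, which is the assertion. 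As $\frid$ is the ideal generated by the listed elements and $\psi$ is a homomorphism, $\psi(\frid)$ is the ideal of $\bl$ generated by their images; hence it is enough to check that $\psi$ annihilates each generator of $\frid$.

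For the generators $\fla^{l_1,l_2}_0$, $\flb^{0,l_2}_0$, and $\fla^{l_1,l_2}_{\hat\al}$ with $\hat\al\in M^\oc_{i_0,k_0}$, this is immediate: their images are $\anA^{l_1,l_2}_0$, $\anB^{0,l_2}_0$, and $\anA^{l_1,l_2}_{\hat\al}$, and since the formal ZCR \er{anasum}, \er{anbsum} is $a$-normal, i.e.\ satisfies \er{agd=0}, \er{agaukak}, \er{agbxx0}, translating these conditions into coefficient form exactly as in Remark~\ref{abcoef0} yields $\anA^{l_1,l_2}_0=\anB^{0,l_2}_0=0$ and $\anA^{l_1,l_2}_{\hat\al}=0$.

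The main step is to show $\psi(\flz^{l_1,l_2}_\gamma)=0$. Extending $\psi$ coefficientwise to the ring of formal power series in the variables $x-x_a$, $t-t_a$, $u^i_k-a^i_k$, we have $\psi(\fla)=\anA$ and $\psi(\flb)=\anB$. The key observation is that the extended $\psi$ commutes with $D_x$, $D_t$, and the bracket, since $D_x$ and $D_t$ act only on the scalar monomials $(x-x_a)^{l_1}(t-t_a)^{l_2}\ua^\gamma$ while $\psi$ acts only on the Lie-algebra coefficients. Applying $\psi$ to the identity $D_x(\flb)-D_t(\fla)+[\fla,\flb]=\sum_{\gamma,l_1,l_2}(x-x_a)^{l_1}(t-t_a)^{l_2}\ua^\gamma\flz^{l_1,l_2}_\gamma$ that defines the $\flz^{l_1,l_2}_\gamma$ therefore gives
\[
\sum_{\gamma,l_1,l_2}(x-x_a)^{l_1}(t-t_a)^{l_2}\ua^\gamma\,\psi(\flz^{l_1,l_2}_\gamma)
=D_x(\anB)-D_t(\anA)+[\anA,\anB]=0,
\]
where the right-hand side vanishes because \er{anasum}, \er{anbsum} is a formal ZCR. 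Since the monomials $(x-x_a)^{l_1}(t-t_a)^{l_2}\ua^\gamma$ are linearly independent, comparing coefficients gives $\psi(\flz^{l_1,l_2}_\gamma)=0$ for all $\gamma$, $l_1$, $l_2$. This completes the verification that $\psi$ kills every generator of $\frid$, whence $\psi(\frid)=0$ and the factored homomorphism $\bar\psi$ exists as required. The only point demanding care is the interchange of $\psi$ with $D_x$, $D_t$, and $[\cdot,\cdot]$ at the level of formal power series, together with the legitimacy of the coefficient comparison; the remainder is a direct unwinding of the definitions and of the $a$-normality hypothesis.
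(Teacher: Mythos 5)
Your proof is correct and is exactly the argument the paper has in mind: the paper simply asserts that the theorem follows from the definition of $\fds^\oc(\CE,a)$ as $\frl/\frid$, and your write-up spells out the standard verification (universal property of $\frl$, then checking that $a$-normality kills the generators $\fla^{l_1,l_2}_0$, $\flb^{0,l_2}_0$, $\fla^{l_1,l_2}_{\hat\al}$ and that the formal ZCR equation kills the $\flz^{l_1,l_2}_\gamma$, using that the coefficientwise extension of $\psi$ commutes with $D_x$, $D_t$ and the bracket). No gaps.
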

\begin{remark}
\lb{homfdhmg}
Let $\hat{A}=\hat{A}(x,t,u^j_0,u^j_1,\dots,u^j_\oc)$,
$\hat{B}=\hat{B}(x,t,u^j_0,u^j_1,\dots,u^j_{\oc+\eo-1})$ be an $a$-normal ZCR 
of order~$\le\oc$ with values in a Lie algebra $\hat{\mg}$.
We suppose that the $\hat{\mg}$-valued functions $\hat{A}$, $\hat{B}$ 
are defined on a neighborhood of $a\in\CE$.

Then the Taylor series of the functions $\hat{A}$, $\hat{B}$ at the point $a\in\CE$ 
constitute an $a$-normal formal ZCR of order~$\le\oc$ with coefficients in~$\hat{\mg}$.
Therefore, by Theorem~\ref{thhfzcr}, 
we get a homomorphism $\fds^\oc(\CE,a)\to\hat{\mg}$ which maps the coefficients 
of the power series \er{gasumxt}, \er{gbsumxt} to the corresponding coefficients 
of the Taylor series of the functions $\hat{A}$, $\hat{B}$.
\end{remark}

Let $\mg$ be a finite-dimensional Lie algebra.
A homomorphism $\hrf\cl\fds^\oc(\CE,a)\to\mg$ 
is said to be \emph{regular} if the power series 
\begin{gather}
\label{taser}
\tilde{A}=\sum_{\al\in \mat_\oc,\ l_1,l_2\in\zp}
(x-x_a)^{l_1}(t-t_a)^{l_2}\cdot \ua^\al\cdot
\hrf\big(\ga^{l_1,l_2}_\al\big),\\
\lb{tbser}
\tilde{B}=\sum_{\beta\in \mat_{\oc+\eo-1},\ l_1,l_2\in\zp}
(x-x_a)^{l_1}(t-t_a)^{l_2}\cdot\ua^\beta\cdot
\hrf\big(\gb^{l_1,l_2}_\beta\big)
\end{gather}
are absolutely convergent in a neighborhood of $a\in\CE$. 
In other words, $\hrf$ is regular iff \er{taser},~\er{tbser} are 
analytic functions on a neighborhood of $a\in\CE$.

Since \er{gasumxt},~\er{gbsumxt} obey~\er{xgbtga}, 
the power series~\er{taser},~\er{tbser} satisfy~\er{tatbzcr} 
for any homomorphism $\hrf\cl\fds^\oc(\CE,a)\to\mg$.  
Therefore, if $\hrf$ is regular, 
the analytic functions~\er{taser},~\er{tbser} form a ZCR with values in $\mg$.
Denote this ZCR by $\mathbf{Z}(\CE,a,\oc,\hrf)$.

Combining this construction with Theorem~\ref{evcov} and Remark~\ref{abcoef0}, 
we obtain the following result.
\begin{theorem}
\lb{thzcrfd}
Let $\mg$ be a finite-dimensional matrix Lie algebra.
For any $\mg$-valued ZCR~\er{thzcroc} of order~$\le\oc$ 
on a neighborhood of $a\in\CE$, 
there is a regular homomorphism $\hrf\cl\fds^\oc(\CE,a)\to\mg$ 
such that the ZCR~\er{thzcroc} is gauge equivalent to 
the ZCR $\mathbf{Z}(\CE,a,\oc,\hrf)$ given by \er{taser}, \er{tbser}.

The ZCR $\mathbf{Z}(\CE,a,\oc,\hrf)$ takes values in the Lie algebra 
$\hrf\big(\fds^\oc(\CE,a)\big)\subset\mg$.

Recall that we have the power series $\ga$, $\gb$ with coefficients in 
$\fds^\oc(\CE,a)$ given by formulas~\er{gasumxt},~\er{gbsumxt}.
Formulas \er{taser}, \er{tbser} say that $\tilde{A}=\hrf(\ga)$ and 
$\tilde{B}=\hrf(\gb)$, in the sense that $\hrf$ maps the coefficients 
of the power series $\ga$, $\gb$ to the corresponding 
coefficients of the Taylor series of the functions $\tilde{A}$, $\tilde{B}$.

So the ZCR~\er{thzcroc} is gauge equivalent to the ZCR $\mathbf{Z}(\CE,a,\oc,\hrf)$
given by the functions $\tilde{A}=\hrf(\ga)$, $\tilde{B}=\hrf(\gb)$.
\end{theorem}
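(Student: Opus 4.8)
The plan is to obtain $\hrf$ by first reducing the given ZCR to normal form and then reading off its Taylor coefficients, so that the statement becomes a matter of assembling Theorem~\ref{evcov}, Remark~\ref{abcoef0}, and Remark~\ref{homfdhmg} (the latter resting on Theorem~\ref{thhfzcr}) together with a convergence check. First I would apply Theorem~\ref{evcov} to the $\mg$-valued ZCR~\er{thzcroc}: on a neighborhood of $a\in\CE$ there is a $\mathcal{G}$-valued gauge transformation $G=G(x,t,u^j_0,\dots,u^j_{\oc-1})$ such that the gauge-equivalent functions $\tilde{A}$, $\tilde{B}$ of~\er{mnprth} satisfy~\er{gaukak},~\er{gbxx0},~\er{pgxt0ua}. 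By Lemma~\ref{lemgt} they still take values in $\mg$, and by Remark~\ref{anmer} they are analytic near $a$, so they form an $a$-normal $\mg$-valued ZCR of order~$\le\oc$. Expanding them as absolutely convergent power series~\er{aser},~\er{bser}, Remark~\ref{abcoef0} shows their coefficients obey~\er{ab000}, so these Taylor series constitute an $a$-normal formal ZCR with coefficients in $\mg$.

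Next I would invoke Remark~\ref{homfdhmg}, that is, Theorem~\ref{thhfzcr} applied to this formal ZCR, to produce the homomorphism $\hrf\cl\fds^\oc(\CE,a)\to\mg$ determined by $\hrf(\ga^{l_1,l_2}_\al)=\tilde{A}^{l_1,l_2}_\al$ and $\hrf(\gb^{l_1,l_2}_\beta)=\tilde{B}^{l_1,l_2}_\beta$. With this $\hrf$, the series~\er{taser},~\er{tbser} are by construction exactly the Taylor series of $\tilde{A}$, $\tilde{B}$, which converge on a neighborhood of $a$; hence $\hrf$ is regular and the ZCR $\mathbf{Z}(\CE,a,\oc,\hrf)$ coincides with $\tilde{A}$, $\tilde{B}$ as analytic functions. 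Since~\er{thzcroc} is gauge equivalent to $\tilde{A}$, $\tilde{B}$, it is gauge equivalent to $\mathbf{Z}(\CE,a,\oc,\hrf)$, and the identities $\tilde{A}=\hrf(\ga)$, $\tilde{B}=\hrf(\gb)$ hold in the stated sense. For the value claim I would note that $\hrf(\fds^\oc(\CE,a))$ is a Lie subalgebra of the finite-dimensional $\mg$, hence a closed subspace; all coefficients $\hrf(\ga^{l_1,l_2}_\al)$, $\hrf(\gb^{l_1,l_2}_\beta)$ lie in it, and an absolutely convergent sum of elements of a closed subspace remains in that subspace, so $\mathbf{Z}(\CE,a,\oc,\hrf)$ takes values in $\hrf(\fds^\oc(\CE,a))$.

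The substantive work has already been carried out in the earlier results, so I do not expect a serious obstacle; the only point requiring care is the passage from the abstract homomorphism back to genuine analytic functions, i.e.\ the regularity of $\hrf$. This is settled precisely because~\er{taser},~\er{tbser} reproduce the Taylor expansions of the honestly analytic $\tilde{A}$, $\tilde{B}$ rather than some merely formal series, which guarantees their convergence on a neighborhood of $a$ and identifies $\mathbf{Z}(\CE,a,\oc,\hrf)$ with the normalized ZCR.
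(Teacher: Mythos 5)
Your proposal is correct and follows essentially the same route as the paper: the paper obtains Theorem~\ref{thzcrfd} precisely by combining Theorem~\ref{evcov} (normalization by a gauge transformation), Remark~\ref{abcoef0} (the Taylor coefficients of the normalized ZCR satisfy~\er{ab000}, so they form an $a$-normal formal ZCR), and Theorem~\ref{thhfzcr} (which yields the homomorphism $\hrf$), with regularity immediate because \er{taser},~\er{tbser} are the Taylor series of the analytic functions $\tilde{A}$, $\tilde{B}$. Your additional remarks on the closedness of $\hrf\big(\fds^\oc(\CE,a)\big)$ in the finite-dimensional $\mg$ are a correct way to justify the value claim.
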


\subsection{The homomorphism $\fd^{\oc}(\CE,a)\to\fd^{\oc-1}(\CE,a)$}
\lb{hfdpp1}

According to Remarks~\ref{rem_fdpgen},~\ref{rfzcr}, the Lie algebra $\fd^{\oc}(\CE,a)$ 
is generated by $\ga^{l_1,l_2}_\al$, $\gb^{l_1,l_2}_\beta$, 
and the power series \er{gasumxt}, \er{gbsumxt} constitute an $a$-normal 
formal ZCR of order~$\le\oc$ with coefficients in $\fd^\oc(\CE,a)$.

In this subsection we suppose that $\oc\ge 1$. 
Constructing the Lie algebra $\fd^{\oc-1}(\CE,a)$ in the same way, 
we get power series 
\begin{gather}
\label{hgasum}
\hga=\sum_{\hal\in\mat_{\oc-1},\ l_1,l_2\in\zp}(x-x_a)^{l_1}(t-t_a)^{l_2}\cdot\ua^\hal
\cdot\hga^{l_1,l_2}_\hal,\\
\label{hgbsum}
\hgb=\sum_{\hbeta\in\mat_{\oc+\eo-2},\ l_1,l_2\in\zp}
(x-x_a)^{l_1}(t-t_a)^{l_2}\cdot\ua^\hbeta\cdot\hgb^{l_1,l_2}_\hbeta
\end{gather}
such that the Lie algebra $\fd^{\oc-1}(\CE,a)$ 
is generated by $\hga^{l_1,l_2}_\hal$, $\hgb^{l_1,l_2}_\hbeta$
for all $\hal\in\mat_{\oc-1}$, $\hbeta\in\mat_{\oc+\eo-2}$, $l_1,l_2\in\zp$,
and the power series \er{hgasum}, \er{hgbsum} constitute an $a$-normal 
formal ZCR of order~$\le\oc-1$ with coefficients in $\fd^{\oc-1}(\CE,a)$.

We are going to construct a surjective homomorphism 
$\tau_\oc\cl\fd^{\oc}(\CE,a)\to\fd^{\oc-1}(\CE,a)$. 
Since the algebra $\fd^{\oc}(\CE,a)$ 
is generated by $\ga^{l_1,l_2}_\al$, $\gb^{l_1,l_2}_\beta$, 
it is sufficient to define 
$\tau_\oc(\ga^{l_1,l_2}_\al)$, $\tau_\oc(\gb^{l_1,l_2}_\beta)$. 

To do this, we need to introduce some extra notation.
Recall that, for each $\ml\in\zp$, we denote by $\mat_{\ml}$ 
the set of matrices of size~$\nv\times (\ml+1)$ with nonnegative integer entries. 
For a matrix $\gamma\in\mat_\ml$, 
its entries are denoted by $\gamma_{i,k}\in\zp$, 
where $i=1,\dots,\nv$ and $k=0,\dots,\ml$. 

For each $\ml\ge 1$ and each $\nv\times (\ml+1)$ matrix $\gamma\in\mat_\ml$, 
we denote by $\mathbf{r}(\gamma)$ the $\nv\times\ml$ matrix with the entries 
$\mathbf{r}(\gamma)_{i,k}=\gamma_{i,k}$ for $i=1,\dots,\nv$ and $k=0,\dots,\ml-1$.
In other words, the matrix $\mathbf{r}(\gamma)\in\mat_{\ml-1}$ is obtained 
from the matrix $\gamma$ by erasing the last column.

Let $l_1,l_2\in\zp$.
For $\al\in\mat_\oc$, $\beta\in \mat_{\oc+\eo-1}$, 
we can consider the matrices $\mathbf{r}(\al)\in\mat_{\oc-1}$, 
$\mathbf{r}(\beta)\in\mat_{\oc+\eo-2}$
and the elements 
$\hga^{l_1,l_2}_{\mathbf{r}(\al)},\hgb^{l_1,l_2}_{\mathbf{r}(\beta)}\in\fd^{\oc-1}(\CE,a)$.

For all $l_1,l_2\in\zp$, $\al\in\mat_{\oc}$, $\beta\in\mat_{\oc+\eo-1}$,
we set 
\begin{gather}
\lb{tocga}
\tau_\oc(\ga^{l_1,l_2}_\al)=
\left\{
\begin{aligned}
& 0,\text{ if there is $i\in\{1,\dots,\nv\}$ such that }\al_{i,\oc}\neq 0,\\
& \hga^{l_1,l_2}_{\mathbf{r}(\al)},\text{ if $\al_{i,\oc}=0$ for all $i$},
\end{aligned}\right.\\
\lb{tocgb}
\tau_\oc(\gb^{l_1,l_2}_\beta)=
\left\{
\begin{aligned}
& 0,\text{ if there is $i\in\{1,\dots,\nv\}$ such that }\beta_{i,\oc+\eo-1}\neq 0,\\
& \hgb^{l_1,l_2}_{\mathbf{r}(\beta)},\text{ if $\beta_{i,\oc+\eo-1}=0$ for all $i$}.
\end{aligned}\right.
\end{gather}
The definition of $\fd^{\oc}(\CE,a)$ and $\fd^{\oc-1}(\CE,a)$ implies that
$\tau_\oc\cl\fd^{\oc}(\CE,a)\to\fd^{\oc-1}(\CE,a)$ defined by~\er{tocga},~\er{tocgb}
is indeed a surjective homomorphism. The meaning of this homomorphism 
is explained in Remark~\ref{rtoc1} below.

According to~\er{tocga}, one has $\tau_\oc(\ga^{l_1,l_2}_\al)=0$ if there is a 
nonzero entry in the last column of the matrix $\al\in\mat_{\oc}$. 
According to~\er{tocgb}, one has $\tau_\oc(\gb^{l_1,l_2}_\beta)=0$ if there is a 
nonzero entry in the last column of the matrix $\beta\in\mat_{\oc+\eo-1}$. 

Recall that the power series $\ga$ and $\gb$ are given by \er{gasumxt}, \er{gbsumxt}.
Taking into account formula~\er{ugamma}, 
we see that formulas \er{tocga}, \er{tocgb} say the following.
Applying $\tau_\oc$ to the coefficients of the power series 
$\ga\,\Big|_{u^i_\oc=a^i_\oc\ \forall\,i}$, we get the power series~\er{hgasum}.
Applying $\tau_\oc$ to the coefficients of the power series 
$\gb\,\Big|_{u^i_{\oc+\eo-1}=a^i_{\oc+\eo-1}\ \forall\,i}$, 
we get the power series~\er{hgbsum}.

\begin{remark}
\lb{rtoc1}
Any ZCR of order~$\le\oc-1$ is at the same time of order~$\le\oc$.
Therefore, \er{hgasum}, \er{hgbsum} can be regarded as 
an $a$-normal formal ZCR of order~$\le\oc$ with coefficients in $\fd^{\oc-1}(\CE,a)$.
The homomorphism $\tau_\oc\cl\fd^{\oc}(\CE,a)\to\fd^{\oc-1}(\CE,a)$ is the homomorphism 
which corresponds to this ZCR by Theorem~\ref{thhfzcr}.
\end{remark}

Thus we obtain the following sequence of surjective homomorphisms of Lie algebras
\beq
\lb{fdnn-1}
\dots\xrightarrow{\tau_{\oc+1}}
\fd^{\oc}(\CE,a)\xrightarrow{\tau_{\oc}}\fd^{\oc-1}(\CE,a)\xrightarrow{\tau_{\oc-1}}
\dots\xrightarrow{\tau_{2}}\fd^1(\CE,a)\xrightarrow{\tau_{1}}\fd^0(\CE,a).
\ee

\subsection{Some results on generators of $\fds^{\oc}(\CE,a)$}
\lb{secrgen}

According to Remark~\ref{rem_fdpgen}, 
the algebra $\fds^\oc(\CE,a)$ is given by the generators 
$\ga^{l_1,l_2}_\al$, $\gb^{l_1,l_2}_\beta$ 
and the relations arising from~\er{xgbtga},~\er{gagb00}.
Using~\er{evdxdt}, we can rewrite equation~\er{xgbtga} as
\begin{multline}
\lb{zcrdet}
\frac{\pd}{\pd x}(\gb)+
\sum_{\substack{i=1,\dots,\nv,\\ k=0,1,\dots,\oc+\eo-1}}
u^i_{k+1}\frac{\pd}{\pd u^i_k}(\gb)
-\frac{\pd}{\pd t}(\ga)\\
-\sum_{\substack{i=1,\dots,\nv,\\ k=0,1,\dots,\oc}}
D_x^k\big(F^i(x,t,u^j_0,u^j_1,\dots,u^j_{\eo})\big)\frac{\pd}{\pd u^i_k}(\ga)
+[\ga,\gb]=0.
\end{multline}
Here we regard $F^i=F^i(x,t,u^j_0,u^j_1,\dots,u^j_{\eo})$ as a power series, 
using the Taylor series of the function~$F^i$ at the point~\er{axtaik}.

\begin{theorem}
\lb{lemgenfdq}
The elements 
\beq
\lb{gal1alprop}
\ga^{l_1,0}_\al,\qquad\qquad l_1\in\zp,\qquad\al\in\mat_\oc,
\ee
generate the algebra $\fds^\oc(\ce,a)$.
\end{theorem}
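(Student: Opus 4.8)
The plan is to let $\mathcal{S}\subseteq\fds^\oc(\CE,a)$ be the Lie subalgebra generated by the elements \er{gal1alprop} and to prove that $\mathcal{S}$ contains every generator $\ga^{l_1,l_2}_\al$ and $\gb^{l_1,l_2}_\beta$, whence $\mathcal{S}=\fds^\oc(\CE,a)$. The whole argument would be an induction on the $t$-degree $l_2$ (the power of $(t-t_a)$), extracting Lie-algebraic relations from the defining identity \er{xgbtga}, written in the expanded form \er{zcrdet}, by comparing coefficients of monomials in $x$, $t$, and the $u^i_k$. The guiding observation is that, read off at a fixed power of $(t-t_a)$, \er{zcrdet} never requires a coefficient of strictly higher $t$-degree, so the $t$-degree $0$ elements can be used to build up all the others.

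The first step is to eliminate the $\gb$-generators in favour of the $\ga$-generators while staying within a fixed $t$-degree. Collecting in \er{zcrdet} the coefficient of the top jet variable $u^i_{\oc+\eo}$, which occurs linearly and only through the term $u^i_{\oc+\eo}\,\pd\gb/\pd u^i_{\oc+\eo-1}$ of $D_x(\gb)$ and through $D_x^\oc(F^i)$ inside $D_t(\ga)$, yields $\dfrac{\pd\gb}{\pd u^i_{\oc+\eo-1}}=\sum_j\dfrac{\pd F^j}{\pd u^i_\eo}\dfrac{\pd\ga}{\pd u^j_\oc}$, expressing the dependence of $\gb$ on the highest jet variable purely through $\ga$. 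Collecting the coefficients of the remaining jet variables in decreasing order gives, by a descending recursion, each derivative $\pd\gb/\pd u^i_k$ as a Lie polynomial in the $\ga$-coefficients and in the already-expressed higher-order data of $\gb$; here it is essential that $\ga\big|_{u^i_k=a^i_k}=0$ by \er{gaua0}, so that $\ga^{l_1,l_2}_0=0$ and hence every bracket $[\ga,\gb]$ contributes only $\gb$-coefficients indexed by strictly fewer boxes, keeping the recursion well-founded. The purely $u$-independent part of $\gb$ is then recovered by integrating the $u$-constant component of \er{zcrdet} in $x$: the would-be term $\pd\ga/\pd t$ again drops out because $\ga\big|_{u^i_k=a^i_k}=0$, and the constant of integration is fixed by $\gb^{0,l_2}_0=0$ from \er{gagb00}. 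Since $F^i$ enters only through Taylor coefficients of nonnegative $t$-degree, this whole procedure expresses every $\gb^{l_1,l_2}_\beta$ as a Lie polynomial in the coefficients $\ga^{l_1',l_2'}_{\al'}$ with $l_2'\le l_2$.

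The second step raises the $t$-degree of the $\ga$-generators. Solving \er{xgbtga} for $\pd\ga/\pd t$ with the help of \er{evdxdt} gives $\dfrac{\pd\ga}{\pd t}=D_x(\gb)+[\ga,\gb]-\sum_{i,k}D_x^k(F^i)\dfrac{\pd\ga}{\pd u^i_k}$. Comparing the coefficient of $(x-x_a)^{l_1}(t-t_a)^{l_2}\ua^\al$, the left-hand side produces $(l_2+1)\,\ga^{l_1,l_2+1}_\al$, while every term on the right is a Lie polynomial in coefficients of $\ga$ and $\gb$ of $t$-degree $\le l_2$ (the operators $D_x$, $\pd/\pd u^i_k$, multiplication by jet variables, and the bracket do not raise the $t$-degree, and each scalar factor $D_x^k(F^i)$ has nonnegative $t$-degree). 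Thus $\ga^{l_1,l_2+1}_\al$ lies in the subalgebra generated by the $\ga$'s and $\gb$'s of $t$-degree $\le l_2$. Combining the two steps, the induction on $l_2$ closes: at $t$-degree $0$ the $\ga^{l_1,0}_\al$ generate $\mathcal{S}$ by definition, the first step then places all $\gb^{l_1,0}_\beta$ in $\mathcal{S}$, the second step supplies the $\ga^{l_1,1}_\al$, and so on, so that all generators belong to $\mathcal{S}$.

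The step I expect to be the main obstacle is the elimination of the $\gb$-generators. One must verify that reading \er{zcrdet} off in decreasing jet order, together with the reduction in box count forced by $\ga^{l_1,l_2}_0=0$, really produces a single well-founded (triangular) recursion for the multi-indices $\beta$, valid for arbitrary $\nv$ and $\eo$, and that it never leaks into $t$-degree $l_2+1$. The only channel through which a higher $t$-degree could intrude is the term $\pd\ga/\pd t$ inside $D_t(\ga)$; it is precisely the normalization $\ga\big|_{u^i_k=a^i_k}=0$ of \er{gaua0}, together with $\gb^{0,l_2}_0=0$ from \er{gagb00} for the integration constant, that removes this term in the $u$-constant part and confines the whole recursion to $t$-degree $\le l_2$. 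Once this interplay is checked, the $\ga$-recursion of the previous paragraph is routine.
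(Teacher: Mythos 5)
Your overall architecture coincides with the paper's: filter by the $t$-degree $l_2$, show that every $\gb^{l_1,l_2}_\beta$ lies in the subalgebra $\agn_{l_2}$ generated by the $\ga^{l_1',l_2'}_{\al'}$ with $l_2'\le l_2$ (the paper's Lemmas~\ref{gbllalgl} and~\ref{gb00mgl}), then read off $(l_2+1)\ga^{l_1,l_2+1}_\al$ from the $\pd\ga/\pd t$ term of \er{zcrdet} (the paper's Lemma~\ref{gallmg}), and close the induction on $l_2$. Your second step is exactly Lemma~\ref{gallmg} and is correct, and your treatment of the $u$-constant part of $\gb$ (integrate in $x$, fix the constant by $\gb^{0,l_2}_0=0$) is exactly Lemma~\ref{gb00mgl}.

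The gap is in the elimination of the remaining $\gb$-generators --- the step you yourself flagged. Your claim that $\pd\ga/\pd t$ intrudes only in the $u$-constant component is false once $\oc\ge 1$: since $\ga$ depends on all $u^j_k$ with $k\le\oc$, the series $\pd\ga/\pd t$ has nonzero coefficients at every monomial $\ua^\gamma$ with $\gamma\in\mat_\oc$, and these are precisely the monomials needed to determine the $\gb^{l_1,l_2}_\beta$ whose index $\beta$ is supported in columns $0,\dots,\oc-1$. Concretely, your recursion ``differentiate \er{zcrdet} with respect to $u^i_{k+1}$ and solve for $\pd\gb/\pd u^i_k$'' works only down to $k=\oc$; for $k<\oc$ the differentiated equation contains $\pd_t\bigl(\pd\ga/\pd u^i_{k+1}\bigr)$, whose coefficients are elements $\ga^{l_1,l_2+1}_{\al}$ of $t$-degree $l_2+1$, and neither $\ga^{l_1,l_2}_0=0$ nor $\gb^{0,l_2}_0=0$ kills them. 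What rescues the statement is the part of the normalization you never invoke, namely \er{pdgau}, equivalently the relations $\ga^{l_1,l_2}_{\hat{\al}}=0$ for $\hat{\al}\in M^\oc_{i_0,k_0}$ in \er{gagb00}; this is why the paper cites \er{pdgau} in its proof of Lemma~\ref{gbllalgl}. The recursion must be run monomial by monomial rather than on whole derivatives: to determine $\gb^{l_1,l_2}_\beta$ one extracts the coefficient of $(x-x_a)^{l_1}(t-t_a)^{l_2}\ua^{\gamma}$, where $\gamma$ is obtained from $\beta$ by decreasing by $1$ the entry at the $\prec$-largest position $(i_0,k_0)$ of the support of $\beta$ and increasing by $1$ the entry at $(i_0,k_0+1)$. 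If $k_0+1>\oc$ then $\pd\ga/\pd t$ does not contribute to this monomial at all; if $k_0+1\le\oc$ one checks that $\gamma\in M^\oc_{i_0,k_0+1}$, so the dangerous term $-(l_2+1)\ga^{l_1,l_2+1}_{\gamma}$ vanishes by \er{ab000}, while the other $\gb$-coefficients occurring in that equation precede $\beta$ in a suitable well-ordering. Without this use of \er{pdgau} your two steps feed into each other circularly ($\gb$'s of $t$-degree $l_2$ would require $\ga$'s of $t$-degree $l_2+1$ and vice versa) and the induction does not close.
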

\begin{proof}
For each $l\in\zp$, 
denote by $\agn_l\subset \fds^{\oc}(\CE,a)$ the subalgebra generated by 
all the elements $\ga^{l_1,l_2}_\al$ with $l_2\le l$.
To prove Theorem~\ref{lemgenfdq}, we need several lemmas. 

\begin{lemma}
\label{gbllalgl}
Let $l_1,l_2\in\zp$.
Let $\beta\in\mat_{\oc+\eo-1}$ be such that not all entries of the matrix
$\beta$ are zero 
\textup{(}i.e., in the matrix $\beta$ there is a nonzero entry\textup{)}.
Then $\gb^{l_1,l_2}_\beta\in\agn_{l_2}$.
\end{lemma}
\begin{proof}
One can prove this lemma, analyzing equation~\er{zcrdet} and 
properties~\er{gagb00},~\er{pdgau}.
\end{proof}

\begin{lemma}
\label{gb00mgl}
For all $l_1,l_2\in\zp$, one has $\gb^{l_1,l_2}_{0}\in\agn_{l_2}$.
Here $0\in\mat_{\oc+\eo-1}$ is the matrix with zero entries.
\end{lemma}
\begin{proof}
According to~\er{gagb00}, we have $\gb^{0,l_2}_0=0$. 
Therefore, it is sufficient 
to prove $\gb^{l_1,l_2}_0\in\agn_{l_2}$ for $l_1>0$.


Note that property~\er{pdgau} implies 
\beq
\lb{gatga0}
\ga\,\Big|_{u^i_k=a^i_k\ \forall\,(i,k)}=0,\qquad\qquad
\frac{\pd}{\pd t}(\ga)\,\bigg|_{u^i_k=a^i_k\ \forall\,(i,k)}=0.
\ee
In view of~\er{gbsumxt}, one has  
\beq
\lb{pdxgb0}
\frac{\pd}{\pd x}(\gb)\,\bigg|_{u^i_k=a^i_k\ \forall\,(i,k)}=
\sum_{l_1>0,\ l_2\ge 0}l_1(x-x_a)^{l_1-1} (t-t_a)^{l_2}\cdot\gb^{l_1,l_2}_0.
\ee
Substituting $u^i_k=a^i_k$ for all $i$, $k$ in~\er{zcrdet} 
and using~\er{gatga0},~\er{pdxgb0}, we get 
\begin{multline}
\lb{suml1gb0}
\sum_{l_1>0,\ l_2\ge 0} 
l_1(x-x_a)^{l_1-1} (t-t_a)^{l_2}\cdot\gb^{l_1,l_2}_0=\\
=-\bigg(\sum_{i,k}
u^i_{k+1}\frac{\pd}{\pd u^i_k}(\gb)\bigg)
\,\bigg|_{u^i_k=a^i_k\ \forall\,(i,k)}
+\bigg(\sum_{i,k}
D_x^k\big(F^i\big)\frac{\pd}{\pd u^i_k}(\ga)\bigg)\,
\bigg|_{u^i_k=a^i_k\ \forall\,(i,k)}.
\end{multline}

For a matrix $\beta\in\mat_{\oc+\eo-1}$, we denote by $|\beta|$ the sum 
of all entries of $\beta$.
Combining~\er{gasumxt},~\er{gbsumxt},~\er{suml1gb0}, 
we see that for any $l_1>0$ and $l_2\ge 0$ the element $\gb^{l_1,l_2}_0$
is equal to a linear combination of elements of the form 
\beq
\label{gagabj1}
\ga^{l'_1,l_2}_\al,\qquad\gb^{l'_1,l_2}_\beta,\qquad
l'_1\in\zp,\qquad 
\al\in\mat_\oc,\qquad\beta\in\mat_{\oc+\eo-1},\qquad
|\beta|=1.
\ee
According to Lemma~\ref{gbllalgl} and the definition of $\agn_{l_2}$, 
the elements~\er{gagabj1} belong to $\agn_{l_2}$. 
Thus $\gb^{l_1,l_2}_0\in\agn_{l_2}$.
\end{proof}

\begin{lemma}
\label{gallmg}
For all $l_1,l\in\zp$ and $\al\in\mat_\oc$, we have 
$\ga^{l_1,l+1}_\al\in\agn_l$.
\end{lemma}
\begin{proof}
Using~\er{gasumxt}, we can rewrite equation~\er{zcrdet} as
\begin{multline*}
\frac{\pd}{\pd t}(\ga)=
\sum_{\al\in\mat_\oc,\ l_1,l\in\zp}(l+1) 
(x-x_a)^{l_1}(t-t_a)^{l}\cdot\ua^\al\cdot\ga^{l_1,l+1}_\al=\\
=\frac{\pd}{\pd x}(\gb)+
\sum_{i,k}u^i_{k+1}\frac{\pd}{\pd u^i_k}(\gb)
-\sum_{i,k}D_x^k\big(F^i\big)\frac{\pd}{\pd u^i_k}(\ga)+[\ga,\gb].
\end{multline*}
This implies that $\ga^{l_1,l+1}_\al$ 
is equal to a linear combination of elements of the form 
\beq
\label{elemgall}
\ga^{\hat{l}_1,\hat{l}_2}_{\hat{\al}},\quad
\gb^{\tilde{l}_1,\tilde{l}_2}_{\tilde{\beta}},\quad
\Big[\ga^{\hat{l}_1,\hat{l}_2}_{\hat{\al}},
\gb^{\tilde{l}_1,\tilde{l}_2}_{\tilde{\beta}}\Big],\quad
\hat{l}_2\le l,\quad \tilde{l}_2\le l,\quad 
\hat{l}_1,\tilde{l}_1\in\zp,\quad
\hat{\al}\in\mat_\oc,\quad \tilde{\beta}\in\mat_{\oc+\eo-1}.
\ee
Using Lemmas~\ref{gbllalgl},~\ref{gb00mgl} 
and the condition~$\tilde{l}_2\le l$, we get 
$\gb^{\tilde{l}_1,\tilde{l}_2}_{\tilde{\beta}}\in\agn_{\tilde{l}_2}\subset\agn_{l}$. 
Therefore, the elements~\er{elemgall} belong to $\agn_{l}$. 
Hence $\ga^{l_1,l+1}_\al\in\agn_{l}$.
\end{proof}

Now we return to the proof of Theorem~\ref{lemgenfdq}.
According to Lemmas~\ref{gbllalgl},~\ref{gb00mgl} and the definition of~$\agn_{l}$, 
we have 
$\ga^{l_1,l_2}_\al,\gb^{l_1,l_2}_\beta\in\agn_{l_2}$ 
for all $l_1,l_2\in\zp$, $\al\in\mat_\oc$, $\beta\in\mat_{\oc+\eo-1}$. 
Lemma~\ref{gallmg} implies that 
\beq
\notag
\agn_{l_2}\subset\agn_{l_2-1}\subset\agn_{l_2-2}\subset\dots\subset\agn_0.
\ee
Therefore, $\fds^{\oc}(\CE,a)$ is equal to $\agn_0$, which is generated by the elements~\er{gal1alprop}.
\end{proof}

\subsection{Some constructions with zero-curvature representations}
\lb{sbdszcr}

We continue to work with an evolution PDE~\er{sys_intr}, 
which can be written also as~\er{uitfi}, according to our notation.

Let $\bl^1$ and $\bl^2$ be Lie algebras.
For $i=1,2$, let 
\beq
\lb{aibi}
A^i=A^i(x,t,u^j_0,u^j_1,\dots,u^j_\oc),\quad 
B^i=B^i(x,t,u^j_0,u^j_1,\dots,u^j_{\oc+\eo-1}),\quad
D_x(B)-D_t(A)+[A,B]=0
\ee
be an $\bl^i$-valued ZCR for the PDE~\er{sys_intr}. 
So the functions $A^i$, $B^i$ take values in $\bl^i$.
The following notions will be needed in the next sections.

The \emph{direct sum} of the $\bl^1$-valued ZCR $A^1,\,B^1$ 
and the $\bl^2$-valued ZCR $A^2,\,B^2$ is the $(\bl^1\oplus\bl^2)$-valued ZCR 
given by the functions $A^1\oplus A^2$, $B^1\oplus B^2$. 
So the ZCR given by the functions $A^1\oplus A^2$, $B^1\oplus B^2$ 
takes values in the Lie algebra $\bl^1\oplus\bl^2$. 

We say that the ZCR $A^2,\,B^2$ is a \emph{reduction} of the ZCR $A^1,\,B^1$
if there is a homomorphism $\rho\cl\bl^1\to\bl^2$ such that
$A^2=\rho(A^1)$ and $B^2=\rho(B^1)$.

Similarly, one can speak also about direct sums and reductions of formal ZCRs.

\begin{remark}
\lb{remunxi}
For any (possibly infinite-dimensional) Lie algebra $\bl$, 
there is a (possibly infinite-dimensional) vector space $V$ such that 
$\bl$ is isomorphic to a Lie subalgebra of $\gl(V)$.
Here $\gl(V)$ is the algebra of linear maps $V\to V$.

For example, one can use the following construction. 
Denote by $\un(\bl)$ the universal enveloping algebra of $\bl$.
We have the injective homomorphism of Lie algebras 
\beq
\notag
\xi\cl \bl\hookrightarrow\gl(\un(\bl)),\quad\qquad \xi(v)(w)=vw,
\quad\qquad v\in \bl,\quad\qquad w\in\un(\bl).
\ee
So one can set $V=\un(\bl)$.

So we have $\bl\subset\gl(V)$.
Let $k\in\zp$. 
A \emph{formal gauge transformation of order~$\le k$}
is a formal power series of the form 
\beq
\lb{ggamma}
G=\sum_{\gamma\in \mat_k,\ l_1,l_2\in\zp}(x-x_a)^{l_1}(t-t_a)^{l_2}\cdot\ua^\gamma
\cdot G^{l_1,l_2}_\gamma,\qquad\quad 
G^{l_1,l_2}_\gamma\in\gl(V),
\ee
such that the map $G^{0,0}_0\colon V\to V$ is invertible.
(So the free term $G^{0,0}_0$ of the power series~\er{ggamma} is invertible.)
Then $G^{-1}$ is well defined and is a power series with coefficients 
in $\gl(V)$ as well. 

Recall that a formal ZCR \er{anasum}, \er{anbsum} with coefficients in $\bl$
is given by formal power series $\anA$, $\anB$ with coefficients in $\bl$ 
satisfying $D_x(\anB)-D_t(\anA)+[\anA,\anB]=0$.
Then 
\beq
\lb{anaanb}
\tilde{\anA}=G\anA G^{-1}-D_x(G)\cdot G^{-1},\qquad\qquad
\tilde{\anB}=G\anB G^{-1}-D_t(G)\cdot G^{-1}
\ee
are formal power series with coefficients in $\gl(V)$ and 
satisfy $D_x(\tilde{\anB})-D_t(\tilde{\anA})+[\tilde{\anA},\tilde{\anB}]=0$.
Therefore, \er{anaanb} is a formal ZCR with coefficients in $\gl(V)$.

The formal ZCR~\er{anaanb} is said to be gauge equivalent 
to the formal ZCR~\er{anasum},~\er{anbsum} with respect to the formal 
gauge transformation~\er{ggamma}.

Quite often, it happens that the coefficients of the power series~\er{anaanb} 
belong to $\bl\subset\gl(V)$. Then \er{anaanb} can be regarded 
as a formal ZCR with coefficients in $\bl$.

This allows us to speak about gauge equivalence for formal ZCRs 
with coefficients in infinite-dimensional Lie algebras.
\end{remark}

\section{Relations between $\fd^0(\CE,a)$ and 
the Wahlquist-Estabrook prolongation algebra}
\lb{fd0we}

Let $\nv$, $\eo$ be positive integers.
Consider an $\nv$-component evolution PDE of the form
\begin{gather}
\label{gevxt}
u^i_t=F^i(u^j_0,u^j_1,\dots,u^j_{\eo}),\\
\notag
u^i=u^i(x,t),\qquad u^i_0=u^i,\qquad
u^i_k=\frac{\pd^k u^i}{\pd x^k},\qquad  
i,j=1,\dots,\nv,\qquad k\in\zp. 
\end{gather}
Note that the functions $F^i$ in~\er{gevxt} do not depend on $x$, $t$. 

Let $\CE$ be the infinite prolongation of the PDE~\er{gevxt}.
According Section~\ref{subsip}, $\CE$ is an infinite-dimensional manifold 
with the coordinates $x$, $t$, $u^i_k$ for $i=1,\dots,\nv$ and $k\in\zp$.

Consider a point $a\in\CE$ given by~\er{axtaik}.
The constants $x_a,t_a,a^i_k\in\fik$ from~\er{axtaik} are the coordinates 
of the point $a\in\CE$ in the coordinate system $x$, $t$, $u^i_k$.

As has been said in Section~\ref{deffdoc},
for each $\ml\in\zp$, we denote by $\mat_{\ml}$ the set of matrices of 
size~$\nv\times (\ml+1)$ with nonnegative integer entries. 
For a matrix $\gamma\in\mat_\ml$, 
its entries are denoted by $\gamma_{i,k}\in\zp$, 
where $i=1,\dots,\nv$ and $k=0,\dots,\ml$. 

According to formula~\er{ugamma}, for $\al\in\mat_0$ and $\beta\in\mat_{\eo-1}$ we have
$$
\ua^\al=\prod_{i=1,\dots,\nv}\big(u^i_0-a^i_0\big)^{\al_{i,0}},\qquad\quad
\ua^\beta=\prod_{\substack{i=1,\dots,\nv,\\ k=0,\dots,\eo-1}}
\big(u^i_k-a^i_k\big)^{\beta_{i,k}}.
$$
The \emph{Wahlquist-Estabrook prolongation algebra} of the PDE~\er{gevxt} 
at the point $a\in\CE$ can be defined as follows. Consider formal power series 
\beq
\label{wgawgbser}
\wga=\sum_{\al\in\mat_0}\ua^\al\cdot\wga_\al,\qquad\qquad
\wgb=\sum_{\beta\in\mat_{\eo-1}}\ua^\beta\cdot\wgb_\beta,
\ee 
where $\wga_\al$, $\wgb_\beta$ are generators of some Lie algebra, which is described below.
The equation 
\beq
\lb{wxgbtga}
D_x(\wgb)-D_t(\wga)+[\wga,\wgb]=0
\ee
is equivalent to some Lie algebraic relations for $\wga_\al$, $\wgb_\beta$. 
The Wahlquist-Estabrook prolongation algebra (WE algebra for short) 
at the point $a\in\CE$ is the Lie algebra given by the generators 
$\wga_\al$, $\wgb_\beta$ and the relations arising from~\er{wxgbtga}. 
A more detailed definition of the WE algebra is presented in~\cite{mll-2012}.
We denote this Lie algebra by $\wea$.

Then \er{wgawgbser}, \er{wxgbtga} is called 
the \emph{formal Wahlquist-Estabrook ZCR with coefficients in~$\wea$}.

For each $\oc\in\zp$, 
the Lie algebra $\fd^\oc(\ce,a)$ has been defined in Section~\ref{deffdoc}.
In the present section we study $\fd^0(\ce,a)$. 
We are going to show that the algebra $\fd^0(\ce,a)$ for the PDE~\er{gevxt} 
is isomorphic to some Lie subalgebra of $\wea$. 

Since the PDE~\er{gevxt} is invariant 
with respect to the change of variables $x\mapsto x-x_a$, $t\mapsto t-t_a$, 
we can assume $x_a=t_a=0$ in~\er{axtaik}. 
Then \er{gasumxt}, \er{gbsumxt}, \er{xgbtga} in the case $\oc=0$ can be written as
\begin{gather}
\label{gagab0}
\ga=\sum_{\al\in \mat_0,\ l_1,l_2\in\zp}x^{l_1}t^{l_2}\cdot\ua^\al
\cdot\ga^{l_1,l_2}_\al,\qquad
\gb=\sum_{\beta\in \mat_{\eo-1},\ l_1,l_2\in\zp}
x^{l_1}t^{l_2}\cdot \ua^\beta\cdot\gb^{l_1,l_2}_\beta,\\
\lb{gagab0zcr}
D_x(\gb)-D_t(\ga)+[\ga,\gb]=0,
\end{gather}
where $\ga^{l_1,l_2}_\al,\gb^{l_1,l_2}_\beta\in\fd^0(\CE,a)$.
Note that equation~\er{gagab0zcr} is equivalent to some Lie algebraic relations 
for $\ga^{l_1,l_2}_\al$, $\gb^{l_1,l_2}_\beta$.

According to Remark~\ref{rem_fdpgen},
the Lie algebra $\fd^0(\CE,a)$ can be described in terms 
of generators and relations as follows. 
The algebra $\fd^0(\CE,a)$ is given by the generators 
$\ga^{l_1,l_2}_\al$, $\gb^{l_1,l_2}_\beta$,  
the relations arising from~\er{gagab0zcr}, and the relations 
$\ga^{l_1,l_2}_0=\gb^{0,l_2}_0=0$ for $l_1,l_2\in\zp$.

\begin{lemma}
\lb{lemgenfd0}
The elements 
\beq
\lb{gal1al}
\ga^{l_1,0}_\al,\qquad\qquad l_1\in\zp,\qquad\al\in\mat_0,\qquad\al\neq 0,
\ee
generate the algebra $\fd^0(\ce,a)$.
\end{lemma}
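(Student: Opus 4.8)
The plan is to obtain this as an almost immediate corollary of Theorem~\ref{lemgenfdq}, which has already been proved for arbitrary $\oc$ and for the general evolution PDE~\er{sys_intr}. First I would observe that the PDE~\er{gevxt} is simply the special case of~\er{sys_intr} in which the functions $F^i$ do not depend on $x$, $t$, so Theorem~\ref{lemgenfdq} applies verbatim. Specializing that theorem to $\oc=0$ yields at once that the elements $\ga^{l_1,0}_\al$ with $l_1\in\zp$ and $\al\in\mat_0$ generate the algebra $\fd^0(\ce,a)$. This already produces a generating set of the desired type; the only gap between it and the statement of the lemma is the constraint $\al\neq 0$.

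To close that gap I would invoke the defining relations of $\fd^0(\ce,a)$. In the case $\oc=0$ the relations~\er{gagb00} reduce to $\ga^{l_1,l_2}_0=\gb^{0,l_2}_0=0$ for all $l_1,l_2\in\zp$ (the sets $M^0_{i_0,k_0}$ being empty by convention). In particular $\ga^{l_1,0}_0=0$ for every $l_1$, so the generators indexed by $\al=0$ are trivial and may be discarded without affecting the generated subalgebra. Hence the elements $\ga^{l_1,0}_\al$ with $l_1\in\zp$, $\al\in\mat_0$, and $\al\neq 0$ already generate $\fd^0(\ce,a)$, which is exactly the claim.

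Since the substantive work is carried out entirely inside the proof of Theorem~\ref{lemgenfdq}, there is no genuine obstacle here. If instead one wished to give a self-contained argument, I would replay the three auxiliary lemmas of that proof in the present setting: Lemma~\ref{gbllalgl} places every $\gb^{l_1,l_2}_\beta$ with $\beta\neq 0$ in $\agn_{l_2}$, Lemma~\ref{gb00mgl} then handles the remaining $\gb^{l_1,l_2}_0$, and Lemma~\ref{gallmg} shows that increasing the $t$-degree does not enlarge the subalgebra, collapsing the chain $\agn_{l_2}\subset\dots\subset\agn_0$. The specialization to $\oc=0$ is transparent: because $\ga$ then depends only on $u^j_0$, the normalization condition~\er{pdgau} holds automatically, and the analysis of~\er{gagab0zcr} needed in those lemmas goes through unchanged. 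Either route gives the result, and I would present the first (the short corollary of Theorem~\ref{lemgenfdq}) as the proof.
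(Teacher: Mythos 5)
Your proposal is correct and coincides with the paper's own proof: the paper likewise specializes Theorem~\ref{lemgenfdq} to $\oc=0$ and then discards the $\al=0$ generators using the relation $\ga^{l_1,0}_0=0$. The extra discussion of replaying the auxiliary lemmas is unnecessary but harmless.
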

\begin{proof}
According to Theorem~\ref{lemgenfdq},
the elements $\ga^{l_1,0}_\al$ for $l_1\in\zp$, $\al\in\mat_0$ 
generate the algebra $\fd^0(\ce,a)$. Since $\ga^{l_1,0}_0=0$, 
we can assume $\al\neq 0$.
\end{proof}

\begin{lemma}
\label{lemfd0zcr}
Let $\bl$ be a Lie algebra.
Consider formal power series of the form  
\beq
\lb{pqsum}
P=\sum_{\al\in \mat_0,\ l_1,l_2\in\zp}x^{l_1}t^{l_2}\cdot\ua^\al
\cdot P^{l_1,l_2}_\al,\qquad\quad
Q=\sum_{\beta\in \mat_{\eo-1},\ l_1,l_2\in\zp}
x^{l_1}t^{l_2}\cdot\ua^\beta\cdot Q^{l_1,l_2}_\beta
\ee
satisfying
\begin{gather}
\lb{pqg0}
P^{l_1,l_2}_\al,\,Q^{l_1,l_2}_\beta\in\bl,\qquad\quad 
P^{l_1,l_2}_0=Q^{0,l_2}_0=0\qquad\quad\forall\,l_1,l_2\in\zp,\\
\lb{zcrqp}
D_x(Q)-D_t(P)+[P,Q]=0.
\end{gather}
Then the map $\ga^{l_1,l_2}_\al\mapsto P^{l_1,l_2}_\al,\,\ 
\gb^{l_1,l_2}_\beta\mapsto Q^{l_1,l_2}_\beta$ 
determines a homomorphism from $\fd^0(\ce,a)$ to $\bl$. 
\end{lemma}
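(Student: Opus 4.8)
The plan is to recognize that the hypotheses on $P$ and $Q$ say precisely that these two power series constitute an $a$-normal formal ZCR of order~$\le 0$ with coefficients in $\bl$, and then to apply Theorem~\ref{thhfzcr} in the case $\oc=0$. Thus essentially all of the work reduces to checking that the three $a$-normality conditions \er{agd=0}, \er{agaukak}, \er{agbxx0} collapse, when $\oc=0$, to exactly the conditions~\er{pqg0} that are assumed.

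First I would observe that for a formal ZCR of order~$\le 0$ the series $P$ involves only $\al\in\mat_0$, so it depends only on $x$, $t$, $u^j_0$; hence $\pd P/\pd u^{i_0}_{k_0}=0$ for every $k_0\ge 1$, and condition~\er{agd=0} holds automatically. (This matches the fact that $M^0_{i_0,k_0}=\varnothing$, so the corresponding relations are absent from $\fd^0(\CE,a)$.) Next I would note that substituting $u^i_k=a^i_k$ for all $(i,k)$ annihilates every factor $\ua^\al$ with $\al\neq 0$, leaving $P|_{u^i_k=a^i_k}=\sum_{l_1,l_2}x^{l_1}t^{l_2}P^{l_1,l_2}_0$; therefore \er{agaukak} is equivalent to $P^{l_1,l_2}_0=0$ for all $l_1,l_2$. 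Likewise, substituting $x=x_a=0$ together with $u^i_k=a^i_k$ retains only the terms with $l_1=0$ and $\beta=0$, so $Q|_{x=x_a,\,u^i_k=a^i_k}=\sum_{l_2}t^{l_2}Q^{0,l_2}_0$, whence \er{agbxx0} is equivalent to $Q^{0,l_2}_0=0$ for all $l_2$. These are precisely the hypotheses collected in~\er{pqg0}.

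Combining this with~\er{zcrqp}, which is the ZCR equation $D_x(Q)-D_t(P)+[P,Q]=0$, I conclude that $P$, $Q$ form an $a$-normal formal ZCR of order~$\le 0$ with coefficients in $\bl$. Theorem~\ref{thhfzcr} then supplies the homomorphism $\fd^0(\CE,a)\to\bl$ determined by $\ga^{l_1,l_2}_\al\mapsto P^{l_1,l_2}_\al$ and $\gb^{l_1,l_2}_\beta\mapsto Q^{l_1,l_2}_\beta$, which is the assertion of the lemma.

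I do not anticipate a genuine obstacle, since the substantive content is already carried by Theorem~\ref{thhfzcr} and the proof is in effect a specialization. The only points meriting a line of care are the bookkeeping above---in particular the use of the normalization $x_a=t_a=0$ adopted throughout this section, which lets one write the series with $x^{l_1}t^{l_2}$ in place of $(x-x_a)^{l_1}(t-t_a)^{l_2}$ and makes the reduction of~\er{agbxx0} to $Q^{0,l_2}_0=0$ transparent---together with the verification that the order-$0$ derivative condition~\er{agd=0} is vacuous.
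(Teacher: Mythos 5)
Your proposal is correct and follows exactly the route of the paper's own proof: the paper likewise observes (via Remark~\ref{rfzcr}) that conditions~\er{pqsum},~\er{pqg0},~\er{zcrqp} say precisely that $P$, $Q$ form an $a$-normal formal ZCR of order~$\le 0$ with coefficients in $\bl$, and then invokes Theorem~\ref{thhfzcr}. Your additional bookkeeping --- checking that~\er{agd=0} is vacuous for $\oc=0$ and that~\er{agaukak},~\er{agbxx0} reduce to~\er{pqg0} --- is a correct expansion of what the paper leaves implicit.
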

\begin{proof}
According to Remark~\ref{rfzcr},
formulas~\er{pqsum},~\er{pqg0},~\er{zcrqp} say that the power series $P$, $Q$ 
constitute an $a$-normal formal ZCR of order~$\le 0$ with coefficients in $\bl$.

According to Theorem~\ref{thhfzcr}, this formal ZCR determines 
a homomorphism $\fds^0(\CE,a)\to\bl$ given by the formulas 
$\ga^{l_1,l_2}_\al\mapsto P^{l_1,l_2}_\al$ and  
$\gb^{l_1,l_2}_\beta\mapsto Q^{l_1,l_2}_\beta$.
\end{proof}

Let $\bl$ be a Lie algebra. 
A \emph{ZCR of Wahlquist-Estabrook type with coefficients in $\bl$} 
is given by formal power series  
\begin{equation}
\lb{pqmg}
P=\sum_{\al\in\mat_0}\ua^\al\cdot P_\al,\qquad\qquad
Q=\sum_{\beta\in\mat_{\eo-1}}\ua^\beta\cdot Q_\beta,\qquad\qquad 
P_\al,\,Q_\beta\in\bl,
\end{equation}
satisfying 
\beq
\lb{pqzcr}
D_x(Q)-D_t(P)+[P,Q]=0.
\ee

The next lemma follows from the definition of the WE algebra~$\wea$.
\begin{lemma}
\label{propwezcr}
Recall that the WE algebra $\wea$ is generated by 
$\wga_\al$, $\wgb_\beta$ for $\al\in\mat_0$, $\beta\in\mat_{\eo-1}$ 
such that one has formulas~\er{wgawgbser}, \er{wxgbtga}.

For any Lie algebra $\bl$, 
any ZCR of Wahlquist-Estabrook type~\er{pqmg},~\er{pqzcr}  
with coefficients in~$\bl$ 
determines a homomorphism $\wea\to\bl$ given by the formulas 
$\wga_\al\mapsto P_\al,\ \wgb_\beta\mapsto Q_\beta$. 
\end{lemma}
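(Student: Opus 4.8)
The plan is to use the universal property of the free Lie algebra together with the fact that, by definition, $\wea$ is the quotient of the free Lie algebra on the symbols $\wga_\al$, $\wgb_\beta$ (with $\al\in\mat_0$, $\beta\in\mat_{\eo-1}$) by the ideal $\mathfrak{I}$ generated by the coefficients of the formal power series $D_x(\wgb)-D_t(\wga)+[\wga,\wgb]$, where $\wga$, $\wgb$ are the series~\er{wgawgbser}. Equating these coefficients to zero is precisely the system of defining relations of $\wea$, so a homomorphism out of $\wea$ is the same as a homomorphism out of the free algebra that kills all of them.

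First I would invoke the universal property of the free Lie algebra to obtain a unique Lie algebra homomorphism $\psi$ from it to $\bl$ with $\psi(\wga_\al)=P_\al$ and $\psi(\wgb_\beta)=Q_\beta$ for all $\al$, $\beta$. Extending $\psi$ coefficient-wise to formal power series, one gets $\psi(\wga)=P$ and $\psi(\wgb)=Q$, since $\psi$ affects only the Lie-algebra-valued coefficients and leaves the scalar monomials $\ua^\al$ untouched.

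The key step is to verify that $\psi$ carries the expression in~\er{wxgbtga} to the one in~\er{pqzcr}. This holds because the total derivative operators $D_x$, $D_t$ differentiate only the coefficient functions $\ua^\gamma$ (recall that the elements $P_\al$, $Q_\beta\in\bl$ carry no dependence on $x$, $t$, $u^i_k$), and because $\psi$, being a Lie homomorphism, commutes with the formal bracket. Hence
\[
\psi\big(D_x(\wgb)-D_t(\wga)+[\wga,\wgb]\big)=D_x(Q)-D_t(P)+[P,Q],
\]
and by~\er{pqzcr} the right-hand side vanishes. Thus $\psi$ sends every coefficient of $D_x(\wgb)-D_t(\wga)+[\wga,\wgb]$, i.e.\ every generator of $\mathfrak{I}$, to $0$, so $\mathfrak{I}\subseteq\ker\psi$ and $\psi$ descends to the desired homomorphism $\wea\to\bl$ with $\wga_\al\mapsto P_\al$, $\wgb_\beta\mapsto Q_\beta$.

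The only delicate point — and hence the main, albeit routine, obstacle — is justifying the interchange of $\psi$ with the formal operators $D_x$, $D_t$ and with the bracket in the displayed identity; everything else is immediate from the definition of $\wea$ and the universal property. This parallels the reasoning already used for $\fd^0(\CE,a)$ in Lemma~\ref{lemfd0zcr} and Theorem~\ref{thhfzcr}.
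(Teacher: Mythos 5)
Your proposal is correct and is essentially the paper's argument: the paper dismisses this lemma with the single remark that it ``follows from the definition of the WE algebra,'' and what you have written is exactly the routine expansion of that remark (universal property of the free Lie algebra, verification that $\psi$ commutes with $D_x$, $D_t$ and the bracket so that the defining relations are killed, descent to the quotient). No gap.
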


Denote by $\zf$ the vector space of formal power series 
in the variables $z_1,\,z_2$ with coefficients in $\fd^0(\ce,a)$. That is, 
an element of $\zf$ is a power series of the form 
$$
\sum_{l_1,l_2\in\zp}z_1^{l_1}z_2^{l_2}C^{l_1l_2},\qquad\qquad C^{l_1l_2}\in\fd^0(\ce,a).
$$
The space $\zf$ has the Lie algebra structure given by 
$$
\bigg[\sum_{l_1,l_2}z_1^{l_1}z_2^{l_2}C^{l_1l_2},\,
\sum_{\tilde{l}_1,\tilde{l}_2}
z_1^{\tilde{l}_1}z_2^{\tilde{l}_2}
\tilde{C}^{\tilde{l}_1\tilde{l}_2}\bigg]=
\sum z_1^{l_1+\tilde{l}_1}z_2^{l_2+\tilde{l}_2}
\Big[C^{l_1l_2},\tilde{C}^{\tilde{l}_1\tilde{l}_2}\Big],\qquad\quad
C^{l_1l_2},\tilde{C}^{\tilde{l}_1\tilde{l}_2}\in\fd^0(\ce,a).
$$
We have also the following homomorphism of Lie algebras 
\beq
\lb{zffd0}
\nu\cl\zf\to\fd^0(\ce,a),\qquad\qquad
\sum_{l_1,l_2\in\zp}z_1^{l_1}z_2^{l_2}C^{l_1l_2}\,\mapsto\,C^{00}. 
\ee

For $i=1,2$, let $\pd_{z_i}\cl \zf\to\zf$ be the linear map given by 
$$
\pd_{z_i}\Big(\sum z_1^{l_1}z_2^{l_2}C^{l_1l_2}\Big)=
\sum\frac{\pd}{\pd z_i}\Big(z_1^{l_1}z_2^{l_2}\Big) C^{l_1l_2}.
$$ 
Let $\zd$ be the linear span of $\pd_{z_1},\,\pd_{z_2}$ in the vector space of linear maps $\zf\to\zf$. 
Since the maps $\pd_{z_1},\,\pd_{z_2}$ commute, 
the space~$\zd$ is a $2$-dimensional abelian Lie algebra 
with respect to the commutator of maps. 

Denote by $\xtf$ the vector space $\zd\oplus\zf$ with the following Lie algebra structure 
$$
[X_1+f_1,\,X_2+f_2]=X_1(f_2)-X_2(f_1)+[f_1,f_2],\qquad X_1,X_2\in \zd,
\qquad f_1,f_2\in\zf.
$$   
An element of $\xtf$ can be written as a sum of the following form 
$$
\big(y_1\pd_{z_1}+y_2\pd_{z_2}\big)+
\sum z_1^{l_1}z_2^{l_2}C^{l_1l_2},\qquad\quad 
y_1,y_2\in\fik,\qquad C^{l_1l_2}\in\fd^0(\ce,a).
$$

\begin{theorem}
\lb{thmhfd0}
Recall that the WE algebra $\wea$ is generated by 
$\wga_\al$, $\wgb_\beta$ for $\al\in\mat_0$, $\beta\in\mat_{\eo-1}$ 
such that one has formulas~\er{wgawgbser}, \er{wxgbtga}.

Let $\swe\subset\wea$ be the subalgebra generated by the elements
\beq
\lb{elmh}
(\ad\wga_0)^{k}(\wga_\al),\qquad\qquad k\in\zp,
\qquad\al\in\mat_0,\qquad\al\neq 0.
\ee
\textup{(}Note that for $k=0$ we have $(\ad\wga_0)^{0}(\wga_\al)=\wga_\al$, hence 
$\wga_\al\in\swe$ for all $\al\neq 0$.\textup{)}

Then the map $(\ad\wga_0)^{k}(\wga_\al)\,\mapsto\,(k!)\ga^{k,0}_\al$
determines an isomorphism between $\swe$ and $\fd^0(\CE,a)$.
\end{theorem}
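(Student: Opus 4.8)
The plan is to exhibit two mutually inverse Lie algebra homomorphisms between $\swe$ and $\fd^0(\ce,a)$ and to check that they act on generators as the asserted correspondence $(\ad\wga_0)^{k}(\wga_\al)\leftrightarrow(k!)\ga^{k,0}_\al$. Since the PDE~\er{gevxt} is invariant under $x\mapsto x-x_a$, $t\mapsto t-t_a$, I assume $x_a=t_a=0$, so that \er{gagab0},~\er{gagab0zcr} are available. The creative half is the map $\phi\cl\swe\to\fd^0(\ce,a)$, built through the algebra $\xtf=\zd\oplus\zf$. I first produce a homomorphism $\wea\to\xtf$ by promoting the $\fd^0(\ce,a)$-valued series $\ga$, $\gb$ to $\zf$-valued coefficients: for $\al\in\mat_0$, $\beta\in\mat_{\eo-1}$ set $P_\al=\sum_{l_1,l_2}z_1^{l_1}z_2^{l_2}\ga^{l_1,l_2}_\al\in\zf$ and $Q_\beta=\sum_{l_1,l_2}z_1^{l_1}z_2^{l_2}\gb^{l_1,l_2}_\beta\in\zf$, and define WE-type series $P$, $Q$ with coefficients in $\xtf$ whose $\al=0$ coefficient is $\pd_{z_1}$ (note $P_0=0$ since $\ga^{l_1,l_2}_0=0$), whose $\beta=0$ coefficient is $\pd_{z_2}+Q_0$, and all of whose remaining coefficients are the $P_\al$, $Q_\beta$ above.

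The key point is that the substitution $x^{l_1}t^{l_2}\mapsto z_1^{l_1}z_2^{l_2}$ is a bijection from $\fd^0(\ce,a)$-valued power series in $x,t,u$ to $\zf$-valued power series in $u$ that intertwines $\pd_x\leftrightarrow\pd_{z_1}$, $\pd_t\leftrightarrow\pd_{z_2}$, commutes with multiplication by the $u$-functions $u^i_{k+1}$ and $D_x^k(F^i)$ (here I use crucially that $F^i$ in~\er{gevxt} does not depend on $x,t$), and preserves the bracket. In $\xtf$ one has $[P,Q]=\pd_{z_1}(Q)-\pd_{z_2}(P)+[P,Q]_{\zf}$, so the derivations $\pd_{z_1}$, $\pd_{z_2}$ supply exactly the $\pd_x$, $\pd_t$ contributions to $D_x(\gb)$, $D_t(\ga)$ that were lost when the explicit $x,t$-dependence was moved into the coefficients. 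Hence~\er{gagab0zcr} is carried to the WE-type identity~\er{pqzcr} for $P$, $Q$, and Lemma~\ref{propwezcr} gives a homomorphism $\wea\to\xtf$ with $\wga_0\mapsto\pd_{z_1}$, $\wga_\al\mapsto P_\al$, $\wgb_\beta\mapsto Q_\beta$.

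Under this homomorphism $\ad\wga_0$ corresponds to the derivation $\pd_{z_1}$ of $\zf$, so $(\ad\wga_0)^{k}(\wga_\al)\mapsto\pd_{z_1}^{k}(P_\al)\in\zf$; since the generators~\er{elmh} of $\swe$ all land in the subalgebra $\zf\subset\xtf$, the whole of $\swe$ maps into $\zf$, and composing with $\nu\cl\zf\to\fd^0(\ce,a)$ from~\er{zffd0} yields a homomorphism $\phi\cl\swe\to\fd^0(\ce,a)$. The computation $\nu(\pd_{z_1}^{k}(P_\al))=k!\,\ga^{k,0}_\al$ shows $\phi\big((\ad\wga_0)^{k}(\wga_\al)\big)=(k!)\ga^{k,0}_\al$, i.e. $\phi$ realizes the claimed map. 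For the inverse I work formally in $\un(\wea)$ as in Remark~\ref{remunxi} and apply to the WE ZCR~\er{wgawgbser},~\er{wxgbtga} the formal gauge transformation $G=\exp(t\wgb_0)\exp(x\wga_0)$ with $G|_{x=t=0}=\Id$. Then $D_x(G)\cdot G^{-1}=G\wga_0 G^{-1}$ and $D_t(G)\cdot G^{-1}=\wgb_0$, so $\tilde A=\sum_{\al\neq0}\ua^\al\, G\wga_\al G^{-1}$ and $\tilde B=G\wgb G^{-1}-\wgb_0$; one checks $\tilde A|_{u=a}=0$ and $\tilde B|_{x=0,\,u=a}=0$, so $\tilde A,\tilde B$ is an $a$-normal formal ZCR of order $\le0$ with coefficients in $\wea$ and $\tilde A^{k,0}_\al=\tfrac1{k!}(\ad\wga_0)^{k}(\wga_\al)$. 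By Theorem~\ref{thhfzcr} this gives a homomorphism $\fd^0(\ce,a)\to\wea$, $\ga^{k,0}_\al\mapsto\tfrac1{k!}(\ad\wga_0)^{k}(\wga_\al)$; by Lemma~\ref{lemgenfd0} the $\ga^{k,0}_\al$ generate $\fd^0(\ce,a)$, so its image is precisely the subalgebra $\swe$ generated by~\er{elmh}, producing $\psi\cl\fd^0(\ce,a)\to\swe$.

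Finally, because $\ga^{k,0}_\al$ generate $\fd^0(\ce,a)$ and $(\ad\wga_0)^{k}(\wga_\al)$ generate $\swe$, it suffices to verify the two composites on these generators: $\phi(\psi(\ga^{k,0}_\al))=\phi\big(\tfrac1{k!}(\ad\wga_0)^{k}\wga_\al\big)=\ga^{k,0}_\al$ and $\psi(\phi((\ad\wga_0)^{k}\wga_\al))=\psi\big((k!)\ga^{k,0}_\al\big)=(\ad\wga_0)^{k}\wga_\al$, so $\phi$ and $\psi$ are mutually inverse isomorphisms and $\phi$ is the asserted one. I expect the \emph{main obstacle} to be the construction of $\phi$: recognizing that the two derivations $\pd_{z_1},\pd_{z_2}$ adjoined in $\xtf$ are exactly the substitutes for the WE ``constant terms'' $\wga_0,\wgb_0$ (which are forced to vanish in $\fd^0(\ce,a)$ by $\ga^{l_1,l_2}_0=\gb^{0,l_2}_0=0$), and verifying that~\er{gagab0zcr} really becomes the genuine WE-type identity~\er{pqzcr} for $P,Q$; this step relies essentially on the $x,t$-autonomy of~\er{gevxt}. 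A secondary technical point is that the gauge transformation defining $\psi$ must be performed formally in $\un(\wea)$ and needs the factor $\exp(t\wgb_0)$ in order to eliminate $\wgb_0$ and thereby restore $a$-normality of the $\tilde B$-part.
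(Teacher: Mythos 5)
Your proposal is correct and follows essentially the same route as the paper's own proof: the forward map is built from the same Wahlquist--Estabrook-type ZCR with coefficients in $\xtf$ (promoting the $x,t$-dependence to $z_1,z_2$ and adjoining $\pd_{z_1},\pd_{z_2}$ in place of the vanishing constant terms), and the inverse is obtained from the same formal gauge transformation $\mathrm{e}^{t\wgb_0}\mathrm{e}^{x\wga_0}$ applied in a universal enveloping/representation setting, followed by the generation argument via Lemma~\ref{lemgenfd0}. The only cosmetic difference is that you invoke Theorem~\ref{thhfzcr} directly where the paper routes through Lemma~\ref{lemfd0zcr}, which is itself a restatement of that theorem for order $\le 0$.
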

\begin{proof}
Since the functions $F^i$ in~\er{gevxt} do not depend on $x$ and $t$, 
from~\er{gagab0},~\er{gagab0zcr} it follows that the power series 
\begin{gather}
\label{zgaser}
\tilde{\wga}=\pd_{z_1}+\sum_{\al\in\mat_0,\,\ \al\neq 0} \ua^\al\cdot
\bigg(\sum_{l_1,l_2}z_1^{l_1}z_2^{l_2}\ga^{l_1,l_2}_\al\bigg),\\
\lb{zgbser}
\tilde{\wgb}=\bigg(\pd_{z_2}+\sum_{l_1,l_2}z_1^{l_1}z_2^{l_2}\gb^{l_1,l_2}_0\bigg)
+\sum_{\beta\in\mat_{\eo-1},\,\ 
\beta\neq 0} \ua^\beta\cdot\bigg(\sum_{l_1,l_2}z_1^{l_1}z_2^{l_2}
\gb^{l_1,l_2}_\beta\bigg)
\end{gather}  
form a ZCR of Wahlquist-Estabrook type with coefficients in~$\xtf$. 
Applying Lemma~\ref{propwezcr} to this ZCR, we obtain the homomorphism  
\begin{gather}
\lb{weaxtfa}
\vf\cl\wea\to\xtf,\qquad 
\vf(\wga_0)=\pd_{z_1},\qquad\vf(\wga_\al)= 
\sum_{l_1,l_2}z_1^{l_1}z_2^{l_2}\ga^{l_1,l_2}_\al,\qquad
\al\in\mat_0,\quad\al\neq 0,
\\
\notag
\vf(\wgb_0)=
\pd_{z_2}+\sum_{l_1,l_2}z_1^{l_1}z_2^{l_2}\gb^{l_1,l_2}_0,
\qquad
\vf(\wgb_\beta)=
\sum_{l_1,l_2}z_1^{l_1}z_2^{l_2}
\gb^{l_1,l_2}_\beta,\qquad \beta\in\mat_{\eo-1},\quad\beta\neq 0. 
\end{gather}

Clearly, $\zf$ is a Lie subalgebra of $\xtf=\zd\oplus\zf$. 
In view of~\er{weaxtfa}, for any $\al\in\mat_0$ and $k\in\zp$ 
such that $\al\neq 0$ we have
\beq
\label{wga0z1}
\vf\Big((\ad\wga_0)^{k}(\wga_\al)\Big)= 
\big(\ad\pd_{z_1}\big)^{k}\bigg(\sum_{l_1,l_2}z_1^{l_1}z_2^{l_2}\ga^{l_1,l_2}_\al\bigg)
=\big(\pd_{z_1}\big)^{k}\bigg(\sum_{l_1,l_2}z_1^{l_1}z_2^{l_2}\ga^{l_1,l_2}_\al\bigg)\in\zf.
\ee 
Since $\swe\subset\wea$ is generated by the elements~\er{elmh}, 
property~\er{wga0z1} implies $\vf(\swe)\subset\zf\subset\xtf$. 
Using the homomorphism~\er{zffd0} and property~\er{wga0z1}, 
we obtain 
\beq
\lb{nuvf}
\nu\circ\vf\big|_{\swe}\cl\swe\to\fd^0(\CE,a),\quad 
(\nu\circ\vf)\Big((\ad\wga_0)^{k}(\wga_\al)\Big)
=k!\cdot\ga^{k,0}_\al,\quad 
k\in\zp,\,\ 
\al\in\mat_0,\,\ 
\al\neq 0.
\ee

Using Remark~\ref{remunxi}, we can assume that $\wea$ is embedded 
into the algebra $\gl(V)$ for some vector space~$V$. 

Then the exponentials $\mathrm{e}^{t\wgb_0}$, $\mathrm{e}^{x\wga_0}$ and
the expressions~\er{wgawgbser} can be regarded as power series with coefficients 
in $\gl(V)$. If $S_1,\,S_2$ are power series with coefficients 
in $\gl(V)$, then the product $S_1 S_2$ is a well-defined power 
series as well. It is easy to check that the following formulas are valid 
\begin{multline} 
\lb{eeaee}
\mathrm{e}^{t\wgb_0}\mathrm{e}^{x\wga_0}
\Big(D_x+\sum_{\al\in\mat_0} \ua^\al\cdot\wga_\al\Big)
\mathrm{e}^{-x\wga_0}\mathrm{e}^{-t\wgb_0}=\\
=D_x-\mathrm{e}^{t\wgb_0}\wga_0\mathrm{e}^{-t\wgb_0}+
\sum_{\al\in\mat_0}\ua^\al\cdot\mathrm{e}^{t\wgb_0}\mathrm{e}^{x\wga_0}
\wga_\al\mathrm{e}^{-x\wga_0}\mathrm{e}^{-t\wgb_0}=
\\
=D_x+\sum_{\al\in\mat_0,\ \al\neq 0}  \ua^\al\cdot\sum_{l_1,l_2}\frac{1}{l_1!l_2!}x^{l_1}t^{l_2}(\ad\wgb_0)^{l_2}\Big((\ad\wga_0)^{l_1}(\wga_\al)\Big),
\end{multline}
\begin{multline} 
\lb{eebee}
\mathrm{e}^{t\wgb_0}\mathrm{e}^{x\wga_0}
\Big(D_t+\sum_{\beta\in\mat_{\eo-1}} \ua^\beta\cdot\wgb_\beta\Big)
\mathrm{e}^{-x\wga_0}\mathrm{e}^{-t\wgb_0}=\\
=D_t-B_0+\sum_{\beta\in\mat_{\eo-1}}  
\ua^\beta\cdot\sum_{l_1,l_2}\frac{1}{l_1!l_2!}x^{l_1}t^{l_2}
(\ad\wgb_0)^{l_2}\Big((\ad\wga_0)^{l_1}(\wgb_\beta)\Big).
\end{multline}
Recall that $\wga$, $\wgb$ are given by~\er{wgawgbser}.
Set
\begin{gather}
\lb{plong}
P=\sum_{\al\in\mat_0,\ \al\neq 0}  
\ua^\al\cdot\sum_{l_1,l_2}\frac{1}{l_1!l_2!}x^{l_1}t^{l_2}
(\ad\wgb_0)^{l_2}\Big((\ad\wga_0)^{l_1}(\wga_\al)\Big),\\
\lb{qlong}
Q=-B_0+\sum_{\beta\in\mat_{\eo-1}}  
\ua^\beta\cdot\sum_{l_1,l_2}\frac{1}{l_1!l_2!}x^{l_1}t^{l_2}
(\ad\wgb_0)^{l_2}\Big((\ad\wga_0)^{l_1}(\wgb_\beta)\Big).
\end{gather}
According to \er{wgawgbser}, \er{eeaee}, 
\er{eebee}, \er{plong}, \er{qlong} we have 
\beq
\lb{pqdxt}
D_x+P=\mathrm{e}^{t\wgb_0}\mathrm{e}^{x\wga_0}
(D_x+\wga)\mathrm{e}^{-x\wga_0}\mathrm{e}^{-t\wgb_0},\qquad
D_t+Q=\mathrm{e}^{t\wgb_0}\mathrm{e}^{x\wga_0}
(D_t+\wgb)\mathrm{e}^{-x\wga_0}\mathrm{e}^{-t\wgb_0}.
\ee
Note that, since $[D_x,D_t]=0$, one has
\beq
\lb{dxabpq}
[D_x+\wga,D_t+\wgb]=D_x(\wgb)-D_t(\wga)+[\wga,\wgb],\qquad
[D_x+P,D_t+Q]=D_x(Q)-D_t(P)+[P,Q].
\ee
Using \er{pqdxt}, \er{dxabpq}, \er{wxgbtga}, we get
\begin{multline*}
D_x(Q)-D_t(P)+[P,Q]=[D_x+P,D_t+Q]=\\
=\big[\mathrm{e}^{t\wgb_0}\mathrm{e}^{x\wga_0}
(D_x+\wga)\mathrm{e}^{-x\wga_0}\mathrm{e}^{-t\wgb_0},\,
\mathrm{e}^{t\wgb_0}\mathrm{e}^{x\wga_0}
(D_t+\wgb)\mathrm{e}^{-x\wga_0}\mathrm{e}^{-t\wgb_0}\big]=\\
=\mathrm{e}^{t\wgb_0}\mathrm{e}^{x\wga_0}
[D_x+\wga,D_t+\wgb]\mathrm{e}^{-x\wga_0}\mathrm{e}^{-t\wgb_0}=
\mathrm{e}^{t\wgb_0}\mathrm{e}^{x\wga_0}
(D_x(\wgb)-D_t(\wga)+[\wga,\wgb])\mathrm{e}^{-x\wga_0}\mathrm{e}^{-t\wgb_0}=0.
\end{multline*}
Therefore, the power series \er{plong}, \er{qlong} 
satisfy all conditions of Lemma~\ref{lemfd0zcr}. 
Applying Lemma~\ref{lemfd0zcr} to~\er{plong},~\er{qlong}, we obtain 
the homomorphism 
\begin{gather}
\lb{psiga}
\psi\cl\fd^0(\CE,a)\to\wea,\qquad
\psi\big(\ga^{l_1,l_2}_\al\big)=\frac{1}{l_1!l_2!}
(\ad\wgb_0)^{l_2}\Big((\ad\wga_0)^{l_1}(\wga_\al)\Big),
\qquad\al\in\mat_0,\qquad\al\neq 0,\\ 
\notag
\psi\big(\gb^{l_1,l_2}_\beta\big)=
\frac{1}{l_1!l_2!}(\ad\wgb_0)^{l_2}
\Big((\ad\wga_0)^{l_1}(\wgb_\beta)\Big),\qquad\beta\in\mat_{\eo-1},
\qquad\beta\neq 0,\\
\notag
\psi\big(\gb^{l'_1,l'_2}_0\big)=\frac{1}{l'_1!l'_2!}(\ad\wgb_0)^{l'_2}
\Big((\ad\wga_0)^{l'_1}(\wgb_0)\Big),\qquad l'_1>0.
\end{gather}
From~\er{psiga} we get 
\beq
\lb{psigal1}
\psi\big(\ga^{l_1,0}_\al\big)=\frac{1}{l_1!}
(\ad\wga_0)^{l_1}(\wga_\al)\in\swe,\qquad l_1\in\zp,
\qquad\al\in\mat_0,\qquad\al\neq 0.
\ee
Since, by Lemma~\ref{lemgenfd0}, the elements~\er{gal1al} generate 
the algebra $\fd^0(\CE,a)$, property~\er{psigal1} implies 
$\psi\big(\fd^0(\CE,a)\big)\subset\swe$. 

Then from~\er{nuvf},~\er{psigal1} it follows that the 
homomorphisms $\psi\cl\fd^0(\CE,a)\to\swe$ and 
$\nu\circ\vf\big|_{\swe}\cl\swe\to\fd^0(\CE,a)$ are inverse to each other.
\end{proof}

\section{The algebra $\fd^0(\CE,a)$ for the multicomponent  Landau-Lifshitz system}
\lb{seclnfd0}

For any $m\in\zsp$ and $m$-dimensional vectors 
$v={(v^1,\dots,v^m)}$, $w={(w^1,\dots,w^m)}$, we set 
$\langle v,w\rangle=\sum_{i=1}^mv^iw^i$.

In order to describe the algebra $\fd^0(\CE,a)$ for system~\er{main},
we need to resolve the constraint $\langle S,S\rangle=1$
for the vector-function $S=\big(s^1(x,t),\dots,s^n(x,t)\big)$.
Following~\cite{mll}, we do this as follows
\begin{equation}
\label{sp}
s^j=\frac{2u^j}{1+\langle u,u\rangle},\qquad\qquad
j=1,\dots,n-1,\qquad\qquad
s^n=\frac{1-\langle u,u\rangle}{1+\langle u,u\rangle},
\end{equation}
where $u=\big(u^1(x,t),\dots,u^{n-1}(x,t)\big)$ 
is an $(n-1)$-dimensional vector-function.

As is shown in~\cite{mll}, using~\er{sp}, 
one can rewrite system~\er{main} as follows
\begin{gather}
\label{pt}
\begin{split}
u_t&=u_{xxx}-6\langle u,u_x\rangle\Delta^{-1}u_{xx}
+\bigl(-6\langle u,u_{xx}\rangle\Delta^{-1}+24\langle u,u_{x}\rangle^2\Delta^{-2}
-6\langle u,u\rangle\langle u_x,u_{x}\rangle\Delta^{-2}\bigl)u_x+\\
&+\bigl(6\langle u_x,u_{xx}\rangle\Delta^{-1}-12\langle u,u_x\rangle\langle u_x,u_{x}\rangle\Delta^{-2}\bigl)u
+\frac32\Bigl(r_n+4\Delta^{-2}\sum_{i=1}^{n-1}(r_i-r_n)(u^i)^2\Bigl)u_x,
\end{split}\\
\notag
u=\big(u^1(x,t),\dots,u^{n-1}(x,t)\big),
\end{gather}
where $\Delta=1+\langle u,u\rangle$ and $r_1,\dots,r_n\in\fik$ 
are the numbers such that 
$R=\mathrm{diag}\,(r_1,\dots,r_n)$ in~\er{main}.
As has been said in Section~\ref{subsmr}, 
we assume $r_i\neq r_j$ for all $i\neq j$.

In this section we assume $n\ge 3$, because we will use some 
results of the paper~\cite{mll-2012}, which studied equations~\er{main},~\er{pt} 
in the case $n\ge 3$.



Let $\CE$ be the infinite prolongation of the PDE~\er{pt}.
Then $\CE$ is a manifold with the coordinates $x$, $t$, $u^i_k$ 
for $i=1,\dots,n-1$ and $k\in\zp$. 
(Recall that $u^i_0=u^i$, according to our notation.)
Let $a\in\CE$.

According to Theorem~\ref{thmhfd0}, the algebra $\fd^0(\CE,a)$ 
for~\er{pt} is isomorphic to a subalgebra of the WE algebra $\wea$ of~\er{pt}. 
The WE algebra of~\er{pt} is described in~\cite{mll-2012}. 
To present this description, 
we need to introduce some auxiliary constructions.  

Recall that $\mathfrak{gl}_{n+1}$ is the algebra of matrices of 
size $(n+1)\times(n+1)$ with entries from~$\fik$.
Let $E_{i,j}\in\mathfrak{gl}_{n+1}$ be the matrix with 
$(i,j)$-th entry equal to 1 and all other entries equal to zero. 

The Lie subalgebra $\mathfrak{so}_{n,1}\subset\mathfrak{gl}_{n+1}$ 
has been defined in Remark~\ref{son1}. 
It has the following basis
$$
E_{i,j}-E_{j,i},\qquad i<j\le n,\qquad\qquad 
E_{l,n+1}+E_{n+1,l},\qquad l=1,\dots,n.
$$

Consider the algebra $\fik[\la_1,\dots,\la_n]$ 
of polynomials in $\la_1,\dots,\la_n$. 
Let $\ipol\subset\fik[\la_1,\dots,\la_n]$ be the ideal  
generated by $\la_i^2-\la_j^2+r_i-r_j$ for $i,j=1,\dots,n$. 
 
Consider the quotient algebra $\qalg=\fik[\la_1,\dots,\la_n]/\ipol$, which 
is isomorphic to the 
algebra of polynomial functions on the algebraic curve~\er{curve}. 

The space $\mathfrak{so}_{n,1}\otimes_\fik \qalg$ 
is an infinite-dimensional Lie algebra over $\fik$ with the Lie bracket 
$$
[M_1\otimes h_1,\,M_2\otimes h_2]=[M_1,M_2]\otimes h_1h_2,
\qquad\qquad 
M_1,M_2\in \mathfrak{so}_{n,1},\qquad\qquad h_1,h_2\in \qalg.
$$

We have the natural homomorphism 
$\xi\cl\fik[\la_1,\dots,\la_n]\to\fik[\la_1,\dots,\la_n]/\ipol=\qalg$. 
Set $\hat\la_i=\xi(\la_i)\in \qalg$. 

Consider the following elements of ${\mathfrak{so}_{n,1}\otimes \qalg}$
\begin{equation}
\label{qie}
Q_i=(E_{i,n+1}+E_{n+1,i})\otimes\hat\la_i,
\qquad\qquad i=1,\dots,n.
\end{equation}
Denote by $L(n)\subset\mathfrak{so}_{n,1}\otimes \qalg$ the Lie subalgebra  
generated by~$Q_1,\dots,Q_n$. 

Since $\hat{\la}_i^2-\hat\la_j^2+r_i-r_j=0$ in $\qalg$, the element 
$\hat\la=\hat\la_i^2+r_i\in \qalg$ does not depend on $i$.  

Recall that $\zsp$ is the set of positive integers.
For $i,j\in\{1,\dots,n\}$ and $k\in\zsp$, 
consider the following elements of 
${\mathfrak{so}_{n,1}\otimes_\fik \qalg}$  
$$
Q^{2k-1}_i=(E_{i,n+1}+E_{n+1,i})\otimes\hat\la^{k-1}\hat\la_i,\qquad\qquad
Q^{2k}_{ij}=(E_{i,j}-E_{j,i})\otimes\hat\la^{k-1}\hat\la_i\hat\la_j. 
$$
For $i,j,l,m\in\{1,\dots,n\}$ and $k_1,k_2\in\zsp$ one has 
\begin{multline} 
\label{q1}
[Q^{2k_1}_{ij},\,Q^{2k_2}_{lm}]=\delta_{lj}Q^{2(k_1+k_2)}_{im} 
-\delta_{im}Q^{2(k_1+k_2)}_{lj}
+\delta_{jm}Q^{2(k_1+k_2)}_{li}-\delta_{il}Q^{2(k_1+k_2)}_{jm}+\\
+r_i\delta_{im}Q^{2(k_1+k_2-1)}_{lj}
-r_j\delta_{lj}Q^{2(k_1+k_2-1)}_{im}+r_i\delta_{il}Q^{2(k_1+k_2-1)}_{jm} 
-r_j\delta_{jm}Q^{2(k_1+k_2-1)}_{li},  
\end{multline}
\begin{equation} 
\label{q2}
[Q^{2k_1}_{ij},\,Q^{2k_2-1}_{l}]=\delta_{lj}Q^{2k_1+2k_2-1}_{i}
-\delta_{il}Q^{2k_1+2k_2-1}_{j}-r_j\delta_{lj}Q^{2k_1+2k_2-3}_{i}+r_i\delta_{il}Q^{2k_1+2k_2-3}_{j},
\end{equation}
\begin{equation} 
\label{q3}
[Q^{2k_1-1}_{i},\,Q^{2k_2-1}_{j}]=Q^{2(k_1+k_2-1)}_{ij},\qquad\qquad 
[Q^{2k_1-1}_{i},\,Q^{2k_2-1}_{i}]=0.   
\end{equation}

Since $Q^1_i=Q_i$ and $Q^{2k}_{ij}=-Q^{2k}_{ji}$, 
from~\er{q1},~\er{q2},~\er{q3} we see that the elements 
\beq
\lb{qlnelem}
Q^{2k-1}_l,\quad\qquad Q^{2k}_{ij},\quad\qquad i,j,l\in\{1,\dots,n\},\quad\qquad 
i<j,\quad\qquad k\in\zsp,
\ee
span the Lie algebra $L(n)$. 
It is shown in~\cite{mll-2012} that  
the elements~\er{qlnelem} are linearly independent 
over $\fik$ and, therefore, form a basis of $L(n)$. 

\begin{remark}
\lb{rem2dln}
In Remark~\ref{mllzcr} we have said that $L(n)$ consists of certain 
$\mathfrak{gl}_{n+1}$-valued functions on the curve~\er{curve}, 
and in Remark~\ref{son1} we have shown that these functions take values in 
the Lie subalgebra $\mathfrak{so}_{n,1}\subset\mathfrak{gl}_{n+1}$.

Here we have defined $L(n)$ as a certain Lie subalgebra 
of $\mathfrak{so}_{n,1}\otimes \qalg$.
This is in agreement with Remarks~\ref{mllzcr},~\ref{son1}, because elements 
of $\mathfrak{so}_{n,1}\otimes \qalg$ can be regarded as 
$\mathfrak{so}_{n,1}$-valued functions on the curve~\er{curve}.
\end{remark}

Note that the algebra $L(n)$ is very similar to infinite-dimensional 
Lie algebras that were studied in~\cite{skr,skr-jmp}. 

\begin{theorem}
\label{thfd0}
Let $n\ge 3$. 
For the PDE~\er{pt},  
the Lie algebra $\fd^0(\CE,a)$ defined in~\er{fdnn-1} is isomorphic to~$L(n)$. 
\end{theorem}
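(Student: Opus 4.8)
The plan is to transport the whole question to the Wahlquist--Estabrook algebra by means of Theorem~\ref{thmhfd0}, and then to identify the relevant subalgebra with $L(n)$ using the explicit $L(n)$-valued ZCR of the system. Since the functions $F^i$ in~\er{pt} do not depend on $x$ and $t$, Theorem~\ref{thmhfd0} applies and yields an isomorphism $\fd^0(\CE,a)\cong\swe$, where $\swe\subset\wea$ is the subalgebra generated by the elements $(\ad\wga_0)^k(\wga_\al)$, $k\in\zp$, $\al\in\mat_0$, $\al\neq 0$, of~\er{elmh}. Thus it suffices to prove $\swe\cong L(n)$, and the problem reduces to a statement about the WE algebra of~\er{pt}, which is exactly the object analyzed in~\cite{mll-2012}.

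To build a homomorphism onto $L(n)$ I would use the ZCR~\er{M},~\er{N}. After the substitution~\er{sp} its $A$-part becomes $\CA=\sum_{i=1}^n s^i(u)\,Q_i$, a function of $u^1,\dots,u^{n-1}$ alone (an order-$0$ ZCR, cf.\ Remark~\ref{izcrso}) taking values in $L(n)$, with $Q_i$ as in~\er{qie}; the $B$-part from~\er{N} is likewise $L(n)$-valued. Expanding $\CA,\CB$ in Taylor series at $a$ produces a ZCR of Wahlquist--Estabrook type with coefficients in $L(n)$, so Lemma~\ref{propwezcr} gives a homomorphism $\Phi\cl\wea\to L(n)$ carrying $\wga_\al$, $\wgb_\beta$ to the coefficients of $\CA$, $\CB$. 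In particular the generators~\er{elmh} of $\swe$ are sent to the iterated brackets $(\ad\,\CA|_a)^k(\CA_\al)$, where $\CA|_a$ and the $\CA_\al$ lie in $\mathrm{span}_\fik(Q_1,\dots,Q_n)$. I would then show that $\Phi$ restricts to an isomorphism $\swe\xrightarrow{\ \sim\ }L(n)$. Surjectivity comes from the bracket relations~\er{q1}--\er{q3}: brackets of the first-order coefficients already yield the degree-two generators $Q^{2}_{ij}$, and repeated application of $\ad\,\CA|_a$, using the degree-lowering terms that carry the factors $r_i$ (here $r_i\neq r_j$ is essential), recovers the remaining first-order generator and the full basis~\er{qlnelem} of $L(n)$.

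The main obstacle is injectivity, i.e.\ proving that $\swe$ satisfies no relations beyond those of $L(n)$. Note that $\Phi|_{\swe}$ does not respect the natural grading of $L(n)$: the generators~\er{elmh} are sent to \emph{inhomogeneous} elements, since the commutators in~\er{q1},~\er{q2} mix the grading via the $r_i$-terms. Consequently the identification of $\swe$ with $L(n)$ cannot be read off degree by degree and must be established through the detailed relation calculus. For this I would use the explicit presentation of $\wea$ together with the finite presentation of $L(n)$ and the linear independence of the elements~\er{qlnelem} established in~\cite{mll-2012}; concretely, one matches the generators~\er{elmh} of $\swe$ with the generators of $L(n)$, verifies that the defining relations of $L(n)$ hold among them, and thereby constructs the inverse homomorphism $L(n)\to\swe$. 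Combining this with the surjection above gives $\swe\cong L(n)$, whence $\fd^0(\CE,a)\cong L(n)$ by Theorem~\ref{thmhfd0}.
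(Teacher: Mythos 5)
Your first step coincides with the paper's: both reduce the statement via Theorem~\ref{thmhfd0} to the identification of the subalgebra $\swe\subset\wea$ generated by the elements~\er{elmh} with $L(n)$. After that the routes diverge. The paper does not construct any homomorphism onto $L(n)$ at all: it quotes from~\cite{mll-2012} the full structure theorem $\wea\cong L(n)\oplus\fik^2$ together with the explicit form $\wga=C_0+\sum_{l=1}^n C_l\, s^l(u^1,\dots,u^{n-1})$, where $C_1,\dots,C_n$ generate the copy of $L(n)$ and $[C_0,C_l]=0$. From this, $\swe=L(n)$ is an \emph{equality of subalgebras} read off in two lines: the nonconstant Taylor coefficients $\wga_\al$, $\al\neq 0$, span $\mathrm{span}(C_1,\dots,C_n)$ (because $1,s^1,\dots,s^n$ are linearly independent functions of $u$), and $\ad\wga_0$ preserves $L(n)$ since $C_0$ is central over it. Your route instead builds a surjection $\Phi\colon\wea\to L(n)$ from the ZCR~\er{M},~\er{N} via Lemma~\ref{propwezcr} and then must prove $\Phi|_{\swe}$ injective. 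You correctly identify injectivity as the crux, but the plan for it (match generators, verify the defining relations of $L(n)$ inside $\swe$, build an inverse) is exactly the point where all the content lives: verifying that no extra relations hold among the elements~\er{elmh} requires knowing the relation structure of $\wea$, which is precisely the main computation of~\cite{mll-2012} that the paper imports wholesale. So your argument is workable but, once filled in, would amount to re-deriving (or re-invoking in a less convenient form) the structure theorem the paper cites; the paper's packaging avoids the inverse-homomorphism bookkeeping entirely. Two small remarks: your surjectivity discussion is more complicated than necessary, since the coefficients $\wga_\al$ for all $\al\neq 0$ already span $\mathrm{span}(Q_1,\dots,Q_n)$ without appealing to the $r_i$-dependent bracket relations~\er{q1}--\er{q3}; and if you do construct $\Psi\colon L(n)\to\swe$ with $\Phi|_{\swe}\circ\Psi=\mathrm{id}$, you still need $\Psi$ surjective onto $\swe$ before concluding that the two maps are mutually inverse.
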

\begin{proof}
Let $\wea$ be the WE algebra for the PDE~\er{pt}. 
According to~\cite{mll-2012}, 
the algebra~$\wea$ is isomorphic to the direct sum of~$L(n)$ 
and the $2$-dimensional abelian Lie algebra $\fik^2$.
So $\wea\cong L(n)\oplus\fik^2$.
(Note that the algebra~$\wea$ is denoted by $\mathfrak{W}$ in~\cite{mll-2012}.)

According to~\cite{mll-2012}, 
in the formal Wahlquist-Estabrook ZCR with coefficients in~$\wea$ 
for the PDE~\er{pt}, one has 
\begin{equation}
\notag
\wga=C_0+\sum_{l=1}^{n}C_l\cdot s^l(u^1,\dots,u^{n-1}),
\qquad\quad C_0,C_1,\dots,C_n\in\wea,
\end{equation}
where the functions $s^l=s^l(u^1,\dots,u^{n-1})$ are given by~\er{sp}, 
the elements $C_1,\dots,C_n\in\wea$ generate the Lie subalgebra 
$L(n)\subset\wea\cong L(n)\oplus\fik^2$, and one has 
$[C_0,C_l]=0$ for all $l=1,\dots,n$.

This implies that the subalgebra $\swe\subset\wea$ defined in Theorem~\ref{thmhfd0} 
is equal to $L(n)\subset\wea$.
According to Theorem~\ref{thmhfd0}, one has $\fd^0(\CE,a)\cong\swe$. 
Since in the considered case we have $\swe=L(n)$, we get $\fd^0(\CE,a)\cong L(n)$.
\end{proof}

\section{The algebra $\fd^1(\CE,a)$ for the multicomponent Landau-Lifshitz system}
\lb{secfd1}


\subsection{Preliminary computations}
\label{prcomp}

We continue to use the notation $u^i_k=\dfrac{\pd^k u^i}{\pd x^k}$.

Let $\CE$ be the infinite prolongation of the PDE~\er{pt}.
Recall that $u=(u^1,\dots,u^{n-1})$ is an $(n-1)$-dimensional vector in~\er{pt}.
Then $\CE$ is an infinite-dimensional manifold with the coordinates
\beq
\lb{xtuikn1}
x,\quad t,\quad u^i_k,\qquad i=1,\dots,n-1,\quad k\in\zp,\quad u^i_0=u^i.
\ee

Consider an arbitrary point $a\in\CE$ given by
\begin{equation}
\label{pointevfd1}
a=(x=x_a,\ t=t_a,\ u^i_k=a^i_k)\,\in\,\CE,\qquad x_a,\,t_a,\,a^i_k\in\fik,\qquad
i=1,\dots,n-1,\qquad k\in\zp. 
\end{equation}

Since the PDE~\er{pt} is invariant with respect to the change of variables 
$x\mapsto x-x_a,\ t\mapsto t-t_a$, it is sufficient to consider the case 
\beq
\label{x0t0=0}
x_a=t_a=0.
\ee 
For simplicity of exposition, we assume also 
\beq
\label{aik=0}
a^i_k=0\qquad\forall\,i,k.  
\ee
(In the case $a^i_k\neq 0$, the computations change very little, 
and the final result is the same.)


According to Remark~\ref{rem_fdpgen} and assumptions~\er{x0t0=0},~\er{aik=0}, 
in order to describe the Lie algebra $\fd^1(\CE,a)$ for the PDE~\er{pt}, 
we need to study the equations
\beq
\lb{gc}
D_x(B)-D_t(A)+[A,B]=0,
\ee
\begin{gather}
\label{gd=0f1}
\forall\,i_0=1,\dots,n-1,\qquad\quad
\frac{\pd A}{\pd u^{i_0}_1}
\,\,\bigg|_{u^j_1=0\ \forall\,j,\ u^i_0=0\ \forall\,i>i_0}=0,\\
\lb{gaukakf1}
A\,\Big|_{u^j_0=0,\ u^j_1=0\ \forall\,j}=0,\\
\lb{gbxx0f1}
B\,\Big|_{x=0,\ u^j_k=0\ \forall\,j,\ \forall\,k\ge 0}=0,
\end{gather}
where 
\begin{itemize}
\item $A=A(x,t,u^j_0,u^j_1)$ is a power series in the variables 
$x$, $t$, $u^j_0$, $u^j_1$ for $j=1,\dots,n-1$,
\item $B=B(x,t,u^j_0,u^j_1,u^j_2,u^j_3)$ is a power series in the variables 
$x$, $t$, $u^j_0$, $u^j_1$, $u^j_2$, $u^j_3$ for $j=1,\dots,n-1$. 
\end{itemize}
The coefficients of the power series $A$, $B$ 
are generators of the Lie algebra $\fd^1(\CE,a)$.
Relations for these generators are provided by equations \er{gc}, \er{gd=0f1},
\er{gaukakf1}, \er{gbxx0f1}.

Since, according to our notation, $u^j_0=u^j$, below we sometimes write $u^j$ instead 
of $u^j_0$. In particular, we can write $A=A(x,t,u^j,u^j_1)$.

When we consider power series in the variables $x$, $t$, $u^j$, $u^j_k$, 
partial derivatives with respect to these variables are often denoted by subscripts.
For example,
$$
A_{u^i_1}=\frac{\pd A}{\pd u^i_1},\qquad
A_{u^i_1 x}=\frac{\pd^2 A}{\pd u^i_1\pd x},\qquad
A_{u^i_1 u^j}=\frac{\pd^2 A}{\pd u^i_1\pd u^j},\qquad
i,j=1,\dots,n-1.
$$




%
%

Differentiating equation~\er{gc} with respect to~$u^i_4$,
we obtain that $B$ is of the form
\begin{equation}
\label{eq.T.general}
B= \sum_i u^i_{3} A_{u^i_1}  + F^1(x,t,u^j,u^j_1,u^j_{2}),
\end{equation}
where $F^1$ is a power series in the variables $x$, $t$, $u^j$, $u^j_1$, $u^j_{2}$.

According to our notation, the symbols $u_t$, $u$, $u_x$, $u_{xx}$, $u_{xxx}$ 
in~\er{pt} denote the following vectors
\begin{gather*}
u_t=(u^1_t,\dots,u^{n-1}_t),\qquad
u=(u^1,\dots,u^{n-1})=(u^1_0,\dots,u^{n-1}_0),\\
u_x=(u^1_1,\dots,u^{n-1}_1),\qquad
u_{xx}=(u^1_2,\dots,u^{n-1}_2),\qquad
u_{xxx}=(u^1_3,\dots,u^{n-1}_3).
\end{gather*}
Therefore, system \er{pt} can be written as
\begin{equation}\label{eq.landau}
u^i_t=u^i_{3}+G^i(u^j,u^j_1,u^j_{2}),\qquad\qquad i=1,\dots,n-1,
\end{equation}
for some functions $G^i(u^j,u^j_1,u^j_{2})$ determined by the right-hand side of \er{pt}.
Then one has
\begin{gather*}
D_x(B)=\sum_i\Big(u^i_{4} A_{u^i_1}+u^i_{3} \big(A_{u^i_1 x} 
+\sum_j\big(u^j_1A_{u^i_1 u^j} + u^j_{2}A_{u^i_1 u^j_1}\big)\big)\Big) 
+F^1_x+\sum_i\big(u^i_1  F^1_{u^i} + u^i_{2} F^1_{u^i_1}  + u^i_{3} F^1_{u^i_{2}}\big),\\
D_t(A)=A_t+\sum_i\Big(u^i_{3} A_{u^i} + G^i A_{u^i}+ u^i_{4}A_{u^i_1} 
+\sum_j\big(u^j_1G^i_{u^j}A_{u^i_1}+u^j_{2}G^i_{u^j_1}A_{u^i_1}+
u^j_3G^i_{u^j_{2}}A_{u^i_1}\big)\Big),\\
[A,B]=\sum_iu^i_{3}[A,A_{u^i_1}] + [A,F^1].
\end{gather*}
Differentiating equation~\er{gc} with respect to~$u^i_3$, we get 
\begin{equation}
\lb{f1ui2a}
F^1_{u^i_{2}}=A_{u^i}+\sum_jG^j_{u^i_{2}}A_{u^j_1}-A_{u^i_1 x}-\sum_j u^j_1A_{u^i_1 u^j}
-\sum_j u^j_{2}A_{u^i_1 u^j_1}-[A,A_{u^i_1}],\qquad i=1,\dots,n-1.
\end{equation}
Since $G^j_{u^i_{2}u^l_{2}}=0$ for all $i$, $l$, 
from \er{f1ui2a} one obtains that $F^1$ is of the form
\begin{multline}\label{eq.F1}
F^1=-\frac12\sum_{i,j}u^i_{2}u^j_{2}A_{u^i_1u^j_1}+\\
+\sum_i u^i_{2}\Big(A_{u^i}+\sum_j G^j_{u^i_{2}}A_{u^j_1}-A_{u^i_1 x}
-\sum_j u^j_1A_{u^i_1 u^j}-[A,A_{u^i_1}]\Big)+F^2(x,t,u^l,u^l_1),
\end{multline}
where $F^2$ is a power series in the variables $x$, $t$, $u^l$, $u^l_1$. 
Then equation~\er{gc} becomes
\begin{equation}\label{eqq.covering}
F^1_x+\sum_i u^i_1F^1_{u^i}+\sum_i u^i_{2}F^1_{u^i_1}-A_t-\sum_i G^i A_{u^i}
-\sum_{i,j}u^j_1G^i_{u^j}A_{u^i_1}-\sum_{i,j}u^j_{2}G^i_{u^j_1}A_{u^i_1}+[A,F^1]=0.
\end{equation}

Differentiating~\er{eqq.covering} with respect to~$u^i_{2},\,u^j_{2},\,u^h_{2}$ 
and taking into account~\er{eq.F1}, 
one gets $A_{u^i_1 u^j_1 u^h_1}=0$ for all $i,j,h$. That is, $A$ is of the form 
\begin{equation}
\label{eq.X.polynomial.second.order}
A=\frac{1}{2}\sum_{i,j}u^i_1 u^j_1 Y_{ij}+\sum_i u^i_1 Y_i+Y,\qquad Y_{ij}=Y_{ji},
\end{equation}
where $Y_{ij}$, $Y_i$, $Y$ are power series in the variables $x$, $t$, $u^1,\dots,u^{n-1}$. 

From the definition of $G^i$, for all $h,i,j=1,\dots,n-1$ one has 
\begin{equation}\label{eq.formule.Gpxx}
G^h_{u^i_{2}} = -6\delta^h_i\Delta^{-1}\langle u , u_1\rangle - 6\Delta^{-1} u^iu^h_1 + 6\Delta^{-1}u^i_1u^h,
\end{equation}
\begin{multline}
\label{guxjh}
G_{u_{1}^{j}}^{h}=  -6\Delta^{-1}u^{j}u_{2}^{h}-6\delta_{j}^{h}\Delta
^{-1}\langle u , u_{2}\rangle + 6\Delta^{-1}u_{2}^{j}u^{h} \\
 +48\Delta^{-2}\langle u , u_{1}\rangle u^{j}u_{1}^{h} + 24\delta_{j}^{h}\Delta^{-2}\langle u , u_{1}\rangle^{2}-12\Delta^{-2}\langle u , u\rangle u_{1}^{j}u_{1}^{h} - 6\delta_{j}^{h}\Delta^{-2}\langle u , u\rangle \langle u_{1}, u_{1}\rangle \\
 -12\Delta^{-2}u^{h}u^{j}\langle u_{1}, u_{1}\rangle - 24\Delta^{-2}\langle u , u_{1}\rangle u^{h}u_{1}^{j}+\frac{3}{2}\delta_{j}^{h}\left( r_{n}+
4\Delta^{-2}\sum_i(r_{i}-r_{n}){u^{i}}^{2}\right),
\end{multline}
where $u_k$ is the vector $(u^1_k,\dots,u^{n-1}_k)$ for $k\in\zp$.

Differentiating equation~\er{eqq.covering} with respect to $u^i_{2}$, $u^j_{2}$ 
and taking into account~\er{eq.F1}, \er{eq.formule.Gpxx}, \er{guxjh}, we obtain 
\begin{multline}\label{eq.coeff.pixx.pjxx.0}
A_{u^i_1 u^j_1 x}+\sum_h u^h_1 A_{u^i_1 u^j_1 u^h} + [A,A_{u^i_1 u^j_1}] 
+4\Delta^{-1}A_{u^i_1u^j_1}\langle u,u_1\rangle+2\sum_h u^iu^h_1\Delta^{-1}A_{u^h_1u^j_1}+
\\
+2\sum_h u^ju^h_1\Delta^{-1}A_{u^h_1u^i_1}
-2\sum_h u^i_1u^h\Delta^{-1}A_{u^h_1u^j_1}-2\sum_h u^j_1u^h\Delta^{-1}A_{u^h_1u^i_1}=0. 
\end{multline}
Substituting~\eqref{eq.X.polynomial.second.order} to~\er{eq.coeff.pixx.pjxx.0}, one gets
\begin{multline}\label{eq.coeff.pixx.pjxx}
Y_{ij,x}+\sum_h u^h_1 Y_{ij,h}+[A,Y_{ij}]+\\
+\Delta^{-1}\Big(4\langle u, u_1\rangle Y_{ij}+2\sum_h u^iu^h_1 Y_{hj}+
2\sum_h u^ju^h_1 Y_{hi}
-2\sum_h u^i_1u^h Y_{hj}-2\sum_h u^j_1u^h Y_{hi}\Big)=0.
\end{multline}
Here and below we use the following notation
\begin{gather*}
Y_{ij,x}=\dfrac{\partial Y_{ij}}{\partial x},\qquad
Y_{ij,h}=\dfrac{\partial Y_{ij}}{\partial u^h},\qquad
Y_{ij,hk}=\dfrac{\partial^2 Y_{ij}}{\partial u^h\partial u^k},
\qquad i,j,h,k=1,2,\dots,n-1,\\
Y_{h,x}=\frac{\partial Y_{h}}{\partial x},\qquad
Y_{i,j}=\frac{\partial Y_{i}}{\partial u^j},\qquad
Y_{,x}=\frac{\partial Y}{\partial x},\qquad
Y_{,h}=\frac{\partial Y}{\partial u^h}.
\end{gather*}

The left-hand side of~\er{eq.coeff.pixx.pjxx} is a polynomial of degree $\le 2$ 
with respect to the variables $u^1_1,\dots,u^{n-1}_1$.
Equating to zero the coefficients of this polynomial, we obtain
\begin{gather}\label{rel.1}
[Y_{hk},Y_{ij}]=0\qquad \forall\, h,k,i,j,\\
\label{system.defining.Yij}
Y_{ij,h}=  [Y_{ij},Y_h] - 2\Delta^{-1}\big(2u^h Y_{ij}+ u^i Y_{hj} 
+ u^j Y_{hi} - \delta^i_h \sum_m u^m Y_{mj} - \delta^j_h \sum_m u^m Y_{mi}\big)
\qquad\forall\,i,j,h,\\
\label{yijx}
Y_{ij,x} = [Y_{ij},Y]\qquad\forall\,i,j.
\end{gather}
Let us rewrite \er{system.defining.Yij} replacing $h$ by $k$
\beq
\lb{yijkyy}
Y_{ij,k}=  [Y_{ij},Y_k] - 2\Delta^{-1}\big(2u^k Y_{ij}+ u^i Y_{kj} 
+ u^j Y_{ki} - \delta^i_k \sum_m u^m Y_{mj} - \delta^j_k \sum_m u^m Y_{mi}\big)
\qquad\forall\,i,j,k.
\ee
Now we differentiate \er{system.defining.Yij} with respect to $u^k$ 
and differentiate \er{yijkyy} with respect to $u^h$.
Then the equality $Y_{ij,hk}=Y_{ij,kh}$ implies 
\begin{equation}\label{eq.comp.cond.system.defining.Yij}
[Y_{ij},T_{kh}] + 4\Delta^{-2}( -\delta^i_kY_{hj} + \delta^i_hY_{kj} - \delta^j_kY_{hi} + \delta^j_hY_{ki} )=0
\qquad\forall\, i,j,k,h,
\end{equation}
where 
\beq\label{eq.def.Tkh}
T_{kh}=[Y_k,Y_h] - Y_{k,h} + Y_{h,k}.
\ee
In other words, 
we regard~\er{system.defining.Yij} as an overdetermined system of PDEs for $Y_{ij}$, 
and equations~\er{eq.comp.cond.system.defining.Yij} have been obtained from the
compatibility condition of system~\er{system.defining.Yij}.

\begin{remark}
Another way to obtain~\er{eq.comp.cond.system.defining.Yij} 
from~\er{system.defining.Yij} is the following. Set
$$
V_k=Y_k+\frac{\partial}{\partial u^k},\qquad\quad k=1,\dots,n-1.
$$
Then~\er{system.defining.Yij} and~\er{eq.def.Tkh} can be rewritten as follows
\begin{gather}
\lb{vhyij}
[V_h,Y_{ij}]=- 2\Delta^{-1}\big(2u^h Y_{ij}+ u^i Y_{hj} 
+ u^j Y_{hi} - \delta^i_h \sum_m u^m Y_{mj} - \delta^j_h \sum_m u^m Y_{mi}\big),\\
\lb{tkhvkvh}
T_{kh}=[V_k,V_h].
\end{gather}
Using the Jacobi identity and equation~\er{tkhvkvh}, we get 
\beq
\lb{vvyt}
[V_k,[V_h,Y_{ij}]]=[[V_k,V_h],Y_{ij}]+[V_h,[V_k,Y_{ij}]]=
[T_{kh},Y_{ij}]+[V_h,[V_k,Y_{ij}]].
\ee

Let $P$ be a power series in the variables $x,t,u^1,\dots,u^{n-1}$ with 
coefficients in the Lie algebra $\fd^1(\CE,a)$. Then the operator 
$\ad V_k$ can be applied to $P$ in the standard way 
$$
(\ad V_k)(P)=[V_k,P]=\Big[Y_k+\frac{\partial}{\partial u^k},\,P\Big]=
[Y_k,P]+\frac{\partial}{\partial u^k}(P).
$$

Applying the operator $\ad V_k$ to equation~\er{vhyij}, 
for all $k$, $h$ we can express $[V_k,[V_h,Y_{ij}]]$ as a linear combination of 
$Y_{{\ai}{\bi}}$, ${\ai},{\bi}=1,\dots,n-1$. 
Then, exchanging $k$ and $h$, we can express $[V_h,[V_k,Y_{ij}]]$ 
as a linear combination of $Y_{{\ai}{\bi}}$, ${\ai},{\bi}=1,\dots,n-1$. 
Substituting the obtained expressions in~\er{vvyt}, 
one gets~\er{eq.comp.cond.system.defining.Yij}.
\end{remark}

Combining equations~\er{gd=0f1},~\er{gaukakf1} 
with formula~\er{eq.X.polynomial.second.order}, we obtain
\begin{gather}
\label{condyi}
\forall\,i=1,\dots,n-2,\qquad Y_i\,\Big|_{u^{i+1}=u^{i+2}=\dots=u^{n-1}=0}=0,\qquad Y_{n-1}=0,\\ 
\label{condy0}
Y\,\Big|_{u^{1}=u^{2}=\dots=u^{n-1}=0}=0.
\end{gather}

In order to study the obtained equations, 
we need the following lemmas on formal power series, 
which can be proved straightforwardly by induction on the degrees of the coefficients 
of these power series. 

\begin{lemma}
\label{zi}
Let $Z_1,\dots,Z_{n-1}$ be formal power series in some variables $v^1,\dots,v^{n-1}$ 
with coefficients in a Lie algebra $\mg$. 
Suppose that 
\beq 
\label{zivi1}
\forall\,i=1,\dots,n-2,\qquad Z_i\,\Big|_{v^{i+1}=v^{i+2}=\dots=v^{n-1}=0}=0,\qquad 
Z_{n-1}=0. 
\ee
Set $V_{ij}=[Z_i,Z_j] - \dfrac{\pd Z_i}{\pd v^j} + \dfrac{\pd Z_j}{\pd v^i}$. 

Then the Lie subalgebra generated 
by the coefficients of the power series $Z_1,\dots,Z_{n-1}$
coincides with the Lie subalgebra generated 
by the coefficients of the power series $V_{ij},\ i,j=1,\dots,n-1$.
\end{lemma}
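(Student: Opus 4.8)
The plan is to prove the two inclusions separately. Denote by $\mathfrak{h}_Z$ the Lie subalgebra of $\mg$ generated by all coefficients of the power series $Z_1,\dots,Z_{n-1}$, and by $\mathfrak{h}_V$ the Lie subalgebra generated by all coefficients of the series $V_{ij}$. The inclusion $\mathfrak{h}_V\subseteq\mathfrak{h}_Z$ is immediate: each coefficient of $[Z_i,Z_j]$ is a bracket of a coefficient of $Z_i$ with a coefficient of $Z_j$, while each coefficient of $\pd Z_i/\pd v^j$ is, up to an integer factor, a coefficient of $Z_i$; hence every coefficient of $V_{ij}=[Z_i,Z_j]-\pd Z_i/\pd v^j+\pd Z_j/\pd v^i$ lies in $\mathfrak{h}_Z$, so $\mathfrak{h}_V\subseteq\mathfrak{h}_Z$.

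The substance is the reverse inclusion $\mathfrak{h}_Z\subseteq\mathfrak{h}_V$. First I would rewrite the defining relation as $\pd Z_i/\pd v^j=[Z_i,Z_j]+\pd Z_j/\pd v^i-V_{ij}$ and recover the coefficients of the $Z_i$ from it. Two features drive the recovery. Since $\fik$ has characteristic zero, differentiating a homogeneous polynomial with respect to a variable that actually occurs in a given monomial recovers that monomial's coefficient after dividing by the (nonzero) exponent. Moreover, the triangular vanishing hypothesis $Z_i|_{v^{i+1}=\dots=v^{n-1}=0}=0$ guarantees that every monomial occurring in $Z_i$ contains at least one variable $v^j$ with $j>i$; so for each monomial of $Z_i$ there is an index $j>i$ for which the corresponding coefficient of $\pd Z_i/\pd v^j$ determines it.

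I would organize the argument as a descending induction on the index $i$, the base case $i=n-1$ being trivial because $Z_{n-1}=0$, and, for each fixed $i$, an inner induction on the total degree $d$ of the homogeneous parts $(Z_i)_{(d)}$. In the inductive step I extract the degree-$(d-1)$ homogeneous part of $\pd Z_i/\pd v^j=[Z_i,Z_j]+\pd Z_j/\pd v^i-V_{ij}$ for an index $j>i$. On the right-hand side the coefficients of $(V_{ij})_{(d-1)}$ lie in $\mathfrak{h}_V$ by definition; the coefficients of $\pd(Z_j)_{(d)}/\pd v^i$ lie in $\mathfrak{h}_V$ since $Z_j$ with $j>i$ is already under control by the outer induction hypothesis; and the degree-$(d-1)$ part of $[Z_i,Z_j]$ is a sum of brackets $[(Z_i)_{(a)},(Z_j)_{(b)}]$ with $a+b=d-1$, where each $(Z_i)_{(a)}$ has $a<d$ (inner hypothesis) and each $(Z_j)_{(b)}$ is controlled by the outer hypothesis, so these brackets again lie in $\mathfrak{h}_V$. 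Hence $\pd(Z_i)_{(d)}/\pd v^j$ has all coefficients in $\mathfrak{h}_V$ for every $j>i$, and by the two features of the previous paragraph every coefficient of $(Z_i)_{(d)}$ lies in $\mathfrak{h}_V$.

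The main obstacle to watch is the apparent circularity in the relation $\pd Z_i/\pd v^j=[Z_i,Z_j]+\dots$, where the unknown series $Z_i$ reappears inside the bracket on the right. What makes the argument close is that passing to homogeneous components strictly lowers the degree inside the bracket, so the degree induction never uses the degree-$d$ part of $Z_i$ to compute itself. The remaining coupling between different indices at the same degree $d$, entering through $\pd Z_j/\pd v^i$, is broken by the descending induction on $i$, which is legitimate precisely because the vanishing hypothesis forces each $Z_i$ to be reconstructed by differentiating only with respect to strictly later variables $v^j$, $j>i$. Once all coefficients of all the $Z_i$ are shown to lie in $\mathfrak{h}_V$, the inclusion $\mathfrak{h}_Z\subseteq\mathfrak{h}_V$ follows and the two subalgebras coincide.
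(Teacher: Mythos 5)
Your proof is correct and is a careful implementation of exactly what the paper intends: the paper dismisses Lemma~\ref{zi} with the remark that it ``can be proved straightforwardly by induction on the degrees of the coefficients,'' and your double induction (descending on the index $i$, with an inner induction on degree) together with the observation that the triangular vanishing hypothesis forces every monomial of $Z_i$ to contain some $v^j$ with $j>i$ is precisely the bookkeeping that makes that induction close. No gaps.
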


\begin{lemma}
\label{zpsi}
Let $Z_1,\dots,Z_{n-1}$ and $\Psi_1,\dots,\Psi_m$ 
be formal power series in some variables $v^1,\dots,v^{n-1}$ 
with coefficients in a Lie algebra $\mg$. 

Suppose that~\er{zivi1} holds and 
\begin{equation} 
\label{psivi}
\frac{\pd \Psi_{l}}{\pd v^h}=  [\Psi_{l},Z_h]+\sum_{i=1}^m f^i_l(v^1,\dots,v^{n-1})\Psi_i 
\qquad\forall\,l,h,
\end{equation}
for some power series $f^i_l(v^1,\dots,v^{n-1})$ with coefficients in $\fik$. 
Consider the coefficients of the power series $\Psi_l$ 
\beq
\notag 
\Psi_l=\sum_{i_1,\dots,i_{n-1}\ge 0}\big(v^1\big)^{i_1}\dots\big(v^{n-1}\big)^{i_{n-1}}\psi^l_{i_1,\dots,i_{n-1}},
\ \quad \psi^l_{i_1,\dots,i_{n-1}}\in\mg,\quad l=1,\dots,n-1.
\ee
Then any coefficient $\psi^l_{i_1,\dots,i_{n-1}}$ belongs to the vector subspace spanned 
by the zero degree coefficients $\psi^1_{0,\dots,0},\dots,\psi^{n-1}_{0,\dots,0}$.
\end{lemma}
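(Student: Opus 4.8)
The plan is to prove the statement by induction on the total degree $N=i_1+\dots+i_{n-1}$ of the multi-index, establishing that every coefficient $\psi^l_{i_1,\dots,i_{n-1}}$ with $i_1+\dots+i_{n-1}=N$ lies in the subspace $W\subset\mg$ spanned by the constant terms $\psi^1_{0,\dots,0},\dots,\psi^{n-1}_{0,\dots,0}$. For $N=0$ the degree-zero coefficients are exactly the spanning vectors of $W$, so there is nothing to prove. In the inductive step I assume all coefficients of all the series $\Psi_l$ of degree $\le N$ belong to $W$, and I take a coefficient $\psi^l_{i_1,\dots,i_{n-1}}$ with $i_1+\dots+i_{n-1}=N+1$.

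The idea is to recover this coefficient from equation~\er{psivi} differentiated in a carefully chosen direction. First I would let $h$ be the \emph{largest} index with $i_h\ge 1$ (it exists since $N+1\ge 1$), so that $i_{h+1}=\dots=i_{n-1}=0$, and then compare in~\er{psivi}, for this particular $h$, the coefficients of the monomial with exponent $\mathbf{e}=(i_1,\dots,i_{h-1},i_h-1,0,\dots,0)$, whose total degree is $N$. Writing $Z_h=\sum_{\mathbf{k}}v^{\mathbf{k}}z^h_{\mathbf{k}}$, the left-hand side $\pd\Psi_l/\pd v^h$ contributes $i_h\,\psi^l_{i_1,\dots,i_{n-1}}$, while the term $\sum_i f^i_l\Psi_i$ on the right contributes a $\fik$-linear combination of coefficients $\psi^i_{\mathbf{q}}$ with $|\mathbf{q}|\le N$. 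By the inductive hypothesis each such $\psi^i_{\mathbf{q}}$ lies in $W$, and since $W$ is a $\fik$-subspace, this whole contribution lies in $W$.

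The hard part will be the commutator term $[\Psi_l,Z_h]$, whose coefficient at $\mathbf{e}$ is $\sum_{\mathbf{j}+\mathbf{k}=\mathbf{e}}[\psi^l_{\mathbf{j}},z^h_{\mathbf{k}}]$: a bracket of an element of $W$ with an arbitrary element of $\mg$ need not lie in $W$, so this is the step where the induction could break down. The resolution is precisely the maximality of the choice of $h$ combined with hypothesis~\er{zivi1}. Because $\mathbf{e}$ has vanishing entries in positions $h+1,\dots,n-1$, every $\mathbf{k}$ occurring in the sum satisfies $\mathbf{k}\le\mathbf{e}$ and hence $k_{h+1}=\dots=k_{n-1}=0$; thus each $z^h_{\mathbf{k}}$ is the coefficient in $Z_h$ of a monomial depending only on $v^1,\dots,v^h$. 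But~\er{zivi1} says that $Z_h$ vanishes identically after the substitution $v^{h+1}=\dots=v^{n-1}=0$ (and $Z_{n-1}=0$ outright), which is exactly the assertion that all such coefficients $z^h_{\mathbf{k}}$ are zero. Hence the commutator contributes nothing to the coefficient at $\mathbf{e}$.

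Consequently $i_h\,\psi^l_{i_1,\dots,i_{n-1}}$ equals the $W$-valued quantity coming from $\sum_i f^i_l\Psi_i$; since $\fik$ has characteristic zero the integer $i_h\ge 1$ is invertible, so $\psi^l_{i_1,\dots,i_{n-1}}\in W$, which completes the induction and the proof. I expect the commutator term to be the only genuine obstacle, all remaining work being routine bookkeeping with power-series coefficients; the single point demanding care is that the differentiation direction must be the largest active index, for it is only then that~\er{zivi1} forces the relevant coefficients of $Z_h$ to vanish.
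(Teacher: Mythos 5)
Your proof is correct, and it is the same argument the paper has in mind: the paper only remarks that the lemma ``can be proved straightforwardly by induction on the degrees of the coefficients,'' and your induction on the total degree — with the key observation that differentiating in the direction of the \emph{largest} active index $h$ makes the commutator term $[\Psi_l,Z_h]$ drop out because~\er{zivi1} kills exactly those coefficients of $Z_h$ indexed by monomials in $v^1,\dots,v^h$ alone — is precisely the intended mechanism. The remaining steps (the $f^i_l\Psi_i$ term landing in $W$ by the inductive hypothesis, and division by $i_h\ge 1$ in characteristic zero) are handled correctly.
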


Recall that $Y_{ij}$ from formula~\er{eq.X.polynomial.second.order} 
are power series in the variables $x,\,t,\,u^1,\dots,u^{n-1}$ with coefficients in the Lie algebra $\fd^1(\CE,a)$. 
Let $\mg$ be the Lie algebra of formal power series in the variables $x,\,t$ with coefficients in $\fd^1(\CE,a)$. 
Then $Y_{ij}$ can be regarded as a power series in the variables $u^1,\dots,u^{n-1}$ with coefficients in $\mg$ 
\beq
\lb{yijyxt}
Y_{ij}=\sum_{i_1,\dots,i_{n-1}\ge 0}\big(u^1\big)^{i_1}\dots\big(u^{n-1}\big)^{i_{n-1}}y^{ij}_{i_1,\dots,i_{n-1}}(x,t), 
\ \quad y^{ij}_{i_1,\dots,i_{n-1}}(x,t)\in\mg.
\ee

Note that equations~\er{system.defining.Yij} are of the type~\er{psivi} for $v^h=u^h$ and $Z_h=Y_h$. 
Therefore, by Lemma~\ref{zpsi}, we obtain the following result. 
\begin{lemma}
\label{yyab}
For any $i,j,i_1,\dots,i_{n-1}$, 
the power series $y^{ij}_{i_1,\dots,i_{n-1}}(x,t)$ belongs to the vector subspace spanned 
by the power series $y^{{\ai}{\bi}}_{0,\dots,0}(x,t),\ {\ai},{\bi}=1,\dots,n-1$.
\end{lemma}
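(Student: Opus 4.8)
The plan is to obtain Lemma~\ref{yyab} as a direct application of Lemma~\ref{zpsi}. First I would fix the dictionary between the two statements: take $v^h=u^h$ for $h=1,\dots,n-1$, let $\mg$ be the Lie algebra of formal power series in $x,t$ with coefficients in $\fd^1(\CE,a)$ as in~\er{yijyxt}, set $Z_h=Y_h$, and let the family $\{\Psi_l\}$ be the collection $\{Y_{ij}\}$ indexed by the pairs $(i,j)$ with $i,j=1,\dots,n-1$. Under this identification the coefficients $y^{ij}_{i_1,\dots,i_{n-1}}(x,t)$ from~\er{yijyxt} are exactly the coefficients $\psi^l_{i_1,\dots,i_{n-1}}$ of Lemma~\ref{zpsi}, and the zero-degree coefficients $y^{{\ai}{\bi}}_{0,\dots,0}(x,t)$ correspond to the $\psi^{l}_{0,\dots,0}$; note that here the index set is the set of pairs rather than $\{1,\dots,n-1\}$, but the conclusion of Lemma~\ref{zpsi} extends verbatim to any finite index set.

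Next I would check the two hypotheses of Lemma~\ref{zpsi}. Hypothesis~\er{zivi1} for $Z_h=Y_h$ is nothing but the condition~\er{condyi} already recorded. For hypothesis~\er{psivi} I would rewrite the system~\er{system.defining.Yij} as
\[
Y_{ij,h}=[Y_{ij},Y_h]
-2\Delta^{-1}\Big(2u^hY_{ij}+u^iY_{hj}+u^jY_{hi}
-\delta^i_h\sum_m u^mY_{mj}-\delta^j_h\sum_m u^mY_{mi}\Big),
\]
and observe that the commutator term is precisely $[\Psi_l,Z_h]$, while every summand of the remaining expression is a scalar power series in $u^1,\dots,u^{n-1}$ times one of the $Y_{pq}$. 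Thus the non-commutator part has the form $\sum f^i_l\Psi_i$ required by~\er{psivi}.

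The only point needing care — and the place I expect the single genuine obstacle — is verifying that the scalar factors, in particular $\Delta^{-1}=(1+\langle u,u\rangle)^{-1}$, are honest formal power series in $u^1,\dots,u^{n-1}$ with coefficients in $\fik$. This holds because $\Delta=1+\langle u,u\rangle$ has invertible constant term $1$, so $\Delta^{-1}$ admits a power-series expansion; multiplying it by the monomials $u^h,u^i,u^j,u^m$ keeps everything inside the ring $\fik[[u^1,\dots,u^{n-1}]]$ and introduces no dependence on $x,t$. Hence each coefficient $f^i_l$ of $Y_{pq}$ is a legitimate $\fik$-valued power series in the $u$'s, so~\er{psivi} holds with $v^h=u^h$, $Z_h=Y_h$. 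Applying Lemma~\ref{zpsi} then yields that every coefficient $\psi^l_{i_1,\dots,i_{n-1}}=y^{ij}_{i_1,\dots,i_{n-1}}(x,t)$ lies in the span of the zero-degree coefficients $y^{{\ai}{\bi}}_{0,\dots,0}(x,t)$, $\ai,\bi=1,\dots,n-1$, which is exactly the assertion of Lemma~\ref{yyab}.
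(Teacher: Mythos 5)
Your proposal is correct and is essentially the paper's own proof: the paper also obtains Lemma~\ref{yyab} by applying Lemma~\ref{zpsi} with $v^h=u^h$, $Z_h=Y_h$, and the $Y_{ij}$ playing the role of the $\Psi_l$, noting that \er{system.defining.Yij} has the form \er{psivi}. Your additional checks (that \er{condyi} gives \er{zivi1}, and that $\Delta^{-1}$ is a legitimate formal power series because $\Delta$ has constant term $1$) are exactly the details the paper leaves implicit.
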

From~\er{yijx},~\er{condy0},~\er{yijyxt} one gets 
\begin{multline}
\label{yijx0}
\frac{\pd}{\pd x}\Big(y^{{\ai}{\bi}}_{0,\dots,0}(x,t)\Big)=
Y_{{\ai}{\bi},x}\,\Big|_{u^{1}=u^{2}=\dots=u^{n-1}=0} =\\
= \Big[Y_{{\ai}{\bi}}\,\Big|_{u^{1}=u^{2}=\dots=u^{n-1}=0},\,
Y\,\Big|_{u^{1}=u^{2}=\dots=u^{n-1}=0}\Big]=0\qquad\forall\,{\ai},{\bi}.
\end{multline}
Combining~\er{yijx0} with Lemma~\ref{yyab}, we obtain 
\beq
\lb{yabx0}
Y_{{\ai}{\bi},x}=0\qquad\quad\forall\,\ai,\bi
\ee 
and, by~\er{yijx}, 
\beq
\label{yijy0}
[Y_{ij},Y]=0\qquad\quad\forall\,i,j.
\ee

Before continuing the analysis of the obtained equations, 
we need to consider some special cases in the next subsections. 

\subsection{Some special cases}

Suppose that $A$ is of the form  
\beq
\label{atyi}
A=\sum_i u^i_1{\tilde{Y}}_i + \tilde{Y},
\ee
where ${\tilde{Y}}_i,\,\tilde{Y}$ 
are power series in the variables $x,\,t,\,u^1,\dots,u^{n-1}$ 
with coefficients in some Lie algebra. 

Substituting~\er{atyi} in~\eqref{eq.F1} we obtain
\begin{equation}\label{eq.F1.new}
F^1=\sum_i u^i_{2}H_i+F^2,
\end{equation}
where
\begin{gather}
\notag
H_i=\sum_k u^k_1{\tilde{T}}_{ik}
+{\tilde{Y}}_{,i}+[{\tilde{Y}}_i,\tilde{Y}]-{\tilde{Y}}_{i,x}+
\sum_j G^j_{u^i_{2}}{\tilde{Y}}_j,\\
\label{tildet}
{\tilde{T}}_{ik}=[{\tilde{Y}}_i,{\tilde{Y}}_k] - {\tilde{Y}}_{i,k} + {\tilde{Y}}_{k,i},\\
\notag
{\tilde{Y}}_{i,x}=\frac{\pd {\tilde{Y}}_{i}}{\partial x},\qquad
{\tilde{Y}}_{i,k}=\frac{\pd {\tilde{Y}}_{i}}{\partial u^k},\qquad
{\tilde{Y}}_{,i}=\frac{\pd \tilde{Y}}{\partial u^i}.
\end{gather}

Differentiating~\eqref{eqq.covering} with respect to $u^i_{2}$, one gets
\begin{multline}
\label{f2uix}
-\frac{\pd F^2}{\pd{u^i_1}}=
H_{i,x} +\sum_j u^j_{1}H_{i,j}-
\sum_{j,k}G^j_{u^i_{2}}u^k_1{\tilde{Y}}_{k,j}-
\sum_j G^j_{u^i_{2}}{\tilde{Y}}_{,j}-\sum_{j,k}u^j_1G^k_{u^i_{2}u^j}{\tilde{Y}}_k
-\sum_k G^k_{u^i_1}{\tilde{Y}}_k+\\
+\sum_{j,k}G^k_{u^j_{2}u^i_1}u^j_{2}{\tilde{Y}}_k+
\sum_k u^k_1[{\tilde{Y}}_k,H_i] + [\tilde{Y},H_i].
\end{multline}
Denote the right-hand side of~\er{f2uix} by~$\Omega_i$. 
Then equation~\er{f2uix} reads
\beq 
\label{f2omega}
-\frac{\pd F^2}{\pd{u^i_1}}=\Omega_i\qquad\forall\,i.
\ee
Differentiating~\eqref{f2omega} with respect to~$u^h_1,\,u^k_1$, we obtain
\begin{gather}
\label{omegaih}
\frac{\pd}{\pd u^h_1}\big(\Omega_i\big)=
\frac{\pd}{\pd u^i_1}\big(\Omega_h\big),\\
\label{omegaihk}
\frac{\pd^2}{\pd u^h_1\pd u^k_1}\big(\Omega_i\big)=
\frac{\pd^2}{\pd u^i_1\pd u^k_1}\big(\Omega_h\big).
\end{gather}
Using formulas~\er{eq.formule.Gpxx},~\er{guxjh}, 
it is straightforward to show that equations~\er{omegaihk},~\er{omegaih} reduce to 
\begin{gather}
\label{tihk}
{\tilde{T}}_{ih,k} = 2\Delta^{-1}\Big(2u^k{\tilde{T}}_{hi} + u^h{\tilde{T}}_{ki} + u^i{\tilde{T}}_{hk} + \delta^i_k\sum_m u^m{\tilde{T}}_{mh} + 
\delta^h_k\sum_m u^m{\tilde{T}}_{im}\Big) + [{\tilde{T}}_{ih},{\tilde{Y}}_k],\\
\label{tihx}
{\tilde{T}}_{ih,x} = [{\tilde{T}}_{ih},\tilde{Y}],
\end{gather}
where
\beq
\notag
{\tilde{T}}_{ih,k} = \frac{\pd {\tilde{T}}_{ih}}{\pd u^k},\qquad 
{\tilde{T}}_{ih,x} = \frac{\pd {\tilde{T}}_{ih}}{\pd x}.
\ee
Set $P_i={\tilde{Y}}_i+\dfrac{\partial}{\partial u^i}$. Then 
\begin{gather}
\notag
{\tilde{T}}_{ih}=[{\tilde{Y}}_i,{\tilde{Y}}_h]-{\tilde{Y}}_{i,h} + {\tilde{Y}}_{h,i}
=[P_i,P_h],\\
\label{tihqq}
\begin{split}
[{\tilde{T}}_{ih},{\tilde{T}}_{km}]=[{\tilde{T}}_{ih},[P_k,P_m]]
&=[[{\tilde{T}}_{ih},P_k],P_m]+[P_k,[{\tilde{T}}_{ih},P_m]]=\\
&=[[{\tilde{T}}_{ih},P_k],P_m]-[[{\tilde{T}}_{ih},P_m],P_k],
\end{split}
\end{gather}
and equation~\er{tihk} can be written as
\begin{equation}\label{tihqk}
[{\tilde{T}}_{ih},P_k] = 2\Delta^{-1}\Big(2u^k{\tilde{T}}_{ih} + u^h{\tilde{T}}_{ik} 
+ u^i{\tilde{T}}_{kh} + \delta^i_k\sum_m u^m{\tilde{T}}_{hm} 
+ \delta^h_k\sum_m u^m{\tilde{T}}_{mi}\Big).
\end{equation}
Using~\er{tihqq} and~\er{tihqk}, one obtains 
\begin{equation}
\label{eq.comm.Tih.Tkm}
[{\tilde{T}}_{ih},{\tilde{T}}_{km}]=4\Delta^{-2}(\delta^h_m{\tilde{T}}_{ki}+\delta^h_k{\tilde{T}}_{im}+\delta^i_m{\tilde{T}}_{hk}+\delta^i_k{\tilde{T}}_{mh}).
\end{equation}

\subsection{A zero-curvature representation}
\label{secszcr}

In order to analyze the structure of $\fd^1(\CE,a)$, 
we need to construct a ZCR with $A$ of the form 
\beq
\label{ahaty}
A=\sum_i u^i_1{\hat{Y}}_i,
\ee
where ${\hat{Y}}_i$ are power series in the variables $u^1,\dots,u^{n-1}$ with coefficients in some Lie algebra. 
In particular, we assume that ${\hat{Y}}_i$ do not depend on $x$, $t$. 
That is,
\beq
\label{yixt0}
\frac{\pd}{\pd x}\big({\hat{Y}}_i\big)=\frac{\pd}{\pd t}\big({\hat{Y}}_i\big)=0.
\ee
Also, similarly to~\er{condyi}, we assume 
\beq
\label{condhatyi}
\forall\,i=1,\dots,n-2,\qquad 
{\hat{Y}}_i\,\Big|_{u^{i+1}=u^{i+2}=\dots=u^{n-1}=0}=0,\qquad {\hat{Y}}_{n-1}=0.
\ee

By~\er{eq.T.general},~\er{eq.F1},~\er{ahaty}, one has 
\beq
\label{byi}
B=\sum_i u^i_{3} {\hat{Y}}_i + 
\sum_{i,j}u^i_{2}\left(u^j_1T_{ij}   + G^j_{u^i_{2}}{\hat{Y}}_j \right)+F^2.
\ee
Using~\er{f2uix} and~\er{yixt0}, it is straightforward to show that $F^2$ is of the form 
\begin{gather}
\label{f2mmm}
F^2 =\sum_{i,j,h}u^i_1u^j_1u^h_1 M_{ijh}+\sum_i u^i_1 M_{i} + M,\\
\notag
\begin{split}
M_{ihk}=2\Delta^{-2} \Big((&4(u^ku^i{\hat{Y}}_h+u^hu^k{\hat{Y}}_i+u^hu^i{\hat{Y}}_k)
-\langle u, u\rangle(\delta^i_k{\hat{Y}}_h+\delta^i_h{\hat{Y}}_k+\delta^h_k{\hat{Y}}_i)\\
&- 2\sum_m {u^m{\hat{Y}}_m}(\delta^h_ku^i+\delta^i_ku^h+\delta^i_hu^k)\Big)
-\frac23\sum_m\Delta^{-1}u^m(\delta^i_k{\hat{T}}_{hm}+\delta^i_h{\hat{T}}_{km}
+\delta^h_k{\hat{T}}_{im}),
\end{split}\\
\notag
M_i = \frac{3}{2}\left(r_n + 4\Delta^{-1}\sum_k(r_k-r_n){u^k}^2\right){\hat{Y}}_i,\\
\label{hattdef}
{\hat{T}}_{kh}=[{\hat{Y}}_k,{\hat{Y}}_h] - {\hat{Y}}_{k,h} + {\hat{Y}}_{h,k},
\end{gather}
where $M$ is a power series in the variables $x,t,u^1,\dots,u^{n-1}$.

Equations~\er{tihk},~\er{eq.comm.Tih.Tkm} remain valid if we replace~\er{atyi} 
by~\er{ahaty}, so one has
\begin{gather}
\label{tihkhat}
{\hat{T}}_{ih,k} = 2\Delta^{-1}\Big(2u^k{\hat{T}}_{hi} + u^h{\hat{T}}_{ki} + u^i{\hat{T}}_{hk} + \delta^i_k\sum_m u^m{\hat{T}}_{mh} + \delta^h_k\sum_m 
u^m{\hat{T}}_{im}\Big) + [{\hat{T}}_{ih},{\hat{Y}}_k],\\
\label{hattt}
[{\hat{T}}_{ih},{\hat{T}}_{km}]=4\Delta^{-2}(\delta^h_m{\hat{T}}_{ki}+\delta^h_k{\hat{T}}_{im}+\delta^i_m{\hat{T}}_{hk}+\delta^i_k{\hat{T}}_{mh}).
\end{gather}
Combining~\er{hattdef} with~\er{yixt0}, we obtain also 
\beq
\notag
\frac{\pd}{\pd x}\big({\hat{T}}_{kh}\big)=\frac{\pd}{\pd t}\big({\hat{T}}_{kh}\big)=0.
\ee
Thus ${\hat{T}}_{kh}$ is a power series in the variables $u^1,\dots,u^{n-1}$. 

Let $E_{i,j}\in\mathfrak{gl}_{n-1}(\fik)$ be the matrix with
$(i,j)$-th entry equal to 1 and all other entries equal to zero.
Then the matrices $C_{kh}=4(E_{k,h}-E_{h,k})$ 
span the Lie algebra $\mathfrak{so}_{n-1}\subset\mathfrak{gl}_{n-1}(\fik)$.
\begin{theorem}
\label{zcrson}
There is a ZCR with values in $\mathfrak{so}_{n-1}$ such that 
\begin{itemize}
\item $A$ is of the form~\er{ahaty}, 
where the power series~${\hat{Y}}_i(u^1,\dots,u^{n-1})$ satisfy~\er{condhatyi}, 
\item $B$ is defined by~\er{byi},~\er{f2mmm} with~$M=0$, 
\item for the power series~\er{hattdef}, one has 
\beq
\lb{hattekh}
{\hat{T}}_{kh}\Big|_{u^1=\dots=u^{n-1}=0}=4(E_{k,h}-E_{h,k}).
\ee
\end{itemize}
\end{theorem}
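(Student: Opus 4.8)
The plan is to exhibit the power series $\hat{Y}_i(u^1,\dots,u^{n-1})$ explicitly and then invoke the reduction already carried out in Sections~\ref{prcomp} and~\ref{secszcr}. Indeed, for $A$ of the form~\er{ahaty} with $\hat{Y}_i$ independent of $x,t$, and with $B$ \emph{defined} by~\er{byi},~\er{f2mmm} and $M=0$, the differentiations performed in Section~\ref{secszcr} show that the coefficients of the monomials in $u^i_1,u^i_2,u^i_3$ in $D_x(B)-D_t(A)+[A,B]$ vanish precisely when $\hat{Y}_i$ satisfies the structure equations~\er{tihkhat} and~\er{hattt}; the single remaining component, the one that would otherwise determine $M$, involves no $u^i_1$ (since every term of $A$ carries a factor $u^i_1$) and no $x,t$-derivative of $\hat{Y}_i$, so it is solved by $M=0$. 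Thus it suffices to produce an $\mathfrak{so}_{n-1}$-valued $\hat{Y}_i(u)$ obeying the normalization~\er{condhatyi}, the structure equations~\er{tihkhat},~\er{hattt}, and the initial condition~\er{hattekh}.

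For the construction I would first write down the candidate coming from the constant-curvature geometry underlying the sphere $\langle S,S\rangle=1$. With $\Delta=1+\langle u,u\rangle$, set $\hat{Y}^0_i=\frac{2}{\Delta}\sum_p u^p(E_{p,i}-E_{i,p})$, the Levi-Civita connection of the round metric $\tfrac{4}{\Delta^2}\sum_i(du^i)^2$ in an orthonormal frame. A direct substitution shows that its curvature is $\hat{T}^0_{kh}=\tfrac{4}{\Delta^2}(E_{k,h}-E_{h,k})$ and that both sides of~\er{tihkhat} collapse to $-\tfrac{16\,u^k}{\Delta^3}(E_{i,h}-E_{h,i})$; hence~\er{tihkhat} holds, and~\er{hattt} follows from it exactly as~\er{eq.comm.Tih.Tkm} was derived from~\er{tihk} (via $P_i=\hat{Y}^0_i+\partial/\partial u^i$). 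Evaluating at $u=0$ gives $\hat{T}^0_{kh}|_{0}=4(E_{k,h}-E_{h,k})$, i.e.~\er{hattekh}. This $\hat{Y}^0_i$, however, does not yet satisfy the triangular normalization~\er{condhatyi}.

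To enforce~\er{condhatyi} I would apply a gauge transformation $g=g(u^1,\dots,u^{n-1})$ with values in $\mathrm{SO}(n-1)$ and with $g|_{u=0}=\mathrm{Id}$, acting on the $u$-variables only, and set $\hat{Y}_i=g\,\hat{Y}^0_i\,g^{-1}-(\partial g/\partial u^i)\,g^{-1}$. Such a $g$ is built by successively integrating ordinary differential equations in $u^{n-1}$, then in $u^{n-2}$, and so on, exactly as in the proof of Theorem~\ref{thzcro1}; the innermost initial condition produces $g|_{u=0}=\mathrm{Id}$. Since the curvature transforms by conjugation, $\hat{T}_{kh}=g\,\hat{T}^0_{kh}\,g^{-1}$, the relations~\er{tihkhat},~\er{hattt} are gauge-covariant and therefore survive, while $g|_{u=0}=\mathrm{Id}$ keeps $\hat{T}_{kh}|_{0}=4(E_{k,h}-E_{h,k})$ intact; moreover $\hat{Y}_i$ stays independent of $x,t$ because $g$ depends on $u$ alone. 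Defining $A$, $B$ by~\er{ahaty},~\er{byi},~\er{f2mmm} with $M=0$ and this normalized $\hat{Y}_i$, the first paragraph yields the asserted $\mathfrak{so}_{n-1}$-valued ZCR.

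The hard part, I expect, is the bookkeeping in the reduction of the first paragraph: one must confirm that, after setting $M=0$, the lowest-order component of the zero-curvature equation is an identity, which hinges on $A$ having no $u^i_1$-independent term and on the $x,t$-independence of $\hat{Y}_i$. A secondary subtlety is to check that the normalizing $g$ can be chosen on the $u$-space alone without reintroducing $x,t$-dependence or disturbing the curvature at the origin; both are secured by taking $g$ to depend only on $u$ with $g|_{u=0}=\mathrm{Id}$. The explicit verification of~\er{tihkhat} for $\hat{Y}^0_i$ is only a finite computation, but it is precisely the place where the constant sectional curvature of the sphere makes the construction work.
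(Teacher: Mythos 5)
Your proposal is correct and takes essentially the same route as the paper: your $\hat{Y}^0_i=\tfrac{2}{\Delta}\sum_p u^p(E_{p,i}-E_{i,p})$ is precisely the matrix $N^i$ used in the paper's proof, and the subsequent $u$-dependent gauge transformation equal to the identity at $u^1=\dots=u^{n-1}=0$ (to enforce \er{condhatyi} while preserving \er{hattekh} and the $x,t$-independence of $\hat{Y}_i$) is exactly the paper's normalization step. Your geometric reading of $N^i$ as the Levi-Civita connection of the round metric, with \er{tihkhat} and \er{hattt} expressing its constant curvature, is a pleasant conceptual gloss on what the paper simply calls a straightforward verification of the zero-curvature condition.
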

\begin{proof}
For $j=1,\dots,n-1$, let $N^j$ be the $\mathfrak{so}_{n-1}$-valued 
matrix-function whose $(h,i)$-entry is equal to 
$$
-2\delta^h_j\Delta^{-1}u^i + 2\delta^i_j\Delta^{-1}u^h.
$$
Let $A$ be given by~\er{ahaty} with~${\hat{Y}}_i=N^i$ 
and $B$ be defined by~\er{byi},~\er{f2mmm} with~$M=0$.
It is straightforward to check that such $A,\,B$ 
satisfy the zero-curvature condition~\er{gc}. 
In order to achieve condition~\er{condhatyi}, 
we can apply a gauge transformation depending on $u^1,\dots,u^{n-1}$, 
similarly to Theorem~\ref{evcov}.

According to Theorem~\ref{evcov}, the required gauge transformation 
is defined on a neighborhood of the point~\er{pointevfd1} and is equal 
to the identity transformation at this point.
Since we assume \er{x0t0=0}, \er{aik=0},
in our case the gauge transformation may depend nontrivially on 
$u^1,\dots,u^{n-1}$ 
and is equal to the identity transformation at the point $u^1=\dots=u^{n-1}=0$.

Computing ${\hat{T}}_{kh}$ by formula~\er{hattdef} for ${\hat{Y}}_i=N^i$, 
one obtains that~\er{hattekh} is valid.
It is easy to check that~\er{hattekh} remains valid 
after applying the gauge transformation, 
because the gauge transformation is 
equal to the identity transformation at the point $u^1=\dots=u^{n-1}=0$.
\end{proof}

\begin{remark}
\lb{zsoan}
Note that the ZCR described in Theorem~\ref{zcrson} 
is of order~$\le 1$ and is $a$-normal 
for the point $a\in\CE$ satisfying \er{pointevfd1}, \er{x0t0=0}, \er{aik=0}.
\end{remark}

\subsection{The ideal generated by $Y_{ij}$}
\lb{secyij}

Continue the analysis started in Section~\ref{prcomp}.
Recall that $A=A(x,t,u^j_0,u^j_1)$ and $B=B(x,t,u^j_0,u^j_1,u^j_2,u^j_3)$ 
are power series with coefficients in the algebra $\fd^1(\CE,a)$
and satisfy \er{gc}, \er{gd=0f1}, \er{gaukakf1}, \er{gbxx0f1}.
Recall that we have obtained formula~\er{eq.X.polynomial.second.order}.
\begin{lemma}
\lb{fd1gen}
The algebra $\fd^1(\CE,a)$
is generated by the coefficients of the power series 
\beq
\lb{yyyiji}
Y_{ij}\,\Big|_{t=0},\qquad Y_i\,\Big|_{t=0},\qquad Y\,\Big|_{t=0},\qquad
i,j=1,\dots,n-1.
\ee
\end{lemma}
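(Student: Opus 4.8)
The plan is to deduce the statement directly from Theorem~\ref{lemgenfdq}, which asserts that $\fd^1(\CE,a)$ is generated by the elements $\ga^{l_1,0}_\al$ for $l_1\in\zp$ and $\al\in\mat_1$. Recall from formula~\er{gasumxt} (with $\oc=1$ and the normalizations $x_a=t_a=0$, $a^i_k=0$ assumed in this section) that $\ga^{l_1,0}_\al$ is precisely the coefficient of $x^{l_1}\cdot\ua^\al$ in the power series $A=\ga$, where $\ua^\al=\prod_i (u^i_0)^{\al_{i,0}}(u^i_1)^{\al_{i,1}}$ and all terms with positive $t$-power have been discarded. Equivalently, these are exactly the coefficients of $A\,\big|_{t=0}$, expanded as a power series in $x$ and the variables $u^i_0$, $u^i_1$. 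Thus, by Theorem~\ref{lemgenfdq}, the coefficients of the single series $A\,\big|_{t=0}$ already generate $\fd^1(\CE,a)$, and the task reduces to identifying these coefficients with those of $Y_{ij}\,\big|_{t=0}$, $Y_i\,\big|_{t=0}$, $Y\,\big|_{t=0}$.

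To carry this out I would invoke the structural formula~\er{eq.X.polynomial.second.order}, which states that $A$ is a polynomial of degree at most $2$ in $u^1_1,\dots,u^{n-1}_1$, with coefficients $Y_{ij}$, $Y_i$, $Y$ that are power series in $x,t,u^1,\dots,u^{n-1}$ only. Setting $t=0$ and reading off the coefficient of each monomial in $u^1_1,\dots,u^{n-1}_1$, one sees that every generator $\ga^{l_1,0}_\al$ is a coefficient of $Y_{ij}\,\big|_{t=0}$ when $\al$ has total degree $2$ in the $u^i_1$, a coefficient of $Y_i\,\big|_{t=0}$ when the degree is $1$, a coefficient of $Y\,\big|_{t=0}$ when the degree is $0$, and $\ga^{l_1,0}_\al=0$ when $\al$ has degree greater than $2$ in the $u^i_1$. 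Conversely, each coefficient of $Y_{ij}\,\big|_{t=0}$, $Y_i\,\big|_{t=0}$, $Y\,\big|_{t=0}$ (as a power series in $x,u^1,\dots,u^{n-1}$) occurs among the $\ga^{l_1,0}_\al$. Hence the two families of elements generate the same subalgebra, namely all of $\fd^1(\CE,a)$.

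There is no analytic difficulty here: the argument is purely a bookkeeping identification of the generating set supplied by Theorem~\ref{lemgenfdq} with the coefficients of $Y_{ij}$, $Y_i$, $Y$ at $t=0$. The only points requiring care are minor and combinatorial: the symmetry $Y_{ij}=Y_{ji}$ and the scalar factor relating the coefficient of $u^i_1u^j_1$ in $\tfrac12\sum_{i,j}u^i_1u^j_1Y_{ij}$ to $Y_{ij}$ itself. These factors are invertible over~$\fik$, so they do not affect which subalgebra is generated. The one conceptual step I would emphasize is that passing to $t=0$ loses no generators, which is exactly the force of the restriction $l_2=0$ in Theorem~\ref{lemgenfdq}; I would therefore cite that theorem rather than re-derive it.
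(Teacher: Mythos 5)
Your proposal is correct and follows the same route as the paper: invoke Theorem~\ref{lemgenfdq} to reduce to the coefficients of $A\,\big|_{t=0}$, then read these off from formula~\er{eq.X.polynomial.second.order} as the coefficients of $Y_{ij}\,\big|_{t=0}$, $Y_i\,\big|_{t=0}$, $Y\,\big|_{t=0}$. The paper states this in two lines; your additional remarks about the factor $\tfrac12$, the symmetry $Y_{ij}=Y_{ji}$, and the vanishing of $\ga^{l_1,0}_\al$ for $\al$ of degree greater than $2$ in the $u^i_1$ are accurate but not essential.
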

\begin{proof}
Theorem~\ref{lemgenfdq} implies that the algebra $\fd^1(\CE,a)$
is generated by the coefficients of the power series $A\,\Big|_{t=0}$.
According to formula~\er{eq.X.polynomial.second.order}, 
the set of the coefficients of $A\,\Big|_{t=0}$ consists of 
the coefficients of the power series~\er{yyyiji}. 
\end{proof}

\begin{lemma}
\lb{yijzab}
For all $i,j$,
any coefficient of the power series $Y_{ij}\,\Big|_{t=0}$ belongs 
to the vector subspace spanned by the zero degree coefficients 
\beq
\lb{zdgz}
z^{{\ai}{\bi}}=Y_{{\ai}{\bi}}\,\Big|_{x=t=u^1=\dots=u^{n-1}=0},\qquad {\ai},{\bi}=1,\dots,n-1.
\ee
\end{lemma}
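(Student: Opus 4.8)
The plan is to read off the statement from Lemma~\ref{yyab} combined with the $x$-derivative facts \er{yijx0} and \er{yabx0} already obtained in this subsection. Recall from \er{yijyxt} that each $Y_{ij}$ is written as a power series in $u^1,\dots,u^{n-1}$ whose coefficients $y^{ij}_{i_1,\dots,i_{n-1}}(x,t)$ lie in the Lie algebra $\mg$ of power series in $x,t$ with coefficients in $\fd^1(\CE,a)$. Lemma~\ref{yyab} tells us that every such coefficient is a $\fik$-linear combination of the finitely many degree-zero coefficients $y^{pq}_{0,\dots,0}(x,t)$, $p,q=1,\dots,n-1$. The point I would emphasize is that the combining scalars lie in the ground field $\fik$ (this is exactly the vector-space span furnished by Lemma~\ref{zpsi}), not in the ring of power series; this is what will make the later specialization $t=0$ legitimate.

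Next I would remove the $x$-dependence. By \er{yabx0} we have $Y_{ij,x}=0$ for all $i,j$, so each $Y_{ij}$ is in fact a power series in $t,u^1,\dots,u^{n-1}$ alone. Consequently $Y_{ij}\big|_{t=0}$ is a power series in $u^1,\dots,u^{n-1}$, and its coefficient of $(u^1)^{i_1}\cdots(u^{n-1})^{i_{n-1}}$ is precisely $y^{ij}_{i_1,\dots,i_{n-1}}$ evaluated at $t=0$. In parallel, \er{yijx0} gives $\partial_x\big(y^{pq}_{0,\dots,0}(x,t)\big)=0$, so each degree-zero coefficient is a power series in $t$ only, and its value at $t=0$ is $y^{pq}_{0,\dots,0}(0)=Y_{pq}\big|_{x=t=u^1=\dots=u^{n-1}=0}=z^{pq}$ by the definition \er{zdgz}.

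Finally I would specialize the linear relation of the first step at $t=0$. Writing $y^{ij}_{i_1,\dots,i_{n-1}}(x,t)=\sum_{p,q}\lambda^{pq}\,y^{pq}_{0,\dots,0}(t)$ with constants $\lambda^{pq}\in\fik$ (depending on $i,j$ and on the multi-index $(i_1,\dots,i_{n-1})$) and setting $t=0$, every coefficient of $Y_{ij}\big|_{t=0}$ equals $\sum_{p,q}\lambda^{pq}\,z^{pq}$, which lies in the span of the $z^{pq}$; this is the assertion. There is no substantial obstacle beyond careful bookkeeping, and the one delicate point is precisely the one highlighted above: because the reduction to degree-zero coefficients in Lemma~\ref{yyab} is over the constant field $\fik$, the relation survives the substitution $t=0$, whereas a reduction with power-series coefficients in $t$ would not.
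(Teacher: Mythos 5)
Your proof is correct and is essentially the paper's own argument: the paper disposes of this lemma in one line by citing Lemma~\ref{yyab} together with equation~\er{yijx0}, and your write-up fills in exactly those details (including the genuinely relevant observation that the span in Lemma~\ref{yyab} is over the constant field $\fik$, which is why the relation survives the substitution $t=0$).
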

\begin{proof}
The statement follows from Lemma~\ref{yyab} and equation~\er{yijx0}.
\end{proof}

\begin{lemma}
\label{lemyijp}
Let $P$ be a power series in the variables $x,\,u^l$. 
Suppose that 
\beq
\label{yijp0}
\Big[Y_{ij}\,\Big|_{t=0},\,P\Big]=0\qquad\forall\,i,j. 
\ee
Then any coefficient of the power series $Y_{ij}\,\Big|_{t=0}$ commutes 
with any coefficient of $P$. 
\end{lemma}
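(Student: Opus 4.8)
The plan is to reduce the statement, via Lemma~\ref{yijzab}, to a claim about the finitely many elements $z^{pq}$, and then to peel off monomials one $u$-degree at a time by induction.

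First I would record the two structural inputs. By~\er{yabx0} the series $Y_{ij}$ is independent of~$x$, so $Y_{ij}\,\Big|_{t=0}$ is a power series in $u^1,\dots,u^{n-1}$ alone, with coefficients in $\fd^1(\CE,a)$; write
\beq
\notag
Y_{ij}\,\Big|_{t=0}=\sum_{\mu\in\zp^{n-1}} u^\mu\, c^{ij}_\mu,\qquad
u^\mu=(u^1)^{\mu_1}\cdots(u^{n-1})^{\mu_{n-1}}.
\ee
By Lemma~\ref{yijzab} every coefficient $c^{ij}_\mu$ lies in the subspace $W=\mathrm{span}\{z^{pq}\}$ spanned by the zero-degree coefficients~\er{zdgz}, and using the $x$-independence again the constant term is $c^{ij}_0=z^{ij}$. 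I would likewise expand the given series as $P=\sum_{l,\nu} x^{l} u^\nu\, d_{l,\nu}$ with $d_{l,\nu}\in\fd^1(\CE,a)$. Since $c^{ij}_\mu\in W$ for every $\mu$, it is enough to prove that each generator $z^{pq}$ commutes with each coefficient $d_{l,\nu}$: this forces every $c^{ij}_\mu$ to commute with every $d_{l,\nu}$, which is precisely the assertion of the lemma.

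Next I would unpack the hypothesis~\er{yijp0}. Because $Y_{ij}\,\Big|_{t=0}$ carries no $x$, the coefficient of $x^{s}u^\sigma$ in $\big[Y_{ij}\,\Big|_{t=0},\,P\big]$ equals $\sum_{\mu+\nu=\sigma}[c^{ij}_\mu,d_{s,\nu}]$, so~\er{yijp0} is equivalent to the family of identities
\beq
\notag
\sum_{\mu+\nu=\sigma}\big[c^{ij}_\mu,\,d_{s,\nu}\big]=0
\qquad\text{for all } i,j,\ s\in\zp,\ \sigma\in\zp^{n-1}.
\ee
The core of the proof is then an induction on $|\sigma|=\sigma_1+\dots+\sigma_{n-1}$ showing $[z^{pq},d_{s,\nu}]=0$ for all $p,q,s$ and all $\nu$. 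For $|\sigma|=0$ the sum has the single term $[c^{ij}_0,d_{s,0}]=[z^{ij},d_{s,0}]$, which gives the base case. For the step, fix $\sigma$ with $|\sigma|=N$ and assume $[z^{pq},d_{s,\nu}]=0$ whenever $|\nu|<N$; since every $c^{ij}_\mu\in W$, this already yields $[c^{ij}_\mu,d_{s,\nu}]=0$ for all $\mu$ and all $|\nu|<N$. In the displayed identity each summand with $\mu\neq 0$ has $|\nu|=N-|\mu|<N$ and hence vanishes, leaving $[c^{ij}_0,d_{s,\sigma}]=[z^{ij},d_{s,\sigma}]=0$. As $i,j$ range over $1,\dots,n-1$ this covers all $z^{pq}$, completing the induction and hence the lemma.

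I do not anticipate a genuine obstacle: there is no analytic estimate and no new relation to establish. The one point that requires care is conceptual rather than computational, namely recognising that the finiteness provided by Lemma~\ref{yijzab}---that all coefficients of $Y_{ij}\,\Big|_{t=0}$ lie in the span of the constant terms $z^{pq}$---is exactly what upgrades the vanishing of a \emph{sum} of commutators to the vanishing of \emph{each} commutator, once the summands are separated by total $u$-degree. Invoking~\er{yabx0} to discard the $x$-dependence of $Y_{ij}$ is what keeps the bookkeeping clean, as it prevents any convolution in the $x$-variable from mixing the degrees on which the induction is run.
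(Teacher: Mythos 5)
Your proof is correct and follows essentially the same route as the paper, which proves the lemma by induction on the degree of the coefficients of $P$ using Lemma~\ref{yijzab}. Your write-up simply makes explicit the degree-separation and the reduction to the spanning elements $z^{pq}$ that the paper leaves to the reader.
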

\begin{proof}
Using Lemma~\ref{yijzab}, one can prove the statement by induction 
on the degree of coefficients of~$P$. 
\end{proof}

\begin{lemma}
\label{lemyijyy}
For all $i,j$, 
any coefficient of the power series $Y_{ij}\,\Big|_{t=0}$ commutes 
\begin{itemize}
\item with any coefficient of $Y\,\Big|_{t=0}$,
\item with any coefficient of $Y_{{\ai}{\bi}}\,\Big|_{t=0}$ 
for all ${\ai},{\bi}$.
\end{itemize}
\end{lemma}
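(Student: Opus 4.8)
The plan is to reduce both assertions to Lemma~\ref{lemyijp}, which upgrades a vanishing commutator of the \emph{whole} power series $Y_{ij}\,\Big|_{t=0}$ with a series $P$ into vanishing commutators of the \emph{individual} coefficients. The two inputs I would feed into it are the relations \er{rel.1} and \er{yijy0} already obtained in Section~\ref{prcomp}, combined with the elementary fact that restriction to $t=0$ is an algebra homomorphism on power series and hence preserves commutators: if $Y_{ij}=\sum_a t^a P_a$ and $Y=\sum_b t^b Q_b$ with $P_a,Q_b$ series in $x,u^l$, then $[Y_{ij},Y]=\sum_{a,b}t^{a+b}[P_a,Q_b]$, whose $t^0$-component is $[P_0,Q_0]=\big[Y_{ij}\,\Big|_{t=0},\,Y\,\Big|_{t=0}\big]$.

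For the first bullet I would start from \er{yijy0}, i.e.\ $[Y_{ij},Y]=0$ for all $i,j$, which is an identity of power series in $x,t,u^l$. Restricting to $t=0$ and using the homomorphism property above yields $\big[Y_{ij}\,\Big|_{t=0},\,Y\,\Big|_{t=0}\big]=0$ for all $i,j$. Since $Y\,\Big|_{t=0}$ is a power series in the variables $x,u^l$ only, I may apply Lemma~\ref{lemyijp} with $P=Y\,\Big|_{t=0}$ and conclude that every coefficient of $Y_{ij}\,\Big|_{t=0}$ commutes with every coefficient of $Y\,\Big|_{t=0}$.

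For the second bullet the argument is identical, with \er{rel.1} in place of \er{yijy0}. Fixing $\ai,\bi$, the relation $[Y_{\ai\bi},Y_{ij}]=0$ restricted to $t=0$ gives $\big[Y_{ij}\,\Big|_{t=0},\,Y_{\ai\bi}\,\Big|_{t=0}\big]=0$ for all $i,j$, and $Y_{\ai\bi}\,\Big|_{t=0}$ is again a power series in $x,u^l$ only. Applying Lemma~\ref{lemyijp} with $P=Y_{\ai\bi}\,\Big|_{t=0}$ shows that every coefficient of $Y_{ij}\,\Big|_{t=0}$ commutes with every coefficient of $Y_{\ai\bi}\,\Big|_{t=0}$; as $\ai,\bi$ are arbitrary, this is exactly the assertion.

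The argument is short because the real work is packaged into Lemma~\ref{lemyijp}, whose proof proceeds by induction on the degree of the coefficients of $P$ using the span statement of Lemma~\ref{yijzab}, and which crucially requires $P$ to depend only on $x$ and $u^l$, not on $t$. Consequently the only point demanding any care is to check this hypothesis before each invocation — precisely what the restriction to $t=0$ secures — so I do not expect a genuine obstacle here, only the bookkeeping of confirming that both $Y\,\Big|_{t=0}$ and $Y_{\ai\bi}\,\Big|_{t=0}$ are free of $t$.
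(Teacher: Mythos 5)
Your proof is correct and follows exactly the paper's own argument: substitute $t=0$ into \er{yijy0} and \er{rel.1}, then invoke Lemma~\ref{lemyijp} with $P=Y\,\Big|_{t=0}$ and $P=Y_{\ai\bi}\,\Big|_{t=0}$ respectively. The only difference is that you spell out the (routine) compatibility of restriction to $t=0$ with commutators, which the paper leaves implicit.
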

\begin{proof}
Substituting $t=0$ to~\er{yijy0} and~\er{rel.1}, 
we obtain the required statement by Lemma~\ref{lemyijp}. 
\end{proof}

Consider the ZCR with values in $\mathfrak{so}_{n-1}$ 
constructed in Theorem~\ref{zcrson}. 
Recall that, according to~\er{ahaty},~\er{hattdef}, and Theorem~\ref{zcrson},   
this ZCR determines the power series ${\hat{Y}}_i,\,{\hat{T}}_{kh}$ 
in the variables $u^1,\dots,u^{n-1}$ with coefficients in $\mathfrak{so}_{n-1}$.

According to Remark~\ref{zsoan}, 
this $\mathfrak{so}_{n-1}$-valued ZCR is of order~$\le 1$ and is $a$-normal.
Therefore, by Remark~\ref{homfdhmg},
this ZCR determines a homomorphism $\hmm\cl\fd^1(\CE,a)\to\mathfrak{so}_{n-1}$.   

For a power series $P$ with coefficients in $\fd^1(\CE,a)$, 
we can apply $\hmm$ to each coefficient of~$P$ and obtain a power series $\hmm(P)$ 
with coefficients in $\mathfrak{so}_{n-1}$. 

Recall that $T_{kh}$ is defined by~\er{eq.def.Tkh}. 
By the definition of~$\hmm$, one has 
\beq
\label{vfyi}
\hmm(Y_i)={\hat{Y}}_i,\qquad \hmm(T_{kh})={\hat{T}}_{kh}.
\ee 
Combining~\er{vfyi} with~\er{tihkhat},~\er{hattt}, we obtain 
\begin{gather}
\label{vftihk}
\hmm\big({{T}}_{ih,k}\big)=\hmm\Big( 2\Delta^{-1}\left(2u^k{{T}}_{hi} + u^h{{T}}_{ki} + u^i{{T}}_{hk} + \delta^i_k \sum_m u^m{{T}}_{mh} + \delta^h_k \sum_m u^m{{T}}_{im}\right) + [{{T}}_{ih},{{Y}}_k]\Big),\\
\label{vftihkm}
\hmm\big([{{T}}_{ih},{{T}}_{km}]\big)=
\hmm\big(4\Delta^{-2}(\delta^h_m{{T}}_{ki}+\delta^h_k{{T}}_{im}+\delta^i_m{{T}}_{hk}+\delta^i_k{{T}}_{mh})\big).
\end{gather}
Set  
\begin{gather}
\label{gamihk}
\Gamma_{ihk}={{T}}_{ih,k}-\big( 2\Delta^{-1}\left(2u^k{{T}}_{hi} + u^h{{T}}_{ki} + u^i{{T}}_{hk} + \delta^i_k \sum_m u^m{{T}}_{mh} + \delta^h_k \sum_m u^m{{T}}_{im}\right) + [{{T}}_{ih},{{Y}}_k]\big),\\
\label{gamihkm}
\Gamma_{ihkm}=[{{T}}_{ih},{{T}}_{km}]-
4\Delta^{-2}(\delta^h_m{{T}}_{ki}+\delta^h_k{{T}}_{im}+\delta^i_m{{T}}_{hk}+\delta^i_k{{T}}_{mh}).
\end{gather}
\begin{lemma}
One has 
\beq
\label{ygamma0}
[Y_{{\ai}{\bi}},\Gamma_{ihk}]=0,\qquad [Y_{{\ai}{\bi}},\Gamma_{ihkm}]=0
\qquad\quad\forall\,\ai,\bi,i,h,k,m.
\ee
\end{lemma}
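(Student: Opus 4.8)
The plan is to apply $\ad Y_{\ai\bi}=[Y_{\ai\bi},\,\cdot\,]$ directly to \er{gamihk} and \er{gamihkm} and to reduce every bracket that appears, by the Jacobi identity, to the three brackets already available in closed form: $[Y_{\ai\bi},Y_{pq}]=0$ from \er{rel.1}, the $\Delta^{-2}$-weighted expression for $[Y_{\ai\bi},T_{kh}]$ from \er{eq.comp.cond.system.defining.Yij}, and the relation \er{system.defining.Yij}. It is convenient to work in the operator form given by \er{tkhvkvh}, writing $V_k=Y_k+\pd/\pd u^k$ and $T_{kh}=[V_k,V_h]$; then \er{vhyij} reads $[Y_{\ai\bi},V_k]=2\Delta^{-1}(\dots)$, an explicit $\Delta^{-1}$-weighted combination of the $Y_{pq}$. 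Since $V_k$ acts as a derivation on scalar factors, one has $[V_k,fY_{pq}]=f\,[V_k,Y_{pq}]+f_{,k}Y_{pq}$ for any scalar power series $f$; combined with \er{rel.1}, this shows that $\ad Y_{\ai\bi}$ annihilates every scalar-coefficient combination of the $Y_{pq}$, while each $\ad V_k$ maps such a combination again into one. As a further input I record that \er{vftihk}, \er{vftihkm} together with \er{gamihk}, \er{gamihkm} give $\hmm(\Gamma_{ihk})=\hmm(\Gamma_{ihkm})=0$, so all coefficients of the two series lie in $\ker\hmm$; this is consistent with, and a useful check on, the cancellations below.

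For the second identity I would start from \er{gamihkm}, in the form $\Gamma_{ihkm}=[[V_i,V_h],[V_k,V_m]]-4\Delta^{-2}(\dots)$. Expanding $[Y_{\ai\bi},[T_{ih},T_{km}]]$ as $[[Y_{\ai\bi},T_{ih}],T_{km}]+[T_{ih},[Y_{\ai\bi},T_{km}]]$ and inserting \er{eq.comp.cond.system.defining.Yij} twice turns it into $\Delta^{-4}$ times a combination of the $Y_{pq}$ with Kronecker-delta coefficients, the scalar factors $\Delta^{-2}$ being central in the bracket and hence factoring out. Applying $\ad Y_{\ai\bi}$ to the remaining $\Delta^{-2}$-weighted $T$-term and using \er{eq.comp.cond.system.defining.Yij} once more produces a second family of $\Delta^{-4}$-weighted $Y_{pq}$-terms. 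It then remains only to match the two families, a finite verification in the products of Kronecker deltas, which gives $[Y_{\ai\bi},\Gamma_{ihkm}]=0$.

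For the first identity I would use that $[V_k,T_{ih}]=T_{ih,k}-[T_{ih},Y_k]$, so that \er{gamihk} can be rewritten as $\Gamma_{ihk}=[V_k,[V_i,V_h]]-2\Delta^{-1}(\dots)$; this packages the derivative term $T_{ih,k}$ and the term $[T_{ih},Y_k]$ into a single triple bracket of the $V$'s and removes the need to handle $\pd/\pd u^k$ separately. Applying $\ad Y_{\ai\bi}$ and expanding the triple bracket by the Jacobi identity, every factor $V$ is converted, via $[Y_{\ai\bi},V_\cdot]=2\Delta^{-1}(\dots)$ and the derivation rule above, into a scalar-coefficient combination of the $Y_{pq}$; the same holds for $\ad Y_{\ai\bi}$ of the $\Delta^{-1}$-weighted $T$-term. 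Collecting all contributions over a common power of $\Delta$ and cancelling then yields $[Y_{\ai\bi},\Gamma_{ihk}]=0$.

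The obstacle is combinatorial rather than conceptual: both computations end in a finite but sizeable collection of terms carrying products of Kronecker deltas and powers of $\Delta^{-1}$ (including, for $\Gamma_{ihk}$, the terms produced when the derivation in $V_k$ differentiates a scalar factor), and one must verify that these cancel exactly. The operator form $T_{kh}=[V_k,V_h]$ with the compact relation \er{vhyij} for $[Y_{\ai\bi},V_k]$ is what makes the repeated Jacobi rearrangements transparent and keeps the index bookkeeping under control; since the reduction expresses $[Y_{\ai\bi},\Gamma_{ihk}]$ and $[Y_{\ai\bi},\Gamma_{ihkm}]$ as explicit scalar-coefficient combinations of the $Y_{pq}$, establishing the vanishing of these scalar coefficients completes the proof.
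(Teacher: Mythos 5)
Your proposal is correct and follows essentially the same route as the paper: the paper likewise introduces $P'_k=Y_k+\partial/\partial u^k$ (your $V_k$), rewrites $T_{ih}=[P'_i,P'_h]$ and $\Gamma_{ihk}=[P'_k,T_{ih}]-2\Delta^{-1}(\dots)$, uses the relation $[Y_{ij},P'_h]=2\Delta^{-1}(\dots)$ together with \eqref{eq.comp.cond.system.defining.Yij} and \eqref{rel.1}, and leaves the final cancellation of the scalar (Kronecker-delta and $\Delta^{-1}$-weighted) coefficients as a straightforward computation. Your write-up merely makes the Jacobi-identity bookkeeping more explicit than the paper's one-line sketch.
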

\begin{proof}
Set $P'_i={{Y}}_i+\dfrac{\partial}{\partial u^i}$, then 
\begin{gather}
\label{tihpn}
{{T}}_{ih}=[{{Y}}_i,{{Y}}_h]-{{Y}}_{i,h} + {{Y}}_{h,i}=[P'_i,P'_h],\\
\label{gamihkp}
\Gamma_{ihk}=[P'_k,{{T}}_{ih}]-2\Delta^{-1}\left(2u^k{{T}}_{hi} + u^h{{T}}_{ki} + u^i{{T}}_{hk} + \delta^i_k \sum_m u^m{{T}}_{mh} + \delta^h_k \sum_m u^m{{T}}_{im}\right).
\end{gather}
and equation~\er{tihk} can be written as
\beq
\label{yijph}
[Y_{ij},P'_h]=2\Delta^{-1}\big(2u^h Y_{ij}+ u^i Y_{hj} 
+ u^j Y_{hi} - \delta^i_h \sum_m u^m Y_{mj} - \delta^j_h \sum_m u^m Y_{mi}\big).
\ee
Using~\er{eq.comp.cond.system.defining.Yij},~\er{gamihkm},~\er{tihpn},~\er{gamihkp},~\er{yijph}, 
one can check relations~\er{ygamma0} by a straightforward computation.  
\end{proof}

Combining~\er{vftihk},~\er{vftihkm},~\er{gamihk},~\er{gamihkm} with~\er{yixt0}, 
we obtain 
\beq
\label{vfgam}
\hmm\big(\Gamma_{ihk}\big)=\hmm\Big(\Gamma_{ihk}\,\Big|_{t=0}\Big)=0,\qquad 
\hmm\big(\Gamma_{ihkm}\big)=\hmm\Big(\Gamma_{ihkm}\,\Big|_{t=0}\Big)=0.
\ee
From~\er{vfyi} and~\er{yixt0} it follows that
\beq
\label{vfyyy}
\hmm(Y_i)=\hmm\Big(Y_i\,\Big|_{t=0}\Big)=
\hmm\Big(Y_i\,\Big|_{x=t=0}\Big)={\hat{Y}}_i.
\ee 
Set 
\beq
\lb{giyy}
\Gamma_i=Y_i\,\Big|_{t=0}-Y_i\,\Big|_{x=t=0}, 
\ee
then equation~\er{vfyyy} implies 
\beq
\label{vfgami}
\hmm(\Gamma_i)=0.
\ee

Let $\mh\subset\fd^1(\CE,a)$ be the subalgebra generated by
the coefficients of the power series $Y_{i}\,\Big|_{t=0},\,i=1,\dots,n-1$. 
Then $Y_{i}\,\Big|_{t=0}$ 
and $T_{kh}\,\Big|_{t=0}$ are power series with coefficients in~$\mh$. 
Therefore, the coefficients of the power series
\beq
\lb{gggt0}
\Gamma_{ihk}\,\Big|_{t=0},\qquad\Gamma_{ihkm}\,\Big|_{t=0},\qquad 
\Gamma_i 
\ee 
belong to $\mh$ as well.
\begin{lemma}
\lb{idker}
Let $\mic\subset\mh$ be the ideal of $\mh$ generated by 
the coefficients of the power series~\er{gggt0}. Then 
\beq
\lb{micmhvf}
\mic=\mh\cap\ker\hmm.
\ee
\end{lemma}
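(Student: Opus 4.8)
The plan is to prove the two inclusions $\mic\subseteq\mh\cap\ker\hmm$ and $\mh\cap\ker\hmm\subseteq\mic$ separately. The first is immediate: by definition $\mic$ is generated, as an ideal of $\mh$, by the coefficients of the power series \er{gggt0}, and by \er{vfgam},~\er{vfgami} the homomorphism $\hmm$ annihilates each of these coefficients. Since $\mh\cap\ker\hmm$ is an ideal of $\mh$ containing all the generators of $\mic$, it contains $\mic$. Thus $\mic\subseteq\mh\cap\ker\hmm$.

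For the reverse inclusion I would pass to the quotient. As $\mic\subseteq\ker\hmm$, the restriction $\hmm|_\mh$ descends to a homomorphism $\bar\hmm\colon\mh/\mic\to\so_{n-1}$, and the inclusion $\mh\cap\ker\hmm\subseteq\mic$ is exactly the assertion that $\bar\hmm$ is injective. The strategy is to produce a surjective homomorphism $\phi\colon\so_{n-1}\to\mh/\mic$ with $\bar\hmm\circ\phi=\id_{\so_{n-1}}$; this forces $\phi$ to be injective, hence an isomorphism, so that $\bar\hmm=\phi^{-1}$ is an isomorphism and in particular injective. Throughout, write $\bar Y_i$, $\bar T_{kh}$ for the images in $\mh/\mic$ of $Y_i\big|_{t=0}$, $T_{kh}\big|_{t=0}$, and set $t_{kh}=\bar T_{kh}\big|_{u^1=\dots=u^{n-1}=0}$.

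The core step is to show that $\mh/\mic$ is generated by the finitely many elements $t_{kh}$. Since $\Gamma_i$ from \er{giyy} lies among the generators of $\mic$, one has $\bar Y_i=\overline{Y_i|_{x=t=0}}$, so modulo $\mic$ the series $\bar Y_i$ depend only on $u^1,\dots,u^{n-1}$ and $\mh/\mic$ is generated by their coefficients. These $\bar Y_i$ satisfy \er{condyi}, which is precisely hypothesis \er{zivi1}, so Lemma~\ref{zi} (applied with $Z_i=\bar Y_i$ and $V_{ij}=\bar T_{ij}$, using \er{eq.def.Tkh}) shows that $\mh/\mic$ is already generated by the coefficients of the $\bar T_{ij}$. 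Next, because the coefficients of $\Gamma_{ihk}|_{t=0}$ lie in $\mic$, equation \er{tihkhat} holds for the $\bar T_{ih}$ in $\mh/\mic$; this is a system of the form \er{psivi}, with $\Psi$ ranging over the $\bar T_{ih}$, with $Z_h=\bar Y_h$, and with the prefactors $2\Delta^{-1}(\cdots)$ playing the role of the scalar functions $f^i_l$. Hence Lemma~\ref{zpsi} guarantees that every coefficient of $\bar T_{ih}$ lies in the span of the degree-zero coefficients $t_{kh}$. Combining the two reductions, $\mh/\mic$ is generated as a Lie algebra by the $t_{kh}$.

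It remains to identify the bracket relations among the $t_{kh}$ and assemble the isomorphism. Because the coefficients of $\Gamma_{ihkm}|_{t=0}$ lie in $\mic$, equation \er{hattt} holds for the $\bar T$ in $\mh/\mic$; evaluating at $u^1=\dots=u^{n-1}=0$ (so $\Delta=1$) yields $[t_{ih},t_{km}]=4(\delta^h_m t_{ki}+\delta^h_k t_{im}+\delta^i_m t_{hk}+\delta^i_k t_{mh})$, together with $t_{kh}=-t_{hk}$. These are exactly the structure equations of $\so_{n-1}$ in the basis $C_{kh}=4(E_{k,h}-E_{h,k})$, so the assignment $C_{kh}\mapsto t_{kh}$ extends to a Lie algebra homomorphism $\phi\colon\so_{n-1}\to\mh/\mic$, surjective since the $t_{kh}$ generate. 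Finally, \er{vfyi} and \er{hattekh} give $\bar\hmm(t_{kh})=\hat{T}_{kh}\big|_{u^1=\dots=u^{n-1}=0}=C_{kh}$, whence $\bar\hmm\circ\phi=\id_{\so_{n-1}}$. As explained above, this yields that $\bar\hmm$ is an isomorphism, so $\ker\bar\hmm=0$, i.e. $\mh\cap\ker\hmm\subseteq\mic$, completing the proof of \er{micmhvf}. The main obstacle is the generation step: one must verify carefully that modulo $\mic$ the $\bar Y_i$ are genuinely $x$-independent and satisfy \er{condyi}, and that the overdetermined system \er{tihkhat} really matches the template \er{psivi}, so that Lemmas~\ref{zi} and~\ref{zpsi} can be chained; once the $t_{kh}$ are known to generate and to obey the $\so_{n-1}$ relations, the isomorphism argument is purely formal.
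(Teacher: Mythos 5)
Your proof is correct and follows essentially the same route as the paper: both establish $\mic\subseteq\ker\hmm$ from \er{vfgam},~\er{vfgami}, then use Lemmas~\ref{zi} and~\ref{zpsi} (via the relations surviving in $\mh/\mic$) to show the quotient is generated by the zero-degree coefficients $t_{kh}$ obeying the $\so_{n-1}$ structure constants, yielding a surjection $\so_{n-1}\to\mh/\mic$. The only difference is cosmetic: you close the argument by exhibiting $\phi$ as a right inverse of the induced map $\bar\hmm$ (forcing both to be isomorphisms), whereas the paper combines the resulting bound $\dim(\mh/\mic)\le\dim\so_{n-1}$ with $\hmm(\mh)=\so_{n-1}$ in a dimension count — logically equivalent steps resting on the same facts.
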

\begin{proof}
From~\er{vfgam} and~\er{vfgami} it follows that 
\beq
\lb{micsker}
\mic\subset\ker\hmm. 
\ee
Consider the quotient Lie algebra $\mg=\mh/\mic$ 
and the natural projection $\psi\cl\mh\to\mg$. 
Set 
\beq
\lb{byitkh}
{\mathbb{Y}}_i=\psi\Big(Y_i\,\Big|_{t=0}\Big),\qquad 
{\mathbb{T}}_{kh}=\psi\Big(T_{kh}\,\Big|_{t=0}\Big),
\ee
which are power series with coefficients in~$\mg$. 
From~\er{eq.def.Tkh},~\er{condyi} it follows that
\begin{gather}
\label{btkh}
{\mathbb{T}}_{kh}=[{\mathbb{Y}}_k,{\mathbb{Y}}_h] - {\mathbb{Y}}_{k,h} + {\mathbb{Y}}_{h,k},\\
\label{bcondyi}
\forall\,i=1,\dots,n-2,\qquad {\mathbb{Y}}_i\,\Big|_{u^{i+1}=u^{i+2}=\dots=u^{n-1}=0}=0,\qquad {\mathbb{Y}}_{n-1}=0.
\end{gather}
By the definition of~$\mic$ and~$\psi$, 
\beq
\lb{psigam}
\psi\Big(\Gamma_{ihk}\,\Big|_{t=0}\Big)=0,\qquad
\psi\Big(\Gamma_{ihkm}\,\Big|_{t=0}\Big)=0,\qquad 
\psi\big(\Gamma_i\big)=0. 
\ee
According to \er{gamihk}, \er{gamihkm}, \er{giyy}, \er{byitkh}, 
equations~\er{psigam} say that 
\begin{gather}
\label{bgamihk}
{\mathbb{T}}_{ih,k}=2\Delta^{-1}\left(2u^k{\mathbb{T}}_{hi} + u^h{\mathbb{T}}_{ki} + u^i{\mathbb{T}}_{hk} + \delta^i_k \sum_m u^m{\mathbb{T}}_{mh} + \delta^h_k \sum_m u^m{\mathbb{T}}_{im}\right) + [{\mathbb{T}}_{ih},{\mathbb{Y}}_k],\\
\label{bgamihkm}
[{\mathbb{T}}_{ih},{\mathbb{T}}_{km}]=
4\Delta^{-2}(\delta^h_m{\mathbb{T}}_{ki}+\delta^h_k{\mathbb{T}}_{im}+\delta^i_m{\mathbb{T}}_{hk}+\delta^i_k{\mathbb{T}}_{mh}),\\
\lb{dxby0}
\frac{\pd}{\pd x}\big({\mathbb{Y}}_i\big)=0.
\end{gather}
From~\er{byitkh},~\er{btkh},~\er{dxby0} it follows that 
${\mathbb{Y}}_i,\,{\mathbb{T}}_{kh}$ are 
power series in the variables $u^1,\dots,u^{n-1}$ with coefficients in~$\mg$.

\begin{lemma}
\lb{tgeng}
The elements 
\beq
\label{mbtkh0}
{\mathbb{T}}_{kh}\,\Big|_{u^1=u^2=\dots=u^{n-1}=0}\,\in\mg,\qquad k,h=1,\dots,n-1,
\ee
generate the Lie algebra $\mg$. 
\end{lemma}
\begin{proof}
From the definition of $\mh$ and $\mg$ it follows that $\mg$ is generated by the coefficients 
of the power series ${\mathbb{Y}}_i$. 

Relations~\er{btkh},~\er{bcondyi},~\er{bgamihk} are of the type considered in Lemmas~\ref{zi},~\ref{zpsi} for $v^i=u^i$.
Therefore, by Lemmas~\ref{zi},~\ref{zpsi}, 
any coefficient of~${\mathbb{Y}}_i,\,{\mathbb{T}}_{kh}$ belongs 
to the Lie subalgebra generated by~\er{mbtkh0}.  
\end{proof}
From~\er{bgamihkm} and Lemma~\ref{tgeng} it follows that the map 
$$
\so_{n-1}\to\mg,\qquad (E_{k,h}-E_{h,k})\,\mapsto\,\frac14\Big({\mathbb{T}}_{kh}\,\Big|_{u^1=u^2=\dots=u^{n-1}=0}\Big), 
$$
is a surjective homomorphism. Therefore, 
\beq
\lb{dimle}
\dim\mg=\dim\big(\mh/\mic\big)\le\dim\so_{n-1}.
\ee
According to Theorem~\ref{zcrson}, 
the coefficients of the power series~${\hat{Y}}_i$ generate~$\so_{n-1}$. 
Combining this with~\er{vfyyy} and the definition of~$\mh$, we obtain 
\beq
\lb{vfmhso}
\hmm(\mh)=\so_{n-1}.
\ee 
Combining~\er{micsker},~\er{dimle},~\er{vfmhso}, one gets~\er{micmhvf}.
\end{proof}


Substituting~$t=0$ to~\er{system.defining.Yij}, we obtain
\begin{multline}
\label{yijht0}
Y_{ij,h}\,\Big|_{t=0}=\\
=\Big[Y_{ij}\,\Big|_{t=0},Y_h\,\Big|_{t=0}\Big] 
- 2\Delta^{-1}\Big(2u^h Y_{ij}+u^i Y_{hj}+u^j Y_{hi}
-\delta^i_h \sum_m u^m Y_{mj}-\delta^j_h \sum_m u^m Y_{mi}\Big)\,\Big|_{t=0}\quad
\forall\,i,j,h.
\end{multline}
From \er{yabx0} it follows that $Y_{ij}\,\Big|_{t=0}=Y_{ij}\,\Big|_{x=t=0}$ for all $i,j$.
Therefore, substituting~$x=t=0$ to~\er{system.defining.Yij}, we get
\begin{multline}
\label{yijhxt0}
Y_{ij,h}\,\Big|_{t=0}=\\
=\Big[Y_{ij}\,\Big|_{t=0},Y_h\,\Big|_{x=t=0}\Big] 
- 2\Delta^{-1}\Big(2u^h Y_{ij}+u^i Y_{hj}+u^j Y_{hi}
-\delta^i_h \sum_m u^m Y_{mj}-\delta^j_h \sum_m u^m Y_{mi}\Big)\,\Big|_{t=0}
\quad\forall\,i,j,h.
\end{multline}
Subtracting \er{yijhxt0} from \er{yijht0}, one obtains
\beq
\lb{yabgi0}
\Big[Y_{ij}\,\Big|_{t=0},\,\Gamma_h\Big]=0\quad\qquad\forall\,i,j,h,
\ee
where $\Gamma_h=Y_h\,\Big|_{t=0}-Y_h\,\Big|_{x=t=0}$.

Let $\za$ be the vector subspace spanned by the elements 
$z^{{\ai}{\bi}}$, ${\ai},{\bi}=1,\dots,n-1$, defined in~\er{zdgz}.
Lemma~\ref{lemyijyy} implies that $\za$ is a commutative subalgebra 
of the Lie algebra $\fd^1(\CE,a)$.

Recall that $Y_h\,\Big|_{t=0}$ is a power series in the variables $x$, $u^1,\dots,u^{n-1}$.
So one has 
\beq
\lb{yht0}
Y_h\,\Big|_{t=0}=\sum_{l,i_1,\dots,i_{n-1}\ge 0}
x^l\big(u^1\big)^{i_1}\dots\big(u^{n-1}\big)^{i_{n-1}}\beta^{h}_{l,i_1,\dots,i_{n-1}}
\ee
for some elements $\beta^{h}_{l,i_1,\dots,i_{n-1}}\in\fd^1(\CE,a)$.

Lemma~\ref{yijzab} says that for all $i,j$ 
any coefficient of the power series $Y_{ij}\,\Big|_{t=0}$ belongs to $\za$.
Then equation \er{yijht0} implies that for all $i,j,h$ 
the coefficients of the power series $\Big[Y_{ij}\,\Big|_{t=0},Y_h\,\Big|_{t=0}\Big]$ 
belong to $\za$. Using these facts and the definition of $\za$, 
by induction on $l+i_1+\dots+i_{n-1}$ one proves that 
$[\beta^{h}_{l,i_1,\dots,i_{n-1}},\za]\subset\za$.
Since the Lie algebra $\mh$ is generated by the elements $\beta^{h}_{l,i_1,\dots,i_{n-1}}$
and $\za$ is spanned by the elements $z^{{\ai}{\bi}}$, 
we see that $[z^{{\ai}{\bi}},\mh]\subset\za$ for all ${\ai},{\bi}$.

Substituting $t=0$ to~\er{ygamma0}, one gets
\beq
\label{ygammat0}
\Big[Y_{{\ai}{\bi}}\,\Big|_{t=0},\,\Gamma_{ihk}\,\Big|_{t=0}\Big]=0,
\qquad \Big[Y_{{\ai}{\bi}}\,\Big|_{t=0},\Gamma_{ihkm}\,\Big|_{t=0}\Big]=0
\qquad\quad\forall\,\ai,\bi,i,h,k,m.
\ee
By Lemma~\ref{lemyijp}, from \er{yabgi0} and \er{ygammat0} 
it follows that $z^{{\ai}{\bi}}$ defined in~\er{zdgz} commutes 
with any coefficient of the power series~\er{gggt0}. 

Thus $[z^{{\ai}{\bi}},\mh]\subset\za$ and $z^{{\ai}{\bi}}$ 
commutes with any coefficient of the power series~\er{gggt0}.
Combining this with Lemma~\ref{idker}, one gets 
\beq
\lb{zabhker}
\big[z^{{\ai}{\bi}},\,(\mh\cap\ker\hmm)\big]=0\qquad\quad\forall\,{\ai},{\bi}.
\ee

\begin{lemma}
\lb{videal}
The vector space $\za$ spanned by the elements 
$z^{{\ai}{\bi}}$, ${\ai},{\bi}=1,\dots,n-1$, is a commutative ideal 
of the Lie algebra $\fd^1(\CE,a)$.
\end{lemma}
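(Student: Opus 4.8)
The plan is to prove that $\za$ is a two-sided ideal; since commutativity has already been recorded (Lemma~\ref{lemyijyy} gives $[\za,\za]=0$, so $\za$ is a commutative subalgebra), it remains only to verify the inclusion $[\za,\fd^1(\CE,a)]\subseteq\za$. The key structural input is Lemma~\ref{fd1gen}, which states that $\fd^1(\CE,a)$ is generated by the coefficients of the three power series $Y_{ij}\,\Big|_{t=0}$, $Y_i\,\Big|_{t=0}$, $Y\,\Big|_{t=0}$. Hence it suffices to check that $[z^{\ai\bi},g]\in\za$ for every such generating coefficient $g$ and then to propagate this to all of $\fd^1(\CE,a)$ by a Jacobi-identity argument.

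First I would dispose of the three families of generators separately. For a coefficient $g$ of $Y_{ij}\,\Big|_{t=0}$ or of $Y\,\Big|_{t=0}$, Lemma~\ref{lemyijyy} asserts that $z^{\ai\bi}$ commutes with $g$, so $[z^{\ai\bi},g]=0\in\za$. For a coefficient $g$ of $Y_i\,\Big|_{t=0}$, one has $g\in\mh$, the subalgebra generated by the coefficients of the $Y_i\,\Big|_{t=0}$, and the relation $[z^{\ai\bi},\mh]\subseteq\za$ established in the preceding discussion yields $[z^{\ai\bi},g]\in\za$. Thus $[z^{\ai\bi},g]\in\za$ for every generator $g$ and all $\ai,\bi$, and since $\za$ is spanned by the $z^{\ai\bi}$ we get $[\za,g]\subseteq\za$ on the whole generating set.

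The final step is to upgrade this from generators to the entire algebra. I would introduce the \emph{normalizer-type} set $N=\{g\in\fd^1(\CE,a):[\za,g]\subseteq\za\}$ and observe, using the Jacobi identity, that $N$ is a Lie subalgebra: for $g_1,g_2\in N$ and $v\in\za$ one has $[v,[g_1,g_2]]=[[v,g_1],g_2]-[[v,g_2],g_1]$, and each term lies in $\za$ because $[v,g_i]\in\za$ while $[\za,g_j]\subseteq\za$. Since $N$ contains all the generators supplied by Lemma~\ref{fd1gen}, it must coincide with $\fd^1(\CE,a)$, whence $[\za,\fd^1(\CE,a)]\subseteq\za$ and $\za$ is an ideal.

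I do not expect a serious obstacle here, because the substantive analytic work—the commutation relations of Lemma~\ref{lemyijyy} and the inclusion $[z^{\ai\bi},\mh]\subseteq\za$—is already in place, so the lemma is essentially a bookkeeping consolidation. The only point requiring care is confirming that the three families in Lemma~\ref{fd1gen} genuinely exhaust a generating set for $\fd^1(\CE,a)$, so that the subalgebra $N$ captures every element and the passage from generators to the full algebra is legitimate.
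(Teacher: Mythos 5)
Your proposal is correct and follows essentially the same route as the paper: commutativity from Lemma~\ref{lemyijyy}, the generating set from Lemma~\ref{fd1gen}, the inclusion $[z^{\ai\bi},\mh]\subset\za$ for the $Y_i$-coefficients, and Lemma~\ref{lemyijyy} again for the $Y_{ij}$- and $Y$-coefficients. The only difference is that you spell out the normalizer/Jacobi-identity step passing from generators to the whole algebra, which the paper leaves implicit.
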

\begin{proof}
We have shown above that $\za$ is a commutative subalgebra $\fd^1(\CE,a)$.
Let us show that $\za$ is an ideal of $\fd^1(\CE,a)$.

According to Lemma~\ref{fd1gen}, the Lie algebra $\fd^1(\CE,a)$
is generated by the coefficients of the power series \er{yyyiji}.
As we have shown above, $[\za,\mh]\subset\za$, 
where $\mh\subset\fd^1(\CE,a)$ is the subalgebra generated by
the coefficients of the power series $Y_{i}\,\Big|_{t=0}$, $i=1,\dots,n-1$.
Furthermore, Lemma~\ref{lemyijyy} implies that any element of $\za$ 
commutes with any coefficient of the power series 
$Y\,\Big|_{t=0}$ and $Y_{ij}\,\Big|_{t=0}$.
Therefore, $\za$ is an ideal of $\fd^1(\CE,a)$.
\end{proof}

\begin{lemma}
\lb{zabvker}
For any $v\in\ker\hmm\subset\fd^1(\CE,a)$, one has $[z^{{\ai}{\bi}},v]=0$ 
for all ${\ai},{\bi}$.
\end{lemma}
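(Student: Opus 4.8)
The plan is to prove that the centralizer of $\za$ in $\fd^1(\CE,a)$ contains all of $\ker\hmm$. The first thing I would pin down is the image under $\hmm$ of the three families of generators of $\fd^1(\CE,a)$. The $\so_{n-1}$-valued ZCR of Theorem~\ref{zcrson} has $A=\sum_i u^i_1\hat Y_i$ of the form~\er{ahaty}, i.e. with no quadratic term in the $u^i_1$ and no term independent of the $u^i_1$; comparing with the normal form~\er{eq.X.polynomial.second.order} and using that $\hmm$ sends the generic coefficients to the coefficients of this specific ZCR, I get $\hmm(Y_{ij})=0$ and $\hmm(Y)=0$, whereas $\hmm(Y_i)=\hat Y_i$. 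In particular every coefficient of $Y_{ij}\big|_{t=0}$ lies in $\ker\hmm$, so $z^{\ai\bi}\in\ker\hmm$ and hence $\za\subset\ker\hmm$.

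Next I would introduce the centralizer $\mathfrak{C}=\{v\in\fd^1(\CE,a):[z^{\ai\bi},v]=0\ \forall\,\ai,\bi\}$, which is exactly the set whose containment of $\ker\hmm$ we want. Since $\za$ is an ideal (Lemma~\ref{videal}), a one-line Jacobi computation shows $\mathfrak{C}$ is itself an ideal: for $v\in\mathfrak{C}$, $w\in\fd^1(\CE,a)$, $z\in\za$ one has $[[w,v],z]=[w,[v,z]]-[v,[w,z]]=0$, because $[v,z]=0$ and $[w,z]\in\za$. At this point I can already exhibit three families of generators lying in $\mathfrak{C}$: the coefficients of $Y_{ij}\big|_{t=0}$ (they sit inside the commutative ideal $\za\subset\mathfrak{C}$), the coefficients of $Y\big|_{t=0}$ (they centralize $\za$ by Lemma~\ref{lemyijyy}), and the elements of $\mh\cap\ker\hmm$ (this is precisely~\er{zabhker}). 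By the first paragraph all three families also lie in $\ker\hmm$.

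The key step is then to identify $\ker\hmm$ with the ideal $\mathfrak{J}$ generated by these three families. By construction $\mathfrak{J}\subset\ker\hmm$ and $\mathfrak{J}\subset\mathfrak{C}$, so it suffices to prove the reverse inclusion $\ker\hmm\subset\mathfrak{J}$. By Lemma~\ref{fd1gen} the algebra $\fd^1(\CE,a)$ is generated by the coefficients of $Y_{ij}\big|_{t=0}$, $Y_i\big|_{t=0}$, $Y\big|_{t=0}$. Passing to $\fd^1(\CE,a)/\mathfrak{J}$ kills the first and third families, so the quotient is generated by, hence equals, the image of $\mh$, namely $\mh/(\mh\cap\mathfrak{J})$. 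From $\mathfrak{J}\subset\ker\hmm$ I get $\mh\cap\mathfrak{J}\subset\mh\cap\ker\hmm$, and from $\mh\cap\ker\hmm\subset\mathfrak{J}$ the reverse inclusion, so $\mh\cap\mathfrak{J}=\mh\cap\ker\hmm$ and thus $\fd^1(\CE,a)/\mathfrak{J}\cong\mh/(\mh\cap\ker\hmm)\cong\hmm(\mh)=\so_{n-1}$ by~\er{vfmhso}. The homomorphism induced by $\hmm$ on $\fd^1(\CE,a)/\mathfrak{J}$ is exactly this isomorphism, so it is injective, whence $\ker\hmm=\mathfrak{J}\subset\mathfrak{C}$; this gives $[z^{\ai\bi},v]=0$ for every $v\in\ker\hmm$.

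The main obstacle I anticipate is the identification $\mathfrak{J}=\ker\hmm$: one must be sure that quotienting by $\mathfrak{J}$ does not collapse the copy of $\so_{n-1}$ produced by $\mh$, i.e. that $\mh\cap\mathfrak{J}$ is no larger than $\mh\cap\ker\hmm$. This is exactly where the inclusion $\mathfrak{J}\subset\ker\hmm$ is indispensable, and that inclusion rests on the first-step computation $\hmm(Y_{ij})=\hmm(Y)=0$, which is what places $\za$ and the coefficients of $Y\big|_{t=0}$ inside $\ker\hmm$ in the first place; the surjectivity~\er{vfmhso} then guarantees the surviving quotient is all of $\so_{n-1}$ and not a proper quotient.
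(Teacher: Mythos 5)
Your proof is correct and follows essentially the same route as the paper's: the paper asserts that $\ker\hmm$ is generated as an ideal by $\mh\cap\ker\hmm$ together with the coefficients of $Y_{ij}\,\big|_{t=0}$ and $Y\,\big|_{t=0}$, and then invokes \er{zabhker}, Lemma~\ref{lemyijyy}, and Lemma~\ref{videal} --- exactly the ingredients you use. You merely make explicit the two steps the paper leaves implicit, namely the quotient argument identifying $\ker\hmm$ with that ideal and the observation that the centralizer of the ideal $\za$ is itself an ideal.
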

\begin{proof}
By Lemma~\ref{fd1gen}, the algebra $\fd^1(\CE,a)$
is generated by the coefficients of~\er{yyyiji}.
Since $\hmm(Y_{ij})=\hmm(Y)=0$ and the coefficients of~$Y_i\,\Big|_{t=0}$ generate $\mh$, 
the ideal~$\ker\hmm$ is generated by the subalgebra~$\mh\cap\ker\hmm$ and 
the coefficients of~$Y_{ij}\,\Big|_{t=0},\ Y\,\Big|_{t=0}$. 
Then the required statement follows from~\er{zabhker}, Lemma~\ref{lemyijyy}, 
and Lemma~\ref{videal}.
\end{proof}
From~\er{yijyxt} and~\er{yabx0} it follows that $y^{ij}_{i_1,\dots,i_{n-1}}$ 
is a power series in one variable $t$.
\begin{lemma}
\lb{yijyij}
For all $i,j,i',j'$, any coefficient of the power series $y^{ij}_{0,\dots,0}(t)$ 
commutes with any coefficient of~$y^{i'j'}_{0,\dots,0}(t)$. 
\end{lemma}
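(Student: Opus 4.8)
The plan is to show that every coefficient of the power series $y^{ij}_{0,\dots,0}(t)$ lies in the commutative ideal $\za$ produced in Lemma~\ref{videal}; once this is known, the assertion is immediate, since $[\za,\za]=0$. First recall that by \er{yabx0} the series $Y_{ij}$ does not depend on $x$, so that $y^{ij}_{0,\dots,0}(t)=Y_{ij}\big|_{u^1=\dots=u^{n-1}=0}$ is a power series in the single variable $t$. Write
\[
y^{ij}_{0,\dots,0}(t)=\sum_{l\ge 0}t^l\,c^{ij}_l,\qquad c^{ij}_l\in\fd^1(\CE,a).
\]
These $c^{ij}_l$ are precisely the coefficients occurring in the statement, and the base value is controlled by \er{zdgz}: since $Y_{ij}$ is $x$-independent, $c^{ij}_0=Y_{ij}\big|_{x=t=u^1=\dots=u^{n-1}=0}=z^{ij}\in\za$.

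The key extra ingredient I would establish is a time-evolution equation for $Y_{ij}$ that is the exact analogue of the $x$-evolution \er{yijx}, namely an identity of adjoint type
\[
Y_{ij,t}=[Y_{ij},\tilde Y],\qquad \tilde Y=B\big|_{u^l_1=u^l_2=u^l_3=0},
\]
obtained, as in the derivation of \er{rel.1}--\er{yijx}, by extracting the coefficient of $u^i_1u^j_1$ (at $u^l_2=u^l_3=0$) in the covering equation \er{eqq.covering} and using the expression \er{eq.F1} for $F^1$. Restricting this identity to $x=0$ and $u^1=\dots=u^{n-1}=0$, where the non-autonomous factors of the PDE trivialize ($\Delta=1$ and all explicit factors $u^l$ vanish, exactly as the inhomogeneous terms of \er{system.defining.Yij} do at $u=0$), yields the ordinary differential equation
\[
\frac{\pd}{\pd t}\,y^{ij}_{0,\dots,0}(t)=\big[\,y^{ij}_{0,\dots,0}(t),\,\tilde y(t)\,\big],\qquad \tilde y(t)=\tilde Y\big|_{x=0,\ u^1=\dots=u^{n-1}=0}=\sum_{q\ge 0}t^q\,d_q,\quad d_q\in\fd^1(\CE,a).
\]

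With this ODE in hand I would prove $c^{ij}_l\in\za$ by induction on $l$, the case $l=0$ being the computation above. For the inductive step, comparing the coefficients of $t^l$ on the two sides of the ODE gives $(l+1)\,c^{ij}_{l+1}=\sum_{p+q=l}[c^{ij}_p,d_q]$. By the inductive hypothesis $c^{ij}_p\in\za$ for $p\le l$, and since $\za$ is an \emph{ideal} of $\fd^1(\CE,a)$ (Lemma~\ref{videal}) each bracket $[c^{ij}_p,d_q]$ lies in $\za$; hence $c^{ij}_{l+1}\in\za$. Thus all coefficients $c^{ij}_l$ of $y^{ij}_{0,\dots,0}(t)$, for every $i,j$, belong to $\za$, and the commutativity of $\za$ from Lemma~\ref{videal} gives $[c^{ij}_l,c^{i'j'}_m]=0$ for all $l,m$ and all $i,j,i',j'$, which is the lemma.

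The main obstacle is the middle step: deriving the adjoint form $Y_{ij,t}=[Y_{ij},\tilde Y]$ of the time-evolution. The delicate point is not the bracket term but the inhomogeneous contributions: one must verify that the residual non-adjoint terms produced by the $u^i_1u^j_1$-extraction from \er{eqq.covering} all carry explicit factors $u^l$ or $\Delta^{-1}-1$ and therefore vanish upon setting $u^1=\dots=u^{n-1}=0$, just as the analogous terms in \er{system.defining.Yij} vanish at $u=0$. If instead some inhomogeneous term of the form $\sum\lambda_{kl}\,y^{kl}_{0,\dots,0}(t)$ survived, the induction would become circular, since its top coefficient $c^{kl}_{l}$ is exactly what one is trying to place in $\za$; ruling this out is the crux of the computation, after which the inductive argument with the ideal $\za$ closes cleanly.
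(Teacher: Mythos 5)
Your whole argument rests on the claimed evolution identity $Y_{ij,t}=[Y_{ij},\tilde Y]$ after restriction to $u^1=\dots=u^{n-1}=0$, and this is precisely the step that fails, so the gap is genuine rather than a computation left to the reader. Extracting the coefficient of $u^i_1u^j_1$ at $u^l_2$-degree zero from \er{eqq.covering} does produce the bracket $[Y_{ij},F^2\big|_{u^l_1=0}]$ from $[A,F^1]$, but it also produces terms that are not of adjoint form in $Y_{ij}$ and do not vanish at $u=0$: from $[A,F^1]$ one gets brackets of $Y_i$ and of $Y$ with the $u^j_1$-linear and $u^i_1u^j_1$-quadratic coefficients of $F^2$; from $F^1_x$ and $\sum_k u^k_1F^1_{u^k}$ one gets $x$- and $u$-derivatives of those same coefficients of $F^2$; and from $-\sum_kG^kA_{u^k}$ the $u_1$-linear part $\tfrac32\big(r_n+4\Delta^{-2}\sum_l(r_l-r_n)(u^l)^2\big)u^k_1$ of $G^k\big|_{u_2=0}$ pairs with the $u_1$-linear part $\sum_a u^a_1Y_{a,u^k}$ of $A_{u^k}$ to leave, at $u=0$, a term proportional to $r_n\big(Y_{i,u^j}+Y_{j,u^i}\big)\big|_{u=0}$, which is nonzero (its image under $\hmm$ is a nonzero multiple of $E_{i,j}-E_{j,i}$ by \er{hattekh}). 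These residual terms involve coefficients of $Y_i$, $Y$ and $F^2$ rather than of the $Y_{kl}$, so the failure mode is not the circularity you flag at the end but a complete breakdown of the adjoint form; your induction then already fails at $l=1$, since $c^{ij}_1$ is not a bracket of $c^{ij}_0$ with anything and there is no reason for it to lie in $\za$. Note also that your strategy, if it worked, would prove that the ideal $\mathfrak{S}$ generated by the coefficients of $Y_{ij}$ equals the finite-dimensional space $\za$; the paper never asserts this, and the fact that it proves Lemma~\ref{yijyij} by a separate argument rather than as a corollary of Lemma~\ref{videal} indicates it is not expected to hold.

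The invariant that actually propagates in the $t$-direction is weaker than membership in $\za$: the paper sets $z^{ij}_l=\dfrac{\pd^l y^{ij}_{0,\dots,0}}{\pd t^l}\Big|_{t=0}$ and proves by induction on $l$, in the manner of Lemma~\ref{zabvker}, that $[z^{ij}_l,\ker\hmm]=0$; since $\hmm(Y_{i'j'})=0$, every coefficient of $y^{i'j'}_{0,\dots,0}(t)$ lies in $\ker\hmm$, and the lemma follows. If you want to keep the shape of your argument, replace the inductive hypothesis ``$c^{ij}_l\in\za$'' by ``$[c^{ij}_l,\ker\hmm]=0$'': the non-adjoint residual terms above are then harmless because the bracket of $c^{ij}_l$ with them is controlled by that hypothesis rather than by the ideal property of $\za$.
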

\begin{proof}
For each $l\in\zp$, set
\beq
\lb{zijly}
z^{ij}_l=\frac{\pd^l y^{ij}_{0,\dots,0}}{\pd t^l}\,\bigg|_{t=0}\,\in\,\fd^1(\CE,a).
\ee
Similarly to Lemma~\ref{zabvker}, by induction on $l$, one can prove that 
\beq
\lb{zijlker}
[z^{ij}_l,\ker\hmm]=0\qquad\quad\forall\,i,j,l.
\ee
Since $\hmm(Y_{i'j'})=0$, property~\er{zijlker} implies 
that $z^{ij}_l$ commutes with any coefficient of the power series $y^{i'j'}_{0,\dots,0}(t)$.
Since the elements $z^{ij}_l$ are the coefficients of the power series $y^{ij}_{0,\dots,0}$, 
we see that any coefficient of~$y^{ij}_{0,\dots,0}(t)$ 
commutes with any coefficient of~$y^{i'j'}_{0,\dots,0}(t)$.
\end{proof}

\begin{theorem}
\lb{idealyij}
The Lie subalgebra $\mathfrak{S}$ 
generated by the coefficients of~$Y_{ij}$ is abelian and satisfies
\beq
\lb{msmh}
[\mathfrak{S},\ker\hmm]=0.
\ee
Furthermore, the subalgebra $\mathfrak{S}$ is an ideal of $\fd^1(\CE,a)$. 
\end{theorem}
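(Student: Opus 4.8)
The plan is to pin down $\mathfrak{S}$ concretely first, and then treat the three assertions (abelian, centralizing $\ker\hmm$, ideal) in increasing order of difficulty. First I would note that by \er{yabx0} the series $Y_{ij}$ do not depend on $x$, so each is a power series in $t,u^1,\dots,u^{n-1}$; by Lemma~\ref{yyab} every coefficient of $Y_{ij}$ is a $\fik$-linear combination of the coefficients of the $u$-free parts $y^{pq}_{0,\dots,0}(t)$, whose $t$-Taylor coefficients are precisely the elements $z^{ij}_l$ of \er{zijly}. Hence the generating set of $\mathfrak{S}$ and the family $\{z^{ij}_l\}$ span the same subspace. Since the $z^{ij}_l$ pairwise commute by Lemma~\ref{yijyij}, that span is already a subalgebra and is abelian, so $\mathfrak{S}=\mathrm{span}\{z^{ij}_l\}$ is abelian; and since $[z^{ij}_l,\ker\hmm]=0$ by \er{zijlker}, we get $[\mathfrak{S},\ker\hmm]=0$ at once. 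This disposes of the first two assertions.

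For the ideal property I would reduce the ambient algebra. Because $\hmm(\mh)=\so_{n-1}=\hmm\big(\fd^1(\CE,a)\big)$ by \er{vfmhso}, every element of $\fd^1(\CE,a)$ differs from an element of $\mh$ by an element of $\ker\hmm$, so $\fd^1(\CE,a)=\mh+\ker\hmm$. Combined with $[\mathfrak{S},\ker\hmm]=0$ this yields $[\mathfrak{S},\fd^1(\CE,a)]=[\mathfrak{S},\mh]$, so it suffices to prove $[\mathfrak{S},\mh]\subseteq\mathfrak{S}$. The set $\{g:[\mathfrak{S},g]\subseteq\mathfrak{S}\}$ is a Lie subalgebra by the Jacobi identity, so it is enough to verify the inclusion on a generating set of $\mh$; by definition $\mh$ is generated by the coefficients of $Y_h\big|_{t=0}$ (equivalently, by Lemma~\ref{zi} applied with $Z_h=Y_h$, whose hypotheses hold by \er{condyi}, by the coefficients of the $T_{kh}$).

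The engine for this last step is the pair of power-series identities \er{system.defining.Yij} and \er{eq.comp.cond.system.defining.Yij}: rewriting them as $[Y_{ij},Y_h]=Y_{ij,h}+2\Delta^{-1}(\cdots)$ and $[Y_{ij},T_{kh}]=-4\Delta^{-2}(\cdots)$, the right-hand sides are $\fik$-power-series-linear combinations of $u$-derivatives of the $Y_{pq}$, so the power series $[Y_{ij},Y_h]$ and $[Y_{ij},T_{kh}]$ have all of their coefficients in $\mathfrak{S}$. I would then peel off individual brackets by induction on the $u$-degree, exactly as in the argument that $[\za,\mh]\subseteq\za$ preceding Lemma~\ref{videal}: in the $u$-degree-$d$ coefficient of $[Y_{ij},Y_h]$ every summand except the one pairing the $u$-free part of $Y_{ij}$ with the degree-$d$ part of $Y_h$ involves a coefficient of $Y_h$ of strictly lower $u$-degree, hence lies in $\mathfrak{S}$ by the induction hypothesis, leaving the topmost bracket isolated.

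The main obstacle is that this $u$-degree induction does not by itself separate the $t$-grading: the $u$-free part of $Y_{ij}$ carries all the $z^{ij}_l$ simultaneously, and the $t$-degrees of $Y_{ij}$ and of $Y_h$ add in the product, so a single coefficient equation only controls a sum $\sum_{l'+m'=s}[z^{ij}_{l'},\gamma^h_{l,m',\nu}]$ rather than an individual bracket. To break this I would run a secondary induction on the $t$-degree, in the same spirit as the induction on $l$ that establishes \er{zijlker}: the time-evolution relation extracted from the $D_t(A)$-part of the zero-curvature equation \er{gc} (the mechanism behind Lemma~\ref{gallmg} and Theorem~\ref{lemgenfdq}, which shows every $t$-degree-$(l+1)$ coefficient is generated by the coefficients of $t$-degree $\le l$) expresses $z^{ij}_{l+1}$ and the higher-$t$-degree coefficients of $Y_h$ in terms of lower-$t$-degree data, so that the lower terms of the coefficient equation fall under the induction hypothesis and the topmost bracket can be placed in $\mathfrak{S}$. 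Producing this time-evolution relation with its coefficients controlled in $\mathfrak{S}$ is the technical heart of the proof; once it is available the nested induction closes, giving $[\mathfrak{S},\mh]\subseteq\mathfrak{S}$ and hence that $\mathfrak{S}$ is an ideal.
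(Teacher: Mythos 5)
Your proof follows essentially the same route as the paper's: $\mathfrak{S}$ is identified with the span of the elements $z^{ij}_l$ via Lemma~\ref{yyab} and \eqref{yabx0}, the first two assertions then follow from Lemma~\ref{yijyij} and \eqref{zijlker}, and the ideal property is reduced to brackets with the coefficients of $Y_i\,\big|_{t=0}$ and $Y\,\big|_{t=0}$ and settled by a degree induction driven by \eqref{system.defining.Yij} and \eqref{yijy0}. The only deviations are cosmetic --- you reduce via the decomposition $\fd^1(\CE,a)=\mh+\ker\hmm$ rather than via the generator list of Lemma~\ref{fd1gen} --- and you are in fact more explicit than the paper about the nested $u$-degree/$t$-degree induction hiding behind its phrase ``can be easily deduced''.
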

\begin{proof}
The fact that $\mathfrak{S}$ is abelian follows from 
Lemma~\ref{yyab}, property~\er{yabx0}, and Lemma~\ref{yijyij}.  
According to Lemma~\ref{yyab} and property~\er{yabx0}, 
the subalgebra $\mathfrak{S}$ is generated by the coefficients~\er{zijly}
of the power series $y^{ij}_{0,\dots,0}(t)$.
Then \er{zijlker} implies~\er{msmh}.

By Lemma~\ref{fd1gen}, the algebra $\fd^1(\CE,a)$
is generated by the coefficients of~\er{yyyiji}. 
Therefore, in order to show that $\mathfrak{S}$ is an ideal, 
one needs to prove that $[C,\mathfrak{S}]\subset\mathfrak{S}$ 
for any coefficient $C$ of $Y_i\,\Big|_{t=0}$ and $Y\,\Big|_{t=0}$. 
This can be easily deduced from 
equations~\er{system.defining.Yij} and~\er{yijy0}, 
using Lemma~\ref{yyab} and property~\er{yabx0}. 
\end{proof}

\subsection{The ideal generated by $\tilde{Y}_{i}$}
\lb{sidyi}

In this subsection we study the quotient Lie algebra $\bl=\fd^1(\CE,a)/\mathfrak{S}$, 
where $\mathfrak{S}$ is the ideal generated by the coefficients of~$Y_{ij}$. 
Therefore, we can assume that $A,\,B$ are power series with coefficients in $\bl$  
and formula~\er{atyi} holds.

Then ${\tilde{T}}_{ij}$ defined in~\er{tildet} 
are power series in the variables $x,\,t,\,u^1,\dots,u^{n-1}$ with coefficients in $\bl$. 
Let $\tilde\mg$ be the Lie algebra of formal power series in the variables $x,\,t$ with coefficients in $\bl$. 
Then ${\tilde{T}}_{ij}$ can be regarded as a power series in the variables $u^1,\dots,u^{n-1}$ with coefficients in $\tilde\mg$ 
\beq
\notag
{\tilde{T}}_{ij}=\sum_{i_1,\dots,i_{n-1}\ge 0}\big(u^1\big)^{i_1}\dots\big(u^{n-1}\big)^{i_{n-1}}\al^{ij}_{i_1,\dots,i_{n-1}}(x,t), 
\ \quad \al^{ij}_{i_1,\dots,i_{n-1}}(x,t)\in\tilde\mg.
\ee

From~\er{condyi},~\er{condy0} we get
\begin{gather}
\label{tcondyi}
\forall\,i=1,\dots,n-2,\qquad {\tilde{Y}}_i\,\Big|_{u^{i+1}=u^{i+2}=\dots=u^{n-1}=0}=0,
\qquad {\tilde{Y}}_{n-1}=0,\\ 
\label{tcondy0}
{\tilde{Y}}\,\Big|_{u^{1}=u^{2}=\dots=u^{n-1}=0}=0.
\end{gather}

Similarly to Lemma~\ref{yyab}, using Lemma~\ref{zpsi}, 
from equations~\er{tihk} we obtain the following. 
\begin{lemma}
\label{alalab}
For any $i,j,i_1,\dots,i_{n-1}$, 
the power series $\al^{ij}_{i_1,\dots,i_{n-1}}(x,t)$ belongs to the vector subspace spanned 
by the power series $\al^{{\ai}{\bi}}_{0,\dots,0}(x,t),\ {\ai},{\bi}=1,\dots,n-1$.
\end{lemma}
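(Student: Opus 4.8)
The plan is to deduce the statement from the abstract power-series Lemma~\ref{zpsi}, exactly as Lemma~\ref{yyab} was deduced from the system~\er{system.defining.Yij}. Here the role of the $Y_{ij}$ is played by the family $\tilde{T}_{ij}$ and the role of~\er{system.defining.Yij} by equations~\er{tihk} (which are available in the quotient $\bl=\fd^1(\CE,a)/\mathfrak{S}$ because, after killing $\mathfrak{S}$, the function $A$ acquires the form~\er{atyi}, so that~\er{tihk} holds for the power series $\tilde{T}_{ij}$).

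First I would set $v^h=u^h$, take $Z_1,\dots,Z_{n-1}$ in Lemma~\ref{zpsi} to be $\tilde{Y}_1,\dots,\tilde{Y}_{n-1}$, and let the family $\Psi_1,\dots,\Psi_m$ be the doubly-indexed collection $\{\tilde{T}_{ij}\}$ relabelled by a single index (by skew-symmetry of $\tilde{T}_{ij}$ one may restrict to $i<j$, but this is inessential). The coefficient Lie algebra $\mg$ of Lemma~\ref{zpsi} is taken to be $\tilde\mg$, so that the dependence on $x,t$ is absorbed into the coefficients, just as in Lemma~\ref{yyab}. The hypothesis~\er{zivi1} for the $Z_h$ is then precisely~\er{tcondyi}, which holds since it is inherited from~\er{condyi}.

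Next I would check that~\er{tihk} has exactly the shape~\er{psivi}. The right-hand side of~\er{tihk} consists of the single bracket term $[\tilde{T}_{ih},\tilde{Y}_k]$, which matches $[\Psi_l,Z_h]$ with the differentiation index $k$ in the role of $h$, together with the combination $2\Delta^{-1}\big(2u^k\tilde{T}_{hi}+u^h\tilde{T}_{ki}+u^i\tilde{T}_{hk}+\dots\big)$ of the $\tilde{T}_{pq}$. The point to verify is that the coefficients of this combination are scalar formal power series over $\fik$: each is a product of a Kronecker symbol, a monomial in the $u^m$, and $\Delta^{-1}=(1+\langle u,u\rangle)^{-1}$, and the latter expands as a formal power series in $u^1,\dots,u^{n-1}$ with coefficients in $\fik$. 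Hence these coefficients furnish precisely the $\fik$-valued series $f^i_l$ required in~\er{psivi}.

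With the template matched, Lemma~\ref{zpsi} applies and yields that every coefficient $\al^{ij}_{i_1,\dots,i_{n-1}}(x,t)$ lies in the $\fik$-span of the zero-degree coefficients $\al^{\ai\bi}_{0,\dots,0}(x,t)$, $\ai,\bi=1,\dots,n-1$. I do not expect a genuine obstacle: the only content is the pattern-matching of~\er{tihk} against~\er{psivi}, whose crux is the observation that $\Delta^{-1}$ is a bona fide formal power series, so that the non-bracket terms are $\fik$-linear combinations of the $\tilde{T}_{pq}$ rather than new generators, and that the bracket occurs against $\tilde{Y}_k$ with $k$ the differentiation index.
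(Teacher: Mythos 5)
Your proposal is correct and follows exactly the paper's route: the paper likewise deduces Lemma~\ref{alalab} from Lemma~\ref{zpsi} applied to equations~\er{tihk}, with $Z_h=\tilde{Y}_h$ (satisfying~\er{tcondyi}), $\Psi_l$ the family $\tilde{T}_{ij}$, and coefficients taken in $\tilde\mg$ so that the $x,t$-dependence is absorbed, in direct analogy with Lemma~\ref{yyab}. Your added verification that the non-bracket terms of~\er{tihk} are $\fik$-power-series combinations of the $\tilde{T}_{pq}$ (since $\Delta^{-1}$ expands as a formal power series) is precisely the pattern-matching the paper leaves implicit.
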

Similarly to~\er{yijx0}, from~\er{tihx} and \er{tcondy0} one gets 
\begin{multline}
\label{alijx0}
\frac{\pd}{\pd x}\Big(\al^{{\ai}{\bi}}_{0,\dots,0}(x,t)\Big)=
{\tilde{T}}_{{\ai}{\bi},x}\,\Big|_{u^{1}=u^{2}=\dots=u^{n-1}=0} =\\
= \Big[{\tilde{T}}_{{\ai}{\bi}}\,\Big|_{u^{1}=u^{2}=\dots=u^{n-1}=0},\,
\tilde{Y}\,\Big|_{u^{1}=u^{2}=\dots=u^{n-1}=0}\Big]=0\qquad\forall\,{\ai},{\bi}.
\end{multline}
Combining~\er{alijx0} with Lemma~\ref{alalab}, we obtain 
\beq
\lb{ttabx0}
{\tilde{T}}_{{\ai}{\bi},x}=0\qquad\quad\forall\,{\ai},{\bi}
\ee 
and, by~\er{tihx}, 
\beq
\label{ttijy0}
[{\tilde{T}}_{ij},\tilde{Y}]=0\qquad\quad\forall\,i,j.
\ee

\begin{lemma}
\lb{blgen}
The algebra $\bl$ is generated by the coefficients of the power series 
\beq
\lb{ttijy}
{\tilde{T}}_{ij}\,\Big|_{t=0},\qquad \tilde{Y}\,\Big|_{t=0},\qquad\quad
i,j=1,\dots,n-1.
\ee
\end{lemma}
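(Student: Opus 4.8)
The plan is to pin down a generating set of $\bl$ via Theorem~\ref{lemgenfdq} and then trade the generators $\tilde{Y}_i$ for the $\tilde{T}_{ij}$ by means of Lemma~\ref{zi}. First I would record the generators coming from the normal form. By Theorem~\ref{lemgenfdq} the algebra $\fd^1(\CE,a)$ is generated by the coefficients of $A\,\big|_{t=0}$. Passing to the quotient $\bl=\fd^1(\CE,a)/\mathfrak{S}$, where the representation \er{atyi} is valid, the images of these coefficients are exactly the coefficients of the power series $\tilde{Y}_i\,\big|_{t=0}$ for $i=1,\dots,n-1$ together with those of $\tilde{Y}\,\big|_{t=0}$. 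Hence $\bl$ is generated by the coefficients of $\tilde{Y}_i\,\big|_{t=0}$ and $\tilde{Y}\,\big|_{t=0}$, and it remains to replace the first family by the coefficients of $\tilde{T}_{ij}\,\big|_{t=0}$.

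Next I would invoke Lemma~\ref{zi}. Set $v^i=u^i$ and regard $Z_i=\tilde{Y}_i\,\big|_{t=0}$ as power series in $u^1,\dots,u^{n-1}$ with coefficients in the Lie algebra $\mg$ of formal power series in $x$ with coefficients in $\bl$ (the same device already used for Lemma~\ref{yyab}). The hypothesis \er{zivi1} of Lemma~\ref{zi} is exactly \er{tcondyi} restricted to $t=0$. Since $u^j$-derivatives commute with the restriction $|_{t=0}$ and the bracket is taken coefficientwise, definition \er{tildet} gives $[Z_i,Z_j]-\partial_{u^j}Z_i+\partial_{u^i}Z_j=\tilde{T}_{ij}\,\big|_{t=0}$. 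Lemma~\ref{zi} then tells us that the subalgebra of $\mg$ generated by the $\mg$-coefficients of the $\tilde{Y}_i\,\big|_{t=0}$ coincides with the subalgebra of $\mg$ generated by the $\mg$-coefficients of the $\tilde{T}_{ij}\,\big|_{t=0}$.

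Finally I would lift this from $\mg$-coefficients back to $\bl$-coefficients. The key routine observation is that for any two power series $P,Q$ in $x$ with coefficients in $\bl$, every $\bl$-coefficient of $[P,Q]$ is a finite $\fik$-linear combination of brackets of $\bl$-coefficients of $P$ and of $Q$; by induction this shows that, for a subalgebra $\mathfrak{h}\subseteq\mg$, the subalgebra of $\bl$ generated by all $\bl$-coefficients of elements of $\mathfrak{h}$ equals the subalgebra of $\bl$ generated by the $\bl$-coefficients of any generating set of $\mathfrak{h}$. Applying this to the two generating sets furnished by the previous paragraph yields that the subalgebra of $\bl$ generated by the coefficients of the $\tilde{Y}_i\,\big|_{t=0}$ equals the one generated by the coefficients of the $\tilde{T}_{ij}\,\big|_{t=0}$. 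Adjoining the common family $\tilde{Y}\,\big|_{t=0}$ to both generating sets and combining with the first paragraph gives the assertion of the lemma.

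The main obstacle is precisely this bookkeeping with the extra variable $x$: Lemma~\ref{zi} is tailored to $n-1$ variables matching the $n-1$ series $\tilde{Y}_i$, whereas $\tilde{Y}_i\,\big|_{t=0}$ still depends on $x$ (in contrast to $\tilde{T}_{ij}$, which is $x$-independent by \er{ttabx0}). The lifting step is what absorbs $x$ into the coefficient Lie algebra $\mg$ and then transports the generation statement faithfully back to the level of $\bl$; everything else is a direct application of the already established lemmas.
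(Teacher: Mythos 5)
Your proof is correct and follows essentially the same route as the paper's: first obtain the generating set $\tilde{Y}_i\,\big|_{t=0}$, $\tilde{Y}\,\big|_{t=0}$ from Theorem~\ref{lemgenfdq} (as in Lemma~\ref{fd1gen}) together with formula~\er{atyi}, then trade the $\tilde{Y}_i\,\big|_{t=0}$ for the $\tilde{T}_{ij}\,\big|_{t=0}$ via Lemma~\ref{zi}. The only difference is that you spell out the bookkeeping of the $x$-variable (absorbing it into the coefficient Lie algebra and lifting the generation statement back to $\bl$), a step the paper leaves implicit.
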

\begin{proof}
Similarly to Lemma~\ref{fd1gen}, taking into account formula~\er{atyi}, 
one proves that $\bl$ is generated by the coefficients of 
$\tilde{Y}_i\,\Big|_{t=0},\,\ \tilde{Y}\,\Big|_{t=0}$. 
By Lemma~\ref{zi}, the Lie subalgebra generated by the coefficients 
of~$\tilde{Y}_i\,\Big|_{t=0}$ coincides with the Lie subalgebra generated 
by the coefficients of~${\tilde{T}}_{ij}\,\Big|_{t=0}$. 
\end{proof}

Similarly to Lemma~\ref{yijzab}, using equation~\er{tihk}, we obtain the following.
\begin{lemma}
\lb{tijzab}
Any coefficient of the power series ${\tilde{T}}_{ij}\,\Big|_{t=0}$ belongs 
to the vector subspace spanned by the zero degree coefficients 
\beq
\lb{bdgb}
\beta^{{\ai}{\bi}}={\tilde{T}}_{{\ai}{\bi}}\,\Big|_{x=t=u^1=\dots=u^{n-1}=0},
\qquad {\ai},{\bi}=1,\dots,n-1.
\ee
\end{lemma}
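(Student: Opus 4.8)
The plan is to transcribe the proof of Lemma~\ref{yijzab} almost verbatim, using Lemma~\ref{alalab} and equation~\er{alijx0} in place of Lemma~\ref{yyab} and equation~\er{yijx0}. Recall that $\tilde{T}_{ij}$ is expanded as a power series in $u^1,\dots,u^{n-1}$ with coefficients $\al^{ij}_{i_1,\dots,i_{n-1}}(x,t)\in\tilde\mg$, where $\tilde\mg$ is the algebra of formal power series in $x,t$ with coefficients in $\bl$. A coefficient of $\tilde{T}_{ij}\big|_{t=0}$, viewed as a power series in the variables $x,u^1,\dots,u^{n-1}$, is therefore extracted from $\al^{ij}_{i_1,\dots,i_{n-1}}(x,0)$.

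First I would apply Lemma~\ref{alalab} to write each $\al^{ij}_{i_1,\dots,i_{n-1}}(x,t)$ as a $\fik$-linear combination of the series $\al^{{\ai}{\bi}}_{0,\dots,0}(x,t)$, $\ai,\bi=1,\dots,n-1$. Then, by equation~\er{alijx0}, one has $\frac{\pd}{\pd x}\big(\al^{{\ai}{\bi}}_{0,\dots,0}(x,t)\big)=0$, so each $\al^{{\ai}{\bi}}_{0,\dots,0}$ depends on $t$ only. Because the combining coefficients are constant scalars in $\fik$, this $x$-independence passes to the combination, so every $\al^{ij}_{i_1,\dots,i_{n-1}}(x,t)$ is independent of $x$. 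Consequently $\tilde{T}_{ij}\big|_{t=0}$ carries no $x$-dependence, and its coefficients are exactly the constant elements $\al^{ij}_{i_1,\dots,i_{n-1}}(0)\in\bl$ attached to the monomials $(u^1)^{i_1}\cdots(u^{n-1})^{i_{n-1}}$.

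It then remains to identify these constants: each is a $\fik$-combination of the $\al^{{\ai}{\bi}}_{0,\dots,0}(0)$, and by definition $\al^{{\ai}{\bi}}_{0,\dots,0}(0)=\tilde{T}_{{\ai}{\bi}}\big|_{x=t=u^1=\dots=u^{n-1}=0}=\beta^{{\ai}{\bi}}$, which gives the assertion. The step I would watch most carefully is the inheritance of $x$-independence through Lemma~\ref{alalab}: one must check that the scalars furnished there are genuine elements of $\fik$, not functions of $x$ or $t$, so that applying $\pd/\pd x$ to the combination reduces to applying it to the spanning series. Since Lemma~\ref{alalab} provides precisely a span over $\fik$, I do not expect any real obstacle here, and the whole argument is the formal mirror of the one already carried out for $Y_{ij}$ in Lemma~\ref{yijzab}.
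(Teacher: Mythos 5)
Your proof is correct and follows the same route the paper intends: the paper simply states that Lemma~\ref{tijzab} is obtained "similarly to Lemma~\ref{yijzab}", i.e.\ by combining Lemma~\ref{alalab} with equation~\er{alijx0} (giving \er{ttabx0}) exactly as you do. Your care about the span in Lemma~\ref{alalab} being over $\fik$ is well placed but unproblematic, since that lemma asserts membership in a finitely spanned vector subspace of $\tilde\mg$.
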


\begin{lemma}
\lb{tijso}
Let $\mathfrak{Z}\subset\fd^1(\CE,a)/\mathfrak{S}$ be the Lie subalgebra 
generated by the coefficients of the power series~${\tilde{T}}_{ij}\,\Big|_{t=0}$.
This subalgebra is isomorphic to $\so_{n-1}$.
\end{lemma}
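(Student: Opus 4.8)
The plan is to show that the generators of $\mathfrak{Z}$ already span a finite-dimensional space carrying exactly the $\so_{n-1}$ bracket, and then to use the explicit $\so_{n-1}$-valued ZCR of Theorem~\ref{zcrson} to prove that no collapsing occurs.

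First I would record two structural facts about the power series ${\tilde{T}}_{ij}$. By~\er{tildet} they are antisymmetric, ${\tilde{T}}_{ij}=-{\tilde{T}}_{ji}$, and by~\er{ttabx0} they do not depend on $x$; hence ${\tilde{T}}_{ij}\big|_{t=0}$ is a power series in $u^1,\dots,u^{n-1}$ alone whose constant term is the element $\beta^{\ai\bi}$ of~\er{bdgb}. By Lemma~\ref{tijzab} every coefficient of ${\tilde{T}}_{ij}\big|_{t=0}$ lies in the subspace $W=\mathrm{span}\{\beta^{\ai\bi}\}$, whose dimension is at most $\binom{n-1}{2}$. Evaluating the commutation relation~\er{eq.comm.Tih.Tkm} at $u^1=\dots=u^{n-1}=0$ (where $\Delta^{-2}=1$) gives $[\beta^{ih},\beta^{km}]=4(\delta^h_m\beta^{ki}+\delta^h_k\beta^{im}+\delta^i_m\beta^{hk}+\delta^i_k\beta^{mh})$, so $W$ is closed under the bracket. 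Since the generators of $\mathfrak{Z}$ all lie in $W$, while each $\beta^{\ai\bi}$ is itself a coefficient of ${\tilde{T}}_{\ai\bi}\big|_{t=0}$ and hence lies in $\mathfrak{Z}$, I conclude $\mathfrak{Z}=W$.

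Next I would identify these relations with those of $\so_{n-1}$. Writing $F_{ij}=E_{i,j}-E_{j,i}$, a direct check shows that the assignment $4F_{ij}\mapsto\beta^{ij}$ respects brackets, because $[4F_{ih},4F_{km}]=4(\delta^h_m\,4F_{ki}+\delta^h_k\,4F_{im}+\delta^i_m\,4F_{hk}+\delta^i_k\,4F_{mh})$ matches the displayed relation for the $\beta$'s. As the $4F_{ij}$ with $i<j$ form a basis of $\so_{n-1}$, this yields a surjective homomorphism $\pi\colon\so_{n-1}\to\mathfrak{Z}$, $4F_{ij}\mapsto\beta^{ij}$.

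Finally, injectivity of $\pi$ is the one point that needs the ZCR rather than mere bracket bookkeeping, and is the \emph{main obstacle}: a purely relational argument cannot exclude a degeneration of $W$ (e.g.\ $\so_4$ is not simple, so one cannot invoke simplicity uniformly in $n$). Here I would use the homomorphism $\hmm\colon\fd^1(\CE,a)\to\so_{n-1}$ attached to Theorem~\ref{zcrson} via Remark~\ref{homfdhmg}. Since $\hmm$ sends the coefficients of $Y_{ij}$ to $0$ (the ZCR of Theorem~\ref{zcrson} has $A$ of the form~\er{ahaty}, with no $u^i_1u^j_1$ term), the ideal $\mathfrak{S}$ lies in $\ker\hmm$, so $\hmm$ descends to $\bar\hmm\colon\fd^1(\CE,a)/\mathfrak{S}\to\so_{n-1}$ with $\bar\hmm({\tilde{T}}_{ij})={\hat{T}}_{ij}$ by~\er{vfyi}. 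Evaluating at $u^1=\dots=u^{n-1}=0$ and using~\er{hattekh} gives $\bar\hmm(\beta^{ij})={\hat{T}}_{ij}\big|_{u^1=\dots=u^{n-1}=0}=4F_{ij}$, so $\bar\hmm\circ\pi=\id_{\so_{n-1}}$. Hence $\pi$ is injective, and being already surjective it is an isomorphism $\so_{n-1}\xrightarrow{\sim}\mathfrak{Z}$, as claimed.
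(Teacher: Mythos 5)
Your proposal is correct and follows essentially the same route as the paper: you build the surjective homomorphism $\so_{n-1}\to\mathfrak{Z}$ from the bracket relations~\er{eq.comm.Tih.Tkm} evaluated at $u^1=\dots=u^{n-1}=0$, and you prove injectivity by exhibiting the left inverse induced by the $\so_{n-1}$-valued ZCR of Theorem~\ref{zcrson} via~\er{hattekh}, exactly as in the paper's proof (which uses the maps $\mathbf{f}$ and $\hat{\hmm}$). The only cosmetic difference is that you establish $\mathfrak{Z}=\mathrm{span}\{\beta^{\ai\bi}\}$ up front, whereas the paper invokes Lemma~\ref{tijzab} at the end.
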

\begin{proof}
Recall that $E_{i,j}\in\mathfrak{gl}_{n-1}(\fik)$ is the matrix with
$(i,j)$-th entry equal to 1 and all other entries equal to zero.
The matrices 
$$
C_{kh}=4(E_{k,h}-E_{h,k}),\qquad\quad
k,h=1,\dots,n-1,\qquad\quad 
k<h,
$$
form a basis of the Lie algebra $\mathfrak{so}_{n-1}\subset\mathfrak{gl}_{n-1}(\fik)$.

Substituting $x=t=u^1=\dots=u^{n-1}=0$ in equation~\er{eq.comm.Tih.Tkm}, 
we see that the elements~\er{bdgb} satisfy
\begin{equation}
\label{betaeq}
[\beta^{ih},\beta^{km}]=
4(\delta^h_m\beta^{ki}+\delta^h_k\beta^{im}+\delta^i_m\beta^{hk}+\delta^i_k\beta^{mh}).
\end{equation}
Since ${\tilde{T}}_{{\ai}{\bi}}=-{\tilde{T}}_{{\bi}{\ai}}$, 
we have $\beta^{{\ai}{\bi}}=-\beta^{{\bi}{\ai}}$.
Combining this with equation~\er{betaeq}, we see that the map
\beq
\lb{fszb}
\mathbf{f}\cl\mathfrak{so}_{n-1}\to\mathfrak{Z},\qquad\quad 
\mathbf{f}(C_{kh})=\beta^{kh},
\ee
is a homomorphism.

Consider the homomorphism $\hmm\cl\fd^1(\CE,a)\to\mathfrak{so}_{n-1}$
constructed in Section~\ref{secyij}.
Since $\hmm(Y_{ij})=0$ and the ideal $\mathfrak{S}\subset\fd^1(\CE,a)$
is generated by the coefficients of~$Y_{ij}$,
the homomorphism $\hmm$ determines a homomorphism 
$\hat{\hmm}\cl\fd^1(\CE,a)/\mathfrak{S}\to\mathfrak{so}_{n-1}$.

Consider the ZCR with coefficients in $\mathfrak{so}_{n-1}$ 
constructed in Theorem~\ref{zcrson}. 
Recall that, according to~\er{ahaty},~\er{hattdef}, and Theorem~\ref{zcrson},   
this ZCR determines the power series ${\hat{Y}}_i,\,{\hat{T}}_{kh}$ 
in the variables $u^1,\dots,u^{n-1}$ with coefficients in $\mathfrak{so}_{n-1}$
such that \er{hattekh}, \er{vfyi} hold.

For a power series $P$ with coefficients in $\fd^1(\CE,a)/\mathfrak{S}$, 
we can apply $\hat{\hmm}$ to each coefficient of~$P$ and obtain 
a power series $\hat{\hmm}(P)$ with coefficients in $\mathfrak{so}_{n-1}$. 

The projection $\fd^1(\CE,a)\to\fd^1(\CE,a)/\mathfrak{S}$ 
maps $Y_i$ to ${\tilde{Y}}_i$ and $T_{kh}$ to ${\tilde{T}}_{kh}$. 
Combining this with~\er{vfyi}, we obtain
\beq
\label{tivfyi}
\hat{\hmm}({\tilde{Y}}_i)={\hat{Y}}_i,\qquad 
\hat{\hmm}({\tilde{T}}_{kh})={\hat{T}}_{kh}.
\ee 
From \er{hattekh}, \er{bdgb}, \er{tivfyi} we get
\beq
\lb{hbkhe}
\hat{\hmm}(\beta^{kh})=4(E_{k,h}-E_{h,k})=C_{kh}.
\ee
Relations \er{fszb}, \er{hbkhe} imply that the Lie subalgebra 
generated by the elements $\beta^{kh}$ is isomorphic to 
$\mathfrak{so}_{n-1}$.
According to Lemma~\ref{tijzab}, the Lie subalgebra generated by $\beta^{kh}$ 
coincides with $\mathfrak{Z}$, so $\mathfrak{Z}$ is isomorphic to $\mathfrak{so}_{n-1}$.
\end{proof}

\begin{lemma}
\lb{anycty}
Any coefficient of~${\tilde{T}}_{ij}\,\Big|_{t=0}$ commutes 
with any coefficient of~$\tilde{Y}\,\Big|_{t=0}$.
\end{lemma}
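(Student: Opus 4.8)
The plan is to mirror the proof of Lemma~\ref{lemyijp}, now with ${\tilde{T}}_{ij}$ playing the role of $Y_{ij}$ and $\tilde{Y}\,\big|_{t=0}$ playing the role of $P$. First I would substitute $t=0$ into equation~\er{ttijy0} to obtain $\big[{\tilde{T}}_{ij}\,\big|_{t=0},\,\tilde{Y}\,\big|_{t=0}\big]=0$ for all $i,j$. By~\er{ttabx0} the series ${\tilde{T}}_{ij}\,\big|_{t=0}$ does not depend on $x$, so it is a power series in $u^1,\dots,u^{n-1}$ alone; its zero-degree coefficient is $\beta^{ij}$ from~\er{bdgb}, and by Lemma~\ref{tijzab} every one of its coefficients lies in the vector subspace spanned by the elements $\beta^{{\ai}{\bi}}$, ${\ai},{\bi}=1,\dots,n-1$.

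With these facts in hand I would run an induction on the total $u$-degree of the coefficients of $\tilde{Y}\,\big|_{t=0}$. Writing ${\tilde{T}}_{ij}\,\big|_{t=0}=\sum_I u^I \alpha_I^{ij}$, where each $\alpha_I^{ij}$ lies in the span of the $\beta^{{\ai}{\bi}}$ and $\alpha_0^{ij}=\beta^{ij}$, and expanding $\tilde{Y}\,\big|_{t=0}$ as a power series in $x,u$, I would isolate in $\big[{\tilde{T}}_{ij}\,\big|_{t=0},\,\tilde{Y}\,\big|_{t=0}\big]=0$ the coefficient of a fixed monomial in $x,u$. The base case (the $u$-degree-zero part) gives that each $\beta^{ij}$ commutes with every $u$-degree-zero coefficient of $\tilde{Y}\,\big|_{t=0}$; since the $\beta^{ij}$ span the relevant subspace as $i,j$ vary, every element of that subspace commutes with those coefficients. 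In the inductive step, all contributions in which ${\tilde{T}}_{ij}\,\big|_{t=0}$ supplies a positive $u$-degree factor vanish by the inductive hypothesis, because their coefficients $\alpha_I^{ij}$ lie in the span of the $\beta^{{\ai}{\bi}}$ and hence commute with the strictly lower-degree coefficients of $\tilde{Y}\,\big|_{t=0}$; this leaves only $\big[\beta^{ij},\,(\text{top-degree coefficient})\big]=0$. Varying $i,j$ then shows that every $\beta^{{\ai}{\bi}}$ commutes with every coefficient of $\tilde{Y}\,\big|_{t=0}$.

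Finally, since Lemma~\ref{tijzab} guarantees that every coefficient of ${\tilde{T}}_{ij}\,\big|_{t=0}$ lies in the span of the $\beta^{{\ai}{\bi}}$, the commutation just established yields the assertion of the lemma. This argument is essentially routine once the analogy with Lemma~\ref{lemyijp} is set up, so there is no genuine obstacle; the only point requiring care is the bookkeeping in the inductive step, namely checking that after extracting the coefficient of a given monomial every term except $\big[\beta^{ij},\,\cdot\,\big]$ involves a coefficient of $\tilde{Y}\,\big|_{t=0}$ of strictly smaller degree, so that the inductive hypothesis applies. Indeed, I expect the cleanest write-up simply to invoke Lemma~\ref{lemyijp} under the dictionary $Y_{ij}\leftrightarrow{\tilde{T}}_{ij}$, $z^{{\ai}{\bi}}\leftrightarrow\beta^{{\ai}{\bi}}$, $P\leftrightarrow\tilde{Y}\,\big|_{t=0}$, with Lemma~\ref{tijzab} replacing Lemma~\ref{yijzab} and~\er{ttijy0} replacing the hypothesis~\er{yijp0}.
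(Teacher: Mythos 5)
Your proposal is correct and follows exactly the route the paper intends: the paper's proof of this lemma is just the remark that it "is proved similarly to Lemma~\ref{lemyijyy}, using~\er{ttijy0} and Lemma~\ref{tijzab}," which unwinds to precisely your dictionary ${\tilde{T}}_{ij}\leftrightarrow Y_{ij}$, $\beta^{{\ai}{\bi}}\leftrightarrow z^{{\ai}{\bi}}$, $\tilde{Y}\,\big|_{t=0}\leftrightarrow P$ and the same induction on the degree of the coefficients of $P$ as in Lemma~\ref{lemyijp}. Your explicit bookkeeping of the inductive step is a faithful expansion of what the paper leaves implicit.
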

\begin{proof}
The statement is proved similarly to Lemma~\ref{lemyijyy}, 
using~\er{ttijy0} and Lemma~\ref{tijzab}.
\end{proof}
Recall that $L(n)$ is the infinite-dimensional Lie algebra generated by~\er{qie}.

\begin{theorem}
\lb{tijnge4}
Suppose that $n\ge 4$. 
Then the Lie algebra $\bl=\fd^1(\CE,a)/\mathfrak{S}$ is isomorphic to the direct sum 
$L(n)\oplus\so_{n-1}$, where $\so_{n-1}$ is generated by the coefficients of~${\tilde{T}}_{ij}$. 
The ideal~$\so_{n-1}$ coincides also with the subalgebra generated 
by the coefficients of~$\tilde{Y}_i$.

The homomorphism $\fd^{1}(\CE,a)\to\fd^0(\CE,a)$ 
from~\er{fdnn-1} coincides with the composition of the homomorphisms
\beq
\lb{fd1lnso}
\fd^{1}(\CE,a)\to\fd^1(\CE,a)/\mathfrak{S}\cong
L(n)\oplus\mathfrak{so}_{n-1}
\to L(n)\cong \fd^0(\CE,a),
\ee
where $L(n)\cong \fd^0(\CE,a)$ is the isomorphism described in Theorem~\ref{thfd0}.
\end{theorem}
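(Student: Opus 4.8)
The plan is to exhibit $\bl=\fd^1(\CE,a)/\mathfrak{S}$ as an internal direct sum of two ideals, one isomorphic to $\so_{n-1}$ and the other to $L(n)$, and then to match the projection onto the $L(n)$-factor with the homomorphism $\tau_1\cl\fd^1(\CE,a)\to\fd^0(\CE,a)$ from~\er{fdnn-1}. First I would check that the subalgebra $\mathfrak{Z}\subset\bl$ generated by the coefficients of the series $\tilde T_{ij}$ (which is isomorphic to $\so_{n-1}$ by Lemma~\ref{tijso}) is in fact an ideal. By Lemma~\ref{blgen} the algebra $\bl$ is generated by the coefficients of $\tilde T_{ij}\big|_{t=0}$ together with those of $\tilde Y\big|_{t=0}$; bracketing $\mathfrak{Z}$ with a coefficient of $\tilde T_{ij}\big|_{t=0}$ stays in $\mathfrak{Z}$ because it is a subalgebra, while by Lemma~\ref{anycty} each coefficient of $\tilde Y\big|_{t=0}$ commutes with the generators of $\mathfrak{Z}$, so the derivation given by $\ad$ of such a coefficient annihilates all of $\mathfrak{Z}$. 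Hence $[\bl,\mathfrak{Z}]\subseteq\mathfrak{Z}$.

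To split off this ideal I would use the homomorphism $\hat{\hmm}\cl\bl\to\so_{n-1}$ induced by the $\so_{n-1}$-valued ZCR of Theorem~\ref{zcrson}. By~\er{hbkhe} the restriction $\hat{\hmm}\big|_{\mathfrak{Z}}$ sends $\beta^{kh}$ to $C_{kh}$, hence is the inverse of the isomorphism $\mathbf{f}$ of Lemma~\ref{tijso}; in particular $\hat{\hmm}$ is surjective and $\hat{\hmm}\big|_{\mathfrak{Z}}$ is an isomorphism onto $\so_{n-1}$. Setting $K=\ker\hat{\hmm}$, an ideal, this yields the internal direct sum of ideals $\bl=\mathfrak{Z}\oplus K$ with $\mathfrak{Z}\cong\so_{n-1}$, so that $K\cong\bl/\mathfrak{Z}$; it remains to identify $K$ with $L(n)$.

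For this I would exploit $\tau_1$. According to~\er{tocga} the map $\tau_1$ kills every generator whose index matrix has a nonzero last column --- that is, the coefficients of $Y_{ij}$ and of $Y_i$ --- and sends the coefficients of $Y$ onto generators of $\fd^0(\CE,a)$. Since $\mathfrak{S}\subseteq\ker\tau_1$, the map $\tau_1$ descends to a surjection $\bar\tau_1\cl\bl\to\fd^0(\CE,a)$ whose kernel is the ideal of $\bl$ generated by the coefficients of $\tilde Y_i$. As $\tilde T_{ij}$ is a bracket-and-derivative expression in the $\tilde Y_i$, the ideal $\mathfrak{Z}$ lies in $\ker\bar\tau_1$, so $\bar\tau_1$ factors through $\bl/\mathfrak{Z}$; composing with the isomorphism $\fd^0(\CE,a)\cong L(n)$ of Theorem~\ref{thfd0} gives a surjection $\bl/\mathfrak{Z}\twoheadrightarrow L(n)$, and establishing its injectivity is equivalent to the equality $\ker\bar\tau_1=\mathfrak{Z}$.

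The hard part is precisely this equality, which amounts to showing that every coefficient of $\tilde Y_i$ lies in $\mathfrak{Z}$, equivalently (via~\er{tivfyi}) that $\tilde Y_i$ is independent of $x$ and $t$. Decomposing $\tilde Y_i=p_{\mathfrak{Z}}(\tilde Y_i)+p_K(\tilde Y_i)$ along $\bl=\mathfrak{Z}\oplus K$ and using that $\hat{\hmm}(\tilde Y_i)=\hat Y_i$ depends only on $u^1,\dots,u^{n-1}$ (Theorem~\ref{zcrson}), the $\mathfrak{Z}$-component is $\mathbf{f}(\hat Y_i)$ and is already $x,t$-independent, so the whole issue is the vanishing of the $K$-component $p_K(\tilde Y_i)$. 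Since $\fd^1(\CE,a)$ is generated by its $t=0$ coefficients (Theorem~\ref{lemgenfdq}), Lemma~\ref{blgen} together with $\mathfrak{Z}\cap K=0$ forces $p_K(\tilde Y_i)\big|_{t=0}=0$; the remaining and most delicate point is that $p_K(\tilde Y_i)$ carries no positive $t$-order. I expect to obtain this by reading off $\tilde Y_{i,t}$ from the part of the zero-curvature equation~\er{eqq.covering} that is linear in $u^i_1$ and checking, with the help of~\er{ttabx0},~\er{ttijy0} and the semisimplicity of $\so_{n-1}$ (available precisely for $n\ge4$), that this expression has trivial $K$-component; then $W_i:=p_K(\tilde Y_i)$ satisfies $W_{i,t}=0$ with $W_i\big|_{t=0}=0$ and hence vanishes. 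Once $\ker\bar\tau_1=\mathfrak{Z}$ is proved, $K\cong\bl/\mathfrak{Z}\cong L(n)$, the subalgebra generated by the coefficients of $\tilde Y_i$ coincides with $\mathfrak{Z}=\so_{n-1}$, and $\bar\tau_1$ is the projection of $\bl\cong L(n)\oplus\so_{n-1}$ onto $L(n)$ followed by the isomorphism of Theorem~\ref{thfd0}, which is the claimed factorization.
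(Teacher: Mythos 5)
Your architecture is essentially the paper's, and most of it is sound: $\mathfrak{Z}$ is indeed an ideal (the paper gets this instead from $\bl=S_1+S_2$ with $[S_1,S_2]=0$ and $S_2\cong\so_{n-1}$ centerless for $n\ge4$, where $S_1$ is generated by the coefficients of $\tilde{Y}\big|_{t=0}$), the splitting $\bl=\mathfrak{Z}\oplus\ker\hat{\hmm}$ via the section $\mathbf{f}$ is correct, and reducing the theorem to the equality of $\mathfrak{Z}$ with the ideal generated by the coefficients of $\tilde{Y}_i$ is the right reduction. One small imprecision: $p_K(\tilde{Y}_i)\big|_{t=0}=0$ does not follow from the generation statement of Lemma~\ref{blgen} together with $\mathfrak{Z}\cap K=0$; what you actually need is that the coefficients of $\tilde{Y}_i\big|_{t=0}$ lie in $\mathfrak{Z}$, which comes from Lemma~\ref{zi} applied to \er{tildet}, \er{tcondyi} at $t=0$ (this is how Lemma~\ref{blgen} is proved, so the fact is available, but you should cite the mechanism rather than its consequence).

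The genuine gap is the step you yourself flag as delicate: killing the positive $t$-orders of $p_K(\tilde{Y}_i)$. Your plan --- read $\tilde{Y}_{i,t}$ off the coefficient of $u^i_1$ in \er{eqq.covering} and check that its $K$-component vanishes --- is not carried out, and as stated it runs into the unknown series $F^2$: its $x$-, $u^i$- and $u^i_1$-derivatives all enter that coefficient, and you have no a priori control over the $K$-components of the coefficients of $F^2$ (the relations \er{ttabx0}, \er{ttijy0} you invoke concern $\tilde{T}_{ij}$ and $\tilde{Y}$, not $F^2$). The paper closes this step by a mechanism that never touches $F^2$: for $n\ge4$ one can choose three pairwise distinct indices $i,h,m$, and \er{eq.comm.Tih.Tkm} then reads $\tilde{T}_{im}=\tfrac14\Delta^{2}\,[\tilde{T}_{ih},\tilde{T}_{hm}]$; differentiating this repeatedly in $t$ and evaluating at $t=0$ (using \er{ttabx0} and Lemma~\ref{alalab}) shows that every coefficient of every $\tilde{T}_{\ai\bi}$ lies in the ideal generated by the coefficients of $\tilde{T}_{ij}\big|_{t=0}$, i.e.\ in $\mathfrak{Z}$, and Lemma~\ref{zi} then transfers this to the $\tilde{Y}_i$. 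Unless you can exhibit the cancellation of the $K$-component in your ZCR computation explicitly, you should replace that step by this argument; it is precisely where the quadratic relation \er{eq.comm.Tih.Tkm} and the hypothesis $n\ge4$ do the real work.
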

\begin{proof}
Let $S_1$ be the subalgebra generated by the coefficients of~$\tilde{Y}\,\Big|_{t=0}$
and $S_2$ be the subalgebra generated by the coefficients of~${\tilde{T}}_{ij}\,\Big|_{t=0}$.  
By Lemmas~\ref{blgen},~\ref{anycty}, one has $\bl=S_1+S_2$ and $[S_1,S_2]=0$. 
By Lemma~\ref{tijso}, $S_2\cong\so_{n-1}$. 
Since for $n\ge 4$ the center of the Lie algebra $\so_{n-1}$ is trivial, 
we obtain $S_1\cap S_2=0$ and, therefore, $\bl=S_1\oplus S_2$. In particular, $S_2$ is an ideal of $\bl$. 

For $n\ge 4$, from equations~\er{eq.comm.Tih.Tkm},~\er{ttabx0} and Lemma~\ref{alalab} 
it follows that for all $\ai,\bi=1,\dots,n-1$ any coefficient of~${\tilde{T}}_{\ai\bi}$ 
belongs to the ideal generated by
the coefficients of~${\tilde{T}}_{ij}\,\Big|_{t=0}$, $i,j=1,\dots,n-1$.
(To see this, one needs to differentiate~\er{eq.comm.Tih.Tkm} with respect to $t$ 
several times and substitute $t=0$.)

Since the subalgebra $S_2$ generated by the coefficients of~${\tilde{T}}_{ij}\,\Big|_{t=0}$ 
is an ideal, we obtain that all coefficients of~${\tilde{T}}_{\ai\bi}$ belong to~$S_2$. 

According to~\er{tildet} and~\er{tcondyi}, 
we can apply Lemma~\ref{zi} to the power series ${\tilde{Y}}_i$ and ${\tilde{T}}_{ij}$.
This implies that the subalgebra generated by the coefficients of 
${\tilde{Y}}_i$, $i=1,\dots,n-1$, coincides with 
the subalgebra $S_2$ generated by the coefficients of~${\tilde{T}}_{ij}$, $i,j=1,\dots,n-1$.
Since $S_2$ is an ideal, we see that
the ideal generated by the coefficients of $\tilde{Y}_i$ coincides with $S_2$. 
Therefore, taking into account formulas~\er{eq.T.general} and~\er{atyi}, 
one obtains that the quotient~$\bl/S_2$ is isomorphic to~$\fd^0(\CE,a)$. 
Combining this with Theorem~\ref{thfd0}, 
we get $S_1\cong\bl/S_2\cong\fd^0(\CE,a)\cong L(n)$.

According to formulas~\er{eq.X.polynomial.second.order} and \er{eq.T.general}, 
the kernel of the homomorphism $\fd^{1}(\CE,a)\to\fd^0(\CE,a)$ from~\er{fdnn-1} 
is generated by the coefficients of the power series $Y_{ij}$, $Y_i$.
Recall that $\mathfrak{S}\subset\fd^{1}(\CE,a)$ is the ideal generated 
by the coefficients of $Y_{ij}$. 
Applying the projection $\fd^{1}(\CE,a)\to\fd^1(\CE,a)/\mathfrak{S}$ to the coefficients 
of the power series $Y_i$, we get ${\tilde{Y}}_i$. As has been shown above, 
$$
\fd^1(\CE,a)/\mathfrak{S}\cong L(n)\oplus\mathfrak{so}_{n-1},
$$
where $\mathfrak{so}_{n-1}$ coincides with the ideal generated by the coefficients of ${\tilde{Y}}_i$.
According to Theorem~\ref{thfd0}, the algebra $\fd^0(\CE,a)$ 
is isomorphic to $L(n)$.
These results show that the homomorphism $\fd^{1}(\CE,a)\to\fd^0(\CE,a)$ 
from~\er{fdnn-1} 
coincides with the composition~\er{fd1lnso}.
\end{proof}

\begin{remark}
\lb{rzcrnge}
We have two different ZCRs for the same PDE~\er{main},
which can be transformed to the PDE~\er{pt} by the transformation~\er{insp}. 
Namely, we have the $\mathfrak{gl}_{n+1}$-valued ZCR \er{M}, \er{N} 
and the $\mathfrak{so}_{n-1}$-valued ZCR described in Theorem~\ref{zcrson}.

One can embed the Lie algebras $\gl_{n+1}$ and $\so_{n-1}$ 
into the Lie algebra $\gl_\sm$ for some $\sm\ge n+1$, and then 
one can regard these ZCRs as $\gl_\sm$-valued ZCRs.
One can ask whether these ZCRs can become gauge equivalent 
after suitable embeddings $\gl_{n+1}\hookrightarrow\gl_\sm$ and  
$\so_{n-1}\hookrightarrow\gl_\sm$.

Let us show that these ZCRs cannot become gauge equivalent.

The $\mathfrak{gl}_{n+1}$-valued ZCR \er{M}, \er{N} is of order~$\le 0$.
By Theorem~\ref{evcov} and Remark~\ref{ranorm}, 
there is an $a$-normal $\mathfrak{gl}_{n+1}$-valued ZCR $\mathfrak{Z}$ 
of order~$\le 0$ such that the ZCR \er{M}, \er{N} 
is gauge equivalent to the ZCR $\mathfrak{Z}$.

Theorem~\ref{thaca} and Remark~\ref{remaca} imply the following.
If two $a$-normal ZCRs $\mathfrak{C}_1$ and $\mathfrak{C}_2$ are gauge 
equivalent and the ZCR $\mathfrak{C}_1$ is of order~$\le 0$, then 
the ZCR $\mathfrak{C}_2$ is also of order~$\le 0$.

The $\mathfrak{so}_{n-1}$-valued ZCR described in Theorem~\ref{zcrson} 
is $a$-normal and is not of order~$\le 0$, 
because the function $A$ in this ZCR depends nontrivially on $u^j_1$.
Therefore, this $\mathfrak{so}_{n-1}$-valued ZCR cannot become gauge equivalent 
to the $\mathfrak{gl}_{n+1}$-valued ZCR $\mathfrak{Z}$, 
after any embeddings $\gl_{n+1}\hookrightarrow\gl_\sm$ and 
$\so_{n-1}\hookrightarrow\gl_\sm$.

Since the ZCR \er{M}, \er{N} is gauge equivalent to the ZCR $\mathfrak{Z}$, 
we see that the $\mathfrak{so}_{n-1}$-valued ZCR described in Theorem~\ref{zcrson}
cannot become gauge equivalent to the 
$\mathfrak{gl}_{n+1}$-valued ZCR \er{M}, \er{N},
after any embeddings $\gl_{n+1}\hookrightarrow\gl_\sm$ and 
$\so_{n-1}\hookrightarrow\gl_\sm$.
\end{remark}

\section{The algebras $\fd^\oc(\CE,a)$ for the multicomponent  Landau-Lifshitz system}
\label{secfdk}

\subsection{Preliminary computations}

Recall that the infinite prolongation $\CE$ of system~\er{pt} 
is an infinite-dimensional manifold with the coordinates~\eqref{xtuikn1},
where $u^i_m$ corresponds to $\dfrac{\pd^m u^i}{\pd x^m}$ for $m\in\zp$ 
and $i=1,\dots,n-1$. In particular, $u^i_0=u^i$. 

Consider an arbitrary point $a\in\CE$ given by~\er{pointevfd1}.
As has been said in Section~\ref{prcomp},
since the PDE~\er{pt} is invariant with respect to the change of variables 
$x\mapsto x-x_a,\ t\mapsto t-t_a$, 
it is sufficient to consider the case $x_a=t_a=0$.

For simplicity of exposition, we continue to assume~\er{aik=0}, 
so we assume that $a^i_k=0$ in~\er{pointevfd1}.
(In the case $a^i_k\neq 0$, the computations change very little, 
and the final result is the same.)

Fix an integer $k\ge 2$.
In this section we compute the algebra $\fd^k(\CE,a)$ of the PDE~\er{pt}. 

According to Remark~\ref{rem_fdpgen} and assumptions~\er{x0t0=0},~\er{aik=0}, 
in order to describe the Lie algebra $\fd^k(\CE,a)$ for the PDE~\er{pt}, 
we need to study the equations
\beq
\lb{gck}
D_x(B)-D_t(A)+[A,B]=0,
\ee
\begin{gather}
\label{gd=0fk}
\forall\,i_0=1,\dots,n-1,\qquad\forall\,k_0\ge 1,\qquad\quad
\frac{\pd A}{\pd u^{i_0}_{k_0}}
\,\,\bigg|_{u^i_k=0\ \forall\,(i,k)\succ(i_0,k_0-1)}=0,\\
\lb{gaukakfk}
A\,\Big|_{u^i_k=0\ \forall\,(i,k)}=0,\\
\lb{gbxx0fk}
B\,\Big|_{x=0,\ u^i_k=0\ \forall\,(i,k)}=0.
\end{gather}
where 
\begin{itemize}
\item $A=A(x,t,u^j_0,u^j_1,\dots,u^j_k)$ is a power series in the variables 
$x$, $t$, $u^j_0$, $u^j_1,\dots,u^j_k$ for $j=1,\dots,n-1$,
\item $B=B(x,t,u^j_0,u^j_1,\dots,u^j_{k+2})$ is a power series in the variables 
$x$, $t$, $u^j_0$, $u^j_1,\dots,u^j_{k+2}$ for $j=1,\dots,n-1$. 
\end{itemize}
The coefficients of the power series $A$, $B$ 
are generators of the Lie algebra $\fd^k(\CE,a)$.
Relations for these generators are provided by equations \er{gck}, \er{gd=0fk},
\er{gaukakfk}, \er{gbxx0fk}.

In this section, 
summations over repeated indices run from $1$ to $n-1$ 
when referred to the number of dependent variables and from $0$ to $k$ when referred to the order of derivatives, 
unless otherwise specified. For instance, $u^i_{m+1}F^1_{u^i_m}$ means $\sum_{i=1}^{n-1}\sum_{m=0}^k u^i_{m+1}F^1_{u^i_m}$. 
The integer $k\ge 2$ is fixed throughout this section, and there is no summation over $k$.


Recall that system~\er{pt} is of the form 
\begin{equation}\label{eq.landau.bis.bis}
u^i_t=u^i_{3}+G^i(u^j,u^j_1,u^j_{2})=u^i_{3}+G^i(u^j_0,u^j_1,u^j_2),
\quad\qquad i=1,\dots,n-1.
\end{equation}

Then equation~\er{gck} reads 
\begin{equation}\label{eq.covering.higher}
B_x + \sum_{j=0}^{k+2}u^i_{j+1}B_{u^i_j} - A_t -\sum_{j=0}^{k} u^i_{j+3}A_{u^i_j} - \sum_{j=0}^{k} D_x^j(G^i)A_{u^i_j} + [A,B]=0.
\end{equation}
%
%

If we put equal to zero the coefficient 
of $u^i_{k+3}$ in~\er{eq.covering.higher}, we obtain that
$
B_{u^i_{k+2}}-A_{u^i_k}=0,
$
and, by integrating this, we see that $B$ is of the form
\beq
\lb{bkf1}
B=u^i_{k+2}A_{u^i_k} + F^1(x,t,u^i,u^i_1,\dots,u^i_{k+1})
\ee
for some power series $F^1(x,t,u^i,u^i_1,\dots,u^i_{k+1})$.
Taking into account the form of $A$ and $B$ above, 
we rewrite equation \eqref{eq.covering.higher} as
\begin{multline}\label{eq.covering.higher.2}
u^i_{k+2}A_{u^i_k x} + F^1_x + u^h_{j+1}u^i_{k+2} A_{u^i_ku^h_j} 
+ u^i_{k+3}A_{u^i_k} + \sum_{m=0}^{k+1}u^i_{m+1}F^1_{u^i_m}\\
 - A_t - u^i_{j+3}A_{u^i_j}  - D_x^j(G^i)A_{u^i_j}
+ u^i_{k+2}[A,A_{u^i_k}] + [A,F^1]=0.
\end{multline}
Now we compute the coefficient of $u^i_{k+2}$ 
of the left-hand side of \eqref{eq.covering.higher.2}. It is
$$
A_{u^i_k x} + u^h_{j+1}A_{u^i_ku^h_j} - A_{u^i_{k-1}} + 
F^1_{u^i_{k+1}} - G^j_{u^i_{2}}A_{u^j_k} + [A,A_{u^i_k}],
$$
and, by putting it equal to zero, one obtains the following system of PDEs
$$
F^1_{u^i_{k+1}} = 
A_{u^i_{k-1}} + A_{u^j_k}G^j_{u^i_{2}} - 
A_{u^i_k x} - A_{u^i_ku^h_j}u^h_{j+1} - [A,A_{u^i_k}],\qquad i=1,\dots, n-1.
$$
Integrating this system, we see that $F^1$ is of the form 
\begin{equation}\label{eq.F1.higher.2}
F^1=- \frac{1}{2}u^i_{k+1}u^h_{k+1}A_{u^i_ku^h_k} + u^i_{k+1}\Big(A_{u^i_{k-1}} + G^j_{u^i_{2}}A_{u^j_{k}} -\sum_{j=0}^{k-1}A_{u^i_ku^h_j}u^h_{j+1} - A_{u^i_k x} - [A,A_{u^i_k}]\Big) + F^2
\end{equation}
for some power series $F^2=F^2(x,t,u^i,u^i_1,\dots,u^i_{k})$.

In view of \eqref{eq.F1.higher.2},
the coefficient of $u^i_{k+1}u^j_{k+1}u^{\ai}_{k+1}$ in the left-hand side of 
\eqref{eq.covering.higher.2} is $A_{u^i_ku^j_ku^{\ai}_k}$ multiplied 
by a nonzero scalar.
Therefore, $A_{u^i_ku^j_ku^{\ai}_k}=0$ for all $i$, $j$, $\ai$. Hence $A$ is of the form
\begin{equation}\label{eq.X.higher}
A=\frac{1}{2}u^i_ku^j_kY^{k}_{ij} + u^i_kY^k_i + Y^k,\quad\qquad 
Y^{k}_{ij}=Y^{k}_{ji},
\end{equation}
where $Y^{k}_{ij}$, $Y^k_i$, $Y^k$ are power series in the variables 
$x$, $t$, $u^i_m$, $m=0,\dots, k-1$.

Similarly to~\er{condyi}, 
combining condition~\er{gd=0fk} with formula~\er{eq.X.higher}, we obtain
\beq
\label{condyik}
\forall\,i=1,\dots,n-2,\qquad Y^k_i\,\Big|_{u^{i+1}_{k-1}=u^{i+2}_{k-1}=\dots=u^{n-1}_{k-1}=0}=0,\qquad Y^k_{n-1}=0.
\ee

\subsection{The ideal generated by $Y^{k}_{ij}$} 
\lb{secykij}

The coefficient of $u^i_{k+1}u^h_{k+1}$ of the left-hand side of \eqref{eq.covering.higher.2}, 
taking into account that $D_x^j(G^i)$ does not contribute to this coefficient because $k\geq 2$, is
\begin{equation}\label{eq.nonloso.higher}
-3 Y^{k}_{ih,x} -3\sum_{j=0}^{k-1} u^{\ai}_{j+1}Y^{k}_{ih,u^{\ai}_j} + G^j_{u^i_{2}}Y^{k}_{jh} + G^j_{u^h_{2}}Y^{k}_{ji} - 3[A,Y^{k}_{ih}].
\end{equation}
Note that \eqref{eq.nonloso.higher} is a polynomial of degree $2$ in $u^i_k$. 
Equating the coefficient of $u^{\ai}_ku^{\bi}_k$ of \eqref{eq.nonloso.higher} to zero, we get
\begin{equation}\label{eq.rel.kkab.kkih}
[Y^{k}_{{\ai}{\bi}},Y^{k}_{ih}]=0\qquad\quad\forall\,{\ai},{\bi},i,h.
\end{equation}
Now, to compute the coefficient of $u^i_{k+1}u^h_{k+1}u^{\ai}_k$  of \eqref{eq.covering.higher.2}, 
it is enough to look at the coefficient of $u^{\ai}_k$ of \eqref{eq.nonloso.higher}. 
Note that the quantity $G^j_{u^i_{2}}Y^{k}_{jh}+G^j_{u^h_{2}}Y^{k}_{ji}$ does not depend on $u^{\ai}_k$, 
since we suppose $k\geq 2$. Then we have only the system
\begin{equation}\label{eq.Ykk.ih.pa.k.meno.1}
Y^{k}_{ih,u^{\ai}_{k-1}}=[Y^{k}_{ih},Y^k_{\ai}]\qquad\quad\forall\,{\ai},i,h.
\end{equation}
We regard~\er{eq.Ykk.ih.pa.k.meno.1} as an overdetermined system of PDEs for $Y^{k}_{ih}$.
The compatibility conditions of this system give the relations
\begin{equation*}
[T^{k}_{{\ai}{\bi}},Y^{k}_{ih}]=0\qquad\quad\forall\,{\ai},{\bi},i,h,
\end{equation*}
where
\begin{equation}\label{equ.def.Tkkab}
T^{k}_{{\ai}{\bi}}:=[Y^k_{\ai},Y^k_{\bi}]-Y^k_{{\ai},u^{\bi}_{k-1}} + Y^k_{{\bi},u^{\ai}_{k-1}}.
\end{equation}
\begin{lemma}
\lb{cykitab}
The Lie subalgebra generated by the coefficients of the power series~$Y^k_i$ 
coincides with the Lie subalgebra generated by the coefficients of~$T^{k}_{{\ai}{\bi}}$.
\end{lemma}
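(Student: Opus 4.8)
The plan is to recognize this lemma as a direct instance of Lemma~\ref{zi}, once the roles of the variables are chosen correctly. The power series $Y^k_i$ from~\er{eq.X.higher} depend on $x$, $t$, and on $u^i_m$ for $m=0,\dots,k-1$, but the boundary condition~\er{condyik} constrains only their dependence on the top-order variables $u^i_{k-1}$. This suggests treating $v^i:=u^i_{k-1}$ as the ``active'' variables and absorbing all the rest into the coefficient algebra.

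First I would introduce the Lie algebra $\mg$ of formal power series in the variables $x$, $t$, $u^i_m$ for $0\le m\le k-2$ and $1\le i\le n-1$, with coefficients in $\fd^k(\CE,a)$, and regard each $Y^k_i$ as a power series in $v^1,\dots,v^{n-1}$ with coefficients in $\mg$. With this identification the condition~\er{condyik} is exactly hypothesis~\er{zivi1} of Lemma~\ref{zi}, matched term by term: the vanishing $Y^k_{n-1}=0$ and the triangular vanishing of $Y^k_i$ as the higher variables $u^j_{k-1}$ are set to zero. Moreover, since $\pd/\pd v^{\bi}=\pd/\pd u^{\bi}_{k-1}$, the quantity $V_{ij}$ produced by Lemma~\ref{zi} is precisely $T^k_{ij}$ as defined in~\er{equ.def.Tkkab}. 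Applying Lemma~\ref{zi} then shows that the Lie subalgebra of $\mg$ generated by the $\mg$-coefficients of the $Y^k_i$ coincides with the one generated by the $\mg$-coefficients of the $T^k_{ij}$.

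The one point that needs care is descending from subalgebras of $\mg$ to subalgebras of $\fd^k(\CE,a)$, since the statement concerns the coefficients in $\fd^k(\CE,a)$ itself. Here I would note that the bracket on $\mg$ is computed coefficientwise in the ``parameter'' monomials, so the $\fd^k(\CE,a)$-coefficients of $[X,Y]$, for $X,Y\in\mg$, are finite sums of brackets of the $\fd^k(\CE,a)$-coefficients of $X$ and of $Y$; differentiation in $v^i$ merely rescales coefficients. By induction on the monomial degree this gives that the $\fd^k(\CE,a)$-coefficients of any element of a subalgebra of $\mg$ generated by a set $S$ lie in the subalgebra of $\fd^k(\CE,a)$ generated by the $\fd^k(\CE,a)$-coefficients of $S$. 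Applying this in both directions converts the $\mg$-level equality into the desired equality of subalgebras of $\fd^k(\CE,a)$.

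I expect the genuine content to be already packaged inside Lemma~\ref{zi}: its nontrivial half, that every coefficient of $Y^k_i$ lies in the algebra generated by the coefficients of the $T^k_{ij}$, is the induction on degree alluded to there, while the easy half, that each coefficient of $T^k_{ij}$ is a bracket or a derivative of coefficients of the $Y^k_i$ and hence already available, is immediate from~\er{equ.def.Tkkab}. Thus the main obstacle is purely organizational: making the variable/coefficient split above explicit and checking that~\er{condyik} matches~\er{zivi1} exactly, after which the lemma follows with no further computation.
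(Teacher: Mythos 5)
Your proof is correct and follows exactly the paper's route: the paper's own proof is the one-line observation that the statement follows from Lemma~\ref{zi} applied to~\eqref{condyik} and~\eqref{equ.def.Tkkab}, with the same implicit choice $v^i=u^i_{k-1}$ and the remaining variables absorbed into the coefficient Lie algebra. You merely make explicit two details the paper leaves unstated — the matching of~\eqref{condyik} with hypothesis~\eqref{zivi1} and the coefficientwise descent from subalgebras of $\mg$ to subalgebras of $\fd^k(\CE,a)$ — both of which are handled correctly.
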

\begin{proof}
The statement follows from Lemma~\ref{zi} applied to~\er{condyik},~\er{equ.def.Tkkab}.
\end{proof}

The coefficient of $u^i_{k+1}u^h_{k+1}$ of \eqref{eq.covering.higher.2} 
is the coefficient of degree zero in $u^{\ai}_k$ of \eqref{eq.nonloso.higher}.
Equating this coefficient to zero, we obtain
\begin{equation}\label{eq.nonloso.higher.2}
-3 Y^{k}_{ih,x} -3\sum_{j=0}^{k-2} u^{\ai}_{j+1} Y^{k}_{ih,u^{\ai}_j} 
+ G^j_{u^i_{2}}Y^{k}_{jh} + G^j_{u^h_{2}}Y^{k}_{ji} - 3[Y^k,Y^{k}_{ih}]=0.
\end{equation}

\begin{lemma}
\lb{ykihyka}
The power series $Y^{k}_{ih}$ does not depend on $u^{\ai}_{\bi}$ for ${\bi}>0$. 
So $Y^{k}_{ih}$ depends only on $u^{\ai}_0=u^{\ai}$ and $x,t$.
\end{lemma}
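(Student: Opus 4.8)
The plan is to prove that every partial derivative $Y^k_{ih,u^{\ai}_{\bi}}$ with $\bi\ge 1$ vanishes, eliminating the higher-order variables from the top order $k-1$ downwards. I would first dispose of the top order by a centralizer argument. The compatibility relations $[T^k_{\ai\bi},Y^k_{ih}]=0$, with $T^k_{\ai\bi}$ as in~\eqref{equ.def.Tkkab}, say that every coefficient of $Y^k_{ih}$ commutes with every coefficient of $T^k_{\ai\bi}$. Since the centralizer of a fixed element of $\fd^k(\CE,a)$ is a Lie subalgebra, and since by Lemma~\ref{cykitab} the coefficients of the $T^k_{\ai\bi}$ generate the same subalgebra as the coefficients of the $Y^k_i$, it follows that every coefficient of $Y^k_i$ commutes with every coefficient of $Y^k_{ih}$, i.e. $[Y^k_{ih},Y^k_{\ai}]=0$ as power series. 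Substituting this into~\eqref{eq.Ykk.ih.pa.k.meno.1} gives $Y^k_{ih,u^{\ai}_{k-1}}=[Y^k_{ih},Y^k_{\ai}]=0$, so $Y^k_{ih}$ does not depend on $u^{\ai}_{k-1}$. For $k=2$ this already proves the lemma, because then $Y^k_{ih}$ a priori depends only on $x$, $t$ and $u^j_0,u^j_1$.

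For $k\ge 3$ I would descend with the help of~\eqref{eq.nonloso.higher.2}. Once $Y^k_{ih}$, and hence every $Y^k_{jh}$, is known to be independent of $u_{k-1}$, the terms $-3Y^k_{ih,x}-3\sum_{j=0}^{k-2}u^{\ai}_{j+1}Y^k_{ih,u^{\ai}_j}$ assemble into $-3D_x(Y^k_{ih})$, so~\eqref{eq.nonloso.higher.2} can be rewritten as $3D_x(Y^k_{ih})=G^j_{u^i_2}Y^k_{jh}+G^j_{u^h_2}Y^k_{ji}-3[Y^k,Y^k_{ih}]$. By~\eqref{eq.formule.Gpxx} the coefficients $G^j_{u^i_2}$ depend only on $u_0,u_1$, so for $k\ge 3$ they carry no dependence on any derivative order removed at this stage. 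I then argue by descending induction on the largest order $\ell\in\{1,\dots,k-2\}$ on which $Y^k_{ih}$ still depends: the left-hand side $3D_x(Y^k_{ih})$ contains the factor $u^{\ai}_{\ell+1}$ linearly, with coefficient $3Y^k_{ih,u^{\ai}_\ell}$, whereas the right-hand side is of order at most $\ell$; matching the coefficient of $u^{\ai}_{\ell+1}$ forces $Y^k_{ih,u^{\ai}_\ell}=0$, lowering $\ell$. Iterating down to $\ell=1$ yields the claim.

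The step I expect to be the main obstacle is controlling the commutator term $[Y^k,Y^k_{ih}]$ in the rewritten equation, since it involves the zeroth-order coefficient $Y^k$, whose dependence on the higher derivatives $u^{\ai}_{\bi}$ is not governed by the relations used above and could in principle reinstate the order $\ell+1$ on the right-hand side and break the coefficient matching. The crux is therefore to prove $[Y^k,Y^k_{ih}]=0$, after which the right-hand side has order at most $\max(1,\ell)$ and the induction proceeds unobstructed. I would establish this in parallel with the argument of Section~\ref{secyij}: evaluating the rewritten equation at $u^{\ai}_{\bi}=0$ for all $\bi\ge 1$ annihilates $G^j_{u^i_2}$ and the total-derivative terms and yields $\partial_x Y^k_{ih}=[Y^k_{ih},Y^k]$ on that locus, and together with the normalization from~\eqref{gaukakfk}, which forces $Y^k|_{u^i_k=0\,\forall\,(i,k)}=0$, and a spanning argument of the type of Lemma~\ref{zpsi} that reduces the coefficients of $Y^k_{ih}$ to their degree-zero values (which commute with $Y^k$), this should upgrade to the identity $[Y^k,Y^k_{ih}]=0$ valid for all values of the variables.
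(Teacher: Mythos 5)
Your first step does not go through as written. From the power-series identity $[T^{k}_{\ai\bi},Y^{k}_{ih}]=0$ you cannot conclude that every coefficient of $Y^{k}_{ih}$ commutes with every coefficient of $T^{k}_{\ai\bi}$: vanishing of a bracket of power series only forces the \emph{sums} of brackets of coefficients in each multidegree to vanish, not the individual brackets. The paper draws conclusions of this kind (Lemma~\ref{lemyijp}) only after establishing a spanning property of the type of Lemma~\ref{yijzab}, and no such property is available for $Y^{k}_{ih}$ at this stage — the equations \eqref{Ykijhy} that would supply it are themselves derived only after the present lemma is known. The top order is instead handled directly: condition \eqref{gd=0fk} gives \eqref{condyik}, i.e.\ $Y^k_{n-1}=0$ and $Y^k_{\ai}\big|_{u^{\ai+1}_{k-1}=\dots=u^{n-1}_{k-1}=0}=0$, so restricting \eqref{eq.Ykk.ih.pa.k.meno.1} to these loci and descending on $\ai$ from $n-1$ to $1$ yields $Y^{k}_{ih,u^{\ai}_{k-1}}=0$ with no centralizer argument.

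The more serious gap is the identity $[Y^k,Y^{k}_{ih}]=0$, which you correctly single out as the crux of your descent: it cannot be proved, because it is too strong. If it held identically, then differentiating it with respect to $u^m_1$ and $u^h$ (using the lemma and \eqref{Ykijhy}) would give $[W_{mh},Y^{k}_{ij}]=0$ for $W_{mh}$ as in \eqref{eq.def.Wmh.Rh}, and \eqref{eq.comp.cond.system.defining.Ykkij.higher} would then force all $Y^{k}_{hj}$ to vanish, i.e.\ $\mathfrak{S}_k=0$; the adjoint action of the coefficients of $Y^k$ on the ideal $\mathfrak{S}_k$ is genuinely nontrivial. Your sketched proof is moreover circular, since the Lemma~\ref{zpsi}-type spanning argument in $u^1,\dots,u^{n-1}$ that you invoke requires \eqref{Ykijhy}, which is downstream of the present lemma. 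What actually makes the descent work is to differentiate \eqref{eq.nonloso.higher.2} with respect to $u^{\ai}_{\bi+1}$: by the inductive hypothesis this isolates $Y^{k}_{ih,u^{\ai}_{\bi}}=[Y^{k}_{ih},Y^k_{,u^{\ai}_{\bi+1}}]$, so only the single derivative $Y^k_{,u^{\ai}_{\bi+1}}$ enters, never the whole of $Y^k$. Condition \eqref{gd=0fk} with $k_0=\bi+1$ — which you never invoke — gives $Y^k_{,u^{\ai}_{\bi+1}}\big|_{u^i_m=0\ \forall\,(i,m)\succ(\ai,\bi)}=0$, and restricting to this locus and descending on $\ai$, exactly as at the top order, yields $Y^{k}_{ih,u^{\ai}_{\bi}}=0$. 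The full commutator $[Y^k,Y^{k}_{ih}]$ need not, and does not, vanish.
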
 
\begin{proof}
This is proved by induction on~$l=k-{\bi}$, 
using conditions~\er{gd=0fk},~\er{gaukakfk} 
and equations~\er{eq.Ykk.ih.pa.k.meno.1},~\er{eq.nonloso.higher.2}.
\end{proof}

In view of Lemma~\ref{ykihyka}, equation \eqref{eq.nonloso.higher.2} becomes
\begin{equation}\label{eq.nonloso.higher.3}
-3 Y^{k}_{ih,x} -3 u^{\ai}_{1} Y^{k}_{ih,{\ai}} + 
G^j_{u^i_{2}}Y^{k}_{jh} + G^j_{u^h_{2}}Y^{k}_{ji} - 3[Y^k,Y^{k}_{ih}]=0,
\end{equation}
where $Y^{k}_{ih,{\ai}}=Y^{k}_{ih,u^{\ai}}$.

Furthermore, by differentiating \eqref{eq.nonloso.higher.3}  
with respect to $u^{\ai}_l$ with $l\geq 2$ and with respect to $u^{\ai}_1$, $u^{\bi}_1$ we get, respectively, the following relations
$$
[Y^k_{,u^{\ai}_l},Y^{k}_{ij}]=0\qquad\forall\,l\geq 2, 
\qquad [Y^k_{,u^{\ai}_1u^{\bi}_1},Y^{k}_{ih}]=0.
$$
Therefore, 
the quantity $[Y^k,Y^{k}_{ih}]$ is a polynomial of degree~$\le 1$ in $u^{\ai}_1$.
Its coefficient in $u^{\ai}_1$ is $[Y^k_{,u^{\ai}_1},Y^{k}_{ih}]$ 
(recall that $Y^{k}_{ih}$ depends only on $x$, $t$, $u^{\ai}$) 
and its zero degree term is $[Y^k-Y^k_{,u^{\ai}_1}u^{\ai}_1,Y^{k}_{ih}]$. 
By putting equal to zero the coefficient of $u^{\ai}_1$ and 
that of degree zero (in $u^{\ai}_1$) of the left-hand side of~\eqref{eq.nonloso.higher.3}, 
we get the system
\begin{gather}
\label{Ykijhy}
Y^{k}_{ij,h}=  [Y^{k}_{ij},Y^k_{,u^h_1}] - 2\Delta^{-1}\big(2u^h Y^{k}_{ij}+ u^i Y^{k}_{hj} + u^j Y^{k}_{hi} - \delta^i_h u^m Y^{k}_{mj} - \delta^j_h u^m Y^{k}_{mi}\big)\qquad
\forall\,i,j,h,\\
\notag
Y^{k}_{ij,x} = [Y^{k}_{ij},Z],\qquad\quad
Z=Y^k-u^{\ai}_1Y^k_{,u^{\ai}_1}.
\end{gather}
Similarly to~\er{eq.comp.cond.system.defining.Yij}, 
the compatibility conditions of system~\er{Ykijhy} imply
\begin{equation}\label{eq.comp.cond.system.defining.Ykkij.higher}
[Y^{k}_{ij},W_{mh}] + 
4\Delta^{-2}( -\delta^i_mY^{k}_{hj} + 
\delta^i_hY^{k}_{mj} - \delta^j_mY^{k}_{hi} + \delta^j_hY^{k}_{mi} )=0\qquad\quad
\forall\,i,j,m,h,
\end{equation}
where
\begin{equation}\label{eq.def.Wmh.Rh}
W_{mh}:=[Y^k_{,u^m_1},Y^k_{,u^h_1}] - Y^k_{,u^m_1u^h} + Y^k_{,u^h_1u^m}.
\end{equation}
%
%

Recall that in Section~\ref{secyij} we have defined the homomorphism 
$\hmm\cl\fd^1(\CE,a)\to\mathfrak{so}_{n-1}$.
Let $\hmm_k\cl\fd^k(\CE,a)\to\mathfrak{so}_{n-1}$ be the composition 
of the homomorphisms $\fd^k(\CE,a)\to\fd^1(\CE,a)\to\mathfrak{so}_{n-1}$, 
where the homomorphism $\fd^k(\CE,a)\to\fd^1(\CE,a)$ arises from~\er{fdnn-1}.

\begin{theorem}
\lb{idealyijk}
The Lie subalgebra $\mathfrak{S}_k$ generated by 
the coefficients of~$Y^{k}_{ij}$ is abelian and satisfies
\beq
\notag
[\mathfrak{S}_k,\ker\hmm_k]=0.
\ee
Furthermore, this subalgebra is an ideal of $\fd^k(\CE,a)$. 
\end{theorem}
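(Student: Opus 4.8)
The plan is to transport the argument of Section~\ref{secyij} to the order-$k$ setting, replacing $Y_{ij},Y_i,Y,T_{kh},\hmm$ by $Y^k_{ij},Y^k_i,Y^k,T^k_{\ai\bi},\hmm_k$ and feeding in the structural identities \er{eq.rel.kkab.kkih}, \er{eq.Ykk.ih.pa.k.meno.1}, \er{Ykijhy}, \er{eq.comp.cond.system.defining.Ykkij.higher} together with the rigidity of Lemma~\ref{ykihyka}. The first point is that $\mathfrak{S}_k\subseteq\ker\hmm_k$: by \er{eq.X.higher} the coefficients of $Y^k_{ij}$ are the generators of $\fd^k(\CE,a)$ attached to the top-order factor $u^i_ku^j_k$, i.e. the symbols $\ga^{l_1,l_2}_\al$ whose matrix $\al$ has a nonzero $k$-th column, and by the definition \er{tocga} of $\tau_k$ these are annihilated already by $\tau_k$, hence by the composition $\fd^k(\CE,a)\to\fd^1(\CE,a)$ of \er{fdnn-1} and thus by $\hmm_k$. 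Consequently the abelian assertion is the special case of $[\mathfrak{S}_k,\ker\hmm_k]=0$ in which both arguments lie in $\mathfrak{S}_k$, and the whole theorem reduces to this commutator identity together with the ideal property.

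First I would record the order-$k$ analogues of \er{yabx0} and \er{yijy0}. By Lemma~\ref{ykihyka} each $Y^k_{ij}$ is a series in $x,t,u^\ai$ only, so — treating it as a series in the $u^\ai$ over the ring of series in the remaining variables — the first equation of \er{Ykijhy} has the form \er{psivi} of Lemma~\ref{zpsi}, with connection term $Y^k_{,u^h_1}$ and the $\Delta^{-1}$ terms as the scalar part; Lemma~\ref{zpsi} then places every coefficient of $Y^k_{ij}$ in the $t$-parametrized span of the zero-$u$-degree coefficients. Restricting the $x$-equation $Y^k_{ij,x}=[Y^k_{ij},Z]$ of \er{Ykijhy} to $u=0$ and using $Z|_{u=0}=Y^k|_{u=0}=0$ (from \er{gaukakfk}) yields $Y^k_{ij,x}=0$ and $[Y^k_{ij},Z]=0$, so $\mathfrak{S}_k$ is generated by the $t$-Taylor coefficients of $Y^k_{ij}|_{x=t=u=0}$. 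The decisive simplification over the case $k=1$ is visible here: since $k\ge2$, Lemma~\ref{ykihyka} gives $Y^k_{ij,u^\ai_{k-1}}=0$, so \er{eq.Ykk.ih.pa.k.meno.1} collapses to $[Y^k_{ij},Y^k_\ai]=0$, while \er{eq.rel.kkab.kkih} gives $[Y^k_{ij},Y^k_{\ai\bi}]=0$; combined with the span property through a Lemma~\ref{lemyijp}-type induction, these upgrade to coefficient-wise commuting of $\mathfrak{S}_k$ with the subalgebra $\mathfrak{H}_k$ generated by the coefficients of $Y^k_i$.

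The core is the identity $[\mathfrak{S}_k,\ker\hmm_k]=0$, and here the structure genuinely differs from Section~\ref{secyij}. Because the coefficients of $Y^k_i$ also carry the factor $u^i_k$, they too are killed by $\tau_k$, so $\hmm_k(\mathfrak{H}_k)=0$, and the surjection of $\hmm_k$ onto $\so_{n-1}$ is carried instead by the $u^\ai_1$-linear part of $Y^k$. Mirroring the description of $\ker\hmm$ in the proof of Lemma~\ref{zabvker}, one controls $\ker\hmm_k$ by the coefficients of $Y^k_{ij}$, those of $Y^k_i$, and the $\hmm_k$-kernel part of the subalgebra generated by the coefficients of $Y^k$. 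By the previous paragraph $\mathfrak{S}_k$ commutes with the first two families, so only the $Y^k$-part remains. For it I would reproduce the chain Lemma~\ref{idker}, \er{zabhker}, Lemma~\ref{zabvker}, \er{zijlker}: form the defects measuring the failure of \er{Ykijhy}, \er{eq.comp.cond.system.defining.Ykkij.higher} off $\mathfrak{S}_k$, use the relations $[Y^k_{,u^\ai_l},Y^k_{ij}]=0$ for $l\ge2$ obtained by differentiating \er{eq.nonloso.higher.3} to bound the $u_1$-degree of $[Y^k,Y^k_{ij}]$, and argue by induction on the coefficient degree that the zero-degree coefficients of $Y^k_{ij}$ and their $t$-derivatives commute with $\ker\hmm_k$.

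Finally $\mathfrak{S}_k$ is an ideal: by Theorem~\ref{lemgenfdq} the algebra $\fd^k(\CE,a)$ is generated by the coefficients of $Y^k_{ij}|_{t=0}$, $Y^k_i|_{t=0}$, $Y^k|_{t=0}$, and bracketing each family with $\mathfrak{S}_k$ stays inside $\mathfrak{S}_k$ — trivially for $Y^k_{ij}$, via $[Y^k_{ij},Y^k_\ai]=0$ for $Y^k_i$, and via \er{Ykijhy} with $[Y^k_{ij},Z]=0$ and the $u_1$-degree bound for $Y^k$ — the span property again converting power-series identities into statements about coefficients. I expect the main obstacle to be the commutator identity of the third paragraph: the analogue of Lemma~\ref{idker} needs fresh surjectivity and dimension bookkeeping because the $\so_{n-1}$-quotient of $\hmm_k$ now originates from $Y^k$ rather than from $Y^k_i$, and one must verify the normalization hypothesis \er{zivi1} of Lemmas~\ref{zi},~\ref{zpsi} for the higher-order connection term $Y^k_{,u^h_1}$, which lives in more variables than $Y^k_{ij}$.
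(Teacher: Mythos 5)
Your proposal is correct and follows essentially the same route as the paper, whose entire proof of Theorem~\ref{idealyijk} is the single sentence ``This is proved similarly to Theorem~\ref{idealyij}, using the results of this section''; you are simply carrying out that transport explicitly, and you correctly identify both the simplification available for $k\ge 2$ (namely $\mathfrak{S}_k\subseteq\ker\tau_k\subseteq\ker\hmm_k$ and the collapse of \er{eq.Ykk.ih.pa.k.meno.1} to $[Y^k_{ij},Y^k_{\ai}]=0$ via Lemma~\ref{ykihyka}) and the non-mechanical points (the roles of $Y_h$ and $T_{kh}$ being taken over by $Y^k_{,u^h_1}$ and $W_{mh}$, and the normalization of $Y^k_{,u^h_1}$, which is handled by first restricting to $u^{\ai}_m=0$ for $m\ge 1$, legitimate since $Y^k_{ij}$ and all other terms of \er{Ykijhy} depend only on $x,t,u^{\ai}_0$).
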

\begin{proof}
This is proved similarly to Theorem~\ref{idealyij}, 
using the results of this section. 
\end{proof}

\subsection{The ideal generated by $Y^{k}_{i}$} 
\lb{secyki}

In this subsection we study the quotient Lie algebra $\bl_k=\fd^k(\CE,a)/\mathfrak{S}_k$, 
where $\mathfrak{S}_k$ is the ideal generated by the coefficients of~$Y^{k}_{ij}$. 
Therefore, we can assume that $A$, $B$ are power series with coefficients in $\bl_k$ 
and 
\beq
\label{auikyy}
A=u^i_kY^k_i+Y^k,
\ee
where $Y^k_i$ and $Y^k$ depend on $x,t,u^i,\dots, u^i_{k-1}$. 

Then from~\er{bkf1},~\er{eq.covering.higher.2},~\er{eq.F1.higher.2} one gets 
\begin{gather}
\notag
B=u^i_{k+2}Y^k + F^1,\\
\label{f1uik1}
F^1=u^i_{k+1}\Big(A_{u^i_{k-1}} + G^j_{u^i_{2}}A_{u^j_{k}} 
-\sum_{j=0}^{k-1}A_{u^i_ku^h_j}u^h_{j+1} - A_{u^i_k x} - [A,A_{u^i_k}]\Big) + F^2,\\
\label{eq.covering.higher.3}
F^1_x + \sum_{m=0}^{k}u^i_{m+1}F^1_{u^i_m} - A_{t} 
-\sum_{j=0}^{k-2}u^i_{j+3}A_{u^i_j} - D_x^j(G^i)A_{u^i_j} + [A,F^1]=0.
\end{gather}
Substituting \er{auikyy} and \er{f1uik1} in~\er{eq.covering.higher.3}, 
we see that $F^2$ is a polynomial of degree $\le 3$ in $u^1_k,\dots,u^{n-1}_k$.
So $F^2$ is of the form 
$$
F^2=u^i_ku^j_ku^h_k M^{k}_{ijh} + u^i_ku^j_k M^{k}_{ij}+ u^i_kM^{k}_{i} + M^k,
$$
where all the $M^{k}_{...}$ are fully symmetric with respect to the lower indices 
and depend on $x$, $t$, $u^i_{\bi}$, ${\bi}=0,\dots,k-1$.

Differentiating \eqref{eq.covering.higher.3} 
with respect to $u^i_{k+1}$, $u^{\ai}_{k}$, $u^{\bi}_{k}$, we get
\begin{equation}
\label{ttytyt}
T^{k}_{i{\ai},u^{\bi}_{k-1}} + T^{k}_{i{\bi},u^{\ai}_{k-1}} + 
[Y^k_{\ai},T^{k}_{i{\bi}}] + [Y^k_{\bi},T^{k}_{i{\ai}}] 
+ 6M^{k}_{i{\ai}{\bi}}=0
\qquad\quad\forall\,i,{\ai},{\bi}.
\end{equation}
Interchanging the indices $i$ and ${\ai}$ in~\er{ttytyt}, one obtains
\begin{equation}
\label{ttytytii}
T^{k}_{{\ai}{i},u^{\bi}_{k-1}} + T^{k}_{{\ai}{\bi},u^{i}_{k-1}} + 
[Y^k_{i},T^{k}_{{\ai}{\bi}}] + [Y^k_{\bi},T^{k}_{{\ai}{i}}] 
+ 6M^{k}_{{\ai}{i}{\bi}}=0
\qquad\quad\forall\,{\ai},{i},{\bi}.
\end{equation}
Subtracting~\er{ttytytii} from~\er{ttytyt} and using the relations
$M^{k}_{i{\ai}{\bi}}=M^{k}_{{\ai}{i}{\bi}}$, 
$T^{k}_{{\ai}{i}}=-T^{k}_{i{\ai}}$, $T^{k}_{{\ai}{\bi}}=-T^{k}_{{\bi}{\ai}}$, 
we get 
\begin{equation}\label{equ.Tkkibuakmeno1}
T^{k}_{i{\bi},u^{\ai}_{k-1}} + T^{k}_{{\bi}{\ai},u^i_{k-1}} + 
2T^{k}_{i{\ai},u^{\bi}_{k-1}} + [Y^k_{\ai},T^{k}_{i{\bi}}] + 
[Y^k_i,T^{k}_{{\bi}{\ai}}] + 2[Y^k_{\bi},T^{k}_{i{\ai}}]=0.
\end{equation}
Set
$$
V^k_{\ai}=Y^k_{\ai}+\frac{\partial}{\partial u^{\ai}_{k-1}}.
$$
Then from \eqref{equ.def.Tkkab} we have 
$T^{k}_{{\ai}{\bi}}=[V^k_{\ai},V^k_{\bi}]$, 
and equation \eqref{equ.Tkkibuakmeno1} 
can be written in terms of commutators of $V^k_{\ai}$ 
\beq
\lb{vvvvvv}
[V^k_{\ai},[V^k_i,V^k_{\bi}]] + [V^k_i,[V^k_{\bi},V^k_{\ai}]] 
+ 2[V^k_{\bi},[V^k_i,V^k_{\ai}]]=0.
\ee
By the Jacobi identity, equation~\er{vvvvvv} can be rewritten as
\beq
\lb{vvv0}
3[V^k_{\bi},[V^k_i,V^k_{\ai}]]=0.
\ee
Since $[V^k_i,V^k_{\ai}]=T^{k}_{i{\ai}}$ and 
$V^k_{\bi}=Y^k_{\bi}+\dfrac{\partial}{\partial u^{\bi}_{k-1}}$, 
equation~\er{vvv0} says that
\begin{equation}
\lb{tkiak1}
T^{k}_{i{\ai},u^{\bi}_{k-1}} + [Y^k_{\bi},T^{k}_{i{\ai}}]=0\qquad\quad\forall\,i,{\ai},{\bi}.
\end{equation}

Differentiating \eqref{eq.covering.higher.3} with respect to $u^i_{k+1}$, $u^{\ai}_{k}$,
we obtain 
\begin{multline}
\lb{ykmkia}
Y^k_{{\ai},u^i_{k-1}x} - 2Y^k_{i,u^{\ai}_{k-1}x} + [Y^k_i,Y^k_{\ai}]_{,x} + [Y^k_{i,x},Y^k_{\ai}]
-Y^k_{{\ai},u^i_{k-2}} +\sum_{m=0}^{k-2}T^{k}_{i{\ai},u^{\bi}_m}u^{\bi}_{m+1} 
+ G^j_{u^i_{2}}T^{k}_{{\ai}j} + \\
+ [Y^k,T^{k}_{i{\ai}}]+ Y^k_{,u^i_{k-1}u^{\ai}_{k-1}} - \sum_{m=0}^{k-2}Y^k_{i,u^h_mu^{\ai}_{k-1}}u^h_{m+1}
- Y^k_{i,u^{\ai}_{k-2}} - [Y^k_{,u^{\ai}_{k-1}},Y^k_i] - [Y^k,Y^k_{i,u^{\ai}_{k-1}}]+\\
+[Y^k_{\ai},Y^k_{,u^i_{k-1}}] - \sum_{m=0}^{k-2}[Y^k_{\ai},Y^k_{i,u^h_m}]u^h_{m+1} - [Y^k_{\ai},[Y^k,Y^k_i]] + 2M^{k}_{i{\ai}}=0\qquad\quad\forall\,i,{\ai}.
\end{multline}
Denote the left-hand side of~\er{ykmkia} by $Z^k_{i{\ai}}$. 
So equation~\er{ykmkia} reads $Z^k_{i{\ai}}=0$ for all $i$, ${\ai}$.
Interchanging the indices $i$ and ${\ai}$, one gets also $Z^k_{{\ai}i}=0$.

Subtracting the equation $Z^k_{{\ai}i}=0$ from the equation $Z^k_{i{\ai}}=0$
and using the relation $M^{k}_{i{\ai}}=M^{k}_{{\ai}i}$, we obtain
\begin{equation}
\label{eq.comp.Akkia}
3T^{k}_{i{\ai},x} + 3\sum_{m=0}^{k-2}T^{k}_{i{\ai},u^{\bi}_m}u^{\bi}_{m+1} 
+ G^h_{u^i_{2}}T^{k}_{{\ai}h} - G^h_{u^{\ai}_{2}}T^{k}_{ih} + 3[Y^k,T^{k}_{i{\ai}}]=0.
\end{equation}

\begin{lemma}
\lb{tkiaujm}
The power series $T^{k}_{i{\ai}}$ does not depend on $u^j_m$ for $m>0$.
So $T^{k}_{i{\ai}}$ depends only on $u^j_0=u^j$ and $x, t$.
\end{lemma}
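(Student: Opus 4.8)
The plan is to prove this lemma by exactly the same mechanism that established the analogous statement for the coefficients $Y^{k}_{ij}$ in Lemma~\ref{ykihyka}: a descending induction on the order $m$ of the derivative coordinate $u^\bi_m$, i.e. on $l=k-m$, starting from $l=1$ (the top order $m=k-1$) and decreasing. At each order the relevant equation will determine $\partial T^{k}_{i\ai}/\partial u^\bi_m$ only up to a commutator term of the form $[\,\cdot\,,T^{k}_{i\ai}]$, so within each step I would run a secondary \emph{descending} induction on the component index $\bi$ (from $n-1$ down to $1$) in order to kill that commutator, using the triangular vanishing conditions coming from the gauge normalization~\er{gd=0fk},~\er{gaukakfk}. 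Throughout one works in the quotient $\bl_k$, where $A$ has the form~\er{auikyy}, and one proves the claim simultaneously for all index pairs $(i,\ai)$.

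For the base case $m=k-1$ I would use equation~\er{tkiak1}, which reads $\partial T^{k}_{i\ai}/\partial u^\bi_{k-1}=-[Y^k_\bi,T^{k}_{i\ai}]$. For $\bi=n-1$ the right-hand side vanishes because $Y^k_{n-1}=0$ by~\er{condyik}, so $T^{k}_{i\ai}$ is independent of $u^{n-1}_{k-1}$. Proceeding downward, assume $T^{k}_{i\ai}$ is already independent of $u^{\bi'}_{k-1}$ for all $\bi'>\bi$; then substituting $u^{\bi'}_{k-1}=0$ for $\bi'>\bi$ leaves the left-hand side unchanged but, by the remaining conditions in~\er{condyik}, makes $Y^k_\bi$ vanish on the right, giving $\partial T^{k}_{i\ai}/\partial u^\bi_{k-1}=0$. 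Hence $T^{k}_{i\ai}$ does not depend on any $u^\bi_{k-1}$.

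For the inductive step, fix $m$ with $1\le m\le k-2$ and assume $T^{k}_{i\ai}$ (for every pair) is independent of $u^\gamma_{m'}$ whenever $m'>m$. I would differentiate the compatibility relation~\er{eq.comp.Akkia} with respect to the top coordinate $u^\bi_{m+1}$. Since $m+1\ge 2$, the factors $G^h_{u^i_2}$ and $G^h_{u^\ai_2}$ depend only on $u$ and $u_1$ by~\er{eq.formule.Gpxx} and contribute nothing, while the inductive hypothesis annihilates every dependence of the $T^k$'s on $u^\bi_{m+1}$; what survives is the identity $\partial T^{k}_{i\ai}/\partial u^\bi_m=-[\,Y^k_{,u^\bi_{m+1}},\,T^{k}_{i\ai}\,]$. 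I then repeat the inner descending induction on $\bi$: the normalization~\er{gd=0fk} with $k_0=m+1$ shows that $Y^k_{,u^\bi_{m+1}}$ vanishes once $u^i_\kappa=0$ for all $(i,\kappa)\succ(\bi,m)$, and by the outer hypothesis together with the already-proven independence for indices $\bi'>\bi$ the left-hand side $\partial T^{k}_{i\ai}/\partial u^\bi_m$ is unchanged under that substitution; hence it vanishes. This closes the induction and shows that $T^{k}_{i\ai}$ depends only on $x$, $t$, and $u^j=u^j_0$.

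The main obstacle is precisely the presence of the commutator terms $[Y^k_\bi,T^{k}_{i\ai}]$ and $[Y^k_{,u^\bi_{m+1}},T^{k}_{i\ai}]$, which prevent one from reading off the desired independence directly from a single differential relation. Overcoming this is what forces the two-level induction and a careful bookkeeping of exactly which derivative coordinates each of $T^{k}_{i\ai}$, $Y^k_i$, and $Y^k$ still depends on at each stage; the triangular vanishing built into~\er{gd=0fk} and~\er{condyik} (for the $Y^k_i$) together with its analogue for $Y^k$ is exactly what makes the inner induction close.
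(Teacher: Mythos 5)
Your proof is correct and follows essentially the same route the paper indicates: a descending induction on the order $m$ (equivalently on $l=k-m$), with the base case read off from~\er{tkiak1} and the inductive step from differentiating~\er{eq.comp.Akkia}, using the normalization conditions to kill the commutator terms via the inner induction on $\bi$. The paper states this proof only as a one-line sketch, and your two-level induction is a faithful and complete filling-in of the details.
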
 
\begin{proof}
This is proved by induction on~$l=k-m$, 
using conditions~\er{gd=0fk},~\er{gaukakfk} 
and equations~\er{tkiak1},~\er{eq.comp.Akkia}.
\end{proof}

Differentiating \eqref{eq.comp.Akkia} with respect to $u^{\ai}_l$ with $l\geq 2$ 
and with respect to $u^{\ai}_1,u^{\bi}_1$ 
we get, respectively, the following relations 
$$
[Y^k_{,u^{\ai}_l},T^{k}_{ij}]=0\qquad
l\geq 2, 
\quad\qquad [Y^k_{,u^{\ai}_1u^{\bi}_1},T^{k}_{ih}]=0.
$$
Therefore, 
the left-hand side of \eqref{eq.comp.Akkia} is a polynomial of degree $1$ in $u^{\ai}_1$. 
Equating to zero the coefficients of this polynomial, taking into account \eqref{eq.formule.Gpxx}, we obtain the system
\begin{gather}
\label{system.defining.Tkkij.bis}
T^{k}_{i{\ai},u^j} = 
2\Delta^{-1}\left(2u^jT^{k}_{{\ai}i} + u^{\ai}T^{k}_{ji} + u^iT^{k}_{{\ai}j} + 
\delta^i_j u^mT^{k}_{m{\ai}} + \delta^{\ai}_j u^mT^{k}_{im}\right) + [T^{k}_{i{\ai}},Y^k_{,u^j_1}],\qquad\quad
\forall\,i,{\ai},j,\\
\notag
T^{k}_{i{\ai},x} = [T^{k}_{i{\ai}},Z],\qquad\quad 
Z=Y^k-u^{\bi}_1Y^k_{,u^{\bi}_1}.
\end{gather}
The compatibility conditions of  system~\eqref{system.defining.Tkkij.bis} imply 
\begin{equation}\label{eq.comp.cond.system.defining.Tkkia.higher}
[T^{k}_{i{\ai}},W_{jm}] = 4\Delta^{-2}
(\delta^{\ai}_mT^{k}_{ji}+\delta^{\ai}_jT^{k}_{im}+\delta^i_mT^{k}_{{\ai}j}+\delta^i_jT^{k}_{m{\ai}})
\end{equation}
with $W_{jm}$ defined by~\eqref{eq.def.Wmh.Rh}.

Let $\psi_k\cl\fd^k(\CE,a)\to\fd^1(\CE,a)$ be the 
homomorphism that arises from~\er{fdnn-1}.
Formulas~\er{eq.X.higher} and~\er{bkf1} imply that $\psi_k$ 
maps each coefficient of~$Y^{k}_{ij}$ to zero.
Since $\mathfrak{S}_k$ is generated by the coefficients of~$Y^{k}_{ij}$, 
we get $\psi_k(\mathfrak{S}_k)=0$.
Therefore, $\psi_k\cl\fd^k(\CE,a)\to\fd^1(\CE,a)$ induces a homomorphism
from $\fd^k(\CE,a)/\mathfrak{S}_k$ to $\fd^1(\CE,a)$, which we denote 
by $\hat{\psi}_k\cl\fd^k(\CE,a)/\mathfrak{S}_k\to\fd^1(\CE,a)$

Recall that in Section~\ref{secyij} we have defined the homomorphism 
$\hmm\cl\fd^1(\CE,a)\to\mathfrak{so}_{n-1}$.
Let 
$$
\hat{\hmm}_k\cl\fd^k(\CE,a)/\mathfrak{S}_k\to\mathfrak{so}_{n-1}
$$ 
be the composition of the homomorphisms $\hmm\cl\fd^1(\CE,a)\to\mathfrak{so}_{n-1}$ 
and $\hat{\psi}_k\cl\fd^k(\CE,a)/\mathfrak{S}_k\to\fd^1(\CE,a)$.

\begin{theorem}
\lb{idealyik}
Recall that $k\ge 2$.
Let $\mathfrak{H}_k\subset\fd^k(\CE,a)/\mathfrak{S}_k$ 
be the Lie subalgebra generated by the coefficients of~$Y^{k}_{i}$.
Then $\mathfrak{H}_k$ is abelian and satisfies
\beq
\lb{hkkervfk}
[\mathfrak{H}_k,\ker\hat{\hmm}_k]=0.
\ee
Furthermore, this subalgebra is an ideal of the Lie algebra $\fd^k(\CE,a)/\mathfrak{S}_k$. 
\end{theorem}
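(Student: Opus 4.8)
The plan is to follow the pattern of Theorem~\ref{idealyij}, establishing the three assertions — that $\mathfrak{H}_k$ is abelian, that $[\mathfrak{H}_k,\ker\hat{\hmm}_k]=0$, and that $\mathfrak{H}_k$ is an ideal — in this order, using the relations for $Y^k_i$ and $T^k_{i\ai}$ obtained in Sections~\ref{secykij},~\ref{secyki}. First I would pass from $Y^k_i$ to $T^k_{i\ai}$: by Lemma~\ref{cykitab} the subalgebra $\mathfrak{H}_k$ generated by the coefficients of $Y^k_i$ coincides with the one generated by the coefficients of $T^k_{i\ai}$. The decisive simplification for $k\ge 2$ is Lemma~\ref{tkiaujm}, which says that $T^k_{i\ai}$ depends only on $u^j_0$ and $x,t$; since $k-1\ge 1$ this forces $T^{k}_{i{\ai},u^{\bi}_{k-1}}=0$, so the relation~\er{tkiak1} collapses to $[Y^k_{\bi},T^{k}_{i{\ai}}]=0$. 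With $V^k_{\bi}=Y^k_{\bi}+\partial/\partial u^{\bi}_{k-1}$ the vanishing derivative means this is exactly $[V^k_{\bi},T^{k}_{i{\ai}}]=0$, and then the Jacobi identity together with $T^k_{jm}=[V^k_j,V^k_m]$ yields $[T^{k}_{jm},T^{k}_{i{\ai}}]=0$ as a power series.

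To upgrade this identity to commutativity of individual coefficients I would prove the analogue of Lemma~\ref{tijzab}: applying Lemma~\ref{zpsi} to the system~\er{system.defining.Tkkij.bis} shows that every coefficient of $T^{k}_{i{\ai}}$ lies in the span of the zero-degree coefficients $\gamma^{{\ai}{\bi}}=T^{k}_{{\ai}{\bi}}\big|_{x=t=u^1=\dots=u^{n-1}=0}$. Evaluating $[T^{k}_{jm},T^{k}_{i{\ai}}]=0$ at $x=t=u=0$ gives $[\gamma^{jm},\gamma^{i{\ai}}]=0$, and with the span statement and bilinearity of the bracket this proves that all coefficients of all $T^{k}_{i{\ai}}$ commute; hence $\mathfrak{H}_k$ is abelian and equals the finite-dimensional span of the $\gamma^{{\ai}{\bi}}$. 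For the second assertion I would first note that $\mathfrak{H}_k\subseteq\ker\hat{\hmm}_k$: the homomorphism $\fd^k(\CE,a)\to\fd^1(\CE,a)$ arising from~\er{fdnn-1} annihilates every coefficient of $Y^k_i$, because by~\er{auikyy} such a coefficient is a generator whose index matrix has a nonzero last column, and $\tau_k$ kills all such generators. It therefore remains to show that each $\gamma^{{\ai}{\bi}}$ commutes with the whole ideal $\ker\hat{\hmm}_k$.

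This last point is the main obstacle, and here I would reproduce the inductive machinery behind~\er{zabhker},~\er{zijlker} and Lemma~\ref{zabvker} in the higher-order setting. As in Lemmas~\ref{fd1gen},~\ref{blgen}, the algebra $\bl_k=\fd^k(\CE,a)/\mathfrak{S}_k$ is generated by the coefficients of $Y^k_i\big|_{t=0}$ and $Y^k\big|_{t=0}$; the former lie in $\mathfrak{H}_k$, with which $\gamma^{{\ai}{\bi}}$ already commutes by abelianness. The substance is thus to control $[\gamma^{{\ai}{\bi}},C]$ for $C$ a coefficient of $Y^k\big|_{t=0}$ and to propagate this through iterated commutators: using the defining system~\er{system.defining.Tkkij.bis} (in particular the equation $T^{k}_{i{\ai},x}=[T^{k}_{i{\ai}},Z]$) and the compatibility relations~\er{eq.comp.cond.system.defining.Tkkia.higher}, one checks that $[\gamma^{{\ai}{\bi}},\,\cdot\,]$ sends the generators of $\bl_k$ into the span of the $\gamma$'s and that $\gamma^{{\ai}{\bi}}$ commutes with the obstruction series measuring failure of the $\so_{n-1}$-relations; an induction on the degree of coefficients then gives $[\gamma^{{\ai}{\bi}},v]=0$ for all $v\in\ker\hat{\hmm}_k$, whence $[\mathfrak{H}_k,\ker\hat{\hmm}_k]=0$. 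Finally, the ideal property follows from the same generation of $\bl_k$: brackets with the coefficients of $Y^k_i\big|_{t=0}$ stay in $\mathfrak{H}_k$ because it is an abelian subalgebra, while brackets with the coefficients of $Y^k\big|_{t=0}$ are handled by the $x$-derivative equation in~\er{system.defining.Tkkij.bis} together with the span property, showing $[C,\mathfrak{H}_k]\subseteq\mathfrak{H}_k$ and hence that $\mathfrak{H}_k$ is an ideal of $\bl_k$.
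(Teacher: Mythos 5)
Your proposal is correct and follows essentially the same route as the paper: the paper's own proof consists of invoking Lemma~\ref{cykitab} to pass from $Y^k_i$ to $T^k_{\ai\bi}$ and then arguing ``similarly to Theorem~\ref{idealyij}'' with the equations of Section~\ref{secyki} (Lemma~\ref{tkiaujm}, \er{tkiak1}, \er{system.defining.Tkkij.bis}, \er{eq.comp.cond.system.defining.Tkkia.higher}) in place of their order-one analogues, which is exactly the transport you carry out. Your observation that for $k\ge 2$ Lemma~\ref{tkiaujm} collapses \er{tkiak1} to $[Y^k_{\bi},T^k_{i\ai}]=0$, so that the $T^k_{i\ai}$ genuinely commute (unlike the $k=1$ case), is the correct key point, and the rest of your outline matches the paper's intended argument.
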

\begin{proof}
According to Lemma~\ref{cykitab},  
the Lie subalgebra generated by the coefficients of the power series~$Y^k_i$ 
coincides with the Lie subalgebra generated by the coefficients of~$T^{k}_{{\ai}{\bi}}$.

Therefore, it remains to prove that 
the subalgebra generated by the coefficients of~$T^{k}_{{\ai}{\bi}}$ 
is an abelian ideal and satisfies~\er{hkkervfk}. 
This is proved similarly to Theorem~\ref{idealyij}, 
using the results of the present section. 
\end{proof}

\subsection{The structure of the algebras $\fd^\oc(\CE,a)$}
\lb{subsfd}

\begin{theorem}
\label{fadesc}
Let $n\ge 4$.
Recall that the $n$-component Landau-Lifshitz system~\er{main} 
is transformed to the PDE~\er{pt} by means of the change of variables~\er{sp}.

The Lie algebras $\fd^\oc(\CE,a)$, $\oc\in\zp$, for the PDE~\er{pt} have the following structure.

The algebra $\fd^0(\CE,a)$ is isomorphic to $L(n)$.

There is an abelian ideal $\mathfrak{S}$ of $\fd^{1}(\CE,a)$ 
such that $\fd^{1}(\CE,a)/\mathfrak{S}\cong L(n)\oplus\mathfrak{so}_{n-1}$, 
where $\mathfrak{so}_{n-1}$ is the Lie algebra 
of skew-symmetric $(n-1)\times(n-1)$ matrices. 
The homomorphism $\fd^{1}(\CE,a)\to\fd^0(\CE,a)$ from~\er{fdnn-1} coincides 
with the composition of the homomorphisms
\beq
\lb{fd1fd1s}
\notag
\fd^{1}(\CE,a)\to\fd^{1}(\CE,a)/\mathfrak{S}\cong L(n)\oplus\mathfrak{so}_{n-1}
\to L(n)\cong \fd^0(\CE,a). 
\ee

Let $\tau_k\cl\fd^{k}(\CE,a)\to\fd^{k-1}(\CE,a)$ 
be the surjective homomorphism from~\er{fdnn-1}.
Then for any $k\ge 2$ we have
\beq
\lb{vvv123}
[v_1,[v_2,v_3]]=0\qquad\quad
\forall\,v_1,v_2,v_3\in\ker\tau_k.
\ee
In particular, the kernel of $\tau_k$ is nilpotent.

For each $k\ge 1$, 
let $\vf_k\colon\fd^k(\CE,a)\,\to\,L(n)\oplus\mathfrak{so}_{n-1}$ 
be the composition of the homomorphisms 
\beq
\lb{fdkso}
\fd^{k}(\CE,a)\to\fd^{1}(\CE,a)\to\fd^{1}(\CE,a)/\mathfrak{S}\cong L(n)\oplus\mathfrak{so}_{n-1},
\ee
where $\fd^k(\CE,a)\to\fd^1(\CE,a)$ arises from~\er{fdnn-1}.
Then 
\beq
\lb{mllhk}
[h_1,[h_2,\dots,[h_{2k-2},[h_{2k-1},h_{2k}]]\dots]]=0\qquad\qquad
\forall\,h_1,h_2,\dots,h_{2k}\in\ker\vf_k.
\ee
In particular, the kernel of $\vf_k$ is nilpotent.
\end{theorem}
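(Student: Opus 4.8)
The first two assertions require no new work: the isomorphism $\fd^0(\CE,a)\cong L(n)$ is Theorem~\ref{thfd0}, and the structure of $\fd^1(\CE,a)$ together with the factorization of $\fd^{1}(\CE,a)\to\fd^0(\CE,a)$ is Theorem~\ref{tijnge4}, with $\mathfrak{S}$ the abelian ideal of Theorem~\ref{idealyij}. So the plan is to concentrate on the two nilpotency statements~\er{vvv123} and~\er{mllhk}. I abbreviate $\mathfrak{n}_k=\ker\vf_k$ and $\mathfrak{k}_k=\ker\tau_k$, and first record how the relevant homomorphisms fit together. Since $\vf_k$ is the composition $\fd^k(\CE,a)\to\fd^1(\CE,a)\to\fd^1(\CE,a)/\mathfrak{S}$ and $\fd^k(\CE,a)\to\fd^1(\CE,a)$ equals $\tau_2\circ\dots\circ\tau_k$, one has $\vf_k=\vf_{k-1}\circ\tau_k$ for $k\ge 2$; hence $\mathfrak{k}_k\subset\mathfrak{n}_k$, and $\tau_k$ restricts to a surjection $\mathfrak{n}_k\to\mathfrak{n}_{k-1}$ with kernel $\mathfrak{k}_k$. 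Moreover $\hmm\colon\fd^1(\CE,a)\to\so_{n-1}$ factors through $\fd^1(\CE,a)/\mathfrak{S}$ as the projection $L(n)\oplus\so_{n-1}\to\so_{n-1}$ (this is the map $\hat{\hmm}$ produced in the proof of Lemma~\ref{tijso}), so $\ker\vf_1\subset\ker\hmm$. Chasing this through the factorizations gives $\mathfrak{n}_k\subset\ker\hmm_k$, and in $\fd^k(\CE,a)/\mathfrak{S}_k$ the image of $\mathfrak{n}_k$ lies in $\ker\hat{\hmm}_k$.

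The core of the argument is two bracket relations obtained by feeding these inclusions into Sections~\ref{secykij},~\ref{secyki}. First, Theorem~\ref{idealyijk} gives $[\mathfrak{S}_k,\ker\hmm_k]=0$, so together with $\mathfrak{n}_k\subset\ker\hmm_k$ we conclude that $\mathfrak{S}_k$ is central in $\mathfrak{n}_k$, i.e. $[\mathfrak{S}_k,\mathfrak{n}_k]=0$. Second, $\mathfrak{S}_k\subset\mathfrak{k}_k$ because $\tau_k$ kills the coefficients of $Y^k_{ij}$, and the image of $\mathfrak{k}_k$ in $\fd^k(\CE,a)/\mathfrak{S}_k$ is precisely the ideal $\mathfrak{H}_k$ generated by the coefficients of $Y^k_i$; since Theorem~\ref{idealyik} gives $[\mathfrak{H}_k,\ker\hat{\hmm}_k]=0$ and the image of $\mathfrak{n}_k$ lies in $\ker\hat{\hmm}_k$, we get $[\mathfrak{k}_k,\mathfrak{n}_k]\subset\mathfrak{S}_k$. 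Statement~\er{vvv123} is then immediate: for $v_1,v_2,v_3\in\mathfrak{k}_k$ we have $[v_2,v_3]\in[\mathfrak{k}_k,\mathfrak{n}_k]\subset\mathfrak{S}_k$, whence $[v_1,[v_2,v_3]]\in[\mathfrak{k}_k,\mathfrak{S}_k]\subset[\mathfrak{n}_k,\mathfrak{S}_k]=0$.

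For~\er{mllhk} I would argue by induction on $k$ using the lower central series $\mathfrak{a}^{[1]}=\mathfrak{a}$, $\mathfrak{a}^{[j+1]}=[\mathfrak{a},\mathfrak{a}^{[j]}]$; since the right-normed brackets of length $2k$ span $\mathfrak{n}_k^{[2k]}$, relation~\er{mllhk} is exactly the assertion $\mathfrak{n}_k^{[2k]}=0$. For $k=1$, $\mathfrak{n}_1=\mathfrak{S}$ is abelian by Theorem~\ref{idealyij}, so $\mathfrak{n}_1^{[2]}=0$. For the inductive step, surjectivity of $\tau_k\colon\mathfrak{n}_k\to\mathfrak{n}_{k-1}$ yields $\tau_k(\mathfrak{n}_k^{[2k-2]})=\mathfrak{n}_{k-1}^{[2k-2]}$, which vanishes by the inductive hypothesis; hence $\mathfrak{n}_k^{[2k-2]}\subset\ker(\tau_k|_{\mathfrak{n}_k})=\mathfrak{k}_k$. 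Applying the two displayed relations in turn gives $\mathfrak{n}_k^{[2k-1]}=[\mathfrak{n}_k,\mathfrak{n}_k^{[2k-2]}]\subset[\mathfrak{n}_k,\mathfrak{k}_k]\subset\mathfrak{S}_k$ and then $\mathfrak{n}_k^{[2k]}=[\mathfrak{n}_k,\mathfrak{n}_k^{[2k-1]}]\subset[\mathfrak{n}_k,\mathfrak{S}_k]=0$, closing the induction.

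I expect the main obstacle to be not the short induction but the bookkeeping that underlies the two central bracket relations: verifying that $\tau_k$ annihilates exactly the coefficients carrying $u^i_k$, so that $\mathfrak{S}_k\subset\mathfrak{k}_k$ and the image of $\mathfrak{k}_k$ in $\fd^k(\CE,a)/\mathfrak{S}_k$ is the ideal $\mathfrak{H}_k$, and checking that the projections and factorizations among $\vf_k$, $\tau_k$, $\hmm_k$, $\hat{\hmm}_k$ are mutually compatible so that the inclusions $\mathfrak{n}_k\subset\ker\hmm_k$ and $(\mathfrak{n}_k\bmod\mathfrak{S}_k)\subset\ker\hat{\hmm}_k$ hold. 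Any error in identifying one of these kernels would break the relations $[\mathfrak{S}_k,\mathfrak{n}_k]=0$ and $[\mathfrak{k}_k,\mathfrak{n}_k]\subset\mathfrak{S}_k$ on which both nilpotency claims rest, so these identifications are where the structural theorems of the preceding sections must be invoked with care.
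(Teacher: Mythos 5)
Your proposal is correct and follows essentially the same route as the paper: it rests on the same two bracket relations $[\mathfrak{S}_k,\ker\hmm_k]=0$ (Theorem~\ref{idealyijk}) and $[\ker\tau_k,\ker\hmm_k]\subset\mathfrak{S}_k$ (Theorem~\ref{idealyik}), combined with the inclusions $\ker\tau_k\subset\ker\hmm_k$ and $\ker\vf_k\subset\ker\hmm_k$, and the same induction on $k$ for~\er{mllhk}. Recasting that induction in terms of the lower central series of $\ker\vf_k$ rather than explicit nested commutators is only a notational repackaging of the paper's argument.
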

\begin{proof}
The isomorphism $\fd^0(\CE,a)\cong L(n)$ is proved in Theorem~\ref{thfd0}. 

In particular, the abelian ideal $\mathfrak{S}\subset\fd^{1}(\CE,a)$ is defined 
in Theorem~\ref{idealyij}, and the isomorphism 
$\fd^{1}(\CE,a)/\mathfrak{S}\cong L(n)\oplus\mathfrak{so}_{n-1}$ 
is described in Theorem~\ref{tijnge4}.
Also, in Theorem~\ref{tijnge4} it is shown that the homomorphism 
$\fd^{1}(\CE,a)\to\fd^0(\CE,a)$ from~\er{fdnn-1} coincides with the composition~\er{fd1fd1s}.



Recall that in Section~\ref{secyij} we have defined the homomorphism 
$\hmm\cl\fd^1(\CE,a)\to\mathfrak{so}_{n-1}$.
For any $r\ge 1$, let $\hmm_r\cl\fd^r(\CE,a)\to\mathfrak{so}_{n-1}$ be the composition 
of the homomorphisms $\fd^r(\CE,a)\to\fd^1(\CE,a)\to\mathfrak{so}_{n-1}$,
where $\fd^r(\CE,a)\to\fd^1(\CE,a)$ arises from~\er{fdnn-1}.

Since for any $k\ge 2$ the composition 
$$
\fd^k(\CE,a)\xrightarrow{\tau_k}\fd^{k-1}(\CE,a)
\xrightarrow{\hmm_{k-1}}\mathfrak{so}_{n-1}
$$
coincides with $\hmm_k\cl\fd^k(\CE,a)\to\mathfrak{so}_{n-1}$, we have 
\beq
\lb{ktkhm}
\ker\tau_k\subset\ker\hmm_k\qquad\quad\forall\,k\ge 2.
\ee
Formulas~\er{eq.X.higher},~\er{bkf1} imply that $\ker\tau_k$ 
is generated by the coefficients of the power series $Y^{k}_{ij}$, $Y^{k}_{i}$.

Recall that $\mathfrak{S}_k\subset\fd^k(\CE,a)$ is the ideal 
generated by the coefficients of~$Y^{k}_{ij}$ for any $k\ge 2$.
According to Theorem~\ref{idealyijk}, 
\beq
\lb{skkh0}
[\mathfrak{S}_k,\ker\hmm_k]=0\qquad\quad\forall\,k\ge 2.
\ee
Since $\ker\tau_k\subset\ker\hmm_k$, we get 
\beq
\lb{skktk0}
[\mathfrak{S}_k,\ker\tau_k]=0\qquad\quad\forall\,k\ge 2.
\ee
Theorem~\ref{idealyik} implies 
\beq
\lb{ktkhs}
[\ker\tau_k,\ker\hmm_k]\subset\mathfrak{S}_k\qquad\quad\forall\,k\ge 2.
\ee
Combining \er{ktkhm}, \er{skktk0}, \er{ktkhs}, one obtains
$[\ker\tau_k,\ker\tau_k]\subset\mathfrak{S}_k$ and $[\mathfrak{S}_k,\ker\tau_k]=0$.
This yields~\er{vvv123} and implies that $\ker\tau_k$ is nilpotent.

Since $\vf_k\colon\fd^k(\CE,a)\,\to\,L(n)\oplus\mathfrak{so}_{n-1}$ 
is the composition~\er{fdkso}, we have
\beq
\lb{kvfkhm}
\ker\vf_k\subset\ker\hmm_k\qquad\quad\forall\,k\ge 2.
\ee

For $k=1$, the homomorphism 
$\vf_1\colon\fd^1(\CE,a)\,\to\,L(n)\oplus\mathfrak{so}_{n-1}$ 
is the composition of the homomorphisms 
\beq
\lb{vf1com}
\fd^{1}(\CE,a)\to\fd^{1}(\CE,a)/\mathfrak{S}\cong L(n)\oplus\mathfrak{so}_{n-1},
\ee
hence $\ker\vf_1=\mathfrak{S}\subset\fd^{1}(\CE,a)$.
According to Theorem~\ref{idealyij}, the ideal $\mathfrak{S}$ is abelian,
so $[h_1,h_2]=0$ for all $h_1,h_2\in\ker\vf_1=\mathfrak{S}$. 
This proves~\er{mllhk} for $k=1$.

Now let us prove~\er{mllhk} by induction on $k\ge 2$.

In the case $k=2$, consider any elements
\beq
\lb{hhvf2}
h_1,h_2,h_3,h_4\in\ker\vf_2\subset\fd^2(\CE,a).
\ee
According to the definition of $\vf_k$,
the homomorphism $\vf_2$ is the composition of the homomorphisms 
$$
\fd^2(\CE,a)\xrightarrow{\tau_2}\fd^{1}(\CE,a)
\to\fd^{1}(\CE,a)/\mathfrak{S}\cong L(n)\oplus\mathfrak{so}_{n-1}.
$$
Therefore, the condition $h_3,h_4\in\ker\vf_2$ means that 
$\tau_2(h_3),\tau_2(h_4)\in\mathfrak{S}$.
Since $[\mathfrak{S},\mathfrak{S}]=0$, we have $\tau_2([h_3,h_4])=0$, 
so $[h_3,h_4]\in\ker\tau_2$.

From~\er{kvfkhm}, \er{hhvf2} it follows that $h_1,h_2\in\ker\hmm_2$.
According to~\er{ktkhs}, since $h_2\in\ker\hmm_2$ and $[h_3,h_4]\in\ker\tau_2$, 
we have $[h_2,[h_3,h_4]]\in\mathfrak{S}_2$.
According to~\er{skkh0}, since $h_1\in\ker\hmm_2$ and $[h_2,[h_3,h_4]]\in\mathfrak{S}_2$, 
we get $[h_1,[h_2,[h_3,h_4]]]=0$, which means that we have proved~\er{mllhk} for $k=2$.

Let $r\ge 2$ be such that~\er{mllhk} is valid for $k=r$. 
Then for any elements 
$$
h_1,h_2,h_3,\dots,h_{2r+2}\in\ker\vf_{r+1}\subset\fd^{r+1}(\CE,a)
$$ 
we have 
\beq
\lb{tr1h}
\big[\tau_{r+1}(h_3),\big[\tau_{r+1}(h_4),\dots,\big[\tau_{r+1}(h_{2r}),
\big[\tau_{r+1}(h_{2r+1}),\tau_{r+1}(h_{2r+2})\big]\big]\dots\big]\big]=0,
\ee
because $\tau_{r+1}(h_i)\in\ker\vf_r$ for $i=3,4,\dots,2r+2$.
Equation~\er{tr1h} says that 
\beq
\lb{h3h2r1}
[h_3,[h_4,\dots,[h_{2r},[h_{2r+1},h_{2r+2}]]\dots]]\in\ker\tau_{r+1}.
\ee
According to~\er{kvfkhm}, since $h_2\in\ker\vf_{r+1}$, we have
$h_2\in\ker\hmm_{r+1}$. Combining this with \er{ktkhs} and \er{h3h2r1},
one obtains
\beq
\lb{h23h2r1}
[h_2,[h_3,[h_4,\dots,[h_{2r},[h_{2r+1},h_{2r+2}]]\dots]]]\in\mathfrak{S}_{r+1}.
\ee
According to~\er{kvfkhm}, since $h_1\in\ker\vf_{r+1}$, we have
$h_1\in\ker\hmm_{r+1}$. Combining this with~\er{skkh0} and \er{h23h2r1}, one gets
\beq
\notag
[h_1,[h_2,[h_3,\dots,[h_{2r},[h_{2r+1},h_{2r+2}]]\dots]]]=0.
\ee
Thus we have proved~\er{mllhk} for $k=r+1$. 
Clearly, property~\er{mllhk} implies that $\ker\vf_k$ is nilpotent.
\end{proof}

\section{On zero-curvature representations 
for the multicomponent Landau-Lifshitz system}
\lb{secczcr}

In this section we use the notions of reductions and direct sums of ZCRs.
These notions have been introduced in Section~\ref{sbdszcr}.

Let $n\ge 4$.
Recall that the $n$-component Landau-Lifshitz system~\er{main} 
is transformed to the PDE~\er{pt} by means of the change of variables~\er{sp}.
In what follows we study ZCRs for this PDE.

Let $\mathfrak{Z}_1$ be the $L(n)$-valued ZCR defined in Remark~\ref{mllzcr}. 
So the ZCR $\mathfrak{Z}_1$ is given by the functions~\er{M},~\er{N}, 
which take values in the infinite-dimensional Lie algebra $L(n)$ 
introduced in Remark~\ref{mllzcr}.

Let $\mathfrak{Z}_2$ be the $\mathfrak{so}_{n-1}$-valued ZCR defined in 
Theorem~\ref{zcrson}.
We can consider also the direct sum $\mathfrak{Z}_1\oplus\mathfrak{Z}_2$ 
of the ZCRs $\mathfrak{Z}_1$ and $\mathfrak{Z}_2$. 
The ZCR $\mathfrak{Z}_1\oplus\mathfrak{Z}_2$ takes values 
in the Lie algebra $L(n)\oplus\mathfrak{so}_{n-1}$.

For any given ZCR $\mathfrak{R}$ 
of the PDE~\er{pt}, we are going to show that, 
after suitable gauge transformations and after killing 
some nilpotent ideal in the corresponding Lie algebra, 
the ZCR $\mathfrak{R}$ becomes isomorphic to a reduction 
of the ZCR $\mathfrak{Z}_1\oplus\mathfrak{Z}_2$.

Let $\CE$ be the infinite prolongation of the PDE~\er{pt}.
Fix a point $a\in\CE$ given by~\er{pointevfd1}.
We are going to study ZCRs defined on a neighborhood of $a\in\CE$.

Fix $\oc\in\zp$. In what follows, we study ZCRs of order~$\le\oc$.
Without loss of generality, we can assume $\oc\ge 1$, 
because any ZCR of order~$\le 0$ is at the same time of order~$\le 1$.

Recall that the theory of Section~\ref{csev} has been developed
for an arbitrary $\nv$-component PDE~\er{uitfi} such that 
the right-hand side of~\er{uitfi} may depend 
on $x$, $t$, $u^j$, $u^j_k$ for $k\le\eo$.
Since we study now the $(n-1)$-component PDE~\er{pt} of order~$3$, 
we will use the theory of Section~\ref{csev} for $\nv=n-1$ and $\eo=3$.

According to~\er{gasumxt},~\er{gbsumxt}, one has the power series 
$\ga=\ga(x,t,u^j_0,u^j_1,\dots,u^j_\oc)$, 
$\gb=\gb(x,t,u^j_0,u^j_1,\dots,u^j_{\oc+2})$ with coefficients in $\fd^\oc(\CE,a)$
satisfying~\er{xgbtga}, so $\ga$, $\gb$ constitute a formal ZCR with coefficients 
in $\fd^\oc(\CE,a)$.

According to Theorem~\ref{fadesc}, we have the surjective homomorphism
\beq
\lb{vflnso}
\vf_\oc\colon\fd^\oc(\CE,a)\,\to\,L(n)\oplus\mathfrak{so}_{n-1}
\ee
such that $\ker\vf_\oc$ is a nilpotent ideal of $\fd^\oc(\CE,a)$.
Then 
\beq
\lb{chgagb}
\check{\ga}=\vf_\oc(\ga),\qquad\quad 
\check{\gb}=\vf_\oc(\gb)
\ee
are power series with coefficients in $L(n)\oplus\mathfrak{so}_{n-1}$ 
and constitute a formal ZCR with coefficients in $L(n)\oplus\mathfrak{so}_{n-1}$.

According to Remark~\ref{rzfz}, 
the $\big(L(n)\oplus\mathfrak{so}_{n-1}\big)$-valued ZCR 
$\mathfrak{Z}_1\oplus\mathfrak{Z}_2$ can be regarded also 
as a formal ZCR with coefficients in $L(n)\oplus\mathfrak{so}_{n-1}$.

Recall that in Remark~\ref{remunxi} we have defined formal gauge transformations 
and the notion of gauge equivalence for formal ZCRs with coefficients 
in arbitrary (possibly infinite-dimensional) Lie algebras.
The results of Sections~\ref{secfd1},~\ref{secfdk} imply that 
the formal ZCR $\check{\ga}$, $\check{\gb}$ 
with coefficients in $L(n)\oplus\mathfrak{so}_{n-1}$ 
is gauge equivalent to the formal ZCR $\mathfrak{Z}_1\oplus\mathfrak{Z}_2$.

Let $\mg\subset\gl_\sm$ be a matrix Lie algebra.
Let $\mathfrak{R}$ be a $\mg$-valued ZCR of order~$\le\oc$.
So the ZCR $\mathfrak{R}$ is given by $\mg$-valued functions 
$A=A(x,t,u^j_0,u^j_1,\dots,u^j_\oc)$, $B=B(x,t,u^j_0,u^j_1,\dots,u^j_{\oc+2})$ 
satisfying $D_x(B)-D_t(A)+[A,B]=0$.

According to Theorem~\ref{thzcrfd}, 
there is a homomorphism $\hrf\cl\fds^\oc(\CE,a)\to\mg$ such that
the ZCR $A$, $B$ is gauge equivalent to the 
$\hrf\big(\fds^\oc(\CE,a)\big)$-valued ZCR 
\beq
\lb{tabhrf}
\tilde{A}=\hrf(\ga),\qquad\quad
\tilde{B}=\hrf(\gb).
\ee
Here $\tilde{A}$, $\tilde{B}$ are $\hrf\big(\fds^\oc(\CE,a)\big)$-valued 
functions, and formulas~\er{tabhrf} mean that $\hrf$ maps the coefficients 
of the power series $\ga$, $\gb$ to the corresponding 
coefficients of the Taylor series of the functions $\tilde{A}$, $\tilde{B}$.
The ZCR $\tilde{A}$, $\tilde{B}$ is $a$-normal.

Recall that $\ker\vf_\oc$ is a nilpotent ideal of $\fd^\oc(\CE,a)$.
This implies that $\hrf\big(\ker\vf_\oc\big)$ is a nilpotent ideal 
of the Lie subalgebra $\hrf\big(\fds^\oc(\CE,a)\big)\subset\mg$.

Since the homomorphism~\er{vflnso} is surjective, 
the algebra $L(n)\oplus\mathfrak{so}_{n-1}$ is isomorphic to 
$\fd^\oc(\CE,a)/\ker\vf_\oc$, and the surjective homomorphism 
$\hrf\cl\fds^\oc(\CE,a)\to\hrf\big(\fds^\oc(\CE,a)\big)$ induces 
a surjective homomorphism 
\beq
\lb{hhrf}
\hat{\hrf}\cl 
L(n)\oplus\mathfrak{so}_{n-1}\to
\hrf\big(\fds^\oc(\CE,a)\big)/\hrf\big(\ker\vf_\oc\big).
\ee

Consider the natural surjective homomorphism 
\beq
\lb{psihrf}
\psi\colon \hrf\big(\fds^\oc(\CE,a)\big)\to
\hrf\big(\fds^\oc(\CE,a)\big)/\hrf\big(\ker\vf_\oc\big).
\ee
Using this homomorphism and 
the $\hrf\big(\fds^\oc(\CE,a)\big)$-valued ZCR~\er{tabhrf},
we see that the functions $\psi(\tilde{A})$, $\psi(\tilde{B})$
form a ZCR with values in the Lie algebra 
$\hrf\big(\fds^\oc(\CE,a)\big)/\hrf\big(\ker\vf_\oc\big)$.

Formulas~\er{chgagb},~\er{tabhrf} and the definition 
of the homomorphism~\er{hhrf} imply the following
\beq
\lb{psiabh}
\psi(\tilde{A})=\hat{\hrf}\big(\check{\ga}\big),\qquad\quad
\psi(\tilde{B})=\hat{\hrf}\big(\check{\gb}\big).
\ee
Equations~\er{psiabh} say that 
the ZCR $\psi(\tilde{A})$, $\psi(\tilde{B})$ 
is a reduction of the ZCR $\check{\ga}$, $\check{\gb}$, 
which is gauge equivalent to the ZCR $\mathfrak{Z}_1\oplus\mathfrak{Z}_2$, 
as has been discussed above.

Recall that two ZCRs are called gauge equivalent 
if one is obtained from the other by means of a gauge transformation.
Thus, starting from an arbitrary ZCR $\mathfrak{R}$ 
given by functions $A$, $B$, we have done the following steps:
\begin{enumerate}
\item Using Theorem~\ref{thzcrfd}, 
we have obtained a homomorphism $\hrf\cl\fds^\oc(\CE,a)\to\mg$ 
and a gauge equivalent ZCR $\tilde{A}$, $\tilde{B}$ satisfying~\er{tabhrf}.
The ZCR $\tilde{A}$, $\tilde{B}$ takes values in the Lie subalgebra 
$\hrf\big(\fds^\oc(\CE,a)\big)\subset\mg$ and is obtained from the ZCR $A$, $B$ 
by means of a gauge transformation.
\item Killing the nilpotent ideal 
$\hrf\big(\ker\vf_\oc\big)\subset\hrf\big(\fds^\oc(\CE,a)\big)$, 
we have obtained the ZCR $\psi(\tilde{A})$, $\psi(\tilde{B})$, where $\psi$ 
is defined by~\er{psihrf}.
\item We have shown that the ZCR $\psi(\tilde{A})$, $\psi(\tilde{B})$ 
is a reduction of the ZCR $\check{\ga}$, $\check{\gb}$, 
which is gauge equivalent to the ZCR $\mathfrak{Z}_1\oplus\mathfrak{Z}_2$.
The ZCR $\check{\ga}$, $\check{\gb}$ is obtained from 
the ZCR $\mathfrak{Z}_1\oplus\mathfrak{Z}_2$ by means of a gauge transformation.
\end{enumerate}

Thus, for any given ZCR $\mathfrak{R}$, we have shown that, 
after suitable gauge transformations and after killing 
some nilpotent ideal in the corresponding Lie algebra, 
the ZCR $\mathfrak{R}$ becomes isomorphic to a reduction 
of the ZCR $\mathfrak{Z}_1\oplus\mathfrak{Z}_2$.

\begin{remark}
One has also a similar result in the case when a ZCR $\mathfrak{R}$ 
takes values in a matrix Lie algebra $\mg$ and depends on a parameter $\la$.

Suppose that the ZCR $\mathfrak{R}$ is given by functions
\beq
\lb{abla2}
A=A(\la,x,t,u^j_0,u^j_1,\dots,u^j_\oc),\quad 
B=B(\la,x,t,u^j_0,u^j_1,\dots,u^j_{\oc+2}),\quad
D_x(B)-D_t(A)+[A,B]=0,
\ee
where $\mg$-valued functions $A$, $B$ depend on $x$, $t$, $u^i_k$ 
and a parameter $\la$.

Let $\tilde{\mg}$ be the infinite-dimensional Lie algebra 
of functions $h(\la)$ with values in $\mg$. 
(Depending on the problem under study, one can consider analytic or meromorphic functions $h(\la)$.
Or one can assume that $\la$ runs through an open subset of some algebraic curve 
and consider $\mg$-valued functions $h(\la)$ on this algebraic curve.)

Then~\er{abla2} can be regarded as a ZCR with values in $\tilde{\mg}$.  
Using parameter-dependent versions of Theorems~\ref{evcov},~\ref{thzcrfd},
one can show that the ZCR~\er{abla2} is gauge equivalent to an $a$-normal ZCR 
\beq
\notag
\tilde{A}=\tilde{A}(\la,x,t,u^j_0,u^j_1,\dots,u^j_\oc),\quad 
\tilde{B}=\tilde{B}(\la,x,t,u^j_0,u^j_1,\dots,u^j_{\oc+2}),\quad
D_x(\tilde{B})-D_t(\tilde{A})+[\tilde{A},\tilde{B}]=0,
\ee
satisfying $\tilde{A}=\hrf(\ga)$, $\tilde{B}=\hrf(\gb)$ for some homomorphism 
$\hrf\cl\fds^\oc(\CE,a)\to\tilde{\mg}$. 
The functions $\tilde{A}$, $\tilde{B}$ take values in the Lie subalgebra 
$\hrf\big(\fds^\oc(\CE,a)\big)\subset\tilde{\mg}$.

Then one can kill the nilpotent ideal 
$\hrf\big(\ker\vf_\oc\big)\subset\hrf\big(\fds^\oc(\CE,a)\big)$
and proceed similarly to the steps described above.
\end{remark}

\section{The algebras $\fd^\oc(\CE,a)$ for the 
classical Landau-Lifshitz and nonlinear Schr\"odinger equations}
\label{sllnls}

In this section we assume $\fik=\Com$ and study the algebras $\fd^\oc(\CE,a)$ 
for the classical Landau-Lifshitz and nonlinear Schr\"odinger equations.

The classical Landau-Lifshitz equation reads
\beq
\lb{lle}
S_t=S\times S_{xx}+S\times JS,\qquad S=\big(S^1(x,t),S^2(x,t),S^3(x,t)\big),\qquad 
(S^1)^2+(S^2)^2+(S^3)^2=1,
\ee
where $J=\mathrm{diag}(j_1,j_2,j_3)$ is a constant diagonal $(3\times 3)$-matrix 
with $j_1,j_2,j_3\in\Com$, and the symbol $\times$ denotes the vector product. 
We consider the \emph{fully anisotropic} case $j_1\neq j_2\neq j_3\neq j_1$.

Let $\CE$ be the infinite prolongation of the PDE~\er{lle}.
Let $a\in\CE$.
We are going to describe the algebras $\fd^\oc(\CE,a)$ for this PDE.
\begin{remark}
Strictly speaking, 
in order to define $\CE$ and $\fd^\oc(\CE,a)$ for the PDE~\er{lle}, 
we need to resolve the constraint $(S^1)^2+(S^2)^2+(S^3)^2=1$ 
and to rewrite~\er{lle} as a $2$-component evolution PDE.
For example, Roelofs and Martini~\cite{ll} use the spherical coordinates
\beq
\lb{sssuv}
S^1=\cos v\sin u,\qquad
S^2=\sin v\sin u,\qquad
S^3=\cos u.
\ee
According to~\cite{ll},
using the transformation~\er{sssuv}, one can rewrite the PDE~\er{lle} 
as the $2$-component evolution PDE
\beq
\lb{uvt}
\begin{aligned}
u_t&=-(\sin u)v_{xx}-2(\cos u)u_xv_x+(j_1-j_2)\sin u\cos v\sin v,\\
v_t&=\frac{1}{\sin u}u_{xx}-(\cos u)v^2_{x}+(\cos u)(j_1\cos^2 v+j_2\sin^2 v-j_3).
\end{aligned}
\ee

One can use also some other way to resolve 
the constraint $(S^1)^2+(S^2)^2+(S^3)^2=1$ 
and to rewrite~\er{lle} as a $2$-component evolution PDE, 
then $\CE$ and $\fd^\oc(\CE,a)$ will be the same (up to isomorphism).
\end{remark}

We need some auxiliary constructions.
Let $\Com[v_1,v_2,v_3]$ be the algebra of polynomials in the variables $v_1$, $v_2$, $v_3$.
Recall that, by our assumption, the constants 
$j_1,j_2,j_3\in\Com$ satisfy $j_1\neq j_2\neq j_3\neq j_1$. 
Consider the ideal $\mathcal{I}_{j_1,j_2,j_3}\subset\Com[v_1,v_2,v_3]$ generated by the
polynomials
\begin{equation}
  \label{elc}
  v_\al^2-v_\beta^2+j_\al-j_\beta,\qquad\qquad \al,\,\beta=1,2,3.
\end{equation}

Set $E_{j_1,j_2,j_3}=\Com[v_1,v_2,v_3]/\mathcal{I}_{j_1,j_2,j_3}$. 
In other words, $E_{j_1,j_2,j_3}$ 
is the commutative associative algebra of polynomial 
functions on the algebraic curve 
in $\Com^3$ defined by the polynomials~\eqref{elc}.

Since we assume $j_1\neq j_2\neq j_3\neq j_1$, 
this curve is nonsingular and is of genus~$1$, so this is an elliptic curve. 
It is well known that the Landau-Lifshitz equation~\er{lle} possesses 
an $\mathfrak{so}_3(\Com)$-valued ZCR parametrized by points of this 
curve~\cite{sklyanin,ft}.


We have the natural surjective homomorphism 
$\mu\cl\Com[v_1,v_2,v_3]\to
\Com[v_1,v_2,v_3]/\mathcal{I}_{j_1,j_2,j_3}=E_{j_1,j_2,j_3}$.
Set $\bar v_i=\mu(v_i)\in E_{j_1,j_2,j_3}$ for $i=1,2,3$.

Consider also a basis $x_1$, $x_2$, $x_3$ of the Lie algebra
$\mathfrak{so}_3(\Com)$ such that $[x_1,x_2]=x_3$, $[x_2,x_3]=x_1$, $[x_3,x_1]=x_2$. 
We endow the space $\mathfrak{so}_3(\Com)\otimes_\Com E_{j_1,j_2,j_3}$ with 
the following Lie algebra structure 
$$
[z_1\otimes h_1,\,z_2\otimes h_2]=[z_1,z_2]\otimes h_1h_2,\qquad\quad 
z_1,z_2\in\mathfrak{so}_3(\Com),\qquad\quad h_1,h_2\in E_{j_1,j_2,j_3}.
$$

Denote by $\mR_{j_1,j_2,j_3}$ the Lie subalgebra of 
$\mathfrak{so}_3(\Com)\otimes_\Com E_{j_1,j_2,j_3}$ generated by the elements
$$
x_i\otimes\bar v_i\,\in\,\mathfrak{so}_3(\Com)\otimes_\Com E_{j_1,j_2,j_3},
\qquad\qquad i=1,2,3.
$$
Since $\mR_{j_1,j_2,j_3}\subset\mathfrak{so}_3(\Com)\otimes_\Com E_{j_1,j_2,j_3}$, 
we can regard elements of~$\mR_{j_1,j_2,j_3}$ 
as $\mathfrak{so}_3(\Com)$-valued functions on the elliptic curve in~$\Com^3$ 
determined by the polynomials~\eqref{elc}.
The paper~\cite{ll} describes a basis for $\mR_{j_1,j_2,j_3}$, 
which implies that the algebra $\mR_{j_1,j_2,j_3}$ is infinite-dimensional. 

\begin{theorem}
\label{thlla}
Recall that $J=\mathrm{diag}(j_1,j_2,j_3)$ in~\er{lle} 
is a constant diagonal $(3\times 3)$-matrix with $j_1,j_2,j_3\in\Com$.
We consider the case $j_1\neq j_2\neq j_3\neq j_1$.

Let $\CE$ be the infinite prolongation of the Landau-Lifshitz equation~\er{lle}.
Let $a\in\CE$.

The Lie algebras $\fd^\oc(\CE,a)$, $\oc\in\zp$, for this PDE have the following structure.

The algebra $\fd^0(\CE,a)$ is isomorphic to $\mR_{j_1,j_2,j_3}$.

For each $\oc\in\zp$, 
there are an abelian ideal $\mathcal{I}_{\oc+1}$ 
of the Lie algebra $\fd^{\oc+1}(\CE,a)$
and an abelian ideal $\mathcal{J}_{\oc+1}$ of the quotient Lie algebra 
$\fd^{\oc+1}(\CE,a)/\mathcal{I}_{\oc+1}$ such that
\begin{itemize}
\item $\big[\mathcal{I}_{\oc+1},\,\fd^{\oc+1}(\CE,a)\big]=0$, 
so $\mathcal{I}_{\oc+1}$ lies in the center of the Lie algebra $\fd^{\oc+1}(\CE,a)$,
\item
$\big[\mathcal{J}_{\oc+1},\,\fd^{\oc+1}(\CE,a)/\mathcal{I}_{\oc+1}\big]=0$, 
so $\mathcal{J}_{\oc+1}$ lies in the center of the Lie algebra 
$\fd^{\oc+1}(\CE,a)/\mathcal{I}_{\oc+1}$,
\item the algebra $\fd^{\oc}(\CE,a)$ is isomorphic to the quotient algebra
$\big(\fd^{\oc+1}(\CE,a)/\mathcal{I}_{\oc+1}\big)/\mathcal{J}_{\oc+1}$,
and the surjective homomorphism $\fd^{\oc+1}(\CE,a)\to\fd^{\oc}(\CE,a)$ 
from~\er{fdnn-1} coincides with the composition of the natural homomorphisms 
$$
\fd^{\oc+1}(\CE,a)\to\fd^{\oc+1}(\CE,a)/\mathcal{I}_{\oc+1}\to
\big(\fd^{\oc+1}(\CE,a)/\mathcal{I}_{\oc+1}\big)/\mathcal{J}_{\oc+1}\cong
\fd^{\oc}(\CE,a),
$$
which means that the Lie algebra $\fd^{\oc+1}(\CE,a)$ is obtained from 
the Lie algebra $\fd^{\oc}(\CE,a)$ 
by applying two times the operation of central extension.
\end{itemize}

For each $q\in\zsp$, the Lie algebra $\fd^q(\CE,a)$ is obtained from 
the Lie algebra $\fd^0(\CE,a)$ 
by applying several times the operation of central extension.
The kernel of the surjective homomorphism 
$\fd^q(\CE,a)\to\fd^{0}(\CE,a)$ from~\er{fdnn-1} is nilpotent.
\end{theorem}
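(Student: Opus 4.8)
The plan is to treat the three assertions in turn, reusing Theorem~\ref{thmhfd0} for $\fd^0(\CE,a)$ and the inductive analysis of Sections~\ref{secfd1},~\ref{secfdk} for the higher algebras, now specialized to the two-component second-order form~\er{uvt} of the Landau-Lifshitz equation, so that $\nv=2$ and $\eo=2$.

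First, for the isomorphism $\fd^0(\CE,a)\cong\mR_{j_1,j_2,j_3}$ I would argue exactly as in the proof of Theorem~\ref{thfd0}. Since the PDE~\er{uvt} is autonomous, Theorem~\ref{thmhfd0} applies and gives $\fd^0(\CE,a)\cong\swe$, where $\swe\subset\wea$ is the subalgebra generated by the elements $(\ad\wga_0)^{k}(\wga_\al)$. The WE algebra $\wea$ of~\er{lle} was computed by Roelofs and Martini in~\cite{ll}; from their description one reads that, in the formal Wahlquist--Estabrook ZCR, $\wga=C_0+\sum_i C_i\,S^i$, where the $S^i$ are the components of $S$, the elements $C_i$ generate the subalgebra $\mR_{j_1,j_2,j_3}$, and $C_0$ commutes with every $C_i$. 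This yields $\swe=\mR_{j_1,j_2,j_3}$ and hence $\fd^0(\CE,a)\cong\mR_{j_1,j_2,j_3}$, the infinite-dimensionality being guaranteed by the basis of $\mR_{j_1,j_2,j_3}$ exhibited in~\cite{ll}.

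Second, and this is the heart of the matter, I would establish the two-step central-extension structure of $\fd^{\oc+1}(\CE,a)$ over $\fd^{\oc}(\CE,a)$. Following the computations of Sections~\ref{prcomp},~\ref{secfdk}, one brings an order-$\le(\oc+1)$ $a$-normal ZCR into canonical form and, differentiating the zero-curvature equation repeatedly in the top-order variables $u^j_{\oc+1}$, shows that $A$ is a polynomial of degree $\le 2$ in these variables, $A=\frac{1}{2}\sum_{i,j}u^i_{\oc+1}u^j_{\oc+1}Y_{ij}+\sum_i u^i_{\oc+1}Y_i+Y$, whose coefficients $Y_{ij},Y_i,Y$ depend only on $x,t$ and on $u^j_m$ with $m\le\oc$. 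By Theorem~\ref{lemgenfdq} the coefficients of these three families generate $\fd^{\oc+1}(\CE,a)$, and $\ker\tau_{\oc+1}$ is generated by the coefficients of $Y_{ij}$ and $Y_i$. I would then set $\mathcal{I}_{\oc+1}$ to be the ideal generated by the coefficients of the leading part $Y_{ij}$, and $\mathcal{J}_{\oc+1}$ to be the image, in $\fd^{\oc+1}(\CE,a)/\mathcal{I}_{\oc+1}$, of the ideal generated by the coefficients of $Y_i$. The key analytic input is the pair of compatibility relations for the $Y_{ij}$ and for the associated curvatures, obtained as in~\er{eq.comp.cond.system.defining.Yij} and~\er{eq.comm.Tih.Tkm}. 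The decisive simplification is that here $\nv=2$: the antisymmetry of the curvatures together with the collapse of the Kronecker-delta terms in the $\nv=2$ analog of~\er{eq.comm.Tih.Tkm} forces the relevant brackets of leading coefficients to vanish identically, rather than producing a copy of a nonabelian $\so_{n-1}$ as in the case $n\ge 4$. Concretely, I would prove that the coefficients of $Y_{ij}$ commute with one another and with the coefficients of $Y_i$ and $Y$, so that $\mathcal{I}_{\oc+1}$ is central in $\fd^{\oc+1}(\CE,a)$, and that modulo $\mathcal{I}_{\oc+1}$ the coefficients of $Y_i$ likewise commute with everything, so that $\mathcal{J}_{\oc+1}$ is central in the quotient. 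Then the degree-$0$ part $Y$ alone generates $\big(\fd^{\oc+1}(\CE,a)/\mathcal{I}_{\oc+1}\big)/\mathcal{J}_{\oc+1}$, which is identified with $\fd^{\oc}(\CE,a)$, and one checks that $\tau_{\oc+1}$ from~\er{fdnn-1} is the composition of the two quotient maps.

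Finally, the statements about $\fd^q(\CE,a)$ for $q\ge 1$ are formal consequences. Writing $K=\ker\tau_{\oc+1}$, centrality of $\mathcal{I}_{\oc+1}$ and of $\mathcal{J}_{\oc+1}=K/\mathcal{I}_{\oc+1}$ gives $[K,K]\subseteq\mathcal{I}_{\oc+1}$ and $[\mathcal{I}_{\oc+1},\fd^{\oc+1}(\CE,a)]=0$, whence $[K,[K,K]]=0$; thus each $\ker\tau_{\oc+1}$ is nilpotent, and $\fd^{\oc+1}(\CE,a)$ is obtained from $\fd^{\oc}(\CE,a)$ by two successive central extensions. Composing these extensions along~\er{fdnn-1} exhibits $\fd^q(\CE,a)$ as an iterated central extension of $\fd^0(\CE,a)$, and nilpotence of the total kernel of $\fd^q(\CE,a)\to\fd^0(\CE,a)$ follows by the same filtration argument used to prove~\er{mllhk} in Theorem~\ref{fadesc}. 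The hardest step is establishing genuine \emph{centrality}: in the multicomponent case one obtained only the weaker relations $[\mathfrak{S},\ker\hmm]=0$, and here one must upgrade these to commutation with all generators, the argument resting on carefully exploiting the degeneracy forced by $\nv=2$ in the structure equations.
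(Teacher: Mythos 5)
Your proposal follows the paper's own proof essentially verbatim: the identification $\fd^0(\CE,a)\cong\mR_{j_1,j_2,j_3}$ via Theorem~\ref{thmhfd0} and the Roelofs--Martini description of the WE algebra is exactly the argument given there, and for the higher algebras the paper itself only states that the claims ``can be proved similarly to the results of Sections~\ref{secfd1},~\ref{secfdk}'', which is precisely the adaptation you sketch (the two layers $\mathcal{I}_{\oc+1}$ and $\mathcal{J}_{\oc+1}$ coming from the quadratic and linear parts of $A$ in the top-order derivatives, with centrality forced because the $\mathfrak{so}$-type obstruction of the multicomponent case degenerates to the abelian $\mathfrak{so}_2$ for a two-component system). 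The only caveat is that~\er{uvt} is second order with a non-diagonal principal symbol, so the top-derivative analysis of Sections~\ref{secfd1},~\ref{secfdk} (in particular the degree bound on $A$ in $u^j_{\oc+1}$) must be re-derived rather than quoted --- but that is exactly the step the paper also leaves to the reader, and your concluding filtration argument for nilpotency of $\ker\big(\fd^q(\CE,a)\to\fd^0(\CE,a)\big)$ matches the paper's.
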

\begin{proof}
Recall that, using the transformation~\er{sssuv}, 
one can rewrite the PDE~\er{lle} in the form~\er{uvt}.
In what follows, when we speak about the WE algebra 
and the algebras $\fd^\oc(\CE,a)$ for the PDE~\er{lle},
we assume that the PDE~\er{lle} is written in the form~\er{uvt}.

Let $\wea$ be the Wahlquist-Estabrook prolongation algebra 
(WE algebra) for the PDE~\er{lle}.
It is shown in~\cite{ll} that this algebra is isomorphic to the direct sum 
of~$\mR_{j_1,j_2,j_3}$ and the $2$-dimensional abelian Lie algebra $\Com^2$.
So $\wea\cong\mR_{j_1,j_2,j_3}\oplus\Com^2$.

In Section~\ref{fd0we} for any evolution PDE of the form~\er{gevxt} 
we have defined the notion of formal Wahlquist-Estabrook ZCR, 
which is given by formulas~\er{wgawgbser},~\er{wxgbtga}.
According to~\cite{ll}, in the formal Wahlquist-Estabrook ZCR 
with coefficients in~$\wea$ for the PDE~\er{lle} one has 
\begin{equation}
\notag
\wga=p_1S^1+p_2S^2+p_3S^3+p_4,\qquad\quad 
p_1,p_2,p_3,p_4\in\wea\cong\mR_{j_1,j_2,j_3}\oplus\Com^2,
\end{equation}
where $S^1$, $S^2$, $S^3$ are given by~\er{sssuv}, 
the elements $p_1$, $p_2$, $p_3$ generate the Lie subalgebra 
$\mR_{j_1,j_2,j_3}\subset\wea\cong\mR_{j_1,j_2,j_3}\oplus\Com^2$, 
and one has $[p_4,p_l]=0$ for all $l=1,2,3$.

This implies that the subalgebra $\swe\subset\wea$ defined in Theorem~\ref{thmhfd0} 
in Section~\ref{fd0we} is equal to $\mR_{j_1,j_2,j_3}\subset\wea$.
According to Theorem~\ref{thmhfd0}, one has $\fd^0(\CE,a)\cong\swe$. 
Since in the considered case we have $\swe=\mR_{j_1,j_2,j_3}$, 
we see that $\fd^0(\CE,a)$ is isomorphic to $\mR_{j_1,j_2,j_3}$.

It remains to prove the statements about 
$\fd^{\oc+1}(\CE,a)$, $\mathcal{I}_{\oc+1}$, $\mathcal{J}_{\oc+1}$, 
$\fd^{\oc}(\CE,a)$, $\fd^q(\CE,a)$, $\fd^q(\CE,a)\to\fd^{0}(\CE,a)$
in Theorem~\ref{thlla}.

The statements about $\fd^{\oc+1}(\CE,a)$, $\mathcal{I}_{\oc+1}$, 
$\mathcal{J}_{\oc+1}$, $\fd^{\oc}(\CE,a)$ can be proved similarly to the results 
of Sections~\ref{secfd1},~\ref{secfdk}.

The homomorphism $\fd^q(\CE,a)\to\fd^{0}(\CE,a)$ is equal to the composition 
of the surjective homomorphisms
$$
\fd^q(\CE,a)\to\fd^{q-1}(\CE,a)\to\dots\to\fd^1(\CE,a)\to\fd^{0}(\CE,a)
$$
from~\er{fdnn-1}. It is easily seen that the statements about 
$\fd^{\oc+1}(\CE,a)$, $\mathcal{I}_{\oc+1}$, $\mathcal{J}_{\oc+1}$, $\fd^{\oc}(\CE,a)$
imply the statements about $\fd^q(\CE,a)$ and $\fd^q(\CE,a)\to\fd^{0}(\CE,a)$.
\end{proof}


The nonlinear Schr\"odinger (NLS) equation 
is another well-known PDE from mathematical physics 
(see, e.g,~\cite{ft}). It can be written as follows
\beq
\lb{psinls}
i\psi_t+\psi_{xx}-\kappa\bar{\psi}\psi^2=0,\qquad\quad
i=\sqrt{-1}\in\Com,
\ee
where $\kappa\in\mathbb{R}$ is a nonzero constant.
The function $\psi=\psi(x,t)$ takes values in $\Com$, and one has 
\beq
\lb{psiuu}
\psi=u^1+iu^2,\qquad\quad
\bar{\psi}=u^1-iu^2,\qquad\quad i=\sqrt{-1}\in\Com,
\ee
for some $\mathbb{R}$-valued functions $u^1=u^1(x,t)$ and $u^2=u^2(x,t)$.

Taking into account~\er{psiuu}, 
we see that the NLS equation~\er{psinls}
is equivalent to the $2$-component evolution PDE
\beq
\lb{nlseq}
\begin{aligned}
u^1_t&=-u^2_{xx}+\kappa u^2\big((u^1)^2+(u^2)^2\big),\\
u^2_t&=u^1_{xx}-\kappa u^1\big((u^1)^2+(u^2)^2\big).
\end{aligned}
\ee
In what follows, when we speak about the NLS equation, we mean the PDE~\er{nlseq}.

So we will study the evolution PDE~\er{nlseq}, where $\kappa$ is a nonzero constant.
Since in this section we work over $\Com$, we assume that $u^1$, $u^2$ in~\er{nlseq} 
take values in $\Com$.

Let $\Com[\la]$ be the algebra of polynomials in the variable $\la$.
Consider the infinite-dimensional Lie algebra
$$
\sl_2(\Com[\la])=\sl_2(\Com)\otimes_\Com\Com[\la].
$$
\begin{theorem}
\lb{thnls}
Let $\CE$ be the infinite prolongation of the NLS equation~\er{nlseq}.
Let $a\in\CE$.

The Lie algebras $\fd^\oc(\CE,a)$, $\oc\in\zp$, for this PDE have the following structure.

The algebra $\fd^0(\CE,a)$ is isomorphic to 
the direct sum of $\sl_2(\Com[\la])$ and a one-dimensional abelian Lie algebra. 

For each $\oc\in\zp$, 
there are an abelian ideal $\mathcal{I}_{\oc+1}$ 
of the Lie algebra $\fd^{\oc+1}(\CE,a)$
and an abelian ideal $\mathcal{J}_{\oc+1}$ of the quotient Lie algebra 
$\fd^{\oc+1}(\CE,a)/\mathcal{I}_{\oc+1}$ such that
\begin{itemize}
\item $\big[\mathcal{I}_{\oc+1},\,\fd^{\oc+1}(\CE,a)\big]=0$, 
so $\mathcal{I}_{\oc+1}$ lies in the center of the Lie algebra $\fd^{\oc+1}(\CE,a)$,
\item
$\big[\mathcal{J}_{\oc+1},\,\fd^{\oc+1}(\CE,a)/\mathcal{I}_{\oc+1}\big]=0$, 
so $\mathcal{J}_{\oc+1}$ lies in the center of the Lie algebra 
$\fd^{\oc+1}(\CE,a)/\mathcal{I}_{\oc+1}$,
\item the algebra $\fd^{\oc}(\CE,a)$ is isomorphic to the quotient algebra
$\big(\fd^{\oc+1}(\CE,a)/\mathcal{I}_{\oc+1}\big)/\mathcal{J}_{\oc+1}$,
and the surjective homomorphism $\fd^{\oc+1}(\CE,a)\to\fd^{\oc}(\CE,a)$ 
from~\er{fdnn-1} coincides with the composition of the natural homomorphisms 
$$
\fd^{\oc+1}(\CE,a)\to\fd^{\oc+1}(\CE,a)/\mathcal{I}_{\oc+1}\to
\big(\fd^{\oc+1}(\CE,a)/\mathcal{I}_{\oc+1}\big)/\mathcal{J}_{\oc+1}\cong
\fd^{\oc}(\CE,a),
$$
which means that the Lie algebra $\fd^{\oc+1}(\CE,a)$ is obtained from 
the Lie algebra $\fd^{\oc}(\CE,a)$ 
by applying two times the operation of central extension.
\end{itemize}

For each $q\in\zsp$, the Lie algebra $\fd^q(\CE,a)$ is obtained from 
the Lie algebra $\fd^0(\CE,a)$ 
by applying several times the operation of central extension.
The kernel of the surjective homomorphism 
$\fd^q(\CE,a)\to\fd^{0}(\CE,a)$ from~\er{fdnn-1} is nilpotent.
\end{theorem}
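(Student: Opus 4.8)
The plan is to proceed as in the proof of Theorem~\ref{thlla} for the classical Landau-Lifshitz equation, using the techniques of Sections~\ref{secfd1} and~\ref{secfdk}. First I would determine $\fd^0(\CE,a)$ from Theorem~\ref{thmhfd0}, which gives an isomorphism $\fd^0(\CE,a)\cong\swe$, where $\swe\subset\wea$ is the subalgebra generated by the elements~\er{elmh} of the Wahlquist-Estabrook prolongation algebra $\wea$ of~\er{nlseq}. The algebra $\wea$ for the NLS equation is computed in~\cite{we-nls}; I would invoke that computation to write $\wea$ as the direct sum of $\sl_2(\Com[\la])$ and a finite-dimensional abelian Lie algebra, and to express the $u$-dependent part of the formal Wahlquist-Estabrook series $\wga$ through generators spanning $\sl_2(\Com[\la])$ together with a single generator commuting with $\wga_0$. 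As in Theorem~\ref{thlla}, this pins down $\swe$ and yields $\fd^0(\CE,a)\cong\sl_2(\Com[\la])\oplus\Com$, the summand $\Com$ being the one-dimensional abelian factor.

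Next, fixing $\oc\ge 0$ and normalizing $a$ so that $x_a=t_a=0$ and $a^i_k=0$, I would analyze $\fd^{\oc+1}(\CE,a)$ exactly as the algebra $\fd^k(\CE,a)$ is analyzed in Section~\ref{secfdk}, with the case $\oc=0$ (that is, $\fd^1(\CE,a)$) treated as in Section~\ref{secfd1}. Differentiating the zero-curvature equation in the highest-order variables $u^1_{\oc+1},u^2_{\oc+1}$ shows that $A$ is a polynomial of degree $\le 2$ in these variables, of the form $A=\frac12 u^i_{\oc+1}u^j_{\oc+1}Y_{ij}+u^i_{\oc+1}Y_i+Y$ with $Y_{ij}=Y_{ji}$, in complete analogy with~\er{eq.X.higher}. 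Because the homomorphism $\tau_{\oc+1}\colon\fd^{\oc+1}(\CE,a)\to\fd^{\oc}(\CE,a)$ of~\er{fdnn-1} erases all dependence on $u^j_{\oc+1}$, its kernel is generated by the coefficients of $Y_{ij}$ and of $Y_i$. I would let $\mathcal{I}_{\oc+1}$ be the ideal generated by the coefficients of $Y_{ij}$ and $\mathcal{J}_{\oc+1}$ the ideal of $\fd^{\oc+1}(\CE,a)/\mathcal{I}_{\oc+1}$ generated by the images of the coefficients of $Y_i$, so that quotienting by both recovers $\fd^{\oc}(\CE,a)$ and identifies $\tau_{\oc+1}$ with the stated composition.

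The heart of the argument is to show that both of these ideals are central, and this is where the NLS case diverges sharply from the multicomponent Landau-Lifshitz system. For~\er{nlseq} the nonlinearity is a cubic polynomial with $G^i_{u^j_2}$ constant and $G^i_{u^j_1}=0$, so there is nothing playing the role of the rational factors $\Delta^{-1}$ appearing in~\er{eq.formule.Gpxx}. Consequently, running the compatibility analysis that in Section~\ref{secfdk} produces the relations~\er{eq.rel.kkab.kkih}, \er{tihk}, \er{eq.comp.cond.system.defining.Ykkij.higher} and~\er{eq.comm.Tih.Tkm}, all source terms on the right-hand sides vanish. I expect the main obstacle to be the careful verification that this collapse is complete; in particular, that in place of the semisimple summand $\so_{n-1}$ obtained in Theorem~\ref{tijnge4}, the coefficients of $Y_i$ now span an abelian ideal commuting with everything modulo $\mathcal{I}_{\oc+1}$. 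Using Lemmas~\ref{zi} and~\ref{zpsi} to reduce to the zero-degree coefficients, one then shows $[\mathcal{I}_{\oc+1},\fd^{\oc+1}(\CE,a)]=0$ and $[\mathcal{J}_{\oc+1},\fd^{\oc+1}(\CE,a)/\mathcal{I}_{\oc+1}]=0$, arguing as in the proofs of Theorems~\ref{idealyijk} and~\ref{idealyik} but with the target $\so_{n-1}$ of the homomorphism $\hmm_k$ replaced by the trivial algebra, which upgrades the conclusions from mere nilpotency to centrality.

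Finally, the statements for $\fd^q(\CE,a)$ with $q\in\zsp$ follow by iteration: the surjection $\fd^q(\CE,a)\to\fd^0(\CE,a)$ factors through the chain~\er{fdnn-1}, each step of which is a two-step central extension, so the total kernel is assembled from finitely many central extensions and is therefore nilpotent.
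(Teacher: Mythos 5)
Your proposal follows essentially the same route as the paper: $\fd^0(\CE,a)$ is identified via Theorem~\ref{thmhfd0} together with the Wahlquist--Estabrook algebra of the NLS equation from~\cite{we-nls,eck-lmp86}, and the statements about $\fd^{\oc+1}(\CE,a)$, $\mathcal{I}_{\oc+1}$, $\mathcal{J}_{\oc+1}$ are obtained by repeating the analysis of Sections~\ref{secfd1} and~\ref{secfdk}, which is exactly what the paper's (very terse) proof asserts. Your additional observation --- that for~\er{nlseq} the source terms analogous to the $\Delta^{-1}$-contributions and the $\so_{n-1}$-type summand collapse, upgrading nilpotency to centrality --- is a correct and useful elaboration of the step the paper leaves implicit.
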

\begin{proof}
Let $\wea$ be the Wahlquist-Estabrook prolongation algebra 
(WE algebra) for the NLS equation~\er{nlseq}.
It is shown in~\cite{we-nls,eck-lmp86} that this algebra 
is isomorphic to the direct sum of $\sl_2(\Com[\la])$ 
and a $3$-dimensional abelian Lie algebra. 

To describe $\fd^0(\CE,a)$, we can use again Theorem~\ref{thmhfd0}, which
allows us to describe $\fd^0(\CE,a)$ as a certain subalgebra of $\wea$.
Applying Theorem~\ref{thmhfd0} to the description of the 
Wahlquist-Estabrook prolongation algebra $\wea$ in~\cite{we-nls,eck-lmp86}, 
one obtains that $\fd^0(\CE,a)$ is isomorphic to 
the direct sum of $\sl_2(\Com[\la])$ and a one-dimensional abelian Lie algebra. 

It remains to prove the statements about 
$\fd^{\oc+1}(\CE,a)$, $\mathcal{I}_{\oc+1}$, $\mathcal{J}_{\oc+1}$, 
$\fd^{\oc}(\CE,a)$, $\fd^q(\CE,a)$, $\fd^q(\CE,a)\to\fd^{0}(\CE,a)$
in Theorem~\ref{thnls}.

The statements about $\fd^{\oc+1}(\CE,a)$, $\mathcal{I}_{\oc+1}$, 
$\mathcal{J}_{\oc+1}$, $\fd^{\oc}(\CE,a)$ can be proved similarly to the results 
of Sections~\ref{secfd1},~\ref{secfdk}.

The homomorphism $\fd^q(\CE,a)\to\fd^{0}(\CE,a)$ is equal to the composition 
of the surjective homomorphisms
$$
\fd^q(\CE,a)\to\fd^{q-1}(\CE,a)\to\dots\to\fd^1(\CE,a)\to\fd^{0}(\CE,a)
$$
from~\er{fdnn-1}. It is easily seen that the statements about 
$\fd^{\oc+1}(\CE,a)$, $\mathcal{I}_{\oc+1}$, $\mathcal{J}_{\oc+1}$, $\fd^{\oc}(\CE,a)$
imply the statements about $\fd^q(\CE,a)$ and $\fd^q(\CE,a)\to\fd^{0}(\CE,a)$.
\end{proof}



\section*{Acknowledgments}

Sergei Igonin is a research fellow of 
Istituto Nazionale di Alta Matematica (INdAM), Italy. 
Sergei Igonin and Gianni Manno are members of GNSAGA of INdAM.

The authors thank T.~Skrypnyk and V.~V.~Sokolov for useful discussions. 

Sergei Igonin is grateful to the Max Planck Institute for Mathematics (Bonn, Germany) 
for its hospitality and excellent working conditions 
during 02.2006--01.2007 and 06.2010--09.2010, when part of this research was done.

\end{document}